\documentclass[12pt]{article}
\usepackage[utf8]{inputenc}
\usepackage[english]{babel}
\usepackage{graphicx,amsfonts,amsmath,amssymb,amsthm,mathrsfs,mathabx}
\usepackage[usenames,dvipsnames]{xcolor}
\usepackage[toc,page]{appendix}
\usepackage{algorithm}
\usepackage{authblk}
\usepackage[bookmarksnumbered=true]{hyperref}

\usepackage[final]{showkeys}

\usepackage[capitalise]{cleveref}
\crefname{equation}{}{}

\numberwithin{equation}{section}

\usepackage[disable]{todonotes}

\usepackage[style=alphabetic,
  sorting = nyt,
  sortcites=true,
  giveninits=true,
  date=year,
  isbn=false,
  maxbibnames=99,
  maxalphanames=6,
  backend=biber]{biblatex}
\DeclareFieldFormat*{titlecase}{\MakeSentenceCase{#1}}

\DeclareSourcemap{
  \maps[datatype=bibtex]{
    \map[overwrite]{
      \step[fieldsource=doi, final]
      \step[fieldset=url, null]
      \step[fieldset=eprint, null]
    }  
  }
}

\DeclareSourcemap{
  \maps[datatype=bibtex, overwrite]{
    \map{
      \step[fieldset=language, null]
      \step[fieldset=month, null]
      \step[fieldset=urldate, null]
    }
  }
}

\DeclareSourcemap{
  \maps[datatype=bibtex, overwrite]{
    \map{
      \pertype{article}
      \step[fieldset=note, null]
    }
  }
}

\AtEveryBibitem{%
  \clearlist{language}%
  \clearlist{month}%
  \clearlist{urldate}%
  \clearfield{urldate}%
}

\renewbibmacro*{in:}{%
  \setunit{\addcomma\space}%
  \ifentrytype{article}
    {}
    {\printtext{%
       \bibstring{in}\intitlepunct}}}

\renewcommand*{\intitlepunct}{\addspace}

\DeclareFieldFormat[misc]{title}{\mkbibquote{#1}}
\DeclareFieldFormat[article,inproceedings]{volume}{\mkbibbold{#1}}
\DeclareFieldFormat[inproceedings]{series}{\mkbibitalic{#1}}

\title{Upper Bound on the Correlation Energy of a Dense Fermi Gas in the Thermodynamic Limit}

\author[1]{Niels Benedikter}
\author[2]{Sascha Lill}
\author[3]{Marcello Porta}
\author[4]{Benjamin Schlein}
\affil[1]{
Universit\`a degli Studi di Milano, Dipartimento di Matematica, Via Cesare Saldini 50, 20133 Milano, Italy

\smallskip

External Scientific Member of Basque Center for Applied Mathematics, Alameda de Mazarredo 14, 48009 Bilbao, Bizkaia, Spain

\smallskip

ORCID: \href{https://orcid.org/0000-0002-1071-6091}{0000-0002-1071-6091}, e--mail: \href{mailto:niels.benedikter@unimi.it}{niels.benedikter@unimi.it}
\bigskip
}

\affil[2]{
University of Copenhagen, \textsc{QMath}, Lyngbyvej 2, 2100 Copenhagen, Denmark

\smallskip

ORCID: \href{https://orcid.org/0000-0002-9474-9914}{0000-0002-9474-9914}, e--mail: \href{mailto:sali@math.ku.dk}{sali@math.ku.dk}
\bigskip
}

\affil[3]{
Scuola Internazionale Superiore di Studi Avanzati, Mathematics Area, Via Bonomea 265, 34136 Trieste, Italy

\smallskip

ORCID: \href{https://orcid.org/0000-0002-7725-8398}{0000-0002-7725-8398}, e--mail: \href{mailto:marcello.porta@sissa.it}{marcello.porta@sissa.it}
\bigskip
}

\affil[4]{
Universität Zürich, Institut für Mathematik, Winterthurerstrasse 190, 8057 Zürich, Switzerland

\smallskip

ORCID: \href{https://orcid.org/0000-0003-0910-9136}{0000-0003-0910-9136}, e--mail: \href{mailto:benjamin.schlein@math.uzh.ch}{benjamin.schlein@math.uzh.ch}
}

\usepackage[utf8]{inputenc}
\graphicspath{ {figures/}}
\usepackage{tikz}
\usetikzlibrary{decorations.pathreplacing, patterns}

\usepackage[section]{placeins}
\usepackage{capt-of, caption}
\usepackage{paralist}
\usepackage{booktabs}
\usepackage{hhline}
\usepackage{color}		

\newcommand{\cE}{\mathcal{E}}
\newcommand{\cF}{\mathcal{F}}

\newcommand{\cH}{\mathcal{H}}
\newcommand{\cI}{\mathcal{I}}

\newcommand{\cM}{\mathcal{M}}
\newcommand{\cN}{\mathcal{N}}
\newcommand{\cO}{\mathcal{O}}
\newcommand{\cP}{\mathcal{P}}

\newcommand{\cR}{\mathcal{R}}

\newcommand{\cT}{\mathcal{T}}

\newcommand{\cW}{\mathcal{W}}
\newcommand{\cX}{\mathcal{X}}

\newcommand{\fh}{\mathfrak{h}}
\newcommand{\fl}{\mathfrak{l}}

\newcommand{\fD}{\mathfrak{D}}

\newcommand{\fR}{\mathfrak{R}}
\newcommand{\um}{\underline{m}}

\newcommand{\ulambda}{\underline{\lambda}}
\newcommand{\CCC}{\mathbb{C}}
\newcommand{\DDD}{\mathbb{D}}

\newcommand{\HHH}{\mathbb{H}}
\newcommand{\NNN}{\mathbb{N}}
\newcommand{\QQQ}{\mathbb{Q}}
\newcommand{\RRR}{\mathbb{R}}

\newcommand{\TTT}{\mathbb{T}}
\newcommand{\XXX}{\mathbb{X}}
\newcommand{\ZZZ}{\mathbb{Z}}

\newcommand{\B}{{\rm B}}

\renewcommand{\d}{\textnormal{d}}
\newcommand{\di}{\textnormal{d}}

\newcommand{\diam}{{\rm diam}}
\newcommand{\dist}{{\rm dist}}

\newcommand{\err}{{\rm err}}

\newcommand{\F}{{\rm F}}
\newcommand{\FB}{{\rm FB}}
\newcommand{\GS}{{\rm GS}}
\newcommand{\HF}{\mathrm{HF}}
\newcommand{\HS}{\mathrm{HS}}
\newcommand{\I}{\mathrm{I}}
\newcommand{\II}{\mathrm{II}}
\newcommand{\III}{\mathrm{III}}
\newcommand{\IV}{\mathrm{IV}}

\newcommand{\kin}{\mathrm{kin}}

\newcommand{\op}{{\rm op}}

\newcommand{\RPA}{\mathrm{RPA}}

\newcommand{\supp}{\mathrm{supp}}
\newcommand{\tr}{\mathrm{tr}}
\newcommand{\norm}[1]{\left\Vert #1 \right\Vert}

\newcommand{\BF}{B_\F}

\newcommand{\edge}{\mathrm{edge}}

\newcommand{\Ncal}{\mathcal{N}}

\newtheorem{theorem}{Theorem}[section]
\newtheorem{corollary}[theorem]{Corollary}
\newtheorem{proposition}[theorem]{Proposition}
\newtheorem{lemma}[theorem]{Lemma}
\theoremstyle{definition}
\newtheorem{remark}[theorem]{Remark}
\theoremstyle{definition}
\newtheorem{remarks}[theorem]{Remarks}
\theoremstyle{definition}

\theoremstyle{definition}
\newtheorem{assumption}[theorem]{Assumption}

\newcounter{remarks}
\begin{document}
\maketitle
\begin{abstract}
We prove that the random phase approximation provides an upper bound on the correlation energy of a three-dimensional, weakly interacting, Fermi gas in the thermodynamic limit at high density. More precisely, we consider a mean-field scaling regime where the kinetic and interaction Hamiltonian are scaled by $\rho^{-2/3}$ and $\rho^{-1}$, respectively. We use a correlated trial state constructed through a quasi-bosonic Bogoliubov transformation of collective particle–hole excitations associated to patches on the Fermi surface. Key of the proof is to control the position-space localization of the collective excitations. Combined with an iterated Duhamel expansion of the expectation values of (localized) particle number operators we obtain extensive estimates, which justify the reduction to an effective quadratic Hamiltonian and its approximate diagonalization.
\end{abstract}

\clearpage

\tableofcontents

\section{Introduction and Main Result}
\label{sec:intro}

We consider a gas of $ N $ fermions on a three-dimensional torus $ \TTT_L := \big[ -\frac L2, \frac L2 \big]^3 $ with side length $ L > 0 $. The density of the gas is $ \rho := \frac{N}{L^3} $. The one-particle Hilbert space is $ \fh := L^2(\TTT_L) $ with inner product $ \langle ., . \rangle $, and the state of the system is described by a vector in the antisymmetric Hilbert space $ L^2(\TTT_L)^{\otimes_{\mathrm{a}} N} $. The system is modeled by the Hamiltonian
\begin{equation} \label{eq:HN}
	H_N := \sum_{j = 1}^N \hbar^2 (- \Delta_{x_j}) + \rho^{-1} \sum_{i < j}^N V(x_i - x_j) \;,
\end{equation}
where $ \hbar > 0 $ is the semiclassical parameter, and $ V: \TTT_L \to \RRR $ is the periodization of a potential $ V_\infty: \RRR^3 \to \RRR $. Introducing the Fourier transform
\begin{equation} \label{eq:Fourier_1}
	\hat{V}(k) := \int_{\RRR^3} V_\infty(x) e^{-i k \cdot x} \d x \;, \qquad k \in \RRR^3 \;,
\end{equation}
we have
\begin{equation} \label{eq:Fourier_convention}
	\hat{V}(k) = \int_{\TTT_L} V(x) e^{-i k \cdot x} \d x \;, \qquad
	V(x) = \frac{1}{L^3} \sum_{k \in \Lambda_L} \hat{V}(k) e^{i k \cdot x} \;.
\end{equation}
\begin{assumption} \label{as:V}
We assume that $ \hat{V} \in C_0^\infty(\RRR^3;\RRR) $ and $ \hat{V} \ge 0 $.
\end{assumption}
In particular, there exists $ R > 0 $, independent of $ L$ and $\rho $, such that $ \hat{V} = 0 $ if $ |k| > R $.
The Hamiltonian is self-adjoint on a dense domain in $ L^2(\TTT_L)^{\otimes_{\mathrm{a}} N} $.
We consider a high-density scaling limit, where
\begin{equation}
\hbar := \rho^{- \frac 13} \;,
\end{equation}
so the relative coupling is $ \rho^{-1} / \hbar^2 = \rho^{-\frac 13} $.
The remaining parameters are the density $\rho$ and the side length $L$ of the torus. We are interested in the ground state energy
\begin{equation} \label{eq:EGS}
	E_{\GS} := \inf \sigma(H_N)
\end{equation}
in the thermodynamic limit at high density. That means, we first take the limit $ L \to  \infty $ at fixed, large $ \rho $ and then study the asymptotics as $ \rho \to \infty $.

To obtain a first approximation to $ E_{\GS} $, we may consider the non-interacting case $ V = 0 $. The eigenvectors of $ - \Delta $ are plane waves
\begin{equation} \label{eq:Lambda_L}
	f_k(x) := L^{-\frac 32} e^{i k \cdot x} \;, \qquad
	k \in \Lambda_L := \frac{2\pi}{L} \ZZZ^3 \;,
\end{equation}
and each antisymmetrized tensor product of $ N $ plane waves with pairwise disjoint wave vectors $ k $ is an eigenstate of $ H_N $. In particular, a ground state is obtained by choosing the $ k $ to fill up the Fermi ball
\begin{equation} \label{eq:BF}
	\BF := \{ k \in \Lambda_L : |k| \le k_{\F} \} \;,
\end{equation}
where the Fermi momentum $ k_{\F} > 0 $ is chosen such that $ |\BF| = N $. (We  restrict to such $ (\rho, L) $ that the constraint $ |\BF| = N $ is fulfilled.) We then have
\begin{equation}
	k_\F = (6 \pi^2 \rho)^{\frac 13} + o_L(1) \quad \Leftrightarrow \quad
	\rho = \frac{k_\F^3}{6 \pi^2} + o_L(1)  \;,
\end{equation}
where $ o_L(1) $ denotes an error term that vanishes as $ L \to \infty $.
The corresponding state
\begin{equation} \label{eq:psiFB}
	\psi_{\FB}
	:= \bigwedge_{k \in \BF} f_k
\end{equation}
is a ground state for $ H_N $ for $ V = 0 $. For general $ V \neq 0 $, $ \psi_{\FB} $ will no longer be a ground state, but one may expect that
\begin{equation} \label{eq:EFB}
\begin{aligned}
	E_{\FB}
	& := \langle \psi_{\FB}, H_N \psi_{\FB} \rangle \\
	& = \hbar^2 \sum_{p \in \BF} |p|^2
		- \frac{L^3}{2} V(0)
		+ \frac{N}{2} \hat{V}(0)
		+ \frac{1}{2N} \sum_{k \in \Lambda_L \setminus \{0\}} \hat{V}(k)  |(\BF + k) \setminus \BF| \\
	&= \hbar^2 \sum_{p \in \BF} |p|^2
		+ \frac{N-1}{2} \hat{V}(0) 
		+ \frac{1}{2N} \sum_{k \in \Lambda_L \setminus \{0\}} \hat{V}(k) \big( |(\BF + k) \setminus \BF| - N \big)\;,
\end{aligned}
\end{equation}
is still a good approximation for $ E_{\GS} $. By the variational principle, we have $ E_{\GS} \le E_{\FB} $, and one easily verifies that $ E_{\FB} \sim N = \rho L^3 $. (It has been shown by \cite{GL18,GHL19} that the Hartree--Fock ground state energy $ E_{\HF} $, which is the minimum over the energy over all Slater determinants, is smaller than \eqref{eq:EFB} only by a term of order $e^{-C \rho^{1/3}}$, or $e^{-C \rho^{1/6}}$ if the interaction is given by the Coulomb potential.)

The difference between $ E_{\GS} $ and the Hartree--Fock energy is called the correlation energy.
The dominant contribution to the correlation energy was first computed by \cite{Mac50} by a partial resummation of perturbation theory, by \cite{SBFB57,Saw57} by a bosonization approximation, and by \cite{BP53,Pin53} through identification of collective modes. Gell-Mann and Brueckner~\cite{GB57} obtained a further order by  a refined partial resummation of perturbation theory. Rigorous derivations of the correlation energy were recently obtained in a finite--volume mean--field scaling limit by~\cite{BNPSS20,BNPSS21,BPSS22} and \cite{CHN21,CHN23,CHN24}, based on bounds from~\cite{HPR20}. See also~\cite{FRS25} for results in a related, strongly interacting scaling regime. In the thermodynamic limit, the dominant contribution to the correlation energy density is
\begin{equation} \label{eq:Ecorr}
\begin{aligned}
	e_{\RPA}
	&:= \hbar^2 k_{\F} \frac{1}{(2 \pi)^3} \int_{\RRR^3} \frac{|k|}{\pi} \int_0^\infty F \big( Q^{(0)}_k(\mu) \big) \d \mu \; \d k\;,  \quad \text{where}\\
	F(x)
	& := \log(1+x) - x \;, \quad Q^{(0)}_k(\mu)
	 := \frac{\hat{V}(k) k_{\F}}{(2 \pi)^2 \rho^{\frac 13}} \big( 1 - \mu \arctan(\mu^{-1}) \big) \;.
\end{aligned}
\end{equation}
Typically $ e_{\RPA} \sim \rho^{-\frac 13} \sim \hbar $.
Our main result establishes $e_{\RPA}$ for an upper bound on the ground state energy in the thermodynamic limit.
\begin{theorem}[Main result] \label{thm:main}
Let $ V $ satisfy Assumption~\ref{as:V} and consider the energy densities
\begin{equation} \label{eq:energydensities}
	e_{\GS} := \lim_{L \to \infty} \frac{E_{\GS}}{L^3} \;, \qquad
	e_{\FB} := \lim_{L \to \infty} \frac{E_{\FB}}{L^3} \;.
\end{equation}
Then
\begin{equation} \label{eq:main}
	e_{\GS} \le e_{\FB} + e_{\RPA} + \mathfrak{E} \;, \qquad \text{with }
	\lim_{\rho \to \infty} \frac{\mathfrak{E}}{\hbar} = 0  \;.
\end{equation}
\end{theorem}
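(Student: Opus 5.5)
The bound is variational. For fixed large $\rho$ and $L$ I will build a trial state $\psi_{\mathrm{trial}} = R\, T\, \Omega$ with $\langle \psi_{\mathrm{trial}}, H_N \psi_{\mathrm{trial}}\rangle \le E_{\FB} + L^3 (e_{\RPA} + \mathfrak{E}) + o(L^3)$ as $L\to\infty$. Here $R$ is the particle--hole transformation mapping the vacuum to $\psi_{\FB}$, and $T = e^{B}$ is a quasi-bosonic Bogoliubov transformation. Dividing by $L^3$, letting $L\to\infty$ and using $E_{\GS}\le \langle\psi_{\mathrm{trial}},H_N\psi_{\mathrm{trial}}\rangle$ then gives \eqref{eq:main}.

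\textbf{Conjugation by $R$ and the pair operators.} Conjugating by $R$ gives
\[
R^* H_N R = E_{\FB} + \mathbb{H}_0 + Q_B + \mathcal{E}_1 + \mathcal{E}_2 ,
\]
where $\mathbb{H}_0=\sum_k \hbar^2 ||k|^2-k_\F^2|\, a_k^*a_k$, $Q_B$ collects the terms quadratic in particle--hole pairs $b_k^*b_k$, $b_k^* b_{-k}^*$ and their adjoints, and $\mathcal{E}_{1,2}$ are the non-bosonizable quartic terms. I will cut a shell around the Fermi surface into $M$ patches $B_\alpha$ of diameter $\ll k_\F$ but $\gg R$, with corridors between them. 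For $|k|\le R$ and patch $\alpha$ I define the patch pair operators $b_\alpha(k)=\sum_{p\in B_\alpha\cap B_\F^c,\ p-k\in B_\alpha\cap B_\F} a_{p-k}a_p$, suitably normalized by $n_\alpha(k)\sim(\text{number of pairs})^{1/2}$. These satisfy approximate CCR, $[b_\alpha(k),b_\beta^*(l)]=\delta_{\alpha\beta}\delta_{kl}+\mathcal{E}$. The kinetic energy linearizes to $\mathbb{H}_0\approx \sum_{k,\alpha}\hbar\, k_\F\,|\hat k\cdot\hat\omega_\alpha|\, b^*_\alpha(k)b_\alpha(k)$. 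The kernel $K(k)$ in $B=\frac12\sum_{k}\sum_{\alpha,\beta}K(k)_{\alpha\beta}\,b^*_\alpha(k)b^*_\beta(-k)-\mathrm{h.c.}$ is chosen so that $e^{B}$ exactly diagonalizes the bosonic quadratic form $D(k)+W(k)$, mode by mode in $k$. The ground energy of that diagonal form is $\frac12\operatorname{tr}(E(k)-D(k)-W(k))$. Summing over $k\in\Lambda_L$ and converting to a Riemann integral (as $L\to\infty$) and then to a surface integral over patches (as $M\to\infty$) yields $L^3 e_{\RPA}$, via the standard identity
\[
\operatorname{tr}\bigl(\sqrt{D^2+2D^{1/2}WD^{1/2}}-D-W\bigr)=\frac1\pi\int_0^\infty F\bigl(Q^{(0)}_k(\mu)\bigr)\,\d\mu\,|k|\,\hbar k_\F
\]
up to errors small in $M$. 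This algebra can be imported from the finite-volume analysis in \cite{BNPSS21,BPSS22}.

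\textbf{Extensive error control.} Every error must be $o(\hbar)L^3$, uniformly in $L$. The finite-volume proofs bound errors by powers of $\mathcal{N}/N$ and use $\langle \mathcal N\rangle_T\lesssim M$, which is not extensive: here the number of excitations in $T\Omega$ is of order $L^3$ times a density. So I will replace the global bounds by localized ones. First, I will show that the collective excitations $c^*_\alpha(x)$ (pair operators smeared against a bump at scale $\ll L$) have kernels decaying fast away from $x$; this uses smoothness of $\hat V$ and of the patch cutoffs. Second, I will estimate expectations $\langle T\Omega,\mathcal N_{\Lambda}T\Omega\rangle$ of particle numbers localized in boxes $\Lambda$ (or in momentum patches) by an iterated Duhamel expansion $\frac{\d}{\d t}\langle e^{tB}\Omega,\mathcal{N}_\Lambda e^{tB}\Omega\rangle$. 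Each commutator $[B,\cdot]$ will be expanded a finite number of times, and the remainder closed with a Gronwall argument. This should give $\langle \mathcal N\rangle\lesssim L^3\rho^{-1/3}\times$(patch factors), with corresponding local versions. Third, Cauchy--Schwarz bounds on $\mathcal{E}_1,\mathcal{E}_2$ and on the non-CCR commutator errors in $T^*Q_BT$ will be written as sums over $k$ of $\hat V(k)$ times products of such localized number expectations, giving errors $\lesssim L^3\hbar\,(M^{-\delta}+M^{a}\hbar^{b})$. Choosing $M=\rho^{\epsilon}$ makes these $o(\hbar)L^3$.

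\textbf{Main obstacle.} The hard step is the Duhamel expansion with extensive bounds. In a naive estimate, commutators of pair operators produce error terms of the form $\mathcal{N}\,b^*b$, whose norm grows with the volume. To close Gronwall with constants independent of $L$, I would use the translation invariance of $T\Omega$ (momentum conservation): work with momentum-resolved densities $\langle a_p^*a_p\rangle$ rather than total numbers. I would then show that each Duhamel iterate involves one free sum over $k$ weighted by $\hat V(k)/L^3$ in a convergent way, iterating until the remainder carries an extra $\hbar$ factor. I expect the fermionic corrections to the CCR, in particular the off-diagonal terms in $[b_\alpha(k),b^*_\alpha(l)]$ for $k\ne l$, to be the delicate point. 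They would be controlled using the position-space localization of the $c^*_\alpha$ established above.
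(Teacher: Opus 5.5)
Your overall architecture is the paper's: the trial state $RT\Omega$, position-space localization of the collective modes, and a Duhamel expansion of localized number operators. However, two load-bearing steps fail as you have set them up.

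\textbf{First gap: the localization step.} Your linearized dispersion $\propto|\hat k\cdot\hat\omega_\alpha|$ vanishes at the equator $k\cdot\hat\omega_\alpha=0$ and is not smooth there; the normalization by the actual pair count $n_{\alpha,k}$ also vanishes there. As a result, every $k$-derivative of $\hat\varphi_K(k)_{\alpha,\beta}=L^3\rho^{2/3}K(k)_{\alpha,\beta}/(n_{\alpha,k}n_{\beta,k})$ costs a factor $(k\cdot\hat\omega_\alpha)^{-1}$. Hence $(1-\Delta_k)^2\hat\varphi_K$ is not summable uniformly in $L$, and $\varphi_K$ does not decay like $\cW$. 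Smoothing the patch indicators alone does not cure this.

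The missing idea is an artificial gap. The paper replaces $k\cdot\hat\omega_\alpha$ by $\tilde\lambda_{\alpha,k}=\fl f_>(\fl^{-1}k\cdot\hat\omega_\alpha)\ge\fl/2$, normalizes by $\tilde n_{\alpha,k}^2\propto\tilde\lambda_{\alpha,k}$, and inserts the smooth cutoff $\hat\chi_{>,\alpha}$. This gives $\|(1-\Delta_k)^{m/2}\hat\varphi_{K,\alpha,\beta}\|_1\le C\fl^{-m}|\log\fl|$. The price is twofold:
\begin{itemize}
\item $[b_\alpha(k),b^*_\alpha(k)]$ acquires a diagonal part $I(k)\neq1$, so the bosonic evaluation of $\langle T\Omega,(\DDD_{\B}+\QQQ_{\B}^{\cR})T\Omega\rangle$ produces traces containing $I(k)$. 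These must be replaced by $1$, at cost $\fl|\log\fl|$, before the finite-volume algebra applies.
\item All constants degenerate as $\fl\to0$.
\end{itemize}

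\textbf{Second gap: where smallness and uniformity come from.} The excitation density is of order one, not $O(\rho^{-1/3})$ as you predict. Indeed $K(k)_{\alpha,\beta}=O(M^{-1})$ uniformly in $\rho$, so $\|K(k)\|_{\HS}=O(1)$ and $\langle\cN\rangle=4\sum_k\tr\big((I(k)K(k))^2\big)+\ldots\sim L^3$. Consequently the edge part of $\cE_1$ and the patch-restriction error are only $O(\fl^{1/2}\hbar\langle\cN\rangle)$. They are $o(\hbar)L^3$ only if $\langle\cN\rangle\le CL^3$ with $C$ independent of $\rho$.

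Extensivity in $L$ is the easier part. Gr\"onwall gives $\langle\cN\rangle\le e^{C\fl^{-7}}L^3$ directly. Using $\cN_z=\d\Gamma(\cW_z)$, $\cW_z\cW_{z'}\le C\cW(z-z')$ and translation invariance ($\langle\cN_{z'}^2\rangle=\langle\cN_z^2\rangle$), one also closes $\langle(\cN_z+1)^2\rangle\le e^{C\fl^{-17}M^{1/2}}$. This local two-body bound is what controls the quartic errors; one-body densities $\langle a_p^*a_p\rangle$ cannot do this.

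The hard part is removing the $\rho$-dependent constants, and this requires resumming the bosonic part exactly via a Duhamel expansion to arbitrary order $n_*$:
\begin{itemize}
\item $[S,Q_1(X)]$ and $[S,Q_2(X)]$ reproduce $Q_2$, $Q_1$ of $KIX+XIK$, plus explicit traces bounded by $C^nL^3/n!$.
\item The fermionic commutator errors carry $\rho^{-1/3}$. This beats $e^{C\fl^{-17}M^{1/2}}$ only if $M$ grows logarithmically in $\rho$, so your choice $M=\rho^{\epsilon}$ would not work.
\item The remainder is $\frac{C^{n_*}}{n_*!}e^{C\fl^{-7}}L^3$, which is removed by letting $n_*\to\infty$.
\end{itemize}
Iterating $[S,\cdot]$ never generates extra powers of $\hbar$ in the bosonic part. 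The remainder is small only combinatorially, not because it "carries an extra $\hbar$".
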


The error $ \mathfrak{E} $ is subleading compared to $ e_{\RPA} \sim \hbar $ (and $ e_{\FB} \sim \rho $).

\medskip

Bounds on $ E_{\GS} $ in the dense thermodynamic limit were previously obtained in~\cite{GS94} for stronger coupling ($ \rho^{-\frac 23} $ instead of $ \rho^{-1} $ in our notation) but at much lower precision ($ \rho^{\frac 23} $ instead of $ \rho^{-\frac 13} = \hbar $ in our notation). Recently, also bounds on $ E_{\GS} $ for a Fermi gas in the dilute thermodynamic limit have been established in~\cite{FGHP21,Gia23,Gia24,GHNS26,GHNS25,HSS26,LS24a,LS24b,Lau25}. Similar results for Fermi gases address the thermodynamic pressure~\cite{LS24c,LS24d} as well as the momentum distribution~\cite{BL23,BLN25,BGLL26}. Our proof further uses ideas from~\cite{FPS23,FPS25} to control the large-volume limit.

\medskip

This article is structured as follows: In Section~\ref{sec:setting} we construct the trial state. In Section~\ref{sec:positionspace} we compile useful formulas for expressing operators in position space. Section~\ref{sec:gronwall_bounds} is dedicated to the proof of a preliminary bound on the number of excitations in the trial state. We improve this bound in Section~\ref{sec:improved_bounds} using a Duhamel expansion. With the improved excitation bound at hand, we  reduce $ H_N $ to an effective Hamiltonian in Section~\ref{sec:reduction_Heff} and extract the RPA energy from it in Section~\ref{sec:proofmain}, completing the proof of Theorem~\ref{thm:main}.

\smallskip

\paragraph{Notation.}
Integrals are over $ \TTT_L $, unless otherwise stated, e.\,g., $ \int \d x \d y = \int_{\TTT_L \times \TTT_L} \d x \d y $. We use the O-notation with subscript $ L $ to indicate the first (thermodynamic) limit $ L \to \infty $:
\[
\begin{aligned}
	A &= B + o_L(f(L)) \qquad &&:\Leftrightarrow \qquad
	\lim_{L \to \infty} \frac{|A-B|}{f(L)} = 0 \;, \\
	A &= B + \cO_L(f(L)) \qquad &&:\Leftrightarrow \qquad
	\exists C > 0 : \quad |A - B| \le C f(L) \quad \forall L > 0 \;,
\end{aligned}
\]
The analogues without a subscript ($ o(f(\rho)) $ and $ \cO(f(\rho)) $) indicate the second (high-density) limit $ \rho \to \infty $. In general, $ C > 0 $ denotes a sufficiently large constant independent of $ (\rho, L) $, and $ \varepsilon > 0 $ an arbitrarily small constant independent of $ (\rho, L) $. We also write $ C_a > 0 $ for a sufficiently large constant depending on some parameter $ a $, but not on $ (\rho,L) $. All three constants may change from line to line. Also, all bounds in this article are understood to be valid for large enough $ L $ and $ \rho $, meaning that we first choose a large enough $ \rho $ and then a large enough $ L $ depending on $ \rho $.

\section{Trial State Construction}
\label{sec:setting}

\subsection{Second Quantization}
\label{subsec:secondquant}

We introduce the fermionic Fock space
\begin{equation}
	\cF := \bigoplus_{N = 0}^\infty L^2(\TTT_L)^{\otimes_{\mathrm{a}} N} \;.
\end{equation}
On $\cF$, let $ a^*_x$ and $a_x $ be the standard fermionic creation and annihilation operator valued distributions, so for $ f \in L^2(\TTT_L) $,
\begin{equation}
	a^*(f) := \int \d x \; f(x) a^*_x  \;, \qquad
	a(f) := \int \d x \; \overline{f(x)} a_x \;,
\end{equation}
are the creation and annihilation operators satisfying the canonical anticommutation relations (CAR)
\begin{equation} \label{eq:CAR}
	\{ a(f), a^*(g) \} = \langle f, g \rangle\;, \quad
	\{ a(f), a(g) \} = \{ a^*(f), a^*(g) \} = 0 \quad
	\forall f, g \in L^2(\TTT_L) \;.
\end{equation}
The vacuum vector $ \Omega = (1, 0, 0, \ldots) \in \cF $ is characterized by $ a(f) \Omega = 0 $ for all $f \in L^2(\TTT_L)$. We lift $ H_N $ to the operator $ \cH_N $ on Fock space as
\begin{equation}
\begin{aligned}
	\cH_N &= \cH_{\kin} + \cH_{\mathrm{int}} \;, \\
	\cH_{\kin} &:= \int \d x \; \hbar^2  a^*_x (- \Delta_x a_x) \;,
	\qquad
	\cH_{\mathrm{int}} := \frac{1}{2 \rho} \int \d x \d y \; V(x - y) a_x^* a_y^* a_y a_x \;.
\end{aligned}
\end{equation}
It is straightforward to check that the restriction of $ \cH_N $ to $ L^2(\TTT_L)^{\otimes_{\mathrm{a}} N} $ is just $ H_N $. It is convenient to write $ \cH_N $ in momentum space, i.\,e., using the orthonormal basis of plane waves $ (f_k)_{k \in \Lambda_L} $ defined in \eqref{eq:Lambda_L}, and introducing the momentum mode creation and annihilation operators
\begin{equation}
	\hat{a}_k^* := a^*(f_k) = \frac{1}{L^{3/2}} \int \d x \; e^{ik \cdot x} a_x^* \;, \qquad
	\hat{a}_k := a(f_k) = \frac{1}{L^{3/2}} \int \d x \; e^{-ik \cdot x} a_x \;.
\end{equation}
In particular, $ \{ \hat{a}_k, \hat{a}^*_{k'} \} = \delta_{k, k'} $, and
\begin{equation}
	a_x^* = \sum_{k \in \Lambda_L}  \frac{e^{-i k \cdot x}}{L^{3/2}} \hat{a}_k^* \;, \qquad 
	a_x = \sum_{k \in \Lambda_L}  \frac{e^{i k \cdot x}}{L^{3/2}} \hat{a}_k\;.
\end{equation}
Furthermore, for an operator $ A $ on $ L^2(\TTT_L) $ with integral kernel $ A(x,y) $, we denote its second quantization by
\begin{equation} \label{eq:dGamma}
	\d \Gamma(A)
	:= \int \d x \d y \; A(x,y) a_x^* a_y \;.
\end{equation}
In particular, we have the second quantization of the identity,
\begin{equation}
	\cN := \d \Gamma(1) \;,
\end{equation}
which is the number operator.
The kinetic part of $ \cH_N $ can be written as
\begin{equation}
	\cH_{\kin}
	= \hbar^2 \d \Gamma(-\Delta)
	= \sum_{p \in \Lambda_L} \hbar^2 |p|^2 \hat{a}_p^* \hat{a}_p \;.
\end{equation}
Using~\eqref{eq:Fourier_convention} and the CAR, the interaction Hamiltonian in momentum space becomes
\begin{equation}
	\cH_{\mathrm{int}}
	= \frac{1}{2 \rho L^3} \sum_{p,q,k \in \Lambda_L} \hat{V}(k) \hat{a}_{p+k}^* \hat{a}_p \hat{a}^*_{q-k} \hat{a}_q
		- \frac{1}{2 \rho} \cN V(0) \;.
\end{equation} 
We conclude
\begin{equation}
\label{eq:cHNmomentum}
	\cH_N = \sum_{p \in \Lambda_L} \hbar^2 |p|^2 \hat{a}_p^* \hat{a}_p
		+ \frac{1}{2 \rho L^3} \sum_{p,q,k \in \Lambda_L} \hat{V}(k) \hat{a}_{p+k}^* \hat{a}_p \hat{a}^*_{q-k} \hat{a}_q
		- \frac{1}{2 \rho} \cN V(0) \;.
\end{equation}

\subsection{Particle--Hole Transformation}
\label{subsec:fermiball}

Recall the Fermi ball $ \BF \subset \Lambda_L $ as defined in \eqref{eq:BF} and denote the complement of the Fermi ball as
\begin{equation}
	\BF^c := \Lambda_L \setminus \BF \;.
\end{equation}
We introduce the particle--hole transformation $ R : \cF \to \cF $ with $ R^* = R = R^{-1} $ by
\begin{equation} \label{eq:R}
	R^* \hat{a}_k R := \begin{cases}
		\hat{a}_k		&\quad \text{if } k \in \BF^c\\
		\hat{a}_{-k}^*	&\quad \text{if } k \in \BF\\
	\end{cases} \;, \qquad
	R \Omega := \psi_{\FB} = \bigwedge_{k \in \BF} f_k \;.
\end{equation}
With respect to~\cite{BNPSS20,BNPSS21}, we introduced a momentum reflection inside $ \BF $ to facilitate later computations.

Conjugating $ \cH_N $ with $ R $, we extract $ E_{\FB} $ as in~\cite[Sect.~1]{BNPSS21}: Introducing the pair creation operator
\begin{equation} \label{eq:bstar_full}
	B^*(k) := \sum_{p \in \BF^c \cap (\BF + k)} \hat{a}_p^* \hat{a}_{-p+k}^* \;,
\end{equation}
and the shift operator
\begin{equation} \label{eq:fDk}
	\fD(k)^* := \sum_{p \in \BF^c \cap (\BF^c + k)} \hat{a}_p^* \hat{a}_{p-k}
		- \sum_{h \in \BF \cap (\BF - k)} \hat{a}_{-h}^* \hat{a}_{-h-k} \;,
\end{equation}
we may write
\begin{equation}
	R^* \left( \sum_{p \in \Lambda_L} \hat{a}_{p+k}^* \hat{a}_p \right) R
	= B^*(k) + B(-k) + \fD(k)^* + N \delta_{k,0} \;.
\end{equation}
For states $ \psi = R \xi $ with $ \cN \psi = N \psi $, so $ \d \Gamma(\chi(p \in B_{\F})) \xi = \d \Gamma(\chi(p \in B_{\F}^c)) \xi $, we obtain with~\eqref{eq:cHNmomentum}
\begin{equation}
\begin{aligned}
	R^* \cH_N R \xi
	&= (E_{\FB} + \HHH_0 + \QQQ_{\B} + \cE_1 + \cE_2 + \XXX) \xi \;,
	\end{aligned}
\end{equation}
where
\begin{equation}\label{eq:RHNR}
\begin{aligned}
	\HHH_0 &:= \sum_{p \in \Lambda_L} e(p) \hat{a}_p^* \hat{a}_p, \qquad \text{with } e(p) := \hbar^2 \lvert |p|^2 - k_{\F}^2 \rvert \;, \\
	\QQQ_{\B} &:= \frac{1}{2 \rho L^3} \sum_{k \in \Lambda_L}
		\hat{V}(k) \Big( 2 B^*(k) B(k) + B^*(k) B^*(-k) + B(-k) B(k) \Big) \;, \\
	\cE_1 &:= \frac{1}{2 \rho L^3} \sum_{k \in \Lambda_L}
		\hat{V}(k) \fD(k)^* \fD(k) \;, \\
	\cE_2 &:= \frac{1}{2 \rho L^3} \sum_{k \in \Lambda_L}
		\hat{V}(k) \Big( \fD(k)^* B(k) + \fD(k)^* B^*(-k) + \mathrm{h.c.} \Big) \;, \\
	\XXX &:= -\frac{1}{2 \rho L^3} \sum_{k \in \Lambda_L}
		\hat{V}(k) \Bigg( \sum_{p \in \BF^c \cap (\BF + k)} \hat{a}_p^* \hat{a}_p + \sum_{h \in \BF \cap (\BF^c - k)} \hat{a}_{-h}^* \hat{a}_{-h} \Bigg)\;. \\
\end{aligned}
\end{equation}

\subsection{Effective Hamiltonian and Quasi-Bosonic Bogoliubov Transformation}
\label{subsec:bogoliubov}

To extract the RPA energy, we will approximate the operator $ \HHH_0 + \QQQ_{\B} $ by an effective quasi-bosonic quadratic Hamiltonian, based on a patch decomposition of a neighbourhood of the Fermi surface. Subsequently, we approximately diagonalize this effective Hamiltonian by a quasi-bosonic Bogoliubov transformation, implemented by a unitary operator $ T : \cF \to \cF $.

We adopt a continuum version of the patches $ B_\alpha $ constructed around the Fermi sphere $ \partial B_{k_{\F}}(0) $ in \cite{BNPSS20,BNPSS21,BPSS22}. That is, we define $ (B_\alpha)_{\alpha = 1}^M \subset \RRR^3 $ following the construction in~\cite[Sect.~3]{BNPSS20}, while omitting the intersection with the lattice $ \ZZZ^3 $ in $ \partial B_{\F}^R $, where we choose the number of patches $ M $ independent of $ L $ as an even integer such that
\begin{equation}
M \sim \log(\rho^\eta)
\end{equation}
for some $ \eta \in (0,\frac{1}{46}) $ to be fixed later. Every patch covers approximately the same surface area $4 \pi k_{\F}^2/M $, has thickness of $ 2 R $, and is separated from the neighboring patches by a corridor of width larger than $ R $, which, we recall, is the radius of the support of $ \hat{V} $. Furthermore, the patches satisfy the regularity condition
\begin{equation}
	\diam(B_\alpha)
	\le C \rho^{\frac{1}{3}} M^{-\frac{1}{2}} \;.
\end{equation}
We denote the center of each patch $ B_\alpha $ by $ \omega_\alpha \in \RRR^3 $ with $ |\omega_\alpha| = k_{\F} $, and introduce the corresponding unit vector
\begin{equation}
	\hat{\omega}_\alpha := \frac{\omega_\alpha}{|\omega_\alpha|} \;.
\end{equation}
The patch construction is symmetric under reflections $ k \mapsto -k $. In particular, denoting the opposite patch of $ B_\alpha $ by $ B_{\bar\alpha} $ with $ \bar\alpha \in \{1,\ldots,M\} $ such that
\begin{equation} \label{eq:baralpha}
	\bar\alpha \equiv \alpha + M/2 \quad \textnormal{mod } M  \;,
\end{equation}
we have
\begin{equation}
	B_{\bar\alpha} = - B_\alpha \;.
\end{equation}

Given a momentum $ k \in \Lambda_L $, we proceed to define operators $ b^*_\alpha(k) $ that excite all particle--hole pairs $ (p,h) \in \Lambda_L \times \Lambda_L $ in patch $ B_\alpha $ with $ p = h + k $ such that $ h \in B_{\F} $ and $ p \in B_{\F}^c $. Since the indicator functions of $ B_{k_{\F}}(0) $ and $ B_\alpha $ are not differentiable, we regularize the patches as follows: We choose some cutoff scale $ \fl $, which depends on $ \rho $ but not on $ L $, such that
\begin{equation}
	\fl \sim M^{-\eta} \;,
\end{equation}
where we recall that $ \eta \in (0,\frac{1}{46}) $ is an arbitrarily small constant to be fixed later. In particular, $ \fl \ll 1 $ and $ \fl^{-2} \ll M $. Now, define the regularized indicator functions $ \hat{u}^\alpha, \hat{v}^\alpha: \RRR^3 \to \RRR $ with $ 0 \le \hat{u}^\alpha(p) \le 1 $, $ 0 \le \hat{v}^\alpha(p) \le 1 $ such that $ \hat{u}^{\bar\alpha}(p) = \hat{u}^\alpha(-p) $, $ \hat{v}^{\bar\alpha}(p) = \hat{v}^\alpha(-p) $ and
\begin{equation} \label{eq:uhat_vhat}
\begin{aligned}
	\hat{u}^\alpha(p) = \begin{cases}
		1 \quad &\textnormal{if} \; p \in B_{k_{\F}}(0)^c \cap B_\alpha \quad \textnormal{and} \quad \dist(p, \partial (B_{k_{\F}}(0)^c \cap B_\alpha)) > 2 \fl \;, \\
		0 \quad & \textnormal{if} \; p \notin B_{k_{\F}}(0)^c \cap B_\alpha \quad \textnormal{or} \quad \dist(p, \partial (B_{k_{\F}}(0)^c \cap B_\alpha)) < \fl \;,
	\end{cases} \\
	\hat{v}^\alpha(p) = \begin{cases}
		1 \quad &\textnormal{if} \; p \in B_{k_{\F}}(0) \cap B_\alpha \quad \textnormal{and} \quad \dist(p, \partial (B_{k_{\F}}(0) \cap B_\alpha)) > 2 \fl \;, \\
		0 \quad & \textnormal{if} \; p \notin B_{k_{\F}}(0) \cap B_\alpha \quad \textnormal{or} \quad \dist(p, \partial (B_{k_{\F}}(0) \cap B_\alpha)) < \fl \;,
	\end{cases}
\end{aligned}
\end{equation}
and where the smooth transition is such that any $ m $-th derivative, tracked by a multi-index $ \um \in \NNN_0^3 $, $ |\um| = m $, is bounded by
\begin{equation}\label{eq:smooth}
 \Vert \partial^{\um} \hat{u}^\alpha \Vert_\infty \le C_m \fl^{-m}\,, \qquad \Vert \partial^{\um} \hat{v}^\alpha \Vert_\infty \le C_m \fl^{-m} .
\end{equation}
(The reason why we set $ \hat{u}^\alpha $ and $ \hat{v}^\alpha $ to 0 close to the boundary of the patch is that we later use discrete derivatives, which can enlarge the support of functions by additional lattice sites around the boundary. The distance between $ \supp(\hat{u}^\alpha) $, $ \supp(\hat{v}^\alpha) $ and the boundary ensures that also discrete derivatives of $\hat{u}^\alpha$ and $\hat{v}^\alpha$ remain supported in the respective patch $B_\alpha$.) In the following, we will construct a suitable normalization constant $ \tilde{n}_{\alpha,k} $ and define the \emph{regularized pair creation operators} for $ k \in \Lambda_L $, $ \alpha \in \{1, \ldots, M\} $ as
\begin{equation}
\label{eq:bstar}
	b^*_\alpha(k)
	:= \frac{1}{\tilde{n}_{\alpha,k}} \sum_{p \in \Lambda_L} \hat{u}^\alpha(p) \hat{v}^\alpha(p-k) \hat{a}_p^* \hat{a}_{-p+k}^* \;,
\end{equation}
with its adjoint being $ b_\alpha(k) $. A natural candidate for $ \tilde{n}_{\alpha,k}^2 $, such that $ [b_\alpha(k), b_\alpha^*(k)] \approx 1 $, would be the regularized number of pairs in a patch
\begin{equation} \label{eq:n}
	n_{\alpha,k}^2 
	:= \sum_{p \in \Lambda_L} \hat{u}^\alpha(p)^2 \hat{v}^\alpha(p-k)^2 \;.
\end{equation}
However, this quantity has the drawback that $ n_{\alpha,k} \to 0 $ as $ \hat{k} \cdot \hat\omega_\alpha $ becomes negative, and derivatives like $ \nabla_k n_{\alpha,k}^{-1} $ become ill-behaved. To regularize $ n_{\alpha,k} $, we fix a smooth, non-decreasing function $ f_>: \RRR \to [0,\infty) $ such that
\begin{equation} \label{eq:f}
\begin{aligned}
	f_>(x) &= 1/2 \quad 			&&\textnormal{for } x < 0 \;, \\
	f_>(x) &\in [1/2,1] \quad	&&\textnormal{for } x \in [0,1] \;,\\
	f_>(x) &= x \quad 			&&\textnormal{for } x > 1 \;,
\end{aligned}
\end{equation}
and introduce
\begin{equation}
\label{eq:lambdatilde}
	\tilde\lambda_{\alpha,k}
	:= \fl f_>(\fl^{-1} k \cdot \hat{\omega}_\alpha ) \;,
\end{equation}
so that $ \tilde\lambda_{\alpha,k} \ge \fl/2 $. We then define the regularized normalization constant as
\begin{equation} \label{eq:ntilde}
	\tilde{n}_{\alpha,k}^2 
	:= \frac{L^3}{(2 \pi)^3}
		\frac{4 \pi k_{\F}^2}{M} \tilde\lambda_{\alpha,k} \;.
\end{equation}
We show in Lemma~\ref{lem:approxID} that $ \tilde{n}_{\alpha,k}^2 $ approximates $ n_{\alpha,k}^2 $. The operators $ b^*_\alpha(k) $ satisfy the following commutation relations.

\begin{lemma} [Approximate canonical commutation relations]
\label{lem:approxCCR}
For $ k, k' \in \Lambda_L $ and $ \alpha, \beta \in \{1, \ldots M\} $ we have
\begin{equation}
\label{eq:approxCCR}
	[b_\alpha(k), b^*_\beta(k')]
	= \delta_{\alpha, \beta} \big( \delta_{k, k'} I(k)_{\alpha,\alpha} + \hat\varepsilon_\alpha(k, k') \big) \;,
\end{equation}
with
\begin{equation}
\label{eq:cEalphahat}
\begin{aligned}
	I(k)_{\alpha,\beta}
	&:= \delta_{\alpha,\beta} \frac{n_{\alpha,k}^2}{\tilde{n}_{\alpha,k}^2} \;, \\
	\hat\varepsilon_\alpha(k, k')
	&:=
	- \sum_{p \in \Lambda_L} \frac{\hat{u}^\alpha(p)^2 \hat{v}^\alpha(p-k) \hat{v}^\alpha(p-k')}{\tilde{n}_{\alpha,k} \tilde{n}_{\alpha,k'}} \hat{a}_{-p+k'}^* \hat{a}_{-p+k} \\
	& \,\,\quad - \sum_{p \in \Lambda_L} \frac{\hat{u}^\alpha(p) \hat{u}^\alpha(p+k-k') \hat{v}^\alpha(p-k')^2}{\tilde{n}_{\alpha,k} \tilde{n}_{\alpha,k'}} \hat{a}_p^* \hat{a}_{p+k-k'} \;.\\
\end{aligned}
\end{equation}
\end{lemma}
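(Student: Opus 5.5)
The identity follows from a direct computation with the CAR \eqref{eq:CAR}. We expand both operators using the definition \eqref{eq:bstar}:
\[
b_\alpha(k) = \frac{1}{\tilde n_{\alpha,k}} \sum_{p} \hat u^\alpha(p)\hat v^\alpha(p-k)\, \hat a_{-p+k}\hat a_p \;, \qquad
b^*_\beta(k') = \frac{1}{\tilde n_{\beta,k'}} \sum_{q} \hat u^\beta(q)\hat v^\beta(q-k')\, \hat a_q^*\hat a_{-q+k'}^* \;.
\]
We then use the standard identity for commutators of fermionic quadratic monomials,
\[
[\hat a_{b}\hat a_{a}, \hat a^*_{c}\hat a^*_{d}]
= \delta_{a,c}\delta_{b,d} - \delta_{a,d}\delta_{b,c}
- \delta_{a,c}\hat a^*_{d}\hat a_{b} - \delta_{b,d}\hat a^*_{c}\hat a_{a}
+ \delta_{a,d}\hat a^*_{c}\hat a_{b} + \delta_{b,c}\hat a^*_{d}\hat a_{a} \;,
\]
with $a=p$, $b=-p+k$, $c=q$, $d=-q+k'$. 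The proof reduces to evaluating which Kronecker deltas can be nonzero given the supports of the cutoff functions.

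The first step is the support analysis. The function $\hat u^\beta(q)$ is supported in $B_{k_\F}(0)^c\cap B_\beta$, while $\hat v^\beta(q-k')$ forces $q-k'\in B_{k_\F}(0)\cap B_\beta$. The same holds for $\alpha$. Consequently, $\hat a_{-q+k'}$ only ever carries the momentum $-(q-k')\in -B_{k_\F}(0)\cap B_{\bar\beta}$, which lies inside the Fermi ball. In contrast, $\hat a_q$ carries momenta outside the Fermi ball. Hence the "crossed" deltas $\delta_{p,-q+k'}$ and $\delta_{-p+k,q}$ force a momentum to lie both inside and outside $B_{k_\F}(0)$ (the cutoffs vanish on the boundary), so every term containing them vanishes.

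The "direct" deltas $\delta_{p,q}$ and $\delta_{-p+k,-q+k'}$ force $p=q$, or $p-k=q-k'$, to lie in $B_\alpha\cap B_\beta$. Since distinct patches are disjoint (separated by corridors), this gives $\alpha=\beta$, and hence the overall factor $\delta_{\alpha,\beta}$.

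The second step is to collect the surviving terms with $\alpha=\beta$. The fully contracted term $\delta_{p,q}\delta_{-p+k,-q+k'}$ gives $\delta_{k,k'}\sum_p \hat u^\alpha(p)^2\hat v^\alpha(p-k)^2/\tilde n_{\alpha,k}^2 = \delta_{k,k'} n_{\alpha,k}^2/\tilde n_{\alpha,k}^2 = \delta_{k,k'}I(k)_{\alpha,\alpha}$, using \eqref{eq:n}.

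The term $-\delta_{p,q}\hat a^*_{-q+k'}\hat a_{-p+k}$ yields the first sum in \eqref{eq:cEalphahat}, with coefficient $\hat u^\alpha(p)^2\hat v^\alpha(p-k)\hat v^\alpha(p-k')$.

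The term $-\delta_{-p+k,-q+k'}\hat a^*_q\hat a_p$ requires a relabelling. With $q=p-k+k'$, shifting the summation variable $p\mapsto p+k-k'$ gives $\hat a^*_p\hat a_{p+k-k'}$ with coefficient $\hat u^\alpha(p+k-k')\hat u^\alpha(p)\hat v^\alpha(p-k')^2$. This is the second sum in \eqref{eq:cEalphahat}.

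Only bookkeeping is needed here. The step needing care is the support argument excluding the crossed contractions and the cross-patch terms, in particular verifying that the sign conventions from the particle--hole reflection in \eqref{eq:R} and the index $-p+k$ produce exactly the stated orderings and signs. We would double-check this against the CAR normal ordering.
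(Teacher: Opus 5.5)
Your proposal is correct and follows the paper's argument, which is the direct CAR computation of the commutator. As you do, the support considerations (particle versus hole momenta, disjointness of the patches) kill the crossed and cross-patch contractions, and the shift $p\mapsto p+k-k'$ produces the second sum in $\hat\varepsilon_\alpha(k,k')$. Your closing worry about sign conventions from the particle--hole transformation is unnecessary: $R$ plays no role here, since $b_\alpha(k)$ is defined directly in terms of the $\hat a_p$, and your commutator identity already fixes all orderings and signs.
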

\begin{proof}
Direct computation using~\eqref{eq:bstar} and the CAR~\eqref{eq:CAR}.
\end{proof}

\begin{remark}\label{rem:no_pairs}
We call $\{z \in \mathbb{R}^3: k \cdot z = 0\}$ the equator of the Fermi ball (with respect to the choice of $k$ defining the north pole). For patches clearly located in the southern hemisphere, we have $b^*_\alpha(k) = 0$ because the operator is given by a sum without summands. In particular, if $\alpha$ or $\alpha'$ indicates a patch far in the southern hemisphere, then $I(k)_{\alpha,\alpha'} = 0$. However, very close to the equator, the center $\omega_\alpha$ of the patch may be on the southern hemisphere, but parts of the patch still fall into the northern hemisphere. In this case we cannot be certain of $b^*_\alpha(k) = 0$. Given that patches cover an angle of order $M^{-1/2}$, this case is excluded by the condition $k\cdot \hat{\omega}_\alpha \ll - \lvert k\rvert M^{-1/2}$. Note that this condition is implied by $ k\cdot \hat{\omega}_\alpha \ll - M^{-1/2} $ since $ |k| \le R $. 
Actually, since $ M^{-\frac 12} \ll \fl $, we know that $ k \cdot \hat\omega_\alpha \ge \fl $ implies a patch completely in the northern hemisphere. In this case $I(k)_{\alpha,\alpha} \simeq 1$. Conversely,  $ k \cdot \hat\omega_\alpha < - \fl $ implies a patch completely in the southern hemisphere, so $ I(k)_{\alpha,\alpha} = 0 $. We can therefore think of it as approximately being the block matrix
\[
	I \simeq \begin{pmatrix} \mathbf{1} & 0 \\ 0 & 0 \end{pmatrix} \;,
\]
where $\mathbf{1}$ is the identity matrix of dimension $\tfrac{M}{2} \times \tfrac{M}{2}$. The off-diagonal elements in this formula are exact, the diagonal elements are approximate, and $ - \fl < k \cdot \hat\omega_\alpha < \fl $ corresponds to a transition region where $\textbf{1}$ and the matrix $0$ interpolate.

In \cite{BNPSS20,BNPSS21}, the regularization of $ n_{\alpha,k}^2 $ by $\tilde{n}_{\alpha,k}^2$ was not present, so the diagonal elements were exactly $1$; the zero blocks were not present because it was possible to explicitly restrict to the northern hemisphere.
\end{remark}

Our next goal is to approximate $ \HHH_0 + \QQQ_{\B} $ by an operator that is quadratic in $ b^*_\alpha(k)$ and $b_\alpha(k) $. Observe that
\begin{equation}
	B^*(k) \simeq \sum_{\alpha = 1}^M \tilde{n}_{\alpha,k} b_{\alpha}^*(k) \;.
\end{equation}
To keep track of $ (\alpha,k) $ where $ b^*_\alpha(k) $ does not create any particles, let us define the cutoff $ \hat\chi_{>,\alpha} $ through a smooth, non-decreasing cutoff function $ g_> $:
\begin{equation} \label{eq:chi>}
	\hat\chi_{>,\alpha}(k) := g_>(\fl^{-1} k \cdot \hat\omega_\alpha) \qquad \textnormal{with} \qquad
	g_>(x) = \begin{cases}
		1 \quad &\text{for } x > -1 \;, \\
		0 \quad &\text{for } x < -2 \;.
			\end{cases}
\end{equation}
Since $ b_{\alpha}^*(k) = 0 $ if $ k \cdot \hat\omega_\alpha < -\fl $, see Remark~\ref{rem:no_pairs}, we have 
\begin{equation} \label{eq:bstar_chi_insertion}
	b_{\alpha}^*(k) = \hat\chi_{>,\alpha}(k) b_{\alpha}^*(k) \;.
\end{equation}
The additional factor $ \hat\chi_{>,\alpha}(k) $ can now be included in matrix elements of $ K(k) $~\eqref{eq:K}, making them vanish whenever also $ b_{\alpha}^*(k) $ does. This will ensure smoothness in momentum space and therefore decay in position space.
The quadratic bosonic operator $ \QQQ_{\B} $ is approximated by the effective interaction operator
\begin{equation}
\begin{aligned}
\label{eq:QQQBcR}
	\QQQ_{\B}^{\cR}
	& := \frac{1}{2 \rho L^3} \sum_{k \in \Lambda_L}
		\hat{V}(k) \sum_{\alpha, \beta = 1}^M
		\Big( 2 \tilde{n}_{\alpha,k} \tilde{n}_{\beta,k} 
		\hat\chi_{>,\alpha}(k) \hat\chi_{>,\beta}(k)
		b_{\alpha}^*(k) b_{\beta}(k) \\
	& \qquad + \tilde{n}_{\alpha,k} \tilde{n}_{\beta,-k}
		\hat\chi_{>,\alpha}(k) \hat\chi_{>,\beta}(-k)
		\big( b_{\alpha}^*(k) b_{\beta}^*(-k)
		+ b_{\beta}(-k) b_{\alpha}(k) \big)
		\Big) \;.
\end{aligned}
\end{equation}
To bosonize the kinetic term $ \HHH_0 $, observe that for $ k \cdot \hat{\omega}_\alpha > 0 $, we have
\begin{equation}
	[\HHH_0, b^*_{\alpha}(k)]
	\simeq 2 \hbar^2 k_{\F} (k \cdot \hat{\omega}_\alpha) b^*_{\alpha}(k) \;.
\end{equation}
Also here, we replace $ k \cdot \hat{\omega}_\alpha $ by $ \tilde\lambda_{\alpha,k} \ge \fl/2 $, which allows controlling derivatives like $ \nabla_k \tilde\lambda_{\alpha,k}^{-1} $, and introduces a spectral gap of order $ \hbar\fl $. Moreover, we cut off momenta outside $ \supp(\hat V) $ by introducing a radial function $ \hat\chi_R \in C_0^\infty(\RRR^3) $ with $ \hat\chi_R(k) = 1 $ for $ |k| \le R $ and $ \supp(\hat\chi_R) = \overline{B_{2R}(0)} $. Then, in commutators with $ b^*_\alpha(k) $ and for $ |k| \le R $, the operator $ \HHH_0 $ approximately behaves like
\begin{equation}\label{eq:DB}
	\DDD_{\B} := 2 \hbar^2 k_{\F}
		\sum_{k \in \Lambda_L} \hat\chi_R(k) \sum_{\alpha = 1}^M
		\tilde\lambda_{\alpha,k} b^*_{\alpha}(k) b_{\alpha}(k) \;.
\end{equation}
Our effective Hamiltonian is $ \DDD_{\B} + \QQQ_{\B}^{\cR} $. To approximately diagonalize this operator, let us write it in the following form, using the $ \bar\alpha $-notation~\eqref{eq:baralpha}, so that $ \tilde{n}_{\bar\alpha,k} = \tilde{n}_{\alpha,-k} $ and $ \hat\chi_{>,\bar\alpha}(k) = \hat\chi_{>,\alpha}(-k) $:
\begin{equation} \label{eq:Bosonizedoperator}
\begin{aligned}
	\DDD_{\B} + \QQQ_{\B}^{\cR}
	= \hbar^2 k_{\F}
		\sum_{k \in \Lambda_L} \sum_{\alpha, \beta = 1}^M
		\Big( &
		2 (h + P)(k)_{\alpha, \beta}
		 b_{\alpha}^*(k) b_{\beta}(k) \\
		 &
		+ P(k)_{\alpha, \bar\beta}
		 \big( b_{\alpha}^*(k) b_{\beta}^*(-k)
		 + b_{\beta}(-k) b_{\alpha}(k) \big)
		 \Big) \;,
\end{aligned}
\end{equation}
with matrices $ h(k), P(k) \in \RRR^{M \times M} $ given by 
\begin{equation}
\label{eq:hP}
	h(k)_{\alpha,\beta} := \delta_{\alpha,\beta} \tilde\lambda_{\alpha,k} \hat\chi_R(k) \;, \quad
	P(k)_{\alpha,\beta} := \tilde{v}_{\alpha, k} \tilde{v}_{\beta, k} \;, \quad
	\tilde{v}_{\alpha, k}^2 := \frac{\hat{V}(k)}{2 k_{\F} \hbar} \frac{\tilde{n}_{\alpha,k}^2}{\rho^{\frac 23} L^3} \hat\chi_{>,\alpha}(k)^2 \;.
\end{equation}

The approximate diagonalization of $ \QQQ_{\B}^{\cR} + \DDD_{\B} $ is performed as
in~\cite{BNPSS20}: We introduce the unitary, approximate Bogoliubov implementation $ T = T_1 $, where $ T_\lambda : \cF \to \cF$, $\lambda \in [-1,1] $ is given by
\begin{equation} \label{eq:T}
	T_\lambda := e^{-\lambda S}\;, \qquad
	S := S_- - S_-^* \;, \qquad
	S_- := \frac{1}{2} \sum_{k \in \Lambda_L} \sum_{\alpha, \beta = 1}^M K(k)_{\alpha, \bar\beta}  b_\alpha(k) b_\beta(-k)  \;,
\end{equation}
with $ K(k) \in \RRR^{M \times M} $ the symmetric matrix defined as
\begin{equation} \label{eq:K}
\begin{aligned}
	K(k) &:= - \frac 12 \log \left( h(k)^{-\frac 12}
		\Big( h(k)^2 + 2 h(k)^{\frac 12} P(k) h(k)^{\frac 12} \Big)^{\frac 12}
		h(k)^{-\frac 12} \right) \qquad \mathrm{for } \;  k \in \supp(\hat\chi_R) \;,
\end{aligned}
\end{equation}
and 0 otherwise.
Note that $ K(k)_{\alpha,\beta} = 0 $ if $ k \notin \supp(\hat\chi_{>,\alpha} \hat\chi_{>,\beta}) $, and already for $ |k| > R $, see Lemma~\ref{lem:regularity_K}. This is the reason why we introduced the cutoff $ \hat\chi_{>,\alpha} $: It forces matrix elements of $ K $ to 0, whenever there is no pair creation possible.

Moreover we have the symmetries
\begin{equation}
	h(k)_{\alpha,\beta} = h(k)_{\beta,\alpha} = h(-k)_{\bar\alpha,\bar\beta}
\qquad
\text{and}
\qquad
	P(k)_{\alpha,\beta} = P(k)_{\beta,\alpha} = P(-k)_{\bar\alpha,\bar\beta} \;,
\end{equation}
and therefore also
\begin{equation}
	K(k)_{\alpha,\beta} = K(k)_{\beta,\alpha} = K(-k)_{\bar\alpha,\bar\beta} \;.
\end{equation}

Our trial state is defined as
\begin{equation} \label{eq:trialstate}
	\psi := R T \Omega \;.
\end{equation}
Indeed, $ \psi $ is an $ N $-particle state: using the definition of the particle-hole transformation one shows that $ R^* \cN R = N + \d \Gamma(\chi(p \in B_{\F}^c)) - \d \Gamma(\chi(p \in B_{\F})) $. Moreover we have the obvious
\begin{equation}
    [S,\d \Gamma(\chi(p \in B_{\F}^c)) - \d \Gamma(\chi(p \in B_{\F}))] = 0 \;.
\end{equation}
Therefore, by the Duhamel formula, $(\d \Gamma(\chi(p \in B_{\F}^c)) - \d \Gamma(\chi(p \in B_{\F}))) T\Omega =0$ and $\cN R T \Omega = N R T \Omega$.

\section{Position Space Description}
\label{sec:positionspace}

\subsection{Pair Operators in Position Space}
\label{subsec:positionspace_rewriting}

In many bounds, we will exploit the decay of certain quantities at large distances, which is best done in a position space description. For instance, bounding $ S $ will require decay of the Fourier inverse of the function $ k \mapsto \frac{K(k)_{\alpha, \beta}}{\tilde{n}_{\alpha, k} \tilde{n}_{\beta, k}} $.

In general, we will encounter functions $ k \mapsto \frac{X(k)_{\alpha, \beta}}{\tilde{n}_{\alpha, k} \tilde{n}_{\beta, k}} $ with some $ X: \Lambda_L \to \RRR^{M \times M} $, $ X \in \ell^1(\Lambda_L) $ that (approximately) scales like $ X(k)_{\alpha, \beta} \sim M^{-1} $, while $ \tilde{n}_{\alpha, k}^2 \sim M^{-1} \rho^{\frac 23} L^3 $. This motivates the introduction of the (approximately) normalized functions
\begin{equation}
\label{eq:varphi}
	\hat\varphi_X(k)_{\alpha, \beta}
	:= L^3 \rho^{\frac 23} \frac{X(k)_{\alpha, \beta}}{\tilde{n}_{\alpha, k} \tilde{n}_{\beta, k}} \;, \qquad
	\varphi_X(z)_{\alpha, \beta}
	:= \frac{1}{L^3} \sum_{k \in \Lambda_L} \hat\varphi_X(k)_{\alpha, \beta} e^{i k \cdot z} \;,
\end{equation}
in agreement with our Fourier convention~\eqref{eq:Fourier_convention}.

Representing operators like $ S $ in position space involves the inverse Fourier transforms of $ \hat{u}^\alpha $ and $ \hat{v}^\alpha $ from~\eqref{eq:uhat_vhat}. More precisely, we introduce the Fourier inverses $ u^\alpha, v^\alpha : \TTT_L \to \CCC $ for $ \alpha \in \{1, \ldots, M\} $:
\begin{equation}
\label{eq:uxvx}
	u^\alpha(x) := \frac{1}{L^3} \sum_{p \in \Lambda_L} \hat{u}^\alpha(p)
		e^{i p\cdot x} \;, \qquad
	v^\alpha(x) := \frac{1}{L^3} \sum_{p \in \Lambda_L} \hat{v}^\alpha(p)
		e^{i p\cdot x} \;,
\end{equation}
and we use the notation $ u_y^\alpha(x) = u^\alpha(x-y) $ and $ v_y^\alpha(x) = v^\alpha(x-y) $ for shifted functions. Here $ u^\alpha(x-y) $ and $ v^\alpha(x-y) $ are the kernels of some equally denoted operators $ 0 \le u^\alpha \le 1 $ and $ 0 \le v^\alpha \le 1 $, where
\begin{equation}
\label{eq:uv}
	u^\alpha = \sum_{p \in \Lambda_L} \hat{u}^\alpha(p) |f_p \rangle \langle f_p| \;, \qquad
	v^\alpha = \sum_{p \in \Lambda_L} \hat{v}^\alpha(p) |f_p \rangle \langle f_p| \;,
\end{equation}
where we recall that $ (f_p)_{p \in \Lambda_L} $ is the orthonormal basis of plane waves. Note that $ u^\alpha v^\beta = 0 = v^\beta u^\alpha $ for all $ \alpha,\beta \in \{1,\ldots,M\} $. One easily verifies that
\begin{equation}
	\frac{1}{L^{3/2}} \int \d x \; e^{i p \cdot x} a^*(u_x^\alpha) = \hat{u}^\alpha(p) \hat{a}_p^*
	 \;, \quad
	\frac{1}{L^{3/2}} \int \d x \; e^{-i h \cdot x} a^*(v_x^{\bar\alpha}) = \hat{v}^\alpha(h) \hat{a}_{-h}^*
	= \hat{v}^{\bar\alpha}(-h) \hat{a}_{-h}^* \;.
\end{equation}
Introducing the pair operators
\begin{equation} \label{eq:bx_bxstar}
	b_{\alpha,x}
	:= a(v_x^{\bar\alpha}) a(u_x^\alpha) \;, \qquad
	b^*_{\alpha,x}
	:= a^*(u_x^\alpha) a^*(v_x^{\bar\alpha}) \;,
\end{equation}
by a computation using $ \tilde{n}_{\bar\beta, k} = \tilde{n}_{\beta, -k} $, we obtain the transformation law
\begin{equation}
    b^*_{\alpha,x} = \frac{1}{L^3} \sum_{k \in \Lambda_L} e^{-ik\cdot x} \tilde{n}_{\alpha,k} b^*_\alpha(k)\;, \qquad b^*_\alpha(k) = \frac{1}{\tilde{n}_{\alpha,k}} \int \di x e^{ik\cdot x} b^*_{\alpha,x} \;.
\end{equation}
This implies
\begin{equation} \label{eq:positionspace_trafo}
\begin{aligned}
	\sum_{k \in \Lambda_L} X(k)_{\alpha, \beta} b^*_\alpha(k) b_\beta(k)
	&= \frac{1}{\rho^{\frac 23}} \int \d x \d y \; \varphi_X(x-y)_{\alpha,\beta} b^*_{\alpha,x} b_{\beta,y} \;, \\
	\sum_{k \in \Lambda_L} X(k)_{\alpha, \bar\beta} b_\alpha^*(k) b_\beta^*(-k)
	&= \frac{1}{\rho^{\frac 23}} \int \d x \d y \; \varphi_X(x-y)_{\alpha,\bar\beta} b^*_{\alpha,x} b^*_{\beta,y} \;. \\
\end{aligned}
\end{equation}
In particular, we may express
\begin{equation}
\label{eq:S}
	S
	= \frac 12 \sum_{k \in \Lambda_L} \sum_{\alpha, \beta = 1}^M
		\frac{K(k)_{\alpha, \bar\beta}}{\tilde{n}_{\alpha, k} \tilde{n}_{\bar\beta, k}}
		\sum_{p, q \in \Lambda_L} \hat{v}^\alpha(p-k) \hat{u}^\alpha(p) \hat{v}^\beta(q+k) \hat{u}^\beta(q)
		\hat{a}_{-p+k} \hat{a}_p \hat{a}_{-q-k} \hat{a}_q - \mathrm{h.c.}
\end{equation}
in position space as
\begin{equation}
\label{eq:S_position}
	S = \frac{1}{2 \rho^{\frac 23}} \sum_{\alpha, \beta = 1}^M
		\int \d x \d y \; \varphi_K(y-x)_{\alpha, \bar\beta}
		a(v_x^{\bar\alpha}) a(u_x^\alpha) a(v_y^{\bar\beta}) a(u_y^\beta) - \mathrm{h.c.} \;,
\end{equation}
where we used that $ K(k)_{\alpha,\beta} $ is real, so $ \overline{\varphi_K(x-y)_{\alpha, \bar\beta}} = \varphi_K(y-x)_{\alpha, \bar\beta} $.

\subsection{Fourier Bounds and Discrete Derivatives}
\label{subsec:discrete_derivatives}

Within our Fourier convention~\eqref{eq:Fourier_convention}, it is natural to define the $ \ell^1 $- and $ \ell^2 $-norms as
\begin{equation} \label{eq:ell_norms}
	\Vert \hat{f} \Vert_1 := \frac{1}{L^3} \sum_{k \in \Lambda_L} |\hat{f}(k)| \;, \qquad
	\Vert \hat{f} \Vert_2^2 := \frac{1}{L^3} \sum_{k \in \Lambda_L} |\hat{f}(k)|^2 \;.
\end{equation}
This leads to the following versions of standard bounds; we omit the proof.

\begin{lemma} [Fourier bounds] \label{lem:FT_bounds}
Let $ \hat{g}_1 \in \ell^1(\Lambda_L) $ and $ \hat{g}_2 \in \ell^2(\Lambda_L) $ with inverse Fourier transforms $ g_1 $ and $ g_2 $. Then,
\begin{equation} \label{eq:FT_bounds}
	\Vert g_1 \Vert_\infty \le \Vert \hat{g}_1 \Vert_1 \;, \qquad
	\Vert g_2 \Vert_2 = \Vert \hat{g}_2 \Vert_2 \;.
\end{equation}
Moreover, if $ f_1, f_2 \in L^2(\TTT_L) $ with position and momentum space multiplication operators $ A_{f_1}, B_{f_2}: L^2(\TTT_L) \to L^2(\TTT_L) $,
\begin{equation}
	(A_{f_1} g)(x)
	:= f_1(x) g(x) \;, \qquad
	(\widehat{B_{f_2} g})(k)
	:= \hat{f}_2(k) \hat{g}(k) \;,
\end{equation}
then the Hilbert--Schmidt norm of $ A_{f_1} B_{f_2} $ is
\begin{equation} \label{eq:KSS_equality}
	\Vert A_{f_1} B_{f_2} \Vert_{\HS}
	= \Vert f_1 \Vert_2 \Vert \hat{f}_2 \Vert_2 \;.
\end{equation}
\end{lemma}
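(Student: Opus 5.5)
The three claims in Lemma~\ref{lem:FT_bounds} are discrete analogues of the Hausdorff--Young and Plancherel inequalities, adapted to the normalization \eqref{eq:ell_norms}. We verify each directly from the Fourier convention \eqref{eq:Fourier_convention}, namely $g(x) = L^{-3}\sum_{k\in\Lambda_L}\hat g(k)e^{ik\cdot x}$.

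For the sup bound, we estimate the series termwise:
\[
|g_1(x)| \le L^{-3}\sum_{k}|\hat g_1(k)| = \Vert \hat g_1\Vert_1 .
\]
The series converges absolutely because $\hat g_1 \in \ell^1(\Lambda_L)$.

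For Plancherel, we use that $(f_k)_{k\in\Lambda_L}$ with $f_k = L^{-3/2}e^{ik\cdot x}$ is an orthonormal basis of $L^2(\TTT_L)$. Then $g_2 = L^{-3/2}\sum_k \hat g_2(k) f_k$, so
\[
\Vert g_2\Vert_2^2 = L^{-3}\sum_k |\hat g_2(k)|^2 = \Vert \hat g_2\Vert_2^2 .
\]

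For the Hilbert--Schmidt identity, we compute the integral kernel of $A_{f_1}B_{f_2}$. The operator $B_{f_2}$ acts on the basis as $B_{f_2}f_k = \hat f_2(k) f_k$, so its kernel is
\[
\sum_k \hat f_2(k) f_k(x)\overline{f_k(y)} = L^{-3}\sum_k \hat f_2(k)e^{ik\cdot(x-y)} = f_2(x-y).
\]
Hence $A_{f_1}B_{f_2}$ has kernel $f_1(x)f_2(x-y)$. Its Hilbert--Schmidt norm squared is
\[
\int \d x\,\d y\; |f_1(x)|^2|f_2(x-y)|^2 .
\]
By translation invariance of Lebesgue measure on the torus, the $y$-integral equals $\Vert f_2\Vert_2^2$ for every $x$. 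The total is therefore $\Vert f_1\Vert_2^2\Vert f_2\Vert_2^2 = \Vert f_1\Vert_2^2\Vert\hat f_2\Vert_2^2$ by the Plancherel step.

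The only delicate point is justifying the kernel formula when $\hat f_2$ is merely in $\ell^2$, since $B_{f_2}$ may then be unbounded. We handle this by first taking $\hat f_2$ finitely supported, where all sums are finite. For general $\hat f_2$, we approximate by truncations and pass to the limit: $\Vert A_{f_1}B_{f_2^{(n)}} - A_{f_1}B_{f_2^{(m)}}\Vert_{\HS}$ equals $\Vert f_1\Vert_2\Vert \hat f_2^{(n)}-\hat f_2^{(m)}\Vert_2$ by the identity already established. The truncated operators are thus Cauchy in Hilbert--Schmidt norm and converge to $A_{f_1}B_{f_2}$, which yields the identity in general. Apart from this approximation step, everything is routine.
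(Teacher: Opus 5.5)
Your argument is correct and is the standard verification the paper has in mind when it omits the proof of these bounds as standard. One correction to your last paragraph: on the lattice one has $\ell^2(\Lambda_L)\subset\ell^\infty(\Lambda_L)$, so $B_{f_2}$ is always bounded, by $\sup_k|\hat{f}_2(k)|$. The factor that can be unbounded is $A_{f_1}$, but the composition is still bounded because $B_{f_2}g=f_2*g$ and $\Vert f_2*g\Vert_\infty\le\Vert f_2\Vert_2\Vert g\Vert_2$; hence your kernel computation applies directly to all $g\in L^2(\TTT_L)$, and the approximation step is unnecessary, though harmless.
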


Since the Fourier transforms $ \hat{f} $ are functions on the discrete momentum lattice $ \Lambda_L $ with spacing $ 2 \pi L^{-1} $, we introduce discrete derivatives in Fourier space. We define the right and left discrete derivatives $ \partial_{k_j}$ and $\bar\partial_{k_j} $, $ j \in \{1,2,3\} $, as
\begin{equation} \label{eq:partial_k}
	(\partial_{k_j} \hat{f})(k)
	:= \frac{\hat{f}(k + 2 \pi L^{-1} e_j) - \hat{f}(k)}{2 \pi L^{-1}} \;, \qquad
	(\bar\partial_{k_j} \hat{f})(k)
	:= \frac{\hat{f}(k) - \hat{f}(k - 2 \pi L^{-1} e_j)}{2 \pi L^{-1}} \;,
\end{equation}
where $ e_j $ is the $ j $-th unit vector. One easily verifies $ [\partial_{k_i}, \bar\partial_{k_{j}}] = 0 $ for $ i,j \in \{1,2,3\} $, as well as the product rule
\begin{equation} \label{eq:productrule}
\begin{aligned}
	\partial_{k_j} (\hat{f} \hat{g})(k)
	&= (\partial_{k_j} \hat{f})(k) \hat{g}(k + 2 \pi L^{-1} e_j)
		+ \hat{f}(k)(\partial_{k_j} \hat{g})(k) \;, \\
	\bar\partial_{k_j} (\hat{f} \hat{g})(k)
	&= (\bar\partial_{k_j} \hat{f})(k) \hat{g}(k - 2 \pi L^{-1} e_j)
		+ \hat{f}(k)(\bar\partial_{k_j} \hat{g})(k) \;.
\end{aligned}
\end{equation}
The discrete Laplacian is defined as
\begin{equation} \label{eq:Delta_k}
	- \Delta_k
	:= - \sum_{j=1}^3 \bar\partial_{k_j} \partial_{k_j} \;.
\end{equation}
Denoting the inverse Fourier transform on the torus by $ F^{-1}(\hat{f}) = f $, the derivatives $ i \partial_{k_j} $, $ i \bar\partial_{k_j} $, and $ - \Delta_k $ amount to multiplication operators in position space as follows
\begin{equation} \label{eq:derivatives_position_multiplication}
\begin{aligned}
	F^{-1} (i \partial_{k_j} \hat{f}) (x)
	&= e^{-i \pi L^{-1} x_j} \frac{\sin(\pi L^{-1} x_j)}{\pi L^{-1}} f(x)
	=: \cX_j(x) f(x) \;, \\
	F^{-1} (i \bar\partial_{k_j} \hat{f}) (x)
	&= e^{i \pi L^{-1} x_j} \frac{\sin(\pi L^{-1} x_j)}{\pi L^{-1}} f(x)
	= \overline{\cX_j(x)} f(x) \;, \\
	F^{-1} (- \Delta_k \hat{f})(x)
	&= \sum_{j=1}^3 \frac{1 - \cos(2 \pi L^{-1} x_j)}{2 \pi^2 L^{-2}} f(x)
	= \sum_{j=1}^3 |\cX_j(x)|^2 f(x) \;.
\end{aligned}
\end{equation}
Correspondingly, we will introduce the position space localization function $ \cW \in L^1(\TTT_L) $,
\begin{equation}
\label{eq:cWz}
	\cW(x) := \Bigg( 1 + \sum_{j=1}^3 \frac{1 - \cos(2 \pi L^{-1} x_j)}{2 \pi^2 L^{-2}} \Bigg)^{-2}
	= \Bigg( 1 + \sum_{j=1}^3 |\cX_j(x)|^2 \Bigg)^{-2} \;,
\end{equation}
and denote $ \cW_z(x) := \cW(x-z) $, so $ \Vert \cW_z \Vert_1 \le C $ (independent of $L$), and multiplication by $ \cW^{-1} $ in position representation corresponds to application of the operator $ (1-\Delta_k)^2 $ in momentum representation. The corresponding localized number operator is
\begin{equation}
\label{eq:cNz}
	\cN_z
	:= \d \Gamma(\cW_z)
	= \int \d x \; \cW_z(x) a_x^* a_x  \;.
\end{equation}
We may think of $ \cW(x) $ as a torus version of the localization function $ \frac{1}{(1+\lvert x\rvert^2)^2} $. Moreover, taking a product $ \cW_z \cW_{z'} $ will localize $ z $ near $ z' $, as shown by the following lemma.

\begin{lemma}[Products of $ \cW $] \label{lem:Peetre}
There exists $ C > 0 $ such that for all $ z,z' \in \TTT_L $ we have
\begin{equation}
\label{eq:Peetre}
	\cW(z-z')^{-1}
	\le C(\cW_z^{-1} + \cW_{z'}^{-1}) \;, \qquad
	\cW_z \cW_{z'}
	\le C \cW(z-z') \;.
\end{equation}
\end{lemma}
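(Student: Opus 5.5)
This is a torus version of Peetre's inequality. Set $w(x) := 1 + \sum_{j=1}^3 |\cX_j(x)|^2$, so that $\cW = w^{-2}$. The plan is to prove a quasi-triangle inequality for $w$ and then square it. Each summand satisfies
\[
|\cX_j(x)|^2 = \frac{\sin^2(\pi L^{-1} x_j)}{\pi^2 L^{-2}}.
\]
The key one-dimensional fact is subadditivity of the periodic function $t \mapsto |\sin(\pi L^{-1} t)|$. The addition formula gives
\[
\sin(\pi L^{-1}(a+b)) = \sin(\pi L^{-1} a)\cos(\pi L^{-1} b) + \cos(\pi L^{-1} a)\sin(\pi L^{-1} b),
\]
and since $|\cos| \le 1$ we get
\[
|\sin(\pi L^{-1}(a+b))| \le |\sin(\pi L^{-1} a)| + |\sin(\pi L^{-1} b)|.
\]
Hence $|\cX_j(a+b)| \le |\cX_j(a)| + |\cX_j(b)|$ for all $a,b \in \TTT_L$. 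This is well defined on the torus, because $|\sin(\pi L^{-1} t)|$ is $L$-periodic.

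Next, I would apply this with $a = x - z$ and $b = z' - x$, so that $a + b = z' - z$. Squaring and using $(s+t)^2 \le 2s^2 + 2t^2$ gives $|\cX_j(z-z')|^2 \le 2|\cX_j(x-z)|^2 + 2|\cX_j(x-z')|^2$. Here we also use $|\cX_j(-y)| = |\cX_j(y)|$. Summing over $j$ and adding $1$ yields
\[
w(z-z') \le 2\big(w(x-z) + w(x-z')\big),
\]
where $x$ is the implicit variable of $\cW_z(x) = \cW(x-z)$. Squaring once more gives $w(z-z')^2 \le 8\big(w(x-z)^2 + w(x-z')^2\big)$. Since $\cW^{-1} = w^2$, this is exactly the first inequality with $C = 8$, understood pointwise in $x$.

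For the second inequality I would use $w \ge 1$, which gives $\cW_z^{-1}, \cW_{z'}^{-1} \ge 1$. Then
\[
\cW_z^{-1} + \cW_{z'}^{-1} \le 2\max\{\cW_z^{-1}, \cW_{z'}^{-1}\} \le 2\,\cW_z^{-1}\cW_{z'}^{-1}.
\]
Combined with the first inequality this gives $\cW(z-z')^{-1} \le 16\, \cW_z^{-1}\cW_{z'}^{-1}$. Inverting yields $\cW_z\cW_{z'} \le 16\,\cW(z-z')$.

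No step is a genuine obstacle. The only point needing care is the periodic subadditivity of $|\sin|$, which replaces the Euclidean triangle inequality in the usual Peetre argument. The phase factor in $\cX_j$ plays no role, since only $|\cX_j|$ enters.
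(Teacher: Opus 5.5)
Your proof is correct. It follows the same Peetre-type scheme as the paper (a quasi-triangle inequality, squared, then inverted), but the key step is done differently.

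The paper first proves the two-sided comparison
$\langle z\rangle^{-4} \le \cW(z) \le (\pi/2)^4 \langle z\rangle^{-4}$, where $\langle z\rangle = \sqrt{1+|z|^2}$. This comes from $(2/\pi)^2 x^2/2 \le 1-\cos x \le x^2/2$ on $[-\pi,\pi]$. It then runs the usual Euclidean argument for $\langle\cdot\rangle^4$ via the triangle inequality. There $|z|$ must implicitly be the norm of the representative in $[-L/2,L/2]^3$, i.e.\ the torus distance.

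You instead prove the quasi-triangle inequality for $w = \cW^{-1/2}$ directly, from subadditivity of $|\sin(\pi L^{-1}\,\cdot\,)|$. Because this function is intrinsically $L$-periodic, no choice of representatives or torus metric is needed. You also get explicit constants ($8$ and $16$) instead of $16(\pi/2)^4$.

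The paper's detour has its own payoff. The comparison with $\langle z\rangle^{-4}$ is the form in which $\cW$ is used elsewhere: for $\Vert \cW_z\Vert_1 \le C$ uniformly in $L$, and for the decay bounds in $(1+|z-z'|)$ in the proof of Lemma~\ref{lem:bosonizationerror}.

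Your derivation of the second inequality from the first, using $\cW_z^{-1},\cW_{z'}^{-1}\ge 1$, is equivalent to the paper's step of multiplying by $\cW_z\cW_{z'}$ and using $\cW_z,\cW_{z'}\le 1$.
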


\begin{proof}
For any $ z \in \TTT_L $, define
\begin{equation}
	\langle z \rangle := \sqrt{1 + \lvert z \rvert^2} \;.
\end{equation}
From $ \left( \frac{2}{\pi} \right)^2 \frac{x^2}{2} \le 1 - \cos(x) \le \frac{x^2}{2} $ on the interval $[-\pi,\pi]$, we obtain
\begin{equation}
	\langle z \rangle^{-4}
	\le \cW(z)
	\le \Big( \frac{\pi}{2} \Big)^4 \langle z \rangle^{-4} \;.
\end{equation}
So the first claim follows if we can show that there exists $C > 0$ such that
\begin{equation}
	\langle z-z' \rangle^4
	\le C(\langle z-x \rangle^4 + \langle x-z' \rangle^4) \qquad \forall x,z,z' \in \TTT_L \;.
\end{equation}
In fact, by the triangle inequality, $ |z-x| \ge \frac{|z-z'|}{2} $ or $ |x-z'| \ge \frac{|z-z'|}{2} $, so
\begin{equation}
	\langle z-x \rangle^4 + \langle x-z' \rangle^4
	\ge \Big\langle \frac{z-z'}{2} \Big\rangle^4
	\ge \frac{1}{16} \langle z-z' \rangle^4 \;.
\end{equation}
The second claim follows by taking the reciprocal of the first, then multiplying it with $ \cW_z \cW_{z'} $ on both sides and using $ \cW_z ,\cW_{z'} \le 1 $.
\end{proof}

\section{Bounds on Number of Excitations}
\label{sec:gronwall_bounds}

The central result of this section are estimates on the number of excitations operator $ \cN $ and localized number of excitations operator $ \cN_z $~\eqref{eq:cNz}. The bounds hold on translation invariant states. To be precise, we define the translation operator
\begin{equation} \label{eq:cTy}
	\cT_y: L^2(\TTT_L) \to L^2(\TTT_L) \;, \qquad
	(\cT_y f)(x) := f(x-y) \;,
\end{equation}
as well as its lift to Fock space, which acts on the $ N $-particle sector as
\begin{equation} \label{eq:Gamma_cTy}
	(\Gamma(\cT_y) \xi)(x_1,\ldots,x_N)
	:= \xi(x_1-y,\ldots,x_N-y) \;.
\end{equation}
We call $ \xi $ \emph{translation invariant} if $ \Gamma(\cT_y) \xi = \xi $ for all $ y \in \TTT_L $.

\begin{proposition}[Grönwall estimates] \label{prop:gronwall}
Recall $ T_\lambda = e^{-\lambda S} $ from \eqref{eq:T}. Then, there exists $ C > 0 $ such that for $ \lambda \in [-1,1] $, $ z \in \TTT_L $, and for translation-invariant $ \xi \in \cF $,
\begin{equation}
\label{eq:gronwall}
\begin{aligned}
	\langle \xi, T_\lambda^* (\cN + L^3) T_\lambda \xi \rangle
	\le &e^{C \fl^{-7}} \langle \xi, (\cN + L^3) \xi \rangle \;,\\
	\langle \xi, T_\lambda^* (\cN_z + 1)^2 T_\lambda \xi \rangle
	\le &e^{C \fl^{-17} M^{\frac 12}} \langle \xi, (\cN_z + 1)^2 \xi \rangle \;.
\end{aligned}
\end{equation}
\end{proposition}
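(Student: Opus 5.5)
The plan is a Grönwall argument in $\lambda$. For the first bound, set $f(\lambda) := \langle \xi, T_\lambda^* (\cN + L^3) T_\lambda \xi \rangle$. Then
\[
	\partial_\lambda f(\lambda) = -\langle T_\lambda \xi, [S,\cN] T_\lambda \xi \rangle .
\]
Since $S_-$ annihilates four fermions, $[S,\cN] = 4(S_- + S_-^*)$. It therefore suffices to prove, for translation-invariant $\eta$ (note that $T_\lambda\xi$ is again translation invariant, because $S$ commutes with $\Gamma(\cT_y)$, as is clear from the position-space form~\eqref{eq:S_position}),
\[
	|\langle \eta, S_- \eta\rangle| \le C \fl^{-7} \langle \eta, (\cN + L^3)\eta\rangle .
\]
Grönwall's lemma then gives the claim for $|\lambda|\le 1$.

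To bound $S_-$ I use~\eqref{eq:S_position}:
\[
	S_- = \frac{1}{2\rho^{2/3}} \sum_{\alpha,\beta}\int \d x \d y\, \varphi_K(y-x)_{\alpha,\bar\beta}\, b_{\alpha,x} b_{\beta,y} .
\]
The key inputs are the following.

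\emph{(i) Decay of $\varphi_K$.} Multiplying by $\cW^{-1}$ corresponds to applying $(1-\Delta_k)^2$ in momentum space. Combined with Lemma~\ref{lem:FT_bounds}, this gives
\[
	|\varphi_K(z)_{\alpha,\bar\beta}| \le \cW(z)\, \big\| (1-\Delta_k)^2 \hat\varphi_K(\cdot)_{\alpha,\bar\beta}\big\|_1 .
\]
The derivatives of $K(k)_{\alpha,\beta}/(\tilde n_{\alpha,k}\tilde n_{\beta,k})$ are controlled via the smoothness of $\hat\chi_{>,\alpha}$ and $\tilde\lambda_{\alpha,k}\ge\fl/2$, together with the regularity of $K$ (Lemma~\ref{lem:regularity_K}). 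Each discrete derivative costs $\fl^{-1}$, and the gap $h\gtrsim \fl$ contributes further inverse powers. I expect a bound of the form $|\varphi_K(z)_{\alpha,\bar\beta}| \le C\fl^{-a} M^{-1}\cW(z)$, which is where a power like $\fl^{-7}$ should arise.

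\emph{(ii) Pair operators.} Cauchy--Schwarz in the form $|\langle\eta, b_{\alpha,x}b_{\beta,y}\eta\rangle| \le \|b^*_{\alpha,x}\eta\|\,\|b_{\beta,y}\eta\|$. Using $\|a(u^\alpha_x)\|\le \|u^\alpha\|_2 \sim (\rho M^{-1})^{1/2}$ (from the volume of the patch), I get
\[
	\|b_{\beta,y}\eta\| \le \|v^{\bar\beta}\|_2 \, \|a(u^\beta_y)\eta\| .
\]
After summing over $\beta$, the quantity $\sum_\beta \|a(u^\beta_y)\eta\|^2$ is bounded by $\langle \eta, \d\Gamma(\text{localized})\eta\rangle$, whose $y$-integral is at most $\langle\eta,\cN\eta\rangle$. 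The creation part $\|b^*_{\alpha,x}\eta\|^2$ is controlled similarly by $\|v\|_2^2\,\|u\|_2^2 + \|v\|_2^2\,\langle a^*a\rangle$, which yields the $L^3$ term after integration over $x$.

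The factor $\rho^{-2/3}$ compensates $\|u\|_2\|v\|_2\sim\rho M^{-1}$ up to the powers of $\rho$ and $M$ that need to be tracked. Translation invariance is used to turn $\int \d x\d y\, \cW(x-y)\, g(x) g(y)$ into $\|\cW\|_1 \int g$, or alternatively one applies Cauchy--Schwarz with weight $\cW(x-y)$ directly. The sums over $\alpha,\beta$ with matrix entries of size $M^{-1}$ are handled by Schur's test.

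For the second bound the same scheme applies to $g(\lambda)=\langle \xi, T_\lambda^*(\cN_z+1)^2T_\lambda\xi\rangle$. The commutator is
\[
	[S,\cN_z] = \frac{1}{\rho^{2/3}}\int \d x\d y\, \varphi_K\,(\cdots)\, a^\sharp(\cW_z u_x)\ldots
\]
that is, with one of the four $u$ or $v$ factors multiplied by $\cW_z$. Hence $[S,(\cN_z+1)^2]$ is a symmetrized product of $(\cN_z+1)$ with such terms. The main obstacle is that $\cW_z u^\alpha_x$ is only localized near $z$ if $u^\alpha$ itself decays in position space, with $\|\nabla^m \hat u^\alpha\|\lesssim\fl^{-m}$. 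Commuting $\cW_z$ through $u^\alpha_x$ then gives $\cW_z(x)$ up to errors, and Lemma~\ref{lem:Peetre} transfers the localization from $x$ to $y$ via $\cW(x-y)$ coming from $\varphi_K$. I would then bound the contribution by $\langle (\cN_z+1)\,\cN_{x}^{1/2}\cdots\rangle$-type expressions against $(\cN_z+1)^2$. The patch diameter $\rho^{1/3}M^{-1/2}$ enters the decay of $u^\alpha$ at scale $\rho^{-1/3}M^{1/2}$; this is the source of the $M^{1/2}$ factor, and the additional derivatives account for $\fl^{-17}$. Closing the estimate without a loss in $L$, by relying only on $\cW$-weighted quantities with $\|\cW_z\|_1\le C$, is the delicate step.
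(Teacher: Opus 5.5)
Your reduction of the first bound to $|\langle\eta,S_-\eta\rangle|\le C\fl^{-7}\langle\eta,(\cN+L^3)\eta\rangle$ via $[S,\cN]=4(S_-+S_-^*)$ is fine. The way you then bound $S_-$, however, cannot give a $\rho$-independent constant.

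\textbf{The gap in the first bound.} The pointwise step $|\langle\eta,b_{\alpha,x}b_{\beta,y}\eta\rangle|\le\|b^*_{\alpha,x}\eta\|\,\|b_{\beta,y}\eta\|$ forces you to estimate the unsmeared $b^*_{\alpha,x}$. Its size is $\|\hat u^\alpha\|_2\|\hat v^\alpha\|_2\sim\rho^{2/3}M^{-1}$. Note that $\|\hat u^\alpha\|_2\sim\rho^{1/3}M^{-1/2}$, not $(\rho/M)^{1/2}$. Also, $\hat\varphi_K$ is normalized so that $\|(1-\Delta_k)^2\hat\varphi_K\|_1\lesssim\fl^{-4}|\log\fl|$, with no factor $M^{-1}$. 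Tracking the powers, your scheme gives
$|\langle\eta,S_-\eta\rangle|\lesssim\fl^{-4}|\log\fl|\,\rho^{1/3}\langle\eta,(\cN+L^3)\eta\rangle$.
Since $\fl^{-1}$ and $M$ are only logarithmic in $\rho$, the resulting Gr\"onwall factor is useless.

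The loss comes from ignoring that $\hat\varphi_K$ is supported in $|k|\le R$. The pairs created by $S_-^*$ have small total momentum, while $b^*_{\alpha,x}$ alone creates pairs with total momentum up to the patch diameter.

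The missing idea is to smear before estimating. The operator $\sum_\alpha\int\d x\,\varphi_K(x-y)_{\alpha,\bar\beta}\,b^*_{\alpha,x}$ is a two-fermion creation operator with kernel $A=\sum_\alpha u^\alpha\varphi_{K,\alpha,\bar\beta,y}v^\alpha$.
\begin{itemize}
\item By Lemma~\ref{lem:aastar_conversion} and Corollary~\ref{cor:bbstar_conversion}, its action on $\eta$ is bounded by $2\|A\cW_y^{-1/2}\|_{\HS}\|(\cN_y+1)^{1/2}\eta\|$.
\item Lemma~\ref{lem:HS_estimate} (patch orthogonality plus $\|[\cW^{-1/2},u^\alpha]\cW^{1/2}\|_{\op}\lesssim\fl^{-2}$) gives $\|A\cW_y^{-1/2}\|_{\HS}\lesssim\fl^{-2}\rho^{1/3}\|(1-\Delta_k)^2\hat\varphi_K\|_1$. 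Up to powers of $\fl$, this is a factor $\rho^{1/3}$ better than your pointwise estimate.
\item Combine this with $\sum_\beta\int\d y\,\|b_{\beta,y}\eta\|^2\lesssim\rho^{2/3}M^{-1}\langle\eta,\cN\eta\rangle$ and $\int\d y\,\langle\eta,(\cN_y+1)\eta\rangle=\|\cW\|_1\langle\eta,\cN\eta\rangle+L^3\|\eta\|^2$.
\end{itemize}
This yields the constant $\fl^{-6}|\log\fl|\le\fl^{-7}$. The $L^3$ comes from the $+1$, not from a vacuum term $\|\hat u\|_2^2\|\hat v\|_2^2$.

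\textbf{The second bound.} Your sketch inherits the same loss and leaves the decisive closing step open. What is needed is the following.
\begin{itemize}
\item Since $K=K\hat\chi_R$, factorize $\varphi_K(y-x)=\int\d z'\,\overline{\varphi_K(x-z')}\,\chi_R(y-z')$.
\item This turns each of the four terms of $[S_-,\cN_z]$ into $\int\d z'$ of a product of two two-fermion operators. Their kernels are $A_\beta=\sum_\alpha\cW_z v^\alpha\overline{\varphi_{K,\alpha,\bar\beta,z'}}u^\alpha$ and $B_\beta=v^\beta\chi_{R,z'}u^\beta$, both localized at the same point $z'$.
\item The pull-through estimate of Lemma~\ref{lem:aastar_conversion_multi} then bounds each term by $\|A_\beta\cW_{z'}^{-1/2}\|_{\HS}\|B_\beta\cW_{z'}^{-1/2}\|_{\HS}\|(\cN_{z'}+1)\eta\|$.
\item The bound $\|\cW_z\cW_{z'}\|_{\op}\le C\cW(z-z')$ from Lemma~\ref{lem:Peetre} produces the decay in $z-z'$.
\end{itemize}
The Gr\"onwall inequality closes only through translation invariance: $\|(\cN_{z'}+1)\eta\|=\|(\cN_z+1)\eta\|$, after which $\int\d z'\,\cW(z-z')\le C$. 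This is where translation invariance is genuinely needed; the first bound holds for arbitrary $\xi$.

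Your attribution of the constants is also off.
\begin{itemize}
\item The $M^{1/2}$ does not come from the patch diameter. It comes from Cauchy--Schwarz in $\beta$, because $\|A_\beta\cW_{z'}^{-1/2}\|_{\HS}\lesssim\rho^{1/3}\fl^{-14}|\log\fl|\,\cW(z-z')$ holds only uniformly in $\beta$.
\item The powers of $\fl$ come from $\|(1-\Delta_k)^4\hat\varphi_K\|_1$ and from the commutators $[\cW^{-a/2},u^\alpha]$, i.e.\ from the smoothing scale $\fl$ of $\hat u^\alpha,\hat v^\alpha$.
\end{itemize}
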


Our trial state~\eqref{eq:trialstate} is $ \psi = R T \Omega $, which corresponds to the excitation vector $ \xi = T \Omega $. The vacuum vector $ \Omega $ is translation-invariant. Moreover, $\Gamma(\cT_y) = e^{-i y \cdot \cP} $ for $ \cP := \sum_{k \in \Lambda_L} k \hat{a}^*_k \hat{a}_k $ the total momentum operator. Since $ S $ commutes with $\cP$, it follows that $ T_\lambda \Omega $ is indeed translation invariant. Therefore Proposition~\ref{prop:gronwall} provides us with the bounds
\begin{equation} \label{eq:preliminary_bounds}
	\langle T_\lambda \Omega, (\cN + L^3) T_\lambda \Omega \rangle
	\le e^{C \fl^{-7}} L^3 \;, \qquad
	\langle T_\lambda \Omega, (\cN_z + 1)^2 T_\lambda \Omega \rangle
	\le e^{C \fl^{-17} M^{\frac 12}} \;,
\end{equation}
which are not uniform in $ \rho $.
While the second bound will turn out to be sufficient, the estimate on $ \cN $ will have to be improved in Proposition~\ref{prop:improved_excitation} to the extensive bound $ \langle T_\lambda \Omega, \cN T_\lambda \Omega \rangle \le C L^3 $.

\medskip

Proposition~\ref{prop:gronwall} will be proven in Subsection~\ref{subsec:gronwall}, using the lemmas we establish in Subsections~\ref{subsec:opest} and \ref{subsec:HSest}.

\subsection{Bounds by Localized Number Operators}
\label{subsec:opest}

\begin{lemma}[Bounding $ a^\sharp a^\sharp $ by $\Ncal_z^{1/2}$] \label{lem:aastar_conversion}
Let $ \xi \in \cF $ and consider a compact operator $ A $ on $ L^2(\TTT_L) $ with integral kernel $ A(x,x') $. Suppose that $ A \cW_z^{-\frac 12} $ is Hilbert--Schmidt. Then, we have the following bounds, where $ \sharp \in \{ *, \cdot\} $:
\begin{equation} \label{eq:aastar_conversion}
\begin{aligned}
	\Big\Vert \int \d x \d x' A(x,x') a_x^\sharp a_{x'} \xi \Big\Vert 
	&\le \Vert A \cW_z^{-\frac 12} \Vert_{\HS}
	\Vert \cN_z^{\frac 12} \xi \Vert \;, \\
	\Big\Vert \int \d x \d x' A(x,x') a^*_x a^*_{x'} \xi \Big\Vert 
	&\le 2 \Vert A \cW_z^{-\frac 12} \Vert_{\HS}
		\Vert (\cN_z + 1)^{\frac 12} \xi \Vert \;.
\end{aligned}
\end{equation}
\end{lemma}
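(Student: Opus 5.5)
The plan is to factor the weight $\cW_z^{\frac 12}$ into the annihilation operator and then use Cauchy--Schwarz in the integration variables. First I write $a_{x'} = \cW_z(x')^{-\frac 12}\,\cW_z(x')^{\frac 12} a_{x'}$. I then expand $A(x,x')$ in an orthonormal basis $(\phi_j)$ of $L^2(\TTT_L)$ in the first variable. More conveniently, I write the operator as $\int \d x'\, a^\sharp(\overline{g_{x'}})\,\cW_z(x')^{\frac 12} a_{x'}$, where $g_{x'}(x) := A(x,x')\cW_z(x')^{-\frac 12}$, with the appropriate complex conjugation depending on $\sharp$. Note that $g_{x'}(x)$ is precisely the kernel of $A\cW_z^{-\frac 12}$.

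For $\sharp = \cdot$ and $\sharp = *$ in the first bound, I use $\Vert a^\sharp(f)\Vert \le \Vert f\Vert_2$, which follows from the CAR. The triangle inequality and Cauchy--Schwarz in $x'$ then give
\[
\Big\Vert \int \d x' \, a^\sharp(\cdot)\, \cW_z(x')^{\frac12} a_{x'}\xi\Big\Vert \le \int \d x'\, \Vert g_{x'}\Vert_2 \,\Vert \cW_z(x')^{\frac 12} a_{x'}\xi\Vert \le \Big(\int \d x' \Vert g_{x'}\Vert_2^2\Big)^{\frac12}\Big(\int \d x'\,\cW_z(x')\Vert a_{x'}\xi\Vert^2\Big)^{\frac12}.
\]
The first factor is $\Vert A\cW_z^{-\frac12}\Vert_{\HS}$ and the second is $\Vert \cN_z^{\frac12}\xi\Vert$ by the definition \eqref{eq:cNz}. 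The only care needed here is that $a^\sharp$ may be an annihilation operator, for which the same norm bound holds.

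For the second bound the right-hand operator is $a^*_{x'}$, which is not controlled by $\cN_z$ alone. I take the adjoint ordering instead: $\Vert \int A(x,x') a_x^* a_{x'}^*\xi\Vert^2 = \langle \xi, X X^*\xi\rangle$ with $X^* = \int \overline{A(x,x')}\, a_{x'}a_x$. I then normal-order $XX^*$ using the CAR. This produces $X^*X$ (quartic), plus quadratic terms of the form $\int (\ldots)\, a^* a$ with kernels built from $A^*A$ or $AA^*$ (and their transposes), plus a constant $\tr(A^*A)$-type term. The cleaner route is to anticommute one pair: $a^*_x a^*_{x'} = -a^*_{x'}a^*_x$, and write $a^*_{x'}\xi$-type terms through the identity $\Vert a^*(f)\eta\Vert^2 = \Vert f\Vert^2\Vert\eta\Vert^2 - \Vert a(f)\eta\Vert^2$. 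Concretely, with $Y := \int \d x'\, a^*(g_{x'})\,\cW_z(x')^{\frac12}a^*_{x'}$, I write $a^*_{x'} = $ adjoint of $a_{x'}$ and bound
\[
\Vert Y\xi\Vert^2 = \sum \langle\cdot\rangle \le \Vert Y^*\xi\Vert^2 + (\text{commutator terms}).
\]
Here $Y^* = \int \cW_z^{\frac12}(x') a_{x'} a(g_{x'})$ is handled as in the first bound. The commutator terms are $\int \d x'\,\cW_z(x')\Vert g_{x'}\Vert^2$-weighted or $a^*(g)a(g)$-type, and are bounded by $\Vert A\cW_z^{-\frac12}\Vert_{\HS}^2\langle\xi,(\cN_z+1)\xi\rangle$ using Cauchy--Schwarz again. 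Collecting terms gives the constant $2$.

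The main obstacle is this normal-ordering bookkeeping in the $a^*a^*$ case. I need every quadratic remainder to carry a factor $\cW_z$ on one of its variables, so that it is bounded by $\cN_z$ and not by $\cN$. If that fails for a term with a $\cW_z$ weight on the wrong variable, I would instead bound that term crudely by $\Vert A\cW_z^{-\frac12}\Vert_{\HS}^2\Vert\xi\Vert^2$ using $\cW_z\le 1$.
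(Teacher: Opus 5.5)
Your route is the paper's. The first bound is proved exactly as you do it. The second is obtained by normal ordering $\Vert X\xi\Vert^2$ into $\Vert X^*\xi\Vert^2$ plus quadratic and constant remainders, where $\Vert X^*\xi\Vert^2$ is handled by the first bound after anticommuting the weighted annihilator $a_{x'}$ to the right. Two points in your treatment of the second bound need correcting.

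\textbf{The ``obstacle'' is not a real one.} Your primary plan, putting a $\cW_z$ weight on every quadratic remainder, cannot work for all of them. For example,
\[
\langle\xi,\d\Gamma(AA^*)\xi\rangle=\int\d x'\,\Vert a(A(\cdot,x'))\xi\Vert^2
\]
involves only the unweighted first variable, so $\Vert A\cW_z^{-\frac12}\Vert_{\HS}$ and $\cN_z$ do not control it. No number operator is needed at all, because on fermionic Fock space
\[
|\langle\xi,\d\Gamma(A^*B)\xi\rangle|\le\Vert A\Vert_{\HS}\Vert B\Vert_{\HS}\Vert\xi\Vert^2
\]
by Cauchy--Schwarz and $\Vert a(f)\Vert\le\Vert f\Vert_2$. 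This is your ``fallback'', and it is the actual argument the paper uses for every quadratic term. The condition $\cW_z\le1$ enters only at the end, through $\Vert A\Vert_{\HS}\le\Vert A\cW_z^{-\frac12}\Vert_{\HS}$.

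\textbf{The constant $2$ does not follow from bookkeeping alone.} The remainder is
\[
\langle\xi,\d\Gamma(-AA^*-A^T\overline{A}+A\overline{A}+A^TA^*)\xi\rangle+\big(\tr(A^*A)-\tr(\overline{A}A)\big)\Vert\xi\Vert^2 .
\]
If you bound every piece crudely by $\Vert A\Vert_{\HS}^2\Vert\xi\Vert^2$, you get $\Vert A\cW_z^{-\frac12}\Vert_{\HS}^2\big(\Vert\cN_z^{\frac12}\xi\Vert^2+6\Vert\xi\Vert^2\big)$, which gives the constant $\sqrt6$. The paper obtains $2$ by noting that $AA^*\ge0$ and $A^T\overline{A}\ge0$. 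The first two terms are therefore nonpositive and can be dropped, leaving
\[
\Vert A\cW_z^{-\frac12}\Vert_{\HS}^2\Vert\cN_z^{\frac12}\xi\Vert^2+4\Vert A\Vert_{\HS}^2\Vert\xi\Vert^2\le4\Vert A\cW_z^{-\frac12}\Vert_{\HS}^2\Vert(\cN_z+1)^{\frac12}\xi\Vert^2 ,
\]
which is the stated bound with factor $2$.
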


\begin{proof}
For the first bound, we use $ \Vert a^\sharp(f) \Vert_{\op} \le \Vert f \Vert_2 $:
\begin{equation} \label{eq:aabound_in_proof}
\begin{aligned}
	&\Big\Vert \int \d x \d x' A(x,x') a_x^\sharp a_{x'} \xi \Big\Vert 
	\le \int \d x' \Vert A(\cdot, x') \Vert_2 \Vert a_{x'} \xi \Vert \\
	&\le \Big( \int \d x \d x' |A(x, x')|^2 \cW_z(x')^{-1} \Big)^{\frac 12}
	\Big( \int \d x' \cW_z(x') \Vert a_{x'} \xi \Vert^2 \Big)^{\frac 12}\\
	&= \Vert A \cW_z^{-\frac 12} \Vert_{\HS}
	\Vert \cN_z^{\frac 12} \xi \Vert \;.
\end{aligned}
\end{equation}
For the second bound, we first normal order. Introducing $ A^T $ and $ \overline{A} $, with kernels $ A^T(x,x') := A(x',x) $ and $ \overline{A}(x,x') := \overline{A(x,x')} $, we obtain
\begin{equation} \label{eq:aastarconversion_commutation}
\begin{aligned}
	&\Big\Vert \int \d x \d x' A(x,x') a^*_x a^*_{x'} \xi \Big\Vert^2 = \int \d x \d x' \d y \d y'
		\overline{A(x,x')} A(y,y') \langle \xi, a_{x'} a_x a_y^* a_{y'}^* \xi \rangle \\
	&= \Big\Vert - \int \d x \d x' \overline{A(x,x')} a_x a_{x'} \xi \Big\Vert^2
	- \langle \xi, \d \Gamma(A A^*) \xi \rangle
	- \langle \xi, \d \Gamma(A^T \overline{A}) \xi \rangle \\
	&\quad + \langle \xi, \d \Gamma(A \overline{A}) \xi \rangle
	+ \langle \xi, \d \Gamma(A^T A^*) \xi \rangle
	+ \tr(A^* A) \Vert \xi \Vert^2
	- \tr(\overline{A} A) \Vert \xi \Vert^2 \\
	&\le \Vert \overline{A} \cW_z^{-\frac 12} \Vert_{\HS}^2
	\Vert \cN_z^{\frac 12} \xi \Vert^2
	+ \langle \xi, \d \Gamma(A \overline{A}) \xi \rangle
	+ \langle \xi, \d \Gamma(A^T A^*) \xi \rangle
	+ 2 \Vert A \Vert_{\HS}^2 \Vert \xi \Vert^2 \;,
\end{aligned}
\end{equation}
where in the last line, we used~\eqref{eq:aabound_in_proof}, as well as $ AA^* \ge 0 $, $ A^T \overline{A} \ge 0 $ and $ |\tr(\overline{A} A)| \le \Vert \overline{A} \Vert_{\HS} \Vert A \Vert_{\HS} = \Vert A \Vert_{\HS}^2 $. The $\d \Gamma$-operators are estimated by
\begin{equation} \label{eq:diGamma_bound_startegy}
\begin{aligned}
	&|\langle \xi, \d \Gamma(A^* B) \xi \rangle|
	= \Big\vert \int \d x \d y (A^* B)(x,y) \langle \xi, a_x^* a_y \xi \rangle \Big\vert
	\le \int \d x' 
		\Vert a(\overline{A}(x',\cdot)) \xi \Vert
		\Vert a(\overline{B}(x', \cdot)) \xi \Vert \\
	&\le \Big( \int \d x' \Vert \overline{A}(x',\cdot) \Vert_2^2 \Big)^{\frac 12}
		\Big( \int \d x' \Vert \overline{B}(x', \cdot) \Vert_2^2 \Big)^{\frac 12}
		\norm{\xi}^2
	= \norm{A}_{\HS} \norm{B}_{\HS} \norm{\xi}^2 \;,
\end{aligned}
\end{equation}
so
we conclude
\begin{equation}
	\Big\Vert \int \d x \d x' A(x,x') a^*_x a^*_{x'} \xi \Big\Vert^2
	\le \Vert A \cW_z^{-\frac 12} \Vert_{\HS}^2
	\Vert \cN_z^{\frac 12} \xi \Vert^2
	+ 4 \Vert A \Vert_{\HS}^2 \Vert \xi \Vert^2 \;.
\end{equation}
With $ \Vert A \Vert_{\HS} \le \Vert A \cW_z^{-\frac 12} \Vert_{\HS} $ (since $ 0 \le \cW_z^{\frac 12} \le 1 $), this concludes the proof.
\end{proof}

We will often use the following corollary of Lemma~\ref{lem:aastar_conversion}.

\begin{corollary}[Bounding pair operators by $\Ncal_z^{1/2}$] \label{cor:bbstar_conversion}
Let $ z \in \TTT_L $, $ \xi \in \cF $ and consider a family of functions $ (\varphi_\alpha)_{\alpha = 1}^M \subset L^\infty(\TTT_L) $, where $ \varphi_{\alpha,z} $ is the multiplication operator defined by $ (\varphi_{\alpha,z} f)(x) := \varphi_\alpha(x-z) f(x) $. Then, with $ b_{\alpha,x} = a(v^{\bar\alpha}_x) a(u^\alpha_x) $ as defined in~\eqref{eq:bx_bxstar},
\begin{equation} \label{eq:bbstar_conversion}
\begin{aligned}
	\Bigg\Vert \sum_{\alpha = 1}^M \int \d x \; \varphi_\alpha(x-z) b_{\alpha,x} \xi \Bigg\Vert 
	&\le \Bigg\Vert \sum_{\alpha = 1}^M v^\alpha \varphi_{ \alpha,z} u^\alpha \cW_z^{-\frac 12} \Bigg\Vert_{\HS}
	\Vert \cN_z^{\frac 12} \xi \Vert \;, \\
	\Bigg\Vert \sum_{\alpha = 1}^M \int \d x \; \varphi_\alpha(x-z) b^*_{\alpha,x} \xi \Bigg\Vert 
	&\le 2 \Bigg\Vert \sum_{\alpha = 1}^M u^\alpha \varphi_{ \alpha,z} v^\alpha \cW_z^{-\frac 12} \Bigg\Vert_{\HS}
		\Vert (\cN_z + 1)^{\frac 12} \xi \Vert \;.
\end{aligned}
\end{equation}
\end{corollary}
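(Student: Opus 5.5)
The plan is to write the pair-operator sums as quadratic expressions $\int \d x\, \d x'\, A(x,x') a^\sharp_x a_{x'}$ (respectively $a^*_x a^*_{x'}$) and then invoke Lemma~\ref{lem:aastar_conversion} directly. The only work is to identify the kernel $A$ as the stated operator product.

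For the first bound, expand $a(v^{\bar\alpha}_x) = \int \d y\, \overline{v^{\bar\alpha}(y-x)}\, a_y$ and $a(u^\alpha_x) = \int \d y'\, \overline{u^\alpha(y'-x)}\, a_{y'}$. This gives
\[
\sum_\alpha \int \d x\, \varphi_\alpha(x-z) b_{\alpha,x} = \int \d y\, \d y'\, A(y,y')\, a_y a_{y'},
\]
with kernel
\[
A(y,y') = \sum_\alpha \int \d x\, \overline{v^{\bar\alpha}(y-x)}\, \varphi_\alpha(x-z)\, \overline{u^\alpha(y'-x)}.
\]
We then use the properties of the Fourier inverses. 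Since $\hat v^\alpha, \hat u^\alpha$ are real, we have $\overline{u^\alpha(w)} = u^\alpha(-w)$. The symmetry $\hat v^{\bar\alpha}(p) = \hat v^\alpha(-p)$ gives $v^{\bar\alpha}(w) = v^\alpha(-w)$, hence $\overline{v^{\bar\alpha}(y-x)} = v^\alpha(y-x)$. Finally, $u^\alpha(x-y')$ is the kernel of the operator $u^\alpha$. Together these yield
\[
A(y,y') = \sum_\alpha \int \d x\, v^\alpha(y,x)\, \varphi_{\alpha,z}(x)\, u^\alpha(x,y'),
\]
that is, $A = \sum_\alpha v^\alpha \varphi_{\alpha,z} u^\alpha$ as operators. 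The first inequality of Lemma~\ref{lem:aastar_conversion}, with $a^\sharp = a$, then gives the claim, assuming the right-hand side is finite. If it is not finite, there is nothing to prove. Compactness of $A$ follows from the Hilbert--Schmidt property, since $\cW_z^{-1/2}$ is bounded below by $1$.

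For the second bound, $b^*_{\alpha,x} = a^*(u^\alpha_x) a^*(v^{\bar\alpha}_x) = \int \d y\, \d y'\, u^\alpha(y-x)\, v^{\bar\alpha}(y'-x)\, a^*_y a^*_{y'}$. The same manipulations, now without complex conjugation, give $v^{\bar\alpha}(y'-x) = v^\alpha(x-y')$, which is the kernel of $v^\alpha$. The resulting kernel is therefore $\sum_\alpha u^\alpha \varphi_{\alpha,z} v^\alpha$, and the second inequality of Lemma~\ref{lem:aastar_conversion} applies with its factor $2$.

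The main point requiring care is the bookkeeping of signs and conjugations in the Fourier inverses, so that the kernel is exactly the product $v^\alpha\varphi_{\alpha,z}u^\alpha$ and not a transpose or reflection. This relies on the reality of $\hat u^\alpha$ and $\hat v^\alpha$ and on the reflection symmetry $\hat v^{\bar\alpha}(p) = \hat v^\alpha(-p)$. The other potential issue is that the Hilbert--Schmidt norm is taken of the operator composed with $\cW_z^{-1/2}$ on the right. This matches the lemma because there the weight acts on the second variable $x'$, which is the variable of the operator being applied first.
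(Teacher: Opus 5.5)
Your proposal is correct and is the paper's proof. You rewrite $\sum_\alpha\int \d x\,\varphi_\alpha(x-z)\,b_{\alpha,x}$ (respectively $b^*_{\alpha,x}$) as a quadratic expression with kernel $\sum_\alpha v^\alpha\varphi_{\alpha,z}u^\alpha$ (respectively $\sum_\alpha u^\alpha\varphi_{\alpha,z}v^\alpha$), using that $\hat u^\alpha,\hat v^\alpha$ are real and $\hat v^{\bar\alpha}(p)=\hat v^\alpha(-p)$, then apply Lemma~\ref{lem:aastar_conversion}; your sign and conjugation bookkeeping is right, and your compactness remark supplies the hypothesis the paper leaves implicit (the kernel is in fact finite rank, since $\hat v^\alpha$ has finite support on $\Lambda_L$).
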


\begin{proof}
Recall $ u^\alpha_x(y) = u^\alpha(y-x) $. Here, $ \hat{u}^\alpha $ is real-valued with $ \hat{u}^\alpha(-k) = \hat{u}^{\bar\alpha}(k) $, so $ \overline{u^\alpha(x)} = u^\alpha(-x) = u^{\bar\alpha}(x) $, and the same is true for $ v^\alpha $. Thus,
\begin{equation}
\begin{aligned}
	&\int \d x \; \varphi_\alpha(x-z) b_{\alpha,x}
	= \int \d x \d y \d y' \varphi_{\alpha,z}(x)
		\overline{v^{\bar\alpha}(y-x)} \;
		\overline{u^\alpha(y'-x)}
		a_y a_{y'} \\
	&= \int \d y \d y' \big( v^\alpha \varphi_{\alpha,z} u^\alpha \big)(y,y')
		a_y a_{y'} \;.
\end{aligned}
\end{equation}
Analogously, we may re-write $ \int \d x \; \varphi_\alpha(x-z) b_{\alpha,x}^* $, and Lemma~\ref{lem:aastar_conversion} then applies.
\end{proof}

We also need to estimate two pairs of creation and annihilation operators by localized number operators, for which the following lemma is useful.

\begin{lemma}[Pull-through bounds] \label{lem:pullthroughbound}
Let $ x, z \in \TTT_L $. Then, for any measurable function $ f : [0,\infty) \to [0,\infty) $, we have the pull--through formulas
\begin{equation} \label{eq:pullthrough}
	f(\cN_z) a^*_x = a^*_x f(\cN_z + \cW_z(x))  \;, \qquad
	f(\cN_z + \cW_z(x)) a_x = a_x f(\cN_z) \;.
\end{equation}
Moreover, for $ g \in L^2(\TTT_L) $ and $c \in [0,\infty)$, we have the estimate
\begin{equation} \label{eq:pullthroughbound}
\begin{aligned}
	\norm{(\cN_z + c)^{\frac 12} a^*(g) \xi}
	& \le \norm{g}_2 \norm{(\cN_z + c + 3)^{\frac 12} \xi} \;,\\
	\norm{(\cN_z + c)^{\frac 12} a(g) \xi}
	& \le \norm{g}_2 \norm{(\cN_z + c)^{\frac 12} \xi} \;.
\end{aligned}
\end{equation}
\end{lemma}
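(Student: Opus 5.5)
The plan is to get the pull-through identities from the commutator of $\cN_z$ with $a^\sharp_x$ and spectral calculus, and then the two bounds from direct expansions of $\langle \xi, a^\sharp(g)^*(\cN_z+c)\,a^\sharp(g)\xi\rangle$. In the latter, only positivity, the CAR, $\Vert a^\sharp(f)\Vert_{\op}\le\Vert f\Vert_2$ and $0\le\cW_z\le 1$ enter.

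\emph{Pull-through formulas.} From the CAR we get $[a^*_y a_y, a^*_x] = \delta(x-y) a^*_y$, hence
\[
[\cN_z, a^*_x] = \cW_z(x) a^*_x, \qquad \text{i.e.}\qquad \cN_z a^*_x = a^*_x(\cN_z + \cW_z(x)).
\]
Iterating gives $p(\cN_z)a^*_x = a^*_x\, p(\cN_z+\cW_z(x))$ for every polynomial $p$. To pass to a general measurable $f$, note that $\cN_z$ commutes with $\cN$. On each $n$-particle sector $\cN_z$ is bounded (by $n$), so polynomial approximation and a monotone-class argument extend the identity from polynomials to all bounded measurable $f$. The domain subtlety is handled by this sector-wise argument, and unbounded $f$ follow by truncation. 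Equivalently, $a^*_x$ maps the spectral subspace of $\cN_z$ at $\mu$ into the one at $\mu+\cW_z(x)$. The second identity follows by taking adjoints of the first, since $f$ is real.

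\emph{First bound.} Write $X:=\cN_z+c$. Smearing the commutator gives $[X,a^*(g)] = a^*(\cW_z g)$ and $[X,a(g)] = -a(\cW_z g)$. Combining these with $a(g)a^*(g) = \Vert g\Vert_2^2 - a^*(g)a(g)$, I expand
\[
a(g)Xa^*(g) = \Vert g\Vert_2^2 X - a^*(g)Xa(g) - a^*(g)a(\cW_z g) + a(g)a^*(\cW_z g).
\]
In expectation in $\xi$, the term $-a^*(g)Xa(g)$ is $\le 0$ and is dropped. Each of the last two terms is bounded by $\Vert g\Vert_2\Vert\cW_z g\Vert_2\Vert\xi\Vert^2\le\Vert g\Vert_2^2\Vert\xi\Vert^2$, since $0\le\cW_z\le 1$. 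This yields $\Vert X^{1/2}a^*(g)\xi\Vert^2 \le \Vert g\Vert_2^2\langle\xi,(X+2)\xi\rangle$, which is stronger than the claimed bound with $+3$.

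\emph{Second bound.} Here I avoid commutators and write
\[
\Vert X^{\frac12}a(g)\xi\Vert^2 = c\Vert a(g)\xi\Vert^2 + \int \d y\, \cW_z(y)\Vert a_y a(g)\xi\Vert^2 .
\]
By anticommutation, $\Vert a_y a(g)\xi\Vert = \Vert a(g)a_y\xi\Vert \le \Vert g\Vert_2\Vert a_y\xi\Vert$. Integrating against $\cW_z$ gives $\Vert g\Vert_2^2\langle\xi,\cN_z\xi\rangle$, and the first term is at most $c\Vert g\Vert_2^2\Vert\xi\Vert^2$.

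The only delicate step is the rigorous extension of the pull-through identity to arbitrary measurable $f$ with unbounded operators. I expect the sector-wise boundedness of $\cN_z$ to settle it. All expansions above are justified on the dense set of vectors with finitely many particles and then extended by approximation.
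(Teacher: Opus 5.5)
Your proof is correct and follows the paper's argument closely. The pull-through identities (from $[\cN_z,a^*_x]=\cW_z(x)a^*_x$ plus functional calculus, then adjoints) and the second bound are exactly what the paper does.

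The first bound differs slightly in execution. The paper pulls $(\cN_z+c)^{1/2}$ through $a^*_y$ at the level of kernels, using the general-$f$ pull-through formula. It then normal-orders with the CAR and pulls through once more. You instead commute $X=\cN_z+c$ itself with the smeared operators $a^\sharp(g)$. This avoids needing the pull-through formula for square roots in this step.

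The two expansions are algebraically identical. Your term $\langle \xi, a(g)a^*(\cW_z g)\xi\rangle$ equals the paper's $\int \cW_z|g|^2\,\norm{\xi}^2 - \langle \xi, a^*(\cW_z g)a(g)\xi\rangle$. The paper bounds these two pieces separately and loses an extra $1$. You bound the term in one go by $\norm{g}_2^2\norm{\xi}^2$, which is why you get $+2$ instead of $+3$.

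A small simplification is available for the measurable-$f$ extension, which needs no polynomial or monotone-class approximation. On the $n$-particle sector, $\cN_z$ is multiplication by $\sum_{j=1}^n \cW_z(x_j)$, so $f(\cN_z)$ is multiplication by $f(\sum_{j=1}^n \cW_z(x_j))$. The second identity then follows directly from $(a_x\psi)(x_1,\ldots,x_{n-1})=\sqrt{n}\,\psi(x,x_1,\ldots,x_{n-1})$, and the first by taking adjoints.
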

\begin{proof}
Eq.~\eqref{eq:pullthrough} follows by functional calculus, using that $ \cN_z a^*_x = a^*_x (\cN_z + \cW_z(x))$ due to the CAR.
To prove~\eqref{eq:pullthroughbound}, we use~\eqref{eq:pullthrough}, the CAR,
and then again \eqref{eq:pullthrough}:
\begin{equation}
\begin{aligned}
	&\norm{(\cN_z + c)^{\frac 12} a^*(g) \xi}^2
	= \int \d x \d y \; \overline{g(x)} g(y)
	\big\langle (\cN_z + c + \cW_z(x))^{\frac 12} \xi, a_x  a^*_y (\cN_z + c + \cW_z(y))^{\frac 12}\xi \big\rangle \\
	&= \int \d x \; |g(x)|^2 \big\langle \xi, (\cN_z + c + \cW_z(x)) \xi \rangle \\
	&\quad - \int \d x \d y \; \overline{g(x)} g(y)
	\big\langle (\cN_z + c + \cW_z(x))^{\frac 12} \xi, a^*_y a_x  (\cN_z + c + \cW_z(y))^{\frac 12}\xi \big\rangle  \\
	&= \int \d x \; |g(x)|^2 \big\langle \xi, (\cN_z + c + \cW_z(x)) \xi \rangle \\
	&\quad - \int \d x \d y \; \overline{g(x)} g(y)
	\big\langle \xi, a^*_y (\cN_z + c + \cW_z(x) + \cW_z(y)) a_x  \xi \big\rangle \;.
\end{aligned}
\end{equation}
Using $ 0 \le \cW_z(x) \le 1 $, the first term is bounded by
\begin{equation} \label{eq:pullthrough_firstbound}
	\int \d x \; |g(x)|^2 \big\langle \xi, (\cN_z + c + \cW_z(x)) \xi \rangle
	\le \norm{g}_2^2 \norm{(\cN_z + c + 1)^{\frac 12} \xi}^2  \;.
\end{equation}
Moreover we have
\begin{equation}
	- \int \d x \d y \; \overline{g(x)} g(y)
	\langle \xi, a^*_y (\cN_z + c) a_x  \xi \rangle
	=  - \langle \xi, a^*(g) (\cN_z + c) a(g) \xi \rangle
	\le 0 \;,
\end{equation}
so we can drop this contribution. We estimate the remaining contribution by the Cauchy--Schwarz inequality and $ \norm{a(g)}_{\op} \le \norm{g}_2 $:
\begin{equation}
\begin{aligned}
	&\left\vert \int \d x \d y \; \overline{g(x)} g(y)
	\langle \xi, a^*_y (\cW_z(x) + \cW_z(y)) a_x \xi \rangle \right\vert \\
	&= \vert \langle \xi, \big( a^*(\cW_z g) a(g) + a^*(g) a(\cW_z g) \big) \xi \rangle \vert
	\le 2 \norm{\cW_z g}_2 \norm{g}_2 \norm{\xi}^2
	\le 2 \norm{g}_2^2 \norm{\xi}^2 \;.
\end{aligned}
\end{equation}
This yields the first bound in~\eqref{eq:pullthroughbound}. The second bound follows immediately with the CAR and $ \norm{a(g)}_{\op} \le \norm{g}_2 $
\begin{equation}
\begin{aligned}
	&\norm{(\cN_z + c)^{\frac 12} a(g) \xi}^2
	= \int \d x \; \cW_z(x) \norm{a_x a(g) \xi}^2 + c \norm{a(g) \xi}^2 \\
	&\le \int \d x \; \cW_z(x) \norm{g}_2^2 \norm{a_x \xi}^2 + c \norm{g}_2^2 \norm{\xi}^2
	= \norm{g}_2^2 \norm{(\cN_z + c)^{\frac 12} \xi}^2.
\end{aligned}
\end{equation}
\end{proof}

The pull--through bounds allow us to prove the following estimates.

\begin{lemma}[Bounding quartic operators by $\cN_z$] \label{lem:aastar_conversion_multi}
Let $ \xi \in \cF $ and consider two compact operators $ A$ and $B $ on $ L^2(\TTT_L) $. Suppose that $ A \cW_z^{-\frac 12} $ and $ B \cW_z^{-\frac 12} $ are Hilbert--Schmidt. Then, we have the following estimates, where $ \sharp_1, \sharp_2 \in \{ *, \cdot\} $:
\begin{equation} \label{eq:aastar_conversion_multi}
\begin{aligned}
	\Big\Vert \int \d x \d x' \d y \d y' A(x,x') B(y,y') a_x^{\sharp_1} a_{x'} a_y^{\sharp_2} a_{y'} \xi \Big\Vert 
	&\le 4 \Vert A \cW_z^{-\frac 12} \Vert_{\HS}
		\Vert B \cW_z^{-\frac 12} \Vert_{\HS}
		\Vert (\cN_z + 1) \xi \Vert \;, \\
	\Big\Vert \int \d x \d x' \d y \d y' A(x,x') B(y,y') a_x^* a_{x'}^* a_y^{\sharp_2} a_{y'} \xi \Big\Vert 
	&\le 8 \Vert A \cW_z^{-\frac 12} \Vert_{\HS}
		\Vert B \cW_z^{-\frac 12} \Vert_{\HS}
		\Vert (\cN_z + 1) \xi \Vert \;, \\
	\Big\Vert \int \d x \d x' \d y \d y' A(x,x') B(y,y') a_x^{\sharp_1} a_{x'} a_y^* a_{y'}^* \xi \Big\Vert 
	&\le 8 \Vert A \cW_z^{-\frac 12} \Vert_{\HS}
		\Vert B \cW_z^{-\frac 12} \Vert_{\HS}
		\Vert (\cN_z + 1) \xi \Vert \;, \\
	\Big\Vert \int \d x \d x' \d y \d y' A(x,x') B(y,y') a_x^* a_{x'}^* a_y^* a_{y'}^* \xi \Big\Vert 
	&\le 16 \Vert A \cW_z^{-\frac 12} \Vert_{\HS}
		\Vert B \cW_z^{-\frac 12} \Vert_{\HS}
		\Vert (\cN_z + 1) \xi \Vert \;.
\end{aligned}
\end{equation}
\end{lemma}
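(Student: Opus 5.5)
The plan is to peel off the two quadratic factors one at a time. We use Lemma~\ref{lem:aastar_conversion} for the outer factor and the pull-through bounds of Lemma~\ref{lem:pullthroughbound} to move the resulting $(\cN_z+1)^{1/2}$ past the inner factor. Write
\[
	\mathbb{A} := \int \d x \d x' A(x,x') a_x^{\sharp_1} a_{x'} \;, \qquad
	\mathbb{B} := \int \d y \d y' B(y,y') a_y^{\sharp_2} a_{y'} \;,
\]
and similarly for the variants with $a^*a^*$. In every case we first apply Lemma~\ref{lem:aastar_conversion} to the outer operator with $\eta := \mathbb{B}\xi$ in place of $\xi$:
\[
	\Vert \mathbb{A}\eta\Vert \le c_A \Vert A\cW_z^{-\frac12}\Vert_{\HS}\, \Vert (\cN_z+1)^{\frac12}\eta\Vert \;,
\]
where $c_A=1$ for $a^\sharp a$ and $c_A=2$ for $a^*a^*$. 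This accounts for one factor $1$ or $2$ in the constants $4,8,8,16$. It then remains to show
\[
	\Vert (\cN_z+1)^{\frac12}\mathbb{B}\xi\Vert \le c' \Vert B\cW_z^{-\frac12}\Vert_{\HS}\Vert(\cN_z+1)\xi\Vert \;,
\]
with $c'=4$ for $\mathbb{B}$ of type $a^\sharp a$ and $c'=8$ for type $a^*a^*$. The first bound does not need the full $c'=4$, but a crude uniform constant suffices, since the stated constants are upper bounds.

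For this second step, mimic the proof of Lemma~\ref{lem:aastar_conversion}. In the case $a^\sharp_y a_{y'}$, write $\mathbb{B}\xi = \int \d y'\, a^\sharp(B(\cdot,y')) a_{y'}\xi$, with the appropriate conjugation of the kernel, and move $(\cN_z+1)^{1/2}$ inside the $y'$-integral. Lemma~\ref{lem:pullthroughbound} with $c=1$ gives
\[
	\Vert(\cN_z+1)^{\frac12}a^\sharp(B(\cdot,y'))a_{y'}\xi\Vert \le \Vert B(\cdot,y')\Vert_2\,\Vert(\cN_z+4)^{\frac12}a_{y'}\xi\Vert \;.
\]
The pull-through formula \eqref{eq:pullthrough} gives $(\cN_z+4)^{1/2}a_{y'} = a_{y'}(\cN_z+4-\cW_z(y'))^{1/2}$, and $\cW_z\ge 0$ bounds this by $(\cN_z+4)^{1/2}$ inside the norm. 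Cauchy--Schwarz with weight $\cW_z(y')^{\mp1}$, exactly as in \eqref{eq:aabound_in_proof}, then gives
\[
	\Vert B\cW_z^{-\frac12}\Vert_{\HS}\Big(\int \d y'\,\cW_z(y')\Vert a_{y'}(\cN_z+4)^{\frac12}\xi\Vert^2\Big)^{\frac12}
	= \Vert B\cW_z^{-\frac12}\Vert_{\HS}\,\Vert \cN_z^{\frac12}(\cN_z+4)^{\frac12}\xi\Vert \;.
\]
Finally, $\cN_z(\cN_z+4)\le 4(\cN_z+1)^2$ yields the constant $2$.

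For $\mathbb{B}$ of type $a^*_y a^*_{y'}$, normal ordering is awkward once the weight $(\cN_z+1)$ is present. Instead, write $a^*_y a^*_{y'}$ integrated against the kernel, pull $(\cN_z+1)^{1/2}$ through both creation operators using \eqref{eq:pullthrough}, which shifts $\cN_z$ by at most $2$, and reduce to the commutation estimate already carried out in the proof of Lemma~\ref{lem:aastar_conversion}, with $\xi$ replaced by $(\cN_z+3)^{1/2}\xi$. A cleaner alternative is to expand $\Vert(\cN_z+1)^{1/2}\mathbb{B}\xi\Vert^2 = \langle \mathbb{B}\xi,\mathbb{B}\xi\rangle + \int \d w\,\cW_z(w)\Vert a_w\mathbb{B}\xi\Vert^2$. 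We then anticommute $a_w$ through the two creators, producing $\mathbb{B}a_w\xi$ plus two terms of the form $a^*(B(w,\cdot))\xi$. The first is handled by Lemma~\ref{lem:aastar_conversion} applied to $a_w\xi$, together with $\int \cW_z(w)\Vert(\cN_z+1)^{1/2}a_w\xi\Vert^2 \le \Vert(\cN_z+1)\xi\Vert^2$. The remaining terms are bounded by $\int\cW_z(w)\Vert B(w,\cdot)\Vert_2^2 \le \Vert B\cW_z^{-1/2}\Vert_{\HS}^2$ up to transposition (bounded using $\cW_z\le1$), times $\Vert(\cN_z+1)^{1/2}\xi\Vert^2$. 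This gives $c'\le 8$ after collecting terms.

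The main obstacle is this $a^*a^*$ case, where bookkeeping of the transposed kernel and the $\cW_z$ weights must be done so that only $\Vert B\cW_z^{-1/2}\Vert_{\HS}$ appears. Using $\Vert B\Vert_{\HS}\le\Vert B\cW_z^{-1/2}\Vert_{\HS}$ and absorbing the transposed pieces via the weighted estimate \eqref{eq:aabound_in_proof} should handle it. The precise numerical constants are not essential and can be verified at the end.
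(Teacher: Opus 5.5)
Your treatment of an inner factor of type $a^\sharp a$ is exactly the paper's argument: Lemma~\ref{lem:aastar_conversion} on the outer factor with $\eta=\mathbb{B}\xi$, then Lemma~\ref{lem:pullthroughbound}, pulling $(\cN_z+4)^{1/2}$ through $a_{y'}$, weighted Cauchy--Schwarz, and $\cN_z(\cN_z+4)\le 4(\cN_z+1)^2$. For an inner factor of type $a^*a^*$ you depart from the paper.

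\textbf{Your first suggestion does not work as written.} After pulling $(\cN_z+1)^{1/2}$ through $a^*_y a^*_{y'}$, the weight is $(\cN_z+1+\cW_z(y)+\cW_z(y'))^{1/2}$. It depends on the integration variables. You cannot replace it by the fixed operator $(\cN_z+3)^{1/2}$ inside $\Vert\int \d y\,\d y'\, B(y,y')a^*_ya^*_{y'}(\cdots)\xi\Vert$, because the squared norm has cross terms of no definite sign. This is precisely why the paper keeps the variable-dependent vectors $\xi_{x,x'},\xi_{y,y'}$. It then normal-orders the squared norm into the seven terms $\I_4,\I_{2,j},\I_{0,j}$ and bounds each by further pull-throughs.

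\textbf{Your ``cleaner alternative'' is correct and shorter.} Set $\beta:=\Vert B\cW_z^{-1/2}\Vert_{\HS}$. Write $\Vert(\cN_z+1)^{1/2}\mathbb{B}\xi\Vert^2=\Vert\mathbb{B}\xi\Vert^2+\int\d w\,\cW_z(w)\Vert a_w\mathbb{B}\xi\Vert^2$ and $a_w\mathbb{B}\xi=\mathbb{B}a_w\xi+a^*(B(w,\cdot))\xi-a^*(B(\cdot,w))\xi$. The following ingredients then suffice:
\begin{itemize}
\item Lemma~\ref{lem:aastar_conversion} applied to $\xi$ and to $a_w\xi$;
\item the identity $\int\d w\,\cW_z(w)\,a_w^*(\cN_z+1)a_w=\cN_z^2+\cN_z-\int\d w\,\cW_z(w)^2a_w^*a_w\le(\cN_z+1)^2$;
\item $\Vert a^*(f)\Vert\le\Vert f\Vert_2$ together with $\Vert\cW_z^{1/2}B\Vert_{\HS},\Vert B\cW_z^{1/2}\Vert_{\HS}\le\beta$.
\end{itemize}
Together they give $\Vert(\cN_z+1)^{1/2}\mathbb{B}\xi\Vert^2\le 20\,\beta^2\Vert(\cN_z+1)\xi\Vert^2$, i.e.\ $c'\le\sqrt{20}$. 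This is comfortably within the stated constants $8$ and $16$.

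What your route buys is that only a single annihilator is commuted through. Variable-dependent weights never appear, and everything reduces to the already proven quadratic lemma plus one-body bounds. The paper's direct normal ordering is equally elementary but more computational.
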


\begin{proof}
We start with the second bound in~\eqref{eq:aastar_conversion_multi} for $ a^{\sharp_2} = a^* $, since the case $ a^{\sharp_2} = a $ and the first bound in~\eqref{eq:aastar_conversion_multi} are completely analogous. With Lemma~\ref{lem:aastar_conversion},
\begin{equation} \label{eq:multiconversion_firstbound}
\begin{aligned}
	&\norm{ \int \d x \d x' \d y \d y' A(x,x') B(y,y') a_x^* a_{x'}^* a_y^* a_{y'} \xi } \\
	&\le 2 \Vert A \cW_z^{-\frac 12} \Vert_{\HS} \norm{ \int \d y \d y' B(y,y') (\cN_z + 1)^{\frac 12} a_y^* a_{y'} \xi } \;.
\end{aligned}
\end{equation}
To bound the second factor in this estimate, we use~\eqref{eq:pullthroughbound}:
\begin{equation}
\begin{aligned}
	&\norm{ \int \d y \d y' B(y,y') (\cN_z + 1)^{\frac 12} a_y^* a_{y'} \xi }
	\le \int \d y' \norm{ (\cN_z + 1)^{\frac 12} a^*(B(\cdot,y')) a_{y'} \xi } \\
	&\le \int \d y' \norm{B(\cdot,y')}_2 \norm{ (\cN_z + 4)^{\frac 12} a_{y'} \xi } \;.
\end{aligned}
\end{equation}
By $\cW_z(y') \geq 0$ and~\eqref{eq:pullthrough}, we obtain
\begin{equation}
	\norm{ (\cN_z + 4)^{\frac 12} a_{y'} \xi }^2
	\le \langle a_{y'} \xi,(\cN_z + 4 + \cW_z(y')) a_{y'} \xi \rangle
	= \norm{ a_{y'} (\cN_z + 4 )^{\frac 12} \xi }^2 \;,
\end{equation}
and therefore, with the Cauchy--Schwarz inequality as in~\eqref{eq:aabound_in_proof},
\begin{equation} \label{eq:astara_bound_pullthrough}
\begin{aligned}
	&\norm{ \int \d y \d y' B(y,y') (\cN_z + 1)^{\frac 12} a_y^* a_{y'} \xi }
	\le \int \d y' \norm{B(\cdot,y')} \norm{ a_{y'} (\cN_z + 4)^{\frac 12} \xi } \\
	&\le \Vert B \cW_z^{-\frac 12} \Vert_{\HS}
		\norm{\cN_z^{\frac 12} (\cN_z + 4 )^{\frac 12} \xi} \;.
\end{aligned}
\end{equation}
By $ \cN_z + 4 \le 4(\cN_z + 1) $ we obtain the second bound in~\eqref{eq:aastar_conversion_multi}.

\medskip

We next prove the fourth bound in~\eqref{eq:aastar_conversion_multi}.
The third one is analogous. Starting with Lemma~\ref{lem:aastar_conversion} again,
\begin{equation} \label{eq:multiconversion_secondbound}
\begin{aligned}
	&\norm{ \int \d x \d x' \d y \d y' A(x,x') B(y,y') a_x^* a_{x'}^* a_y^* a_{y'}^* \xi } \\
	&\le 2 \Vert A \cW_z^{-\frac 12} \Vert_{\HS} \norm{ \int \d y \d y' B(y,y') (\cN_z + 1)^{\frac 12} a_y^* a_{y'}^* \xi } \;.
\end{aligned}
\end{equation}
We then apply the pull--through formula in Lemma~\ref{lem:pullthroughbound}, using the notation
\begin{equation} \label{eq:psiX_notation}
	\xi_{x,x'} := (\cN_z + 1 + \cW_z(x) + \cW_z(x'))^{\frac 12} \xi \;, \qquad
	\xi_{y,y'} := (\cN_z + 1 + \cW_z(y) + \cW_z(y'))^{\frac 12} \xi \;,
\end{equation}
and normal--order as in~\eqref{eq:aastarconversion_commutation}
\begin{equation}
\begin{aligned}
	&\norm{ \int \d y \d y' B(y,y') (\cN_z + 1)^{\frac 12} a_y^* a_{y'}^* \xi }^2
	= \int \d x \d x' \d y \d y' \overline{B(x,x')} B(y,y')
		\langle \xi_{x,x'}, a_{x'} a_x a_y^* a_{y'}^* \xi_{y,y'} \rangle  \\
	&= \I_4 - \I_{2,1} - \I_{2,2} + \I_{2,3} + \I_{2,4} + \I_{0,1} - \I_{0,2} \;,
\end{aligned}
\end{equation}
where we defined
\begin{equation}
\begin{aligned}
	\I_4
	&:= \int \d x \d x' \d y \d y' \overline{B(x,x')} B(y,y')
		\langle \xi_{x,x'}, a_y^* a_{y'}^* a_{x'} a_x \xi_{y,y'} \rangle \;, \\
	\I_{2,1}
	&:= \int \d x \d x' \d y \d y' \overline{B(x,x')} B(y,x')
		\delta_{x',y'} \langle \xi_{x,x'}, a_y^* a_x \xi_{y,y'} \rangle \;, \\
	\I_{2,2}
	&:= \int \d x \d x' \d y \d y' \overline{B(x,x')} B(x,y')
		\delta_{x,y} \langle \xi_{x,x'}, a_{y'}^* a_{x'} \xi_{y,y'} \rangle \;, \\
	\I_{2,3}
	&:= \int \d x \d x' \d y \d y' \overline{B(x,x')} B(x',y')
		\delta_{x',y} \langle \xi_{x,x'}, a_{y'}^* a_x \xi_{y,y'} \rangle \;, \\
	\I_{2,4}
	&:= \int \d x \d x' \d y \d y' \overline{B(x,x')} B(y,x)
		\delta_{x,y'} \langle \xi_{x,x'}, a_y^* a_{x'} \xi_{y,y'} \rangle \;, \\
	\I_{0,1}
	&:= \int \d x \d x' \d y \d y' \overline{B(x,x')} B(x,x')
		\delta_{x,y} \delta_{x',y'} \langle \xi_{x,x'}, \xi_{y,y'} \rangle \;, \\
	\I_{0,2}
	&:= \int \d x \d x' \d y \d y' \overline{B(x,x')} B(x',x)
		\delta_{x,y'} \delta_{x',y} \langle \xi_{x,x'}, \xi_{y,y'} \rangle \;.
\end{aligned}
\end{equation}
To bound $\I_4$, observe that by the pull--through formula~\eqref{eq:pullthrough},
\begin{equation}
\begin{split}
	\I_4
	& = \int \d x \d x' \d y \d y' \overline{B(x,x')} B(y,y') \\
	& \quad \times
		\langle \xi, a_y^* a_{y'}^* \big(\cN_z + 1 + \cW_z(x) + \cW_z(x') + \cW_z(y) + \cW_z(y') \big) a_{x'} a_x \xi \rangle \;.
\end{split}
\end{equation}
The contribution from $ (\cN_z + 1) $ is bounded analogously to~\eqref{eq:astara_bound_pullthrough}:
\begin{equation}
\begin{aligned}
	&\int \d x \d x' \d y \d y' \overline{B(x,x')} B(y,y')
		\langle \xi, a_y^* a_{y'}^* (\cN_z + 1) a_{x'} a_x \xi \rangle \\
	&= \norm{ \int \d x \d x' \overline{B(x,x')} (\cN_z + 1)^{\frac 12} a_{x'} a_x \xi }^2
	\le \Vert B \cW_z^{-\frac 12} \Vert_{\HS}^2
		\norm{(\cN_z + 4 )\xi}^2 \;.
\end{aligned}
\end{equation}
The contribution from $ \cW_z(x) $ is bounded using Lemma~\ref{lem:aastar_conversion}, and then $ \norm{\cW_z}_{\op} \le 1 $ and $ \cN_z^{\frac 12} \le (\cN_z + 1) $:
\begin{equation}
\begin{aligned}
	&\left\vert \int \d x \d x' \d y \d y' \overline{B(x,x')} B(y,y')
		\langle \xi, a_y^* a_{y'}^* \cW_z(x) a_{x'} a_x \xi \rangle \right\vert \\
	&\le \norm{ \int \d x \d x' (\cW_z \overline{B})(x,x') a_x a_{x'} \xi }
		\norm{ \int \d y \d y' \overline{B}(y,y') a_y a_{y'} \xi } \\
	&\le \Vert \cW_z \overline{B} \cW_z^{-\frac 12} \Vert_{\HS}
		\Vert \overline{B} \cW_z^{-\frac 12} \Vert_{\HS}
		\Vert \cN_z^{\frac 12} \xi \Vert^2
	\le \Vert B \cW_z^{-\frac 12} \Vert_{\HS}^2
		\norm{(\cN_z + 1)\xi}^2 \;.
\end{aligned}
\end{equation}
The contributions from $ \cW_z(x') $, $ \cW_z(y) $ and $ \cW_z(y') $ are bounded analogously, and we conclude
\begin{equation} \label{eq:I4_bound}
	\I_4
	\le 20 \Vert B \cW_z^{-\frac 12} \Vert_{\HS}^2
		\norm{(\cN_z + 1)\xi}^2 \;.
\end{equation}
The bounds on the other contributions are proven by the same strategy with final result
\begin{equation}
	|\I_{2,j}|
	\le 6 \Vert B \Vert_{\HS}^2 \norm{ (\cN_z + 1) \xi}^2 \;, \qquad
	|\I_{0,j}|
	\le 3 \norm{B}_{\HS}^2 \norm{(\cN_z + 1) \xi}^2 \;.
\end{equation}
Combining the bounds for $ \I_4 $, $ \I_{2,j} $ and $ \I_{0,j} $, and using $ \Vert \cW_z^{\frac 12} \Vert_{\op} \le 1 $, we arrive at
\begin{equation}
	\norm{ \int \d y \d y' B(y,y') (\cN_z + 1)^{\frac 12} a_y^* a_{y'}^* \xi }^2
	\le 50 \Vert B \cW_z^{-\frac 12} \Vert_{\HS}^2
		\norm{(\cN_z + 1) \xi}^2 \;,
\end{equation}
and using this in~\eqref{eq:multiconversion_secondbound} with $ \sqrt{50} \le 8 $ yields the desired fourth bound in~\eqref{eq:aastar_conversion_multi}.

The third bound is analogous, where applying Lemma~\ref{lem:aastar_conversion} as in~\eqref{eq:multiconversion_secondbound} gives the same expression on the right-hand side without the factor of 2.
\end{proof}

\subsection{Hilbert--Schmidt Norm Estimates}
\label{subsec:HSest}

Next, we prove some preliminary estimates for norms which arise in the above bounds by number operators. Let us start with the following commutator bound.

\begin{lemma}[Commutator Bounds] \label{lem:ualpha_derivative_bound}
Recall $ \cW_z $ from~\eqref{eq:cWz} and $ u^\alpha, v^\alpha $ from~\eqref{eq:uv}. Then, for any $ \alpha \in \{1,\ldots,M\} $ and $ a \in \NNN $, there exists $ C_a > 0 $ such that
\begin{equation} \label{eq:ualpha_derivative_bound}
	\big\Vert [\cW_z^{-\frac a2}, u^\alpha] \cW_z^{\frac a2} \big\Vert_{\op}
	\le C_a \fl^{-2a} \;.
\end{equation}
Moreover, if $ u^{(b),\alpha} $ is an operator multiplying in momentum space by $ p_b \hat{u}^\alpha $, where $ p_b $ is any polynomial in $ \partial_{k_j}$, $\bar\partial_{k_j} $ of degree $ b $, then for any $ a,b \in \NNN $, there exists $ C_{a,b} > 0 $ such that
\begin{equation} \label{eq:ualpha_derivative_bound_bis}
	\big\Vert [\cW_z^{-\frac a2}, u^{(b),\alpha}] \cW_z^{\frac a2} \big\Vert_{\op}
	\le C_{a,b} \fl^{-2a-b} \;.
\end{equation}
All bounds remain true if the operator $ u^\alpha $ is replaced by the operators $ \sum_{\alpha=1}^M u^\alpha $ or by $ v^\alpha $, $ (u^\alpha)^2 $, $ (v^\alpha)^2 $, or sums thereof.
\end{lemma}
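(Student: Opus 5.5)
We plan to work in momentum space, where $u^\alpha$ acts as multiplication by $\hat u^\alpha$, while multiplication by $\cW_z^{-1}$ in position space corresponds to the finite-difference operator $(1-\Delta_k)^2$ (conjugated by the phase $e^{-ik\cdot z}$, which commutes with $u^\alpha$). For even $a=2m$ we have $\cW_z^{-a/2}=\big(1+\sum_j|\cX_j(x-z)|^2\big)^{2m}$, a polynomial of degree $4m$ in the multiplication operators $\cX_j(x-z)$ and $\overline{\cX_j(x-z)}$. By \eqref{eq:derivatives_position_multiplication}, these correspond to $i\partial_{k_j}$ and $i\bar\partial_{k_j}$ in momentum space. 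For odd $a$ we reduce to the even case by interpolation, or we first treat the general case via the identity $[\cW_z^{-a/2},u]\cW_z^{a/2}=\sum$ of telescoping terms. Simplest of all, we prove the bound for $\cW_z^{-1/2}$ replaced by $(1+\sum_j|\cX_j|^2)$, whose powers are exact polynomials; since $\cW^{-1/2}=(1+\sum|\cX_j|^2)$ holds exactly, $\cW_z^{-a/2}=(1+\sum|\cX_j|^2)^a$ is a polynomial of degree $2a$ for every $a\in\NNN$.

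We then expand the commutator $[Q^a,u^\alpha]$, where $Q:=1+\sum_j\cX_j\overline{\cX_j}$ (shifted by $z$). Repeatedly applying the discrete Leibniz rule \eqref{eq:productrule}, which in position space reads $[\cX_j,u^{(b)}]=u^{(b+1)}$ with the new symbol $\partial_{k_j}\hat u^\alpha$ suitably shifted, we move all $\cX$'s to the right. This gives
\begin{equation*}
[Q^a,u^\alpha]=\sum_{1\le |\um|\le 2a} u^{(\um),\alpha}\, R_{\um}\;,
\end{equation*}
where each $u^{(\um),\alpha}$ is a momentum multiplier by a discrete derivative of $\hat u^\alpha$ of order $|\um|$ (evaluated at shifted arguments), and each $R_{\um}$ is a monomial in the $\cX_j,\overline{\cX_j}$ of degree $2a-|\um|$. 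We then multiply on the right by $\cW_z^{a/2}=Q^{-a}$. The monomial $R_{\um}Q^{-a}$ is a position multiplication operator bounded by $\sup |\cX|^{2a-|\um|}/(1+|\cX|^2)^a\le 1$. The discrete derivative is bounded by the mean value theorem: $\|\partial^{\um}_{\mathrm{disc}}\hat u^\alpha\|_\infty\le\|\partial^{\um}\hat u^\alpha\|_\infty\le C_m\fl^{-m}$ by \eqref{eq:smooth}. Since $\fl\ll1$, each term is at most $C\fl^{-2a}$, and the number of terms depends only on $a$. This yields \eqref{eq:ualpha_derivative_bound}.

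For \eqref{eq:ualpha_derivative_bound_bis}, the same expansion applies with $\hat u^\alpha$ replaced by $p_b\hat u^\alpha$. The derivatives of this function of order $m$ are of order $b+m$ in $\hat u^\alpha$, bounded by $C\fl^{-b-m}$, which gives $C_{a,b}\fl^{-2a-b}$.

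For $v^\alpha$, $(u^\alpha)^2$ and $(v^\alpha)^2$, the derivative bounds \eqref{eq:smooth} persist by the Leibniz rule. The main point requiring care is $\sum_\alpha u^\alpha$: a naive sum would give a factor $M$. Instead, we use that the patch supports are disjoint, so $\sum_\alpha\hat u^\alpha$ is itself a single function with values in $[0,1]$. Each of its derivatives is, pointwise, the derivative of at most one summand (or a bounded number near the corridors, but the $\fl$-margins make supports separated by more than a lattice spacing). Hence the same $\fl^{-m}$ bounds hold uniformly in $M$. This bookkeeping of shifted arguments and uniformity in $M$ is the only delicate step; the rest is algebra.
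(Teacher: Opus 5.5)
Your proposal is correct and follows essentially the same route as the paper. You remove the $z$-shift, write $\cW_z^{-a/2}$ as the degree-$2a$ polynomial $(1-\Delta_k)^a$ (your $Q^a$ in the $\cX_j$), and expand the commutator by the discrete Leibniz rule into finitely many terms of the form (discrete derivative of $\hat u^\alpha$ of order $s\le 2a$) $\times$ (leftover derivatives) $\times (1-\Delta_k)^{-a}$, each bounded by $C\fl^{-s}$ times a position multiplier of modulus at most $1$.

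One minor imprecision: each commutator $[\partial_{k_j},\hat g]=(\partial_{k_j}\hat g)\,\hat\cT_{-2\pi L^{-1}e_j}$ carries a genuine momentum translation, a unimodular phase in position space that the paper keeps as $\hat\cT_p$, rather than just shifting the argument of the symbol. This is harmless for the operator norm. Your explicit justification of the $\sum_\alpha u^\alpha$ case via disjoint supports fills in a step the paper only asserts.
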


\begin{proof}
First, we remove the position shift $ z $ within $ \cW_z $ as follows: $ u^\alpha $ commutes with the unitary translation operator \eqref{eq:cTy} since in momentum space both are multiplication operators. With $ \cW_z^b = \cT_z \cW^b \cT_{-z} $ for all $ b \in \RRR $, we then have
\begin{equation} \label{eq:remove_z_from_cW}
	\big\Vert [\cW_z^{-\frac a2}, u^\alpha] \cW_z^{\frac a2} \big\Vert_{\op}
	= \big\Vert \cT_z [\cW^{-\frac a2}, u^\alpha] \cT_{-z} \cT_z \cW^{\frac a2} \cT_{-z} \big\Vert_{\op}
	= \big\Vert [\cW^{-\frac a2}, u^\alpha] \cW^{\frac a2} \big\Vert_{\op} \;.
\end{equation}
Next, observe that $ \cW^{-\frac a2} $ acts in momentum space as $ (1-\Delta_k)^a $, where we recall that $ -\Delta_k = - \sum_{j=1}^3 \bar\partial_{k_j} \partial_{k_j} $ with $ \bar\partial_{k_j}$, $\partial_{k_j}  $ being the discrete derivatives~\eqref{eq:partial_k}. Given a function $\hat{g}$ in momentum space, we introduce the unitary momentum space translation operator
\begin{equation} \label{eq:cThat}
	(\hat{\cT}_p \hat{g})(k)
	:= \hat{g}(k-p) \;,
\end{equation}
with $ [\hat{\cT}_p, \partial_{k_j}] = 0 $ and $[\hat{\cT}_p, \bar\partial_{k_j}] = 0$. Then the product rule~\eqref{eq:productrule} implies
\begin{equation}
	[\partial_{k_j},\hat{g}]
	= (\partial_{k_j} \hat{g}) \hat{\cT}_{-2 \pi L^{-1} e_j} \;, \qquad
	[\bar\partial_{k_j},\hat{g}]
	= (\bar\partial_{k_j} \hat{g}) \hat{\cT}_{2 \pi L^{-1} e_j} \;,
\end{equation}
which are operator identities in momentum space. By the triangle inequality, we can bound $ \Vert [\cW^{-\frac a2}, u^\alpha] \cW^{\frac a2} \Vert_{\op} = \Vert [(1-\Delta_k)^a, \hat{u}^\alpha] (1-\Delta_k)^{-a} \Vert_{\op} $ by a finite ($ a $-dependent) sum of terms of the form
\begin{equation*}
	\big\Vert (\partial_{k_{j_1}} \ldots \partial_{k_{j_r}}
		\bar\partial_{k_{j_{r+1}}} \ldots \bar\partial_{k_{j_s}} \hat{u}^\alpha)\hat{\cT}_p
		\partial_{k_{j_{s+1}}} \ldots \partial_{k_{j_t}}
		\bar\partial_{k_{j_{t+1}}} \ldots \bar\partial_{k_{j_m}}  (1 - \Delta_k)^{-a} \big\Vert_{\op} \;,
\end{equation*}
with $ 0 \le r \le s \le t \le m \le 2a $ and $ p := 2 \pi L^{-1} \left(- \sum_{\ell = 1}^r e_{j_\ell} + \sum_{\ell = r+1}^s e_{j_\ell} \right) $. We estimate this by
\begin{equation} \label{eq:multi_derivatives_bound}
\begin{aligned}
	&\big\Vert (\partial_{k_{j_1}} \ldots \partial_{k_{j_r}}
		\bar\partial_{k_{j_{r+1}}} \ldots \bar\partial_{k_{j_s}} \hat{u}^\alpha)\hat{\cT}_p
		\partial_{k_{j_{s+1}}} \ldots \partial_{k_{j_t}}
		\bar\partial_{k_{j_{t+1}}} \ldots \bar\partial_{k_{j_m}}  (1 - \Delta_k)^{-a} \big\Vert_{\op} \\
	&\le \big\Vert \partial_{k_{j_1}} \ldots \partial_{k_{j_r}}
		\bar\partial_{k_{j_{r+1}}} \ldots \bar\partial_{k_{j_s}} \hat{u}^\alpha \big\Vert_\infty \lVert \hat{\cT}_p \rVert_{\op}
		\lVert \partial_{k_{j_{s+1}}} \ldots \partial_{k_{j_t}}
		\bar\partial_{k_{j_{t+1}}} \ldots \bar\partial_{k_{j_m}} (1 - \Delta_k)^{-a} \rVert_{\op}
		 \\
	&\le C \fl^{-s} \;,
\end{aligned}
\end{equation}
where we used~\eqref{eq:smooth}, and that the operator $ \partial_{k_{j_{s+1}}} \ldots \partial_{k_{j_t}} \bar\partial_{k_{j_{t+1}}} \ldots \bar\partial_{k_{j_m}} (1 - \Delta_k)^{-a} $ acts in position space as a multiplication by a function whose norm is uniformly bounded by $1$ due to $ m \le 2a $. Since $ s \le 2a $ (and recalling that the number of summands is $ a $-dependent but finite) this establishes~\eqref{eq:ualpha_derivative_bound}.

The other claimed estimates are proven analogously.
\end{proof}

With these commutator bounds at hand, we may prove the following lemma.

\begin{lemma}[Removing $ v^\alpha $ or $ u^\alpha $ from Hilbert--Schmidt norms] \label{lem:HS_uv_removal}
Let $ (A^\alpha)_{\alpha = 1}^M $ be a family of Hilbert--Schmidt operators on $ L^2(\TTT_L) $, and recall $ u^\alpha$, $v^\alpha $ from~\eqref{eq:uv}. Then, we have the bounds
\begin{equation} \label{eq:HS_uv_removal}
	\Bigg\Vert \sum_{\alpha = 1}^M u^\alpha A^\alpha \Bigg\Vert_{\HS}^2
	\le \sum_{\alpha = 1}^M \Vert A^\alpha \Vert_{\HS}^2 \;.
\end{equation}
Moreover, recalling $ \cW_z $ from~\eqref{eq:cWz}, for any $ a \in \mathbb{N} $, there exists $ C_a > 0 $ such that
\begin{equation} \label{eq:HS_uv_removal_with_W}
\begin{aligned}
	\Bigg\Vert \sum_{\alpha = 1}^M \cW_z^{-\frac a2} u^\alpha A^\alpha \Bigg\Vert_{\HS}^2
	&\le C_a \fl^{-4a} \sum_{\alpha = 1}^M \big\Vert \cW_z^{-\frac a2} A^\alpha \big\Vert_{\HS}^2 \;, \\
	\big\Vert \cW_z^{-\frac a2} u^\alpha A^\alpha \big\Vert_{\HS}
	&\le C_a \fl^{-2a} \big\Vert \cW_z^{-\frac a2} A^\alpha \big\Vert_{\HS} \;.
\end{aligned}
\end{equation}
All bounds also hold true with $ u^\alpha $ replaced by $ v^\alpha $.
\end{lemma}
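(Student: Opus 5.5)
The first bound \eqref{eq:HS_uv_removal} rests on the orthogonality of the patch projections. The operators $u^\alpha$ are momentum-space multiplication operators by $\hat u^\alpha$, and the supports of $\hat u^\alpha$ for different $\alpha$ are disjoint, since the patches are separated by corridors. Hence $u^\alpha u^\beta = 0$ for $\alpha\neq\beta$, and $(u^\alpha)^2 \le u^\alpha\le 1$ because $0\le \hat u^\alpha\le 1$. Expanding the Hilbert--Schmidt norm, we get
\[
\Big\Vert \sum_\alpha u^\alpha A^\alpha\Big\Vert_{\HS}^2 = \sum_{\alpha,\beta}\tr\big((A^\alpha)^* u^\alpha u^\beta A^\beta\big) = \sum_\alpha \tr\big((A^\alpha)^*(u^\alpha)^2 A^\alpha\big)\le \sum_\alpha \Vert A^\alpha\Vert_{\HS}^2 .
\]
The same argument applies verbatim to $v^\alpha$, since the $\hat v^\alpha$ also have pairwise disjoint supports and take values in $[0,1]$.

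For the weighted bounds \eqref{eq:HS_uv_removal_with_W}, I would commute $\cW_z^{-a/2}$ through $u^\alpha$:
\[
\cW_z^{-\frac a2} u^\alpha A^\alpha = u^\alpha \cW_z^{-\frac a2}A^\alpha + [\cW_z^{-\frac a2},u^\alpha]\cW_z^{\frac a2}\,\cW_z^{-\frac a2}A^\alpha .
\]
The second (single-$\alpha$) bound follows directly: the first term has Hilbert--Schmidt norm at most $\Vert \cW_z^{-a/2}A^\alpha\Vert_{\HS}$ since $\Vert u^\alpha\Vert_{\op}\le1$, and the second term is bounded by Lemma~\ref{lem:ualpha_derivative_bound} via $\Vert XY\Vert_{\HS}\le \Vert X\Vert_{\op}\Vert Y\Vert_{\HS}$, which gives a factor $C_a\fl^{-2a}$.

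For the summed version, the first terms are handled by \eqref{eq:HS_uv_removal} applied to the family $\cW_z^{-a/2}A^\alpha$. The commutator terms are the main obstacle, because $[\cW_z^{-a/2},u^\alpha]$ is no longer patch-diagonal, so orthogonality is not directly available, and a naive triangle inequality would lose a factor $M$. I would try first to recover orthogonality by writing the commutator in momentum space, as in the proof of Lemma~\ref{lem:ualpha_derivative_bound}. There it is a finite sum of terms of the form $(\partial^{\underline{s}}\hat u^\alpha)\hat\cT_p D$, where $D$ is a product of discrete derivatives composed with $(1-\Delta_k)^{-a}$, bounded uniformly and independent of $\alpha$, and $\hat\cT_p$ is a unitary shift independent of $\alpha$. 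By construction (the $\fl$-margin in \eqref{eq:uhat_vhat}), the discrete derivatives $\partial^{\underline{s}}\hat u^\alpha$ are still supported inside $B_\alpha$, so they have pairwise disjoint supports with sup-norm at most $C\fl^{-s}$. Writing $\cW_z^{-a/2}=\cT_z\cW^{-a/2}\cT_{-z}$, each term of $\sum_\alpha[\cW_z^{-a/2},u^\alpha]A^\alpha$ takes the form
\[
\sum_\alpha (\partial^{\underline{s}}\hat u^\alpha)\,\hat\cT_p\, D\,(1-\Delta_k)^{a}\,\widetilde{A}^\alpha ,
\]
with $\widetilde A^\alpha$ corresponding to $A^\alpha$ conjugated by the translation. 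Here $D(1-\Delta_k)^{a}$ is exactly $D$ applied to $\cW^{-a/2}\widetilde A^\alpha$, a family whose Hilbert--Schmidt norms are the $\Vert\cW_z^{-a/2}A^\alpha\Vert_{\HS}$.

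Orthogonality then works as in the first step: the left factors have disjoint supports, so the cross terms $\beta\ne\alpha$ vanish in the Hilbert--Schmidt norm squared. Each diagonal term is bounded by $C\fl^{-2s}\Vert D\Vert_{\op}^2\Vert\cW_z^{-a/2}A^\alpha\Vert_{\HS}^2$. Summing over the finitely many terms (with $s\le 2a$) and using $(x+y)^2\le 2x^2+2y^2$ gives the factor $C_a\fl^{-4a}$. The $v^\alpha$ case is identical.
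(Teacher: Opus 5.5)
Your proof is correct, and its overall structure is the paper's. You use orthogonality of the patch multipliers for \eqref{eq:HS_uv_removal}, then split $\cW_z^{-a/2}u^\alpha A^\alpha$ into $u^\alpha\cW_z^{-a/2}A^\alpha$ plus a commutator term, and handle the first piece by \eqref{eq:HS_uv_removal}.

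The only real difference is how you treat the commutator term. The paper does not see it as an obstacle. In momentum space, $[\cW_z^{-a/2},u^\alpha]$ is a finite-range difference operator acting against $\hat u^\alpha$. Its kernel is therefore supported within distance $O(L^{-1})$ of $\supp\hat u^\alpha$, hence inside $B_\alpha$ for $L$ large. This gives $[\cW_z^{-a/2},u^\alpha]^*[\cW_z^{-a/2},u^{\alpha'}]=0$ for $\alpha\neq\alpha'$ directly. Each diagonal term is then bounded by $\Vert[\cW_z^{-a/2},u^\alpha]\cW_z^{a/2}\Vert_{\op}^2\,\Vert\cW_z^{-a/2}A^\alpha\Vert_{\HS}^2$, and Lemma~\ref{lem:ualpha_derivative_bound} is used as a black box.

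So your remark that the commutator is ``no longer patch-diagonal'' overstates the difficulty. It is patch-diagonal up to an $O(L^{-1})$ thickening of the support, which the $\fl$-margin in \eqref{eq:uhat_vhat} makes harmless. This is the same mechanism you invoke for $\partial^{\underline s}\hat u^\alpha$.

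Your workaround is nevertheless valid. You reopen the proof of Lemma~\ref{lem:ualpha_derivative_bound}, put the commutator in the form of terms $(\partial^{\underline s}\hat u^\alpha)\hat\cT_p D$, and get orthogonality term by term from the left factors. This reduces everything to the elementary fact that multiplication operators with disjoint supports annihilate each other. The cost is that you redo the commutator expansion instead of quoting the lemma's statement.
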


\begin{proof}
If $ \alpha \neq \alpha' $, then $ \hat{u}^\alpha $ and $ \hat{u}^{\alpha'} $ have disjoint supports and thus  $ u^\alpha u^{\alpha'} = 0 $, so
\begin{equation}
	\Bigg\Vert \sum_{\alpha = 1}^M u^\alpha A^\alpha \Bigg\Vert_{\HS}^2
	= \sum_{\alpha = 1}^M \tr \big( (A^\alpha)^* (u^\alpha)^2 A^\alpha \big)
	\le \sum_{\alpha = 1}^M \tr \big( (A^\alpha)^* A^\alpha \big) 
	= \sum_{\alpha = 1}^M \Vert A^\alpha \Vert_{\HS}^2 \;,
\end{equation}
where we used that $ 0 \le \hat{u}^\alpha(k) \le 1 $ implies $ 0 \le (u^\alpha)^2 \le 1 $. The proof is analogous in the case where $ u^\alpha $ is replaced by $ v^\alpha $.\\
To prove~\eqref{eq:HS_uv_removal_with_W}, we use the triangle inequality
\begin{equation} \label{eq:norm_cWuA_split}
	\Bigg\Vert \sum_{\alpha = 1}^M \cW_z^{-\frac a2} u^\alpha A^\alpha \Bigg\Vert_{\HS}^2
	\le 2 \Bigg\Vert \sum_{\alpha = 1}^M u^\alpha \cW_z^{-\frac a2} A^\alpha \Bigg\Vert_{\HS}^2
		+ 2 \Bigg\Vert \sum_{\alpha = 1}^M [\cW_z^{-\frac a2}, u^\alpha] A^\alpha \Bigg\Vert_{\HS}^2 \;.
\end{equation}
The first norm is bounded by~\eqref{eq:HS_uv_removal}. To bound the second norm, observe that discrete derivatives enlarge the momentum space support by at most $ 2 \pi L^{-1} \ll 1 $, so we conclude for $ \alpha \neq \alpha' $ and $ L $ large enough that $ [\cW_z^{-\frac a2}, u^\alpha]^* [\cW_z^{-\frac a2}, u^{\alpha'}] = 0 $, and
\begin{equation} \label{eq:commutator_disjoint_support_bound}
\begin{aligned}
	&\Bigg\Vert \sum_{\alpha = 1}^M [\cW_z^{-\frac a2}, u^\alpha] A^\alpha \Bigg\Vert_{\HS}^2
	= \sum_{\alpha = 1}^M \tr \big( [\cW_z^{-\frac a2}, u^\alpha]^* [\cW_z^{-\frac a2}, u^\alpha] A^\alpha (A^\alpha)^* \big) \\
	&\le \sum_{\alpha = 1}^M
		\big\Vert [\cW_z^{-\frac a2}, u^\alpha] \cW_z^{\frac a2} \big\Vert_{\op}^2
		\Vert \cW_z^{-\frac a2} A^\alpha \Vert_{\HS}^2
	\le C_a \fl^{-4a}
		\sum_{\alpha = 1}^M \big\Vert \cW_z^{-\frac a2} A^\alpha \big\Vert_{\HS}^2 \;,
\end{aligned}
\end{equation}
where we used Lemma~\ref{lem:ualpha_derivative_bound}.
This yields the desired bound on the second norm in~\eqref{eq:norm_cWuA_split}. The bound with $ v^\alpha $ in place of $ u^\alpha $ is analogous, and the second bound in~\eqref{eq:HS_uv_removal_with_W} follows by removing the sum over $ \alpha $ in all proof steps.
\end{proof}

\begin{lemma}[Localized Hilbert--Schmidt norms] \label{lem:HS_estimate}
Let $ a,b \in \mathbb{N}_0 $, and consider a family of functions $ (\hat\varphi_\alpha)_{\alpha =1}^M \subset W^{2a+2b+2,1}(\frac{2\pi}{L} \mathbb{Z}^3) $ with corresponding position space multiplication operators $ (\varphi_{\alpha,z} f)(x) := \varphi_\alpha(x-z) f(x) $. (The discrete Sobolev space $W^{n,1}(\frac{2\pi}{L} \mathbb{Z}^3)$ is defined as functions such that up to $n$ discrete derivatives remain summable.) Recall the definitions of $ u^\alpha $ and $ v^\alpha $ from~\eqref{eq:uv} and of $ \cW_z $ from~\eqref{eq:cWz}. Then, there exists $ C_{a,b} > 0 $ such that
\begin{equation} \label{eq:HS_estimate}
\begin{aligned}
	\Bigg\Vert \sum_{\alpha = 1}^M \cW_z^{-\frac a2} v^\alpha \varphi_{\alpha,z} u^\alpha \cW_z^{-\frac b2} \Bigg\Vert_{\HS}
	&\le C_{a,b} \fl^{-2a-2b} \rho^{\frac 13} \max_{1 \le \alpha \le M} \Vert (1 - \Delta_k)^{a+b+1} \hat\varphi_\alpha \Vert_1 \;, \\
	\big\Vert \cW_z^{-\frac a2} v^\alpha \varphi_{\alpha,z} u^\alpha \cW_z^{-\frac b2} \big\Vert_{\HS}
	&\le C_{a,b} \fl^{-2a-2b} \rho^{\frac 13} M^{-\frac 12} \max_{1 \le \alpha \le M} \Vert (1 - \Delta_k)^{a+b+1} \hat\varphi_\alpha \Vert_1 \;, \\
\end{aligned}
\end{equation}
where $ \Vert \cdot \Vert_1 $ is the $ \ell^1 $--norm as in~\eqref{eq:ell_norms} and $ -\Delta_k $ is the discrete Laplacian~\eqref{eq:Delta_k}. All bounds hold true after swapping $ u^\alpha $ and $ v^\alpha $.
\end{lemma}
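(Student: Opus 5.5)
We first strip the weights off the outer projections and then exploit the localization of $\varphi_{\alpha,z}$. Write $\cW_z^{-\frac a2} v^\alpha = v^\alpha \cW_z^{-\frac a2} + [\cW_z^{-\frac a2}, v^\alpha]$, and similarly on the right with $u^\alpha$. The outer $v^\alpha$ is removed by Lemma~\ref{lem:HS_uv_removal}, in the form \eqref{eq:HS_uv_removal_with_W}. This lemma uses the almost-orthogonality of the $v^\alpha$ for distinct $\alpha$ and Lemma~\ref{lem:ualpha_derivative_bound}, at the cost of $C_a\fl^{-4a}$ in the squared norm. Taking adjoints, the same device removes $u^\alpha$ on the right, now with $\cW_z^{-\frac b2}$ and a cost $\fl^{-2b}$. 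The commutators are still supported in patch $\alpha$, so orthogonality in $\alpha$ persists. This gives the first bound up to controlling
\[
\sum_{\alpha=1}^M \big\Vert \cW_z^{-\frac a2} \varphi_{\alpha,z} \, w^\alpha \cW_z^{-\frac b2} \big\Vert_{\HS}^2 ,
\]
where $w^\alpha$ is either $u^\alpha$ or $\cW_z^{\frac b2}[\cW_z^{-\frac b2},u^\alpha]$-type operators. The latter are momentum multipliers by discrete derivatives of $\hat u^\alpha$, composed with bounded position multipliers. They are still supported near patch $\alpha$ and have $\ell^\infty$ size $\fl^{-2b}$.

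Next, move $\cW_z^{-\frac b2}$ through $w^\alpha$ onto $\varphi_{\alpha,z}$. Again by Lemma~\ref{lem:ualpha_derivative_bound} we have $w^\alpha \cW_z^{-\frac b2} = \cW_z^{-\frac b2}(\text{bounded}) + \dots$, so it suffices to bound $\Vert \cW_z^{-\frac{a+b}2}\varphi_{\alpha,z}\, \tilde w^\alpha\Vert_{\HS}$. Here $\tilde w^\alpha$ is a momentum multiplier of sup-norm $\lesssim \fl^{-2b}$, supported in a slight enlargement of patch $\alpha$. By \eqref{eq:KSS_equality},
\[
\Vert \cW_z^{-\frac{a+b}2}\varphi_{\alpha,z}\tilde w^\alpha\Vert_{\HS}
= \Vert \cW^{-\frac{a+b}2}\varphi_\alpha\Vert_2 \,\Vert \hat{\tilde w}^\alpha\Vert_2 .
\]
The second factor is at most $C\fl^{-2b}\,(|B_\alpha|)^{1/2}\lesssim \fl^{-2b}\rho^{\frac13}M^{-\frac12}$, since the patch has volume $\sim k_\F^2 M^{-1}\cdot R$ and $k_\F\sim\rho^{1/3}$. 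For the first factor, $\cW^{-\frac{a+b}2}\cdot\cW^{\frac12}$ corresponds in Fourier space to $(1-\Delta_k)^{a+b+1}$, and $\cW^{-1/2}$ is at least $1$. So
\[
\Vert \cW^{-\frac{a+b}2}\varphi_\alpha\Vert_2 \le \Vert \cW^{-\frac{a+b+1}2}\varphi_\alpha\Vert_\infty \Vert \cW^{\frac12}\Vert_2 \le C\Vert(1-\Delta_k)^{a+b+1}\hat\varphi_\alpha\Vert_1 ,
\]
using \eqref{eq:FT_bounds} and $\Vert\cW^{1/2}\Vert_2^2=\Vert\cW\Vert_1\le C$. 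This yields the second estimate directly, without the sum over $\alpha$.

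For the first estimate, summing the squares over $\alpha$ gives $M\cdot \rho^{\frac23}M^{-1}\max_\alpha\Vert\cdot\Vert_1^2$. This is precisely where orthogonality of patches turns the naive factor $M$ into $M^{1/2}$, i.e.\ it cancels the $M^{-1/2}$.

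The main obstacle is the bookkeeping in the commutation step. One must check that every commutator term $[\cW_z^{-c/2},u^\alpha]$ can be written as a finite sum of (bounded position multiplier)$\,\times\,$(momentum multiplier by derivatives of $\hat u^\alpha$ with disjoint supports in $\alpha$), each of sup-norm $\le C\fl^{-2c}$. Only then can the Hilbert--Schmidt identity \eqref{eq:KSS_equality} and the orthogonality in $\alpha$ be applied term by term. I would do this by expanding $(1-\Delta_k)^c\hat u^\alpha - \hat u^\alpha(1-\Delta_k)^c$ via the product rule \eqref{eq:productrule}, exactly as in the proof of Lemma~\ref{lem:ualpha_derivative_bound}. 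The remark after \eqref{eq:uhat_vhat} guarantees that discrete derivatives stay in the patch.
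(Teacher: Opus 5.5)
Your proposal is correct and is essentially the paper's argument with left and right interchanged. The paper removes $u^\alpha$ on the right via Lemma~\ref{lem:HS_uv_removal} (cost $\fl^{-2b}$), keeps $v^\alpha$ and commutes $\cW_z^{-a/2}$ through it, then bounds $\Vert v^\alpha \cW_z^{1/2}\Vert_{\HS}\,\Vert \cW_z^{-(a+b+1)/2}\varphi_{\alpha,z}\Vert_{\op}$ (and the commutator analogously); you instead remove $v^\alpha$ on the left and keep $u^\alpha$. Two points of wording need fixing: your remark about also removing $u^\alpha$ on the right cannot be taken literally, since one projection must survive to make the Hilbert--Schmidt norm finite, and moving $\cW_z^{-b/2}$ from the right to the left of $u^\alpha$ uses Lemma~\ref{lem:ualpha_derivative_bound} in its adjoint form.
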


\begin{proof}
By Lemma~\ref{lem:HS_uv_removal}, we may remove the $ u^\alpha $ as
\begin{equation} \label{eq:HS_bound_commutator_2}
\begin{aligned}
	&\Bigg\Vert \sum_{\alpha = 1}^M \cW_z^{-\frac a2} v^\alpha \varphi_{\alpha,z} u^\alpha \cW_z^{-\frac b2} \Bigg\Vert_{\HS}^2
	\le C_b \fl^{-4b} \sum_{\alpha = 1}^M \big\Vert \cW_z^{-\frac a2} v^\alpha \varphi_{\alpha,z} \cW_z^{-\frac b2} \big\Vert_{\HS}^2 \\
	&\le C_b \fl^{-4b} \sum_{\alpha = 1}^M \big\Vert v^\alpha \cW_z^{-\frac a2} \varphi_{\alpha,z} \cW_z^{-\frac b2} \big\Vert_{\HS}^2
		+ C_b \fl^{-4b} \sum_{\alpha = 1}^M \big\Vert  [\cW_z^{-\frac a2}, v^\alpha] \varphi_{\alpha,z} \cW_z^{-\frac b2} \big\Vert_{\HS}^2 \;.
\end{aligned}
\end{equation}
For the first term, we use $ [\cW_z, \varphi_{\alpha,z}] = 0 $ and recall that $ \cW^{-\frac 12} $ acts in momentum space as $ (1 - \Delta_k) $:
\begin{equation} \label{eq:HS_bound_firstterm}
\begin{aligned}
	&\sum_{\alpha = 1}^M \big\Vert v^\alpha \cW_z^{-\frac a2} \varphi_{\alpha,z} \cW_z^{-\frac b2} \big\Vert_{\HS}^2
	\le \sum_{\alpha = 1}^M \big\Vert v^\alpha \cW_z^{\frac 12} \big\Vert_{\HS}^2
		\big\Vert \cW_z^{-\frac{a+b+1}{2}} \varphi_{\alpha,z} \big\Vert_{\op}^2 \\
	& \le \sum_{\alpha = 1}^M
		\big\Vert v^\alpha \cW_z^{\frac 12} \big\Vert_{\HS}^2
		\big\Vert \cW^{-\frac{a+b+1}{2}} \varphi_\alpha \big\Vert_\infty^2 \\
	&\le M \max_{1 \le \alpha \le M} 
		\Vert \hat{v}^\alpha \Vert_2^2
		\Vert \cW_z^{\frac 12} \Vert_2^2
		\Vert (1 - \Delta_k)^{a+b+1} \hat\varphi_\alpha \Vert_1^2
	\le C \rho^{\frac 23} \max_{1 \le \alpha \le M} \Vert (1 - \Delta_k)^{a+b+1} \hat\varphi_\alpha \Vert_1^2 \;,
\end{aligned}
\end{equation}
where we used that the $ L^\infty $-norm is invariant under a shift by $ z $, and we subsequently bounded the $ L^\infty $ and the Hilbert--Schmidt norm by Lemma~\ref{lem:FT_bounds}, using that $ \Vert \hat{v}^\alpha \Vert_2^2 \le C\rho^{\frac 23} M^{-1} $, as the patch $ B_\alpha $ has volume of order $ C \rho^{\frac 23} M^{-1} $.
For the second term in~\eqref{eq:HS_bound_commutator_2}, we likewise extract and bound a factor of $ \big\Vert \cW_z^{-\frac{a+b+1}{2}} \varphi_{\alpha,z} \big\Vert_{\op}^2 $, yielding
\begin{equation} \label{eq:HS_bound_commutator_sum}
\begin{aligned}
	&\sum_{\alpha = 1}^M \big\Vert [\cW_z^{-\frac a2}, v^\alpha] \varphi_{\alpha,z} \cW_z^{-\frac b2} \big\Vert_{\HS}^2
	\le \sum_{\alpha = 1}^M \big\Vert [\cW_z^{-\frac a2}, v^\alpha] \cW_z^{\frac a2 + \frac 12} \big\Vert_{\HS}^2
		\Vert (1 - \Delta_k)^{a+b+1} \hat\varphi_\alpha \Vert_1^2 \;.
\end{aligned}
\end{equation}
To bound the Hilbert--Schmidt norm, we proceed as in~\eqref{eq:multi_derivatives_bound} and use the triangle inequality to write it as a finite, $ a $-dependent sum of terms
\begin{equation*}
	\big\Vert (\partial_{k_{j_1}} \ldots \partial_{k_{j_r}}
		\bar\partial_{k_{j_{r+1}}} \ldots \bar\partial_{k_{j_s}} \hat{v}^\alpha)
		\partial_{k_{j_{s+1}}} \ldots \partial_{k_{j_t}}
		\bar\partial_{k_{j_{t+1}}} \ldots \bar\partial_{k_{j_m}} (1 - \Delta_k)^{-a-1} \hat{\cT}_p \big\Vert_{\HS} \;,
\end{equation*}
with $ 0 \le r \le s \le t \le m \le 2a $, where $ \hat{\cT}_p $ is the momentum shift operator from~\eqref{eq:cThat} for a suitable $ p \in \Lambda_L $. Recalling that $ \hat{\cT}_p $ commutes with $ \partial_{k_j}, \bar\partial_{k_j}, -\Delta_k $ and that $ \Vert \hat{\cT}_p \Vert_{\op} \le 1 $ since $ \hat{\cT}_p $ is unitary:
\begin{equation}
\begin{aligned}
	&\Vert (\partial_{k_{j_1}} \ldots \partial_{k_{j_r}}
		\bar\partial_{k_{j_{r+1}}} \ldots \bar\partial_{k_{j_s}} \hat{v}^\alpha)
		\partial_{k_{j_{s+1}}} \ldots \partial_{k_{j_t}}
		\bar\partial_{k_{j_{t+1}}} \ldots \bar\partial_{k_{j_m}} (1 - \Delta_k)^{-a-1} \hat{\cT}_p \Vert_{\HS} \\
	&\le \Vert (\partial_{k_{j_1}} \ldots \partial_{k_{j_r}}
		\bar\partial_{k_{j_{r+1}}} \ldots \bar\partial_{k_{j_s}} \hat{v}^\alpha)
		(1 - \Delta_k)^{-1} \Vert_{\HS} 
		\Vert \partial_{k_{j_{s+1}}} \ldots \partial_{k_{j_t}}
		\bar\partial_{k_{j_{t+1}}} \ldots \bar\partial_{k_{j_m}} (1 - \Delta_k)^{-a} \Vert_{\op} \\
	&\le \Vert \partial_{k_{j_1}} \ldots \partial_{k_{j_r}}
		\bar\partial_{k_{j_{r+1}}} \ldots \bar\partial_{k_{j_s}} \hat{v}^\alpha \Vert_2
		\Vert \cW^{\frac 12} \Vert_2
	\le C_a \rho^{\frac 13} M^{-\frac 12} \fl^{-2a} \;,
\end{aligned}
\end{equation}
where we used that since $ m \le 2a $, $ \partial_{k_{j_{s+1}}} \ldots \partial_{k_{j_t}} \bar\partial_{k_{j_{t+1}}} \ldots \bar\partial_{k_{j_m}} (1 - \Delta_k)^{-a} $ is a position space multiplication operator by a function bounded by $1$. Also, we bounded the Hilbert--Schmidt norm by Lemma~\ref{lem:FT_bounds} and used that $ \Vert \cW^{\frac 12} \Vert_2 \le C $, as well as that the support of $ (\partial_{k_{j_1}} \ldots \partial_{k_{j_r}} \bar\partial_{k_{j_{r+1}}} \ldots \bar\partial_{k_{j_s}} \hat{v}^\alpha) $ has volume $ \le C_a \rho^{\frac 23} M^{-1} $, and the function $ \hat{v}^\alpha $ is according to \eqref{eq:uhat_vhat} regularized on a scale $ \fl $, so each derivative produces a factor of $ \fl^{-1} $. We conclude that
\begin{equation}
	\big\Vert [\cW_z^{-\frac a2}, v^\alpha] \cW_z^{\frac a2 + \frac 12} \big\Vert_{\HS}
	\le C_a \rho^{\frac 13} M^{-\frac 12} \fl^{-2a} \;.
\end{equation}
Plugging this into~\eqref{eq:HS_bound_commutator_sum} and then into~\eqref{eq:HS_bound_commutator_2}, yields the first claim in~\eqref{eq:HS_estimate}. The second claim follows analogously, where removing the sum over $ \alpha $, we gain a factor of $ M^{-\frac 12} $.
\end{proof}

\subsection{Proof of Proposition~\ref{prop:gronwall}}
\label{subsec:gronwall}

The strategy for proving the first bound in Proposition~\ref{prop:gronwall} is to apply Grönwall's lemma to $ f_{(1)}(\lambda) := \langle \xi, T_\lambda^* (\cN + L^3) T_\lambda \xi \rangle $, after bounding $ |\partial_\lambda f_{(1)}(\lambda)| = |\langle \xi_\lambda, [S, \cN] \xi_\lambda \rangle| $. To estimate this commutator, we will use the following lemma (stated in more general form for later convenience).
\begin{lemma}[Bound on Bogoliubov Exponents] \label{lem:Q2_bound}
Let $ X : \Lambda_L \to \RRR^{M \times M} $ with $ \supp(X_{\alpha, \beta}) \subset B_R(0) $ satisfy the following regularity estimate: There exists some $ D_X > 0 $, possibly depending on $ \rho $ but not on $ L $, such that
\begin{equation}~\label{eq:regularityestimate_1}
	\Vert (1 - \Delta_k)^2 \hat\varphi_{X,\alpha,\beta} \Vert_1
	\le D_X \qquad \forall \alpha, \beta \in \{1,\ldots,M\} \;,
\end{equation}
with $ \hat\varphi_{X,\alpha,\beta} $ defined in~\eqref{eq:varphi}. Recall $ b_{\alpha,x} = a(v_x^{\bar\alpha}) a(u_x^\alpha) $. Then, there exists $ C > 0 $ such that
\begin{equation} \label{eq:Q2_bound}
	\bigg\vert \frac{1}{\rho^{\frac 23}} \sum_{\alpha, \beta = 1}^M \int \d x \d y \varphi_X(y-x)_{\alpha, \beta}
		\langle \xi, b_{\alpha,x} b_{\beta,y} \xi \rangle \bigg\vert
	\le C D_X \fl^{-2} \langle \xi, (\cN + L^3) \xi \rangle \;.
\end{equation}
\end{lemma}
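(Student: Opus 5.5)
We bound the expression by localizing one of the two pair operators around the position of the other and using Cauchy--Schwarz in $x$. Write the quantity as
\[
	\frac{1}{\rho^{\frac 23}} \sum_{\beta=1}^M \int \d y \; \Big\langle \sum_{\alpha=1}^M \int \d x\, \overline{\varphi_X(y-x)_{\alpha,\beta}}\, b^*_{\alpha,x} \xi ,\; b_{\beta,y} \xi \Big\rangle \;.
\]
For fixed $y$ and $\beta$, the first vector is of the form treated in Corollary~\ref{cor:bbstar_conversion}, with $z = y$ and functions $\varphi_\alpha(x-y) := \overline{\varphi_X(y-x)_{\alpha,\beta}}$. Hence its norm is at most
\[
	2 \Big\Vert \sum_\alpha u^\alpha \varphi_{\alpha,y} v^\alpha \cW_y^{-\frac 12} \Big\Vert_{\HS} \Vert (\cN_y+1)^{\frac 12}\xi\Vert \;.
\]
Lemma~\ref{lem:HS_estimate} with $a=0$, $b=1$ (and $u,v$ swapped) bounds this Hilbert--Schmidt norm by $C\fl^{-2}\rho^{\frac13}\max_\alpha \Vert (1-\Delta_k)^2\hat\varphi_\alpha\Vert_1 \le C \fl^{-2}\rho^{\frac 13} D_X$. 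The reflection $x\mapsto -x$ and complex conjugation do not change $\ell^1$-norms of discrete Laplacian powers, up to harmless lattice shifts.

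The second factor $b_{\beta,y}\xi = a(v^{\bar\beta}_y)a(u^\beta_y)\xi$ is bounded without localization. One first tries $\Vert b_{\beta,y}\xi\Vert \le \Vert v^{\bar\beta}\Vert_2 \Vert a(u^\beta_y)\xi\Vert$, using $\Vert v^{\bar\beta}_y\Vert_2 = \Vert\hat v^{\bar\beta}\Vert_2\le C\rho^{\frac13}M^{-\frac12}$. This costs a factor of $\rho^{\frac 13}$, which together with the $\rho^{\frac13}$ above cancels the prefactor $\rho^{-\frac23}$. Then Cauchy--Schwarz in $(\beta,y)$ gives
\[
	\sum_\beta\int\d y\, \Vert (\cN_y+1)^{\frac12}\xi\Vert\, \Vert a(u_y^\beta)\xi\Vert \le \Big(M\int \d y\,\langle\xi,(\cN_y+1)\xi\rangle\Big)^{\frac12}\Big(\sum_\beta\int\d y\,\Vert a(u^\beta_y)\xi\Vert^2\Big)^{\frac12}.
\]
The first integral equals $\Vert\cW\Vert_1\langle\xi,\cN\xi\rangle+L^3 \le C\langle \xi,(\cN+L^3)\xi\rangle$. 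The second equals $\langle\xi,\d\Gamma(\sum_\beta (u^\beta)^2)\xi\rangle\le\langle\xi,\cN\xi\rangle$, since $\int \d y\, |u^\beta_y\rangle\langle u^\beta_y| = (u^\beta)^2$. The factor $M^{\frac12}$ from the $\beta$-sum is exactly compensated by the $M^{-\frac12}$ in $\Vert\hat v^{\bar\beta}\Vert_2$.

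The main obstacle is this bookkeeping of $M$: the Hilbert--Schmidt bound of Lemma~\ref{lem:HS_estimate} with the $\alpha$-sum carries no $M^{-\frac12}$, so the single-patch decay must come from $v^{\bar\beta}$ in the other factor. Collecting all factors gives $C D_X\fl^{-2}\langle\xi,(\cN+L^3)\xi\rangle$, as claimed. Should the constant from Lemma~\ref{lem:HS_estimate} instead require the $a=1$ version, one would put the weight $\cW_y^{-\frac12}$ on the other side; this only changes the power of $\fl$ through Lemma~\ref{lem:HS_uv_removal}. The chosen split keeps the power at $\fl^{-2}$.
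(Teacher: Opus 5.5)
Your proposal is correct and follows the paper's proof essentially step by step. You separate the smeared $b^*_{\alpha,x}$, controlled by Corollary~\ref{cor:bbstar_conversion} and Lemma~\ref{lem:HS_estimate} with $a=0$, $b=1$, from $b_{\beta,y}$, controlled by $\Vert a(v^{\bar\beta}_y)\Vert_{\op}\le C\rho^{\frac13}M^{-\frac12}$ and $\sum_\beta (u^\beta)^2\le 1$, and you do the same $M^{\frac12}$-versus-$M^{-\frac12}$ bookkeeping; the only cosmetic difference is that no reflection or lattice shift actually arises, since $\hat\varphi_X$ is real and so $\overline{\varphi_X(y-x)_{\alpha,\beta}}=\varphi_X(x-y)_{\alpha,\beta}$ exactly.
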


\begin{proof}
First, note that since $ X(k)_{\alpha,\beta} $ is real, we have $ \varphi_X(x)_{\alpha, \beta} = \overline{\varphi_X(-x)_{\alpha, \beta}} $. With Cauchy--Schwarz,
\begin{equation}
\label{eq:dtflambda}
\begin{aligned}
	&\Bigg\vert \frac{1}{\rho^{\frac 23}} \sum_{\alpha, \beta = 1}^M \int \d x \d y \; \varphi_X(y-x)_{\alpha, \beta}
		\langle \xi, b_{\alpha,x} b_{\beta,y} \xi \rangle \Bigg\vert \\
	&\le \frac{1}{\rho^{\frac 23}}
		\Bigg( \sum_{\beta = 1}^M \int \d y  \Bigg\Vert \sum_{\alpha = 1}^M \int \d x \; \varphi_X(x-y)_{\alpha, \beta} b^*_{\alpha,x} \xi \Bigg\Vert^2 \Bigg)^{\frac 12}
		\Bigg( \sum_{\beta = 1}^M \int \d y \Vert b_{\beta,y} \xi \Vert^2 \Bigg)^{\frac 12} \;.
\end{aligned}
\end{equation}
The first norm is bounded by Corollary~\ref{cor:bbstar_conversion}, Lemma~\ref{lem:HS_estimate}, and the regularity estimate~\eqref{eq:regularityestimate_1}
\begin{equation}
\label{eq:trNzreduction}
\begin{aligned}
	&\Bigg\Vert \sum_{\alpha = 1}^M \int \d x \; \varphi_X(x-y)_{\alpha, \beta} b^*_{\alpha,x} \xi \Bigg\Vert
	\le 2 \Bigg\Vert \sum_{\alpha = 1}^M u^\alpha \varphi_{X, \alpha, \beta, y} v^\alpha \cW_y^{-\frac 12} \Bigg\Vert_{\HS}
		\Vert (\cN_y + 1)^{\frac 12} \xi \Vert \\
	&\le C \fl^{-2} \rho^{\frac 13} \max_{1 \le \alpha \le M} \Vert (1 - \Delta_k)^2 \hat\varphi_{X,\alpha,\beta} \Vert_1 \Vert (\cN_y + 1)^{\frac 12} \xi \Vert
	\le C D_X \fl^{-2} \rho^{\frac 13} \Vert (\cN_y + 1)^{\frac 12} \xi \Vert \;.
\end{aligned}
\end{equation}
Thus,
\begin{equation}
	\sum_{\beta = 1}^M \int \d y  \Bigg\Vert \sum_{\alpha = 1}^M \int \d x \; \varphi_X(x-y)_{\alpha, \beta} b^*_{\alpha,x} \xi \Bigg\Vert^2
	\le C D_X^2 \fl^{-4} \rho^{\frac 23} M \langle \xi, (\cN + L^3) \xi \rangle \;.
\end{equation}
For the second norm, we use $ b_{\beta,y} = a(v^{\bar\beta}_y) a(u^\beta_y) $ with $ \Vert a(v^{\bar\beta}_y) \Vert_{\op} \le \Vert \hat v^{\bar\beta} \Vert_2 \le C \rho^{\frac 13} M^{-\frac 12} $:
\begin{equation}
\label{eq:trNzreduction_2}
\begin{aligned}
	&\sum_{\beta = 1}^M \int \d y \Vert b_{\beta,y} \xi \Vert^2
	\le C \rho^{\frac 23} M^{-1} \sum_{\beta = 1}^M \int \d y \Vert a(u^\beta_y) \xi \Vert^2 
	= C \rho^{\frac 23} M^{-1} \sum_{\beta = 1}^M \langle \xi, \d \Gamma\big( (u^\beta)^2 \big) \xi \rangle \\
	&\le C \rho^{\frac 23} M^{-1} \Bigg\Vert \sum_{\beta = 1}^M (u^\beta)^2 \Bigg\Vert_{\op} \langle \xi, \cN \xi \rangle
	\le C \rho^{\frac 23} M^{-1} \langle \xi, \cN \xi \rangle \;,
\end{aligned}
\end{equation}
where we used $ \langle \xi, \d \Gamma(A) \xi \rangle \le \norm{A}_{\op} \langle \xi, \cN \xi \rangle $ and that the momentum space multiplication operator $ u^\beta $ satisfies $ 0 \le \sum_{\beta = 1}^M (u^\beta)^2 \le 1 $. Plugging into~\eqref{eq:dtflambda} yields the desired bound.
\end{proof}

\begin{proof}[Proof of Proposition~\ref{prop:gronwall}]
\underline{Expectation value of $ \cN $:}
We aim to use Grönwall's lemma; that is, defining
\[
   f_{(1)}(\lambda) := \langle \xi, T_\lambda^* (\cN + L^3) T_\lambda \xi \rangle
\]
we seek a bound of the form $ |\partial_\lambda f_{(1)}(\lambda)| \le C f_{(1)}(\lambda) $. We abbreviate $ \xi_\lambda := T_\lambda \xi $, where we recall~\eqref{eq:T} $ T_\lambda = e^{-\lambda S} $ with $ S = S_- - S_-^* $, see also~\eqref{eq:S_position}:
\begin{equation}
\begin{aligned}
	&|\partial_\lambda f_{(1)}(\lambda)|
	= |\langle \xi_\lambda, [S, \cN] \xi_\lambda \rangle|
	\le 8 |\langle \xi_\lambda, [S_-, \cN] \xi_\lambda \rangle|\\
	\le &\frac{C}{\rho^{\frac 23}} \left\vert \sum_{\alpha, \beta = 1}^M
		\int \d x \d y \; \varphi_K(y-x)_{\alpha, \bar\beta}
		\langle \xi_\lambda, b_{\alpha,x} b_{\beta,y} \xi_\lambda \rangle \right\vert \;.
\end{aligned}
\end{equation}
This is bounded by Lemma~\ref{lem:Q2_bound} with $ \xi = \xi_\lambda $, and where $ X(k)_{\alpha,\beta} = \tilde{n}_{\beta,k} \tilde{n}_{\bar\beta,k}^{-1} K(k)_{\alpha,\bar\beta} $ satisfies the regularity estimate~\eqref{eq:regularityestimate_1} with $ D_X = C \fl^{-4} |\log\fl| $ due to Proposition~\ref{prop:regularity}. Thus,
\begin{equation}
	|\partial_\lambda f_{(1)}(\lambda)|
	\le C \fl^{-6} |\log\fl| f_{(1)}(\lambda) \;.
\end{equation}
Applying Grönwall's Lemma yields the desired first inequality in \eqref{eq:gronwall}.

\medskip

\noindent \underline{Expectation value of $ \cN_z^2 $:} We compute the derivative of
\[
	f_{(2)}(\lambda) := \langle \xi_\lambda, (\cN_z + 1)^2 \xi_\lambda \rangle
\]
as
\begin{equation}
\label{eq:dtflambda2}
\begin{aligned}
	\vert \partial_\lambda f_{(2)}(\lambda) \rvert
	& \le 2 \lvert\langle \xi_\lambda, [S_-, (\cN_z + 1)^2] \xi_\lambda \rangle \rvert \\
	& = 2 \lvert \langle \xi_\lambda, \Big((\cN_z +1) [S_-, \cN_z] + [S_-, \cN_z] (\cN_z +1) \Big) \xi_\lambda \rangle \rvert \\
	&\le 2 \lVert (\cN_z +1) \xi_\lambda \rVert \Big(
		\lVert [S_-, \cN_z] \xi_\lambda \rVert
		+ \lVert [S_-, \cN_z]^* \xi_\lambda \rVert
		\Big)\;.
\end{aligned}
\end{equation}
With Lemma~\ref{lem:pullthroughbound}, we obtain the pull--through formula
\begin{equation}
\label{eq:pullthrough_Nz}
	[a(f), \cN_z]
	= \int \d x \overline{f(x)} [a_x, \cN_z]
	= \int \d x \overline{f(x)} \cW_z(x) a_x
	= a(\cW_z f) \;,
\end{equation}
which allows evaluating the commutator as
\begin{equation}
\label{eq:dtfzlambda2_split}
\begin{aligned}
	&\Vert [S_-, \cN_z] \xi_\lambda \Vert
	\le C (\I + \II + \III + \IV) \;, \\
	\I & :=\rho^{-\frac 23} \Bigg\Vert \sum_{\alpha, \beta = 1}^M
		\int \d x \d y \varphi_K(y-x)_{\alpha, \bar\beta}
		a(\cW_z v_x^{\bar\alpha}) a(u_x^\alpha) a(v_y^{\bar\beta}) a(u_y^\beta) \xi_\lambda  \Bigg\Vert \;, \\
	\II & := \rho^{-\frac 23} \Bigg\Vert \sum_{\alpha, \beta = 1}^M
		\int \d x \d y \varphi_K(y-x)_{\alpha, \bar\beta}
		a(v_x^{\bar\alpha}) a(\cW_z u_x^\alpha) a(v_y^{\bar\beta}) a(u_y^\beta) \xi_\lambda  \Bigg\Vert \;, \\
	\III & := \rho^{-\frac 23} \Bigg\Vert \sum_{\alpha, \beta = 1}^M
		\int \d x \d y \varphi_K(y-x)_{\alpha, \bar\beta}
		a(v_x^{\bar\alpha}) a(u_x^\alpha) a(\cW_z v_y^{\bar\beta}) a(u_y^\beta) \xi_\lambda  \Bigg\Vert \;, \\
	\IV & := \rho^{-\frac 23} \Bigg\Vert \sum_{\alpha, \beta = 1}^M
		\int \d x \d y \varphi_K(y-x)_{\alpha, \bar\beta}
		a(v_x^{\bar\alpha}) a(u_x^\alpha) a(v_y^{\bar\beta}) a(\cW_z u_y^\beta) \xi_\lambda  \Bigg\Vert \;.
\end{aligned}
\end{equation}
To bound $ \I $, we split $ \varphi_K(y-x) $ as follows: Recall the radial cutoff function $ \hat\chi_R \in C_c^\infty(\RRR^3) $ with $ \hat\chi_R(k) = 1 $ for $ |k| \le R $, introduced above~\eqref{eq:DB}. Since $ \hat\chi_R $ does not depend on $ \rho $, its equally denoted restriction to the momentum lattice satisfies the regularity estimate
\begin{equation} \label{eq:chi_R}
	\Vert (1 - \Delta_k)^{\frac m2} \hat\chi_R \Vert_1 
	\le C_m \;.
\end{equation}
Furthermore, since $ \supp(K_{\alpha, \beta}) \subset B_R(0) $, we have $ K(k)_{\alpha,\beta} = K(k)_{\alpha,\beta} \hat\chi_R(k) $, so with $ \chi_R $ being the periodization of the inverse Fourier transform of $ \hat\chi_R $, we may split
\begin{equation}
\label{eq:varphisplit}
\begin{aligned}
	\varphi_K(y-x)_{\alpha, \beta}
	= \overline{\varphi_K(x-y)_{\alpha, \beta}}
	&= \int \d z' \; \overline{\varphi_K(x-z')_{\alpha, \beta}}
		\; \overline{\chi_R(z'-y)} \;.
\end{aligned}
\end{equation}
Note that since $ \hat\chi_R $ is a symmetric, real function, so is $ \chi_R $. We then write as in the proof of Corollary~\ref{cor:bbstar_conversion}:
\begin{equation} \label{eq:Ibound}
\begin{aligned}
	\I
	&= \rho^{-\frac 23} \Bigg\Vert \sum_{\alpha, \beta = 1}^M
		\int \d z' \d x \d y \;
		\overline{\varphi_K(x-z')_{\alpha, \bar\beta}}
		\chi_R(y-z')
		a(\cW_z v_x^{\bar\alpha}) a(u_x^\alpha) b_{\beta,y} \xi_\lambda \Bigg\Vert \\
	&\le C \rho^{-\frac 23} \int \d z'
		\sum_{\beta = 1}^M \left\Vert \int \d x \d x' \d y \d y'
		A_\beta(x,x') B_\beta(y,y') a_x a_{x'} a_y a_{y'} \xi_\lambda \right\Vert \;, \\
	A_\beta &:= \sum_{\alpha=1}^M \cW_z v^\alpha \overline{\varphi_{K, \alpha, \bar\beta, z'}} u^\alpha \;, \qquad
	B_\beta := v^\beta \chi_{R, z'} u^\beta \;,
\end{aligned}
\end{equation}
where we recall that $ \chi_R $ is real and symmetric. We may directly apply Lemma~\ref{lem:aastar_conversion_multi}:
\begin{equation}
\begin{aligned}
	\I
	&\le C \rho^{-\frac 23} \int \d z' \sum_{\beta = 1}^M
		\norm{ A_\beta \cW_{z'}^{-\frac 12} }_{\HS}
		\norm{ B_\beta \cW_{z'}^{-\frac 12} }_{\HS}
		\Vert (\cN_{z'} + 1) \xi_\lambda \Vert \\
	&\le C \rho^{-\frac 23} \int \d z' \Bigg( \sum_{\beta = 1}^M
		\norm{ A_\beta \cW_{z'}^{-\frac 12} }_{\HS}^2 \Bigg)^{\frac 12}
		\Bigg( \sum_{\beta = 1}^M \norm{ B_\beta \cW_{z'}^{-\frac 12} }_{\HS}^2 \Bigg)^{\frac 12}
		\Vert (\cN_{z'} + 1) \xi_\lambda \Vert \;,
\end{aligned}
\end{equation}
We then estimate the Hilbert--Schmidt norms by Lemma~\ref{lem:HS_estimate} and Proposition~\ref{prop:regularity}, using that by Lemma~\ref{lem:Peetre}, $ \Vert \cW_z \cW_{z'} \Vert_{\op} \le C \cW(z-z') $:
\begin{equation}
\begin{aligned}
	&\norm{ A_\beta \cW_{z'}^{-\frac 12} }_{\HS}
	\le \Vert \cW_z \cW_{z'} \Vert_{\op}
		\Bigg\Vert \sum_{\alpha = 1}^M \cW_{z'}^{-1} v^\alpha \overline{\varphi_{K, \alpha, \bar\beta,z'}} u^\alpha \cW_{z'}^{-\frac 12} \Bigg\Vert_{\HS} \\
	&\le C \fl^{-6} \rho^{\frac 13}
		\max_{1 \le \alpha \le M} \big\Vert (1-\Delta_k)^4 \hat\varphi_{K, \alpha, \bar\beta} \big\Vert_1 \cW(z-z')
	\le C \rho^{\frac 13} \fl^{-14} |\log \fl| \cW(z-z') \;, \\
	&\sum_{\beta = 1}^M \norm{ B_\beta \cW_{z'}^{-\frac 12} }_{\HS}^2
	\le C \fl^{-4} \rho^{\frac 23} \Vert (1-\Delta_k)^2 \hat\chi_R \Vert_1^2 
	\le C \fl^{-4} \rho^{\frac 23} \;.
\end{aligned}
\end{equation}
Note that the complex conjugation of $ \overline{\varphi_{K, \alpha, \bar\beta,z'}} $ here caused a reflection of $ \hat\varphi_{K, \alpha, \bar\beta} $, which does not change the $ \ell^1 $-norm. Therefore, we arrive at
\begin{equation}
	\I
	\le C \fl^{-16} |\log \fl| M^{\frac 12} \int \d z' \cW(z-z')
	\lVert (\cN_{z'} + 1) \xi_\lambda \rVert \;.
\end{equation}
The bounds for $ \II, \III $ and $ \IV $ are analogous, so with \eqref{eq:dtfzlambda2_split} we conclude
\begin{equation}
	\lVert [S_-,\cN_z] \xi_\lambda \rVert
	\le C \fl^{-16} |\log \fl| M^{\frac 12} \int \d z' \cW(z-z')
		\lVert (\cN_{z'} + 1) \xi_\lambda \rVert \;.
\end{equation}
The bound for $ \Vert [S_-,\cN_z]^* \xi_\lambda \Vert $ is analogous, where we use the fourth bound of Lemma~\ref{lem:aastar_conversion_multi} instead of the first one. Thus,
\begin{equation}
	\lvert \partial_\lambda f_{(2)}(\lambda) \rvert
	\le C \fl^{-16} |\log \fl| M^{\frac 12} \int \d z' \cW(z-z')
		\lVert (\cN_{z'} + 1) \xi_\lambda \rVert \lVert (\cN_z + 1) \xi_\lambda \rVert\;.
\end{equation}
To close the Grönwall bound, it remains to replace $ \cN_{z'} $ by $ \cN_z $, which we achieve by translation invariance of $ \xi_\lambda $ (compare~\eqref{eq:Gamma_cTy}), so $ \langle \xi_\lambda, (\cN_{z'})^m \xi_\lambda \rangle = \langle \xi_\lambda, (\cN_z)^m \xi_\lambda \rangle $:
\begin{equation} \label{eq:close_Gronwall_f2}
	|\partial_\lambda f_{(2)}(\lambda)|
	\le C \fl^{-16} |\log \fl| M^{\frac 12} \int \d z' \cW(z-z')
		\Vert (\cN_z + 1) \xi_\lambda \Vert^2
	\le  C \fl^{-16} |\log \fl| M^{\frac 12} f_{(2)}(\lambda) \;.
\end{equation}
Applying Grönwall's lemma then gives the desired second bound in \eqref{eq:gronwall}.
\end{proof}

\section{Improved Excitation Number Bounds}
\label{sec:improved_bounds}

In this section, we will improve the bound~\eqref{eq:preliminary_bounds} on $ \cN $ entailed by Proposition~\ref{prop:gronwall}, removing the factor of $ e^{C \fl^{-7}} $.

\begin{proposition} [Extensive Excitation Number Estimate] \label{prop:improved_excitation}
Let $ \xi_\lambda := T_\lambda \Omega $, see~\eqref{eq:T}. Then, there exists $ C > 0 $ such that for $ \lambda \in [0,1]$, we have
\begin{equation} \label{eq:improved_excitation}
	\langle \xi_\lambda, \cN \xi_\lambda \rangle
	\le C L^3 \;.
\end{equation}
\end{proposition}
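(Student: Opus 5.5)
The plan is to replace the Grönwall argument by an iterated Duhamel expansion around the vacuum, as announced in the introduction. The point is that $\Omega$ has no excitations. Each derivative in $\lambda$ therefore produces either a "pair creation" term, whose expectation is explicitly computable and extensive (of order $L^3$ times a $\rho$-independent constant), or a term that is genuinely of higher order. The higher-order terms are then controlled using the localized bound~\eqref{eq:preliminary_bounds} on $(\cN_z+1)^2$ combined with translation invariance.

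\textbf{Step 1: first Duhamel step.} Write $\langle \xi_\lambda,\cN\xi_\lambda\rangle=\int_0^\lambda \langle \xi_\mu,[\cN,S]\xi_\mu\rangle\,\d\mu$ (up to sign). Since $S_-$ annihilates four fermions, $[\cN,S_-]=-4S_-$, so we need to control $\langle\xi_\mu,S_-\xi_\mu\rangle$ (plus its conjugate). Using~\eqref{eq:positionspace_trafo}, this equals $\frac12\sum_k\sum_{\alpha,\beta}K(k)_{\alpha,\bar\beta}\langle\xi_\mu,b_\alpha(k)b_\beta(-k)\xi_\mu\rangle$.

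\textbf{Step 2: second Duhamel step.} Expand $\langle\xi_\mu,b_\alpha(k)b_\beta(-k)\xi_\mu\rangle$ once more around $\mu=0$. The value at $\mu=0$ vanishes. The derivative involves $[b_\alpha(k)b_\beta(-k),S]$, whose leading part comes from $[b b, S_-^*]$. Using the approximate CCR of Lemma~\ref{lem:approxCCR}, it equals $K(k)_{\alpha,\bar\beta}I(k)_{\alpha\alpha}I(-k)_{\beta\beta}$ plus terms of the form $b^*b$ and terms containing $\hat\varepsilon$. The constant term gives a contribution of size
\[
\sum_k\operatorname{tr}\big(K(k)^2\big)\lesssim L^3\int_{|k|\le R}\|K(k)\|_{\HS}^2\,\d k\le CL^3,
\]
since $\|K(k)\|_{\HS}$ is uniformly bounded. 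This is the standard RPA fact that $K\sim M^{-1}$ entrywise with effectively rank-one structure; I would take it from the regularity/norm bounds on $K$ (Proposition~\ref{prop:regularity} and Lemma~\ref{lem:regularity_K}). It is crucial here that this bound is \emph{uniform in $\fl$}, unlike $D_X$.

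\textbf{Step 3: remainder terms.} The remaining terms are quadratic in $b^\sharp$, or quartic in the fermions from $\hat\varepsilon$ multiplied by $b$. These carry an extra factor of $K$, and the $\hat\varepsilon$ terms carry an extra $\tilde n^{-2}\sim M\rho^{-2/3}L^{-3}$. I would write them in position space via~\eqref{eq:positionspace_trafo} and~\eqref{eq:S_position}, and bound them as in the estimate of $\I$ in the proof of Proposition~\ref{prop:gronwall}. That is, I would use Lemma~\ref{lem:aastar_conversion_multi} and Lemma~\ref{lem:HS_estimate}, together with the decay $\cW(z-z')$, to reduce to $\int\d z'\,\|(\cN_{z'}+1)\xi_\nu\|\,\|\ldots\|$. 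By translation invariance, this becomes $L^3\langle\xi_\nu,(\cN_0+1)^2\xi_\nu\rangle$, which is bounded by $L^3e^{C\fl^{-17}M^{1/2}}$ using~\eqref{eq:preliminary_bounds}. This alone is not $\rho$-uniform, so I would iterate the Duhamel expansion to higher order $n$. Each further step produces either explicitly computable, extensive contributions that are $\rho$-uniform (traces of powers of $K$), or a remainder carrying a small factor: $\hat\varepsilon$ terms gain $M^{-1/2}$ or $\rho^{-1/3}$-type smallness relative to the main term, and the time-ordered simplex gives a factor $1/n!$. The iteration depth and the choice $\eta<1/46$ are then tuned so that the remainder's small factors (powers of $\rho^{-\cdot}$ from $\hat\varepsilon$ and from $M\sim\log\rho^\eta$) beat $e^{C\fl^{-17}M^{1/2}}$ with $\fl\sim M^{-\eta}$.

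\textbf{Main obstacle.} The hardest step will be the bookkeeping in the iterated expansion. We must ensure that the error terms, measured via localized $\cN_z$, genuinely gain a small factor that overwhelms $e^{C\fl^{-17}M^{1/2}}$ (for example, $M\lesssim\log\rho^\eta$ makes $e^{CM^{1/2+17\eta}}\ll\rho^{\delta}$). In addition, we must ensure that all the non-remainder terms are controlled uniformly in $\fl$. This requires $\fl$-independent bounds on $\sum_k\|K(k)\|^2_{\HS}$ and on its iterates, rather than the crude $D_X$ bounds.
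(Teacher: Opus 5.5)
Your proposal is correct and follows essentially the paper's route. The paper writes $[S,\cN]=Q_2(2K)$ and then iterates Duhamel through Proposition~\ref{prop:evaluate_Q1Q2} and Lemma~\ref{lem:bosonizationerror}. This produces $\fl$-uniform trace terms with $\sum_k|\tr((IK)^{n+1})|\le C^nL^3$ (Proposition~\ref{prop:trace_estimates}), bosonization errors gaining $\rho^{-1/3}$ against $e^{C\fl^{-17}M^{1/2}}$ via the localized bound and translation invariance, and a final quadratic tail bounded by Lemma~\ref{lem:Q2_bound}, the crude Grönwall bound and the factor $C^{n_*}/n_*!$, with $n_*\to\infty$. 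The one point you leave implicit is that the regularity constants of the iterated kernels $\tilde X^{(n)}$ grow only geometrically in $n$ (Proposition~\ref{prop:regularity}); this is what lets the $1/n!$ from the simplex win.
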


The proof is based on an iterated Duhamel principle, which will allow us to calculate the expectation of any quadratic pseudo-bosonic operators in $ \xi_\lambda $. Given a matrix-valued function $ X: \Lambda_L \to \RRR^{M \times M} $, recalling definition~\eqref{eq:bstar} of $ b_\alpha^*(k) $, we set
\begin{equation} \label{eq:Q1Q2}
\begin{aligned}
	Q_1(X) &:= \sum_{k \in \Lambda_L} \sum_{\alpha, \beta = 1}^M X(k)_{\alpha, \beta} b^*_\alpha(k) b_\beta(k) + \mathrm{h.c.} \;, \\
	Q_2(X) &:= \sum_{k \in \Lambda_L} \sum_{\alpha, \beta = 1}^M X(k)_{\alpha, \bar\beta} b^*_\alpha(k) b^*_\beta(-k) + \mathrm{h.c.}
\end{aligned}
\end{equation}
In position space, defining $ \varphi_{X,\alpha,\beta} $ as in~\eqref{eq:varphi}, and using~\eqref{eq:positionspace_trafo}, we have
\begin{equation} \label{eq:Q1Q2position}
\begin{aligned}
	Q_1(X) &= \frac{1}{\rho^{\frac 23}} \sum_{\alpha, \beta = 1}^M \int \d x \d y \varphi_X(x-y)_{\alpha, \beta} b^*_{\alpha,x} b_{\beta,y} + \mathrm{h.c.} \;, \\
	Q_2(X) &= \frac{1}{\rho^{\frac 23}} \sum_{\alpha, \beta = 1}^M \int \d x \d y \varphi_X(x-y)_{\alpha, \bar\beta} b^*_{\alpha,x} b^*_{\beta,y} + \mathrm{h.c.}
\end{aligned}
\end{equation}
Evaluating $ Q_1(X) $ and $ Q_2(X) $ in the state $ \xi_\lambda $ will produce matrices of the following form:
\begin{equation} \label{eq:Xtilde}
	\tilde{X}^{(n)} := \sum_{n'=0}^n {n \choose n'}
		(K I)^{n'} X (I K)^{n-n'} \;,
\end{equation}
with $ I(k) $ as in \eqref{eq:cEalphahat} and $ K(k) $ as in \eqref{eq:K}. The evaluation reads as follows.

\begin{proposition}[Expansion of Pseudo--Bosonic Expectation Values] \label{prop:evaluate_Q1Q2}
Let $ X: \Lambda_L \to \RRR^{M \times M} $ with $ \supp(X_{\alpha, \beta}) \subset B_{2R}(0) $ be a matrix-valued function with $ X(k)_{\alpha,\beta} = X(k)_{\beta,\alpha} = X(-k)_{\bar\alpha,\bar\beta} $ such that the $ \tilde{X}^{(n)}: \Lambda_L \to \RRR^{M \times M} $ associated according to~\eqref{eq:Xtilde} satisfies the following regularity estimate: There exists a $ C > 0 $ and some possibly $ \rho $-dependent $ D_X > 0 $ such that
\begin{equation} \label{eq:regularityestimate_Q1Q2}
	\Vert (1 - \Delta_k)^2 \hat\varphi_{\tilde{X}^{(n)},\alpha,\beta} \Vert_1
	\le C^n D_X \qquad \forall n \in \NNN_0, \; \alpha, \beta \in \{1,\ldots,M\} \;.
\end{equation}
Then, in the state $ \xi_\lambda := T_\lambda \Omega $ defined in~\eqref{eq:T}, for any $ \varepsilon > 0 $, there exists a $ C_\varepsilon > 0 $ such that for all $ \lambda \in [0,1]$,
\begin{equation} \label{eq:evaluate_Q1Q2}
\begin{aligned}
	\Bigg\vert \langle \xi_\lambda, Q_1(X) \xi_\lambda \rangle
	- \sum_{\substack{n=2 \\ n : \textnormal{even}}}^\infty \frac{(2 \lambda)^n}{n!} \sum_{k \in \Lambda_L} \tr \Big( I(k) X(k) \big( I(k) K(k) \big)^n \Big) \Bigg\vert
	&\le C_\varepsilon D_X \rho^{-\frac 13 + \varepsilon} L^3 \;, \\
	\Bigg\vert \langle \xi_\lambda, Q_2(X) \xi_\lambda \rangle
	- \sum_{\substack{n=1 \\ n : \textnormal{odd}}}^\infty \frac{(2 \lambda)^n}{n!} \sum_{k \in \Lambda_L} \tr \Big( I(k) X(k) \big( I(k) K(k) \big)^n \Big) \Bigg\vert
	&\le C_\varepsilon D_X \rho^{-\frac 13 + \varepsilon} L^3 \;.
\end{aligned}
\end{equation}
\end{proposition}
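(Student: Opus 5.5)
We will prove the proposition by an iterated Duhamel expansion in $\lambda$. Write $\xi_\lambda = e^{-\lambda S}\Omega$. Then for any operator $A$,
\[
\partial_\lambda \langle \xi_\lambda, A\, \xi_\lambda\rangle = \langle \xi_\lambda, [S, A]\, \xi_\lambda\rangle .
\]
The key step is to compute $[S, Q_1(X)]$ and $[S, Q_2(X)]$ using the approximate CCR of Lemma~\ref{lem:approxCCR}. If the commutators $[b_\alpha(k), b^*_\beta(k')]$ were exactly $\delta_{\alpha\beta}\delta_{kk'} I(k)_{\alpha\alpha}$, a direct computation with $S_- = \frac12\sum K_{\alpha\bar\beta}\, b_\alpha(k) b_\beta(-k)$ would give the closed relations
\[
[S, Q_1(X)] = Q_2\big(KIX + XIK\big) + \text{const}, \qquad [S, Q_2(X)] = Q_1\big(KIX + XIK\big) + \text{const}.
\]
Here the constant arises from normal ordering $b\,b^*$ in $Q_2$ and equals $\sum_k \tr(IXIK + \dots)$, i.e.\ a trace of the form $2\sum_k \tr(I(k)X(k)I(k)K(k))$ after using the symmetries $X(k)_{\alpha\beta}=X(-k)_{\bar\alpha\bar\beta}$ and $K(k)_{\alpha\beta}=K(-k)_{\bar\alpha\bar\beta}$. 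The exact relations therefore read $[S,Q_1(X)] = Q_2(\tilde X^{(1)}) + \mathcal{E}_1(X)$ and $[S,Q_2(X)] = Q_1(\tilde X^{(1)}) + 2\sum_k \tr(IXIK) + \mathcal{E}_2(X)$. The error terms $\mathcal{E}_j(X)$ collect all contributions of $\hat\varepsilon_\alpha(k,k')$; they are quartic in fermionic operators, of the schematic form $b^\sharp\, \hat\varepsilon\, b^\sharp$.

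Iterating $n$ times, the binomial structure of $\tilde X^{(n)}$ in \eqref{eq:Xtilde} appears naturally, since $\mathrm{ad}$ acts as $X \mapsto KIX + XIK$. Because $Q_j(X)$ has vanishing vacuum expectation, Taylor expanding $\langle\xi_\lambda, Q_j(X)\xi_\lambda\rangle$ in the Duhamel series leaves only the constants produced at each odd step. Note $\tr(I\tilde X^{(n)}IK) = 2^n \tr(IX(IK)^{n+1})$ by cyclicity, since $\tilde X^{(n)}$ is a sum of $\binom{n}{n'}$ terms each of which gives the same trace. Summing with the $\lambda^{n+1}/(n+1)!$ Duhamel weights gives exactly the even-$n$ (for $Q_1$) and odd-$n$ (for $Q_2$) series in \eqref{eq:evaluate_Q1Q2}.

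To truncate, we expand up to order $n_0 \sim \log\rho$ and keep the last term as an integral remainder $\int \langle \xi_{\lambda'}, Q_j(\tilde X^{(n_0)})\xi_{\lambda'}\rangle$. We bound this remainder with Lemma~\ref{lem:Q2_bound}, together with its $Q_1$ analogue, and the preliminary bound \eqref{eq:preliminary_bounds}. The regularity \eqref{eq:regularityestimate_Q1Q2} provides a factor $C^{n_0}D_X$, which is beaten by $\lambda^{n_0}/n_0!$, but only against the non-uniform factor $e^{C\fl^{-7}}$. The constraint $\fl \sim M^{-\eta}$ with $M \sim \log \rho^\eta$ makes $e^{C\fl^{-7}} \le \rho^{\varepsilon}$, so choosing $n_0 \gg \log\rho$ renders this term $\le \rho^{-1}L^3$. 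The convergence of the main series follows similarly, since $\|K\|$ is bounded uniformly.

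The main obstacle is estimating the error terms $\langle \xi_{\lambda'}, \mathcal{E}_j(\tilde X^{(m)})\xi_{\lambda'}\rangle$ by $C^m D_X \rho^{-1/3+\varepsilon} L^3$. The plan is to write them in position space, as in \eqref{eq:S_position} and \eqref{eq:Q1Q2position}. The kernel $\hat\varepsilon_\alpha$ contributes an operator $a^*(\cdot)a(\cdot)$ of $u^\alpha$ or $v^\alpha$ smeared against the product $\varphi_K\varphi_{\tilde X}$; crucially, it carries no $\tilde n^{-2}$-compensating density factor. We then use the localized bounds (Lemma~\ref{lem:aastar_conversion_multi}, Lemma~\ref{lem:HS_estimate}, Corollary~\ref{cor:bbstar_conversion}) to control each term by
\[
\rho^{-1/3}\,\fl^{-C}\,M^{C}\, C^m D_X \int \d z\, \langle\xi_{\lambda'},(\cN_z+1)^2\xi_{\lambda'}\rangle ,
\]
localizing via the $\chi_R$ splitting \eqref{eq:varphisplit} and Lemma~\ref{lem:Peetre}. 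By translation invariance and the second bound in \eqref{eq:preliminary_bounds}, this is at most $\rho^{-1/3} e^{C\fl^{-17}M^{1/2}} C^m D_X L^3$. The hardest point is the power counting, namely confirming that the gain $\rho^{-1/3}$ (one missing $\rho^{1/3}$ from $\|\hat v^\alpha\|_2$) survives while all $\fl$, $M$ losses are $\rho^{\varepsilon}$. This uses $M \sim \log\rho^\eta$ with small $\eta$, which is where the restriction $\eta<1/46$ should enter. Summing over $m \le n_0$ with the Duhamel factorials then concludes.
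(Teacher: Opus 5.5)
Your proposal is correct and follows the paper's proof essentially step for step. Both use an iterated Duhamel expansion driven by the commutator identities of Lemma~\ref{lem:bosonizationerror}, control the accumulated bosonization errors through localized number operators and the Gr\"onwall bound on $(\cN_z+1)^2$, and dispose of the final $Q_2$ remainder with Lemma~\ref{lem:Q2_bound}, the non-uniform bound on $\cN$, and the factor $C^{n}/n!$. The differences are cosmetic: the paper lets $n_*\to\infty$, using that the error bounds are uniform in $n_*$ and choosing the parity of $n_*$ so the remainder is always of $Q_2$ type, so it needs neither a truncation of the trace series nor a $Q_1$ analogue of Lemma~\ref{lem:Q2_bound}; and $\cE_{Q_2}$ also contains a quadratic normal-ordering piece $\cE_\cE$, which is simply bounded by $C D_X \rho^{-2/3}\langle \xi_\lambda, \cN \xi_\lambda\rangle$.
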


The proof of Proposition~\ref{prop:evaluate_Q1Q2} is based on a Duhamel expansion. The trace terms in~\eqref{eq:evaluate_Q1Q2} can be seen as the leading-order contribution to $ \langle \Omega, T_\lambda^* Q_j(X) T_\lambda \Omega \rangle $, similar to the bosonized momentum distribution evaluated in~\cite{BL23,BLN25}.

\begin{proof}[Proof of Proposition~\ref{prop:improved_excitation} from Proposition~\ref{prop:evaluate_Q1Q2}]

Recalling from~\eqref{eq:T} that $ T_\lambda = e^{- \lambda S} $, Duhamel principle yields
\begin{equation} ~\label{eq:Duhamel_1}
	\langle \xi_\lambda, \cN \xi_\lambda \rangle
	= \langle \Omega, e^{\lambda S} \cN e^{-\lambda S} \Omega \rangle
	= \langle \Omega, \cN \Omega \rangle
		+ \int_0^\lambda \d \lambda_1 \langle \Omega, e^{\lambda_1 S} [S,\cN] e^{-\lambda_1 S} \Omega \rangle \;,
\end{equation}
where $ \langle \Omega, \cN \Omega \rangle = 0 $. Comparing with~\eqref{eq:S_position}, we may then conveniently write
\begin{equation} \label{eq:SN_commutator}
	[S, \cN] = Q_2(2K) \;.
\end{equation}
According to Proposition~\ref{prop:regularity}, $ X = 2K $ satisfies the regularity estimate~\eqref{eq:regularityestimate_Q1Q2} with $ D_X = C \fl^{-4} |\log \fl| \le C_\varepsilon \rho^\varepsilon $. Moreover, since $ X \in \{h,P\} $ from~\eqref{eq:hP} satisfy the symmetry property $ X(k)_{\alpha,\beta} = X(k)_{\beta,\alpha} = X(-k)_{\bar\alpha,\bar\beta} $, so does $ K $ from~\eqref{eq:K}. Applying Proposition~\ref{prop:evaluate_Q1Q2} and bounding the trace term by Proposition~\ref{prop:trace_estimates}, we thus obtain with $ |\lambda| \le 1 $
\[
	\langle \xi_\lambda, \cN \xi_\lambda \rangle
	\le 2 \sum_{\substack{n=1 \\ n : \textnormal{odd}}}^\infty \frac{2^n}{n!} \sum_{k \in \Lambda_L} \tr \Big( \big( I(k) K(k) \big)^{n+1} \Big)
		+ C_\varepsilon \rho^{-\frac 13 + \varepsilon} L^3 
	\le \sum_{n=1}^\infty \frac{C^n}{n!} L^3 + C L^3 
	\le C L^3\;.
\]
\end{proof}
The rest of this section is dedicated to the proof of Proposition~\ref{prop:evaluate_Q1Q2}.

\subsection{Bosonization Error Bounds}

Iterating \eqref{eq:Duhamel_1} will produce commutators between pseudo-bosonic commutators. Let the transpose and reflection be given by $ X^T(k)_{\alpha,\beta} := X(k)_{\beta,\alpha} $ and $ \fR(X)(k)_{\alpha,\beta} = X(-k)_{\bar\alpha,\bar\beta} $. The following lemma holds for generic $ X $, which are not necessarily symmetric or reflection invariant.

\begin{lemma}[Bosonization error computation] \label{lem:bosonizationerror}
Recall the definitions of the matrix $I$ from \eqref{eq:cEalphahat}, of the matrix $K$ from (\ref{eq:K}), of the operator $ S $ from \eqref{eq:T}, and the Q-notation from \eqref{eq:Q1Q2}.  
Let $ X: \Lambda_L \to \RRR^{M \times M} $ with $ \supp(X_{\alpha, \beta}) \subset B_{2R}(0) $ satisfy the following regularity estimate: There exists some $ D_X > 0 $, possibly depending on $ \rho $ but not on $ L $, such that
\begin{equation}~\label{eq:regularityestimate_2}
	\Vert (1 - \Delta_k)^2 \hat\varphi_{X,\alpha,\beta} \Vert_1 , \Vert \hat\varphi_{ \fR(KIX + X^TIK)/2,\alpha,\beta} \Vert_1
	\le D_X \qquad \forall \alpha,\beta \in \{1,\ldots,M\} \;.
\end{equation}
Then 
\begin{equation} \label{eq:bosonizationerror}
\begin{split}
	[S, Q_1(X)] &= Q_2 \big( X I K + \fR(K I X) \big) + \cE_{Q_1}(X) \;, \\
	[S, Q_2(X)] &= Q_1 \big(X I K + \fR(K I X) \big)
		+ 2 \sum_{k \in \Lambda_L} \tr\big( I(k) X(k) I(k) K(k) \big) + \cE_{Q_2}(X) \;,
\end{split}
\end{equation}
where 
\begin{equation}\label{eq:error-bos} \begin{split} 
| \langle \xi_\lambda , \cE_{Q_j} (X) \xi_\lambda \rangle |  \leq C D_X \fl^{-23} M^{\frac 12} e^{C \fl^{-17} M^{\frac 12}} \rho^{-\frac 13} L^3 
\end{split} \end{equation} 
for $j=1,2$ and all $\lambda \in [-1,1]$. 
\end{lemma}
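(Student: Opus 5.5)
The plan is to compute the commutators explicitly in momentum space using the approximate CCR of Lemma~\ref{lem:approxCCR}, and then to bound the resulting error terms in position space by localized number operators.

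\textbf{Step 1: exact algebra.} Write $S = S_- - S_-^*$ with $S_- = \frac12\sum_{k,\alpha,\beta} K(k)_{\alpha,\bar\beta} b_\alpha(k)b_\beta(-k)$. For $[S, Q_1(X)]$ the relevant identity is
\[
[b_\alpha(k)b_\beta(-k), b^*_\gamma(k')b_\delta(k')] = [b_\alpha(k)b_\beta(-k), b^*_\gamma(k')]\, b_\delta(k'),
\]
expanded via Lemma~\ref{lem:approxCCR}. Replacing each commutator $[b,b^*]$ by its "bosonic" part $\delta\,\delta_{k,k'} I(k)$ produces quadratic expressions. After using the symmetries $K(k)_{\alpha,\beta}=K(-k)_{\bar\alpha,\bar\beta}$ and relabeling $k\to -k$, these recombine into $Q_2(XIK + \fR(KIX))$. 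For $[S,Q_2(X)]$ the terms $[b b, b^* b^*]$ must be normal ordered. This yields:
\begin{itemize}
\item the $Q_1$ term;
\item a constant from the double contraction, $2\sum_k \tr(IXIK)$;
\item error terms.
\end{itemize}
Everything containing at least one $\hat\varepsilon_\alpha(k,k')$ is collected into $\cE_{Q_j}(X)$. These errors have schematically the form
\[
\sum K\, X\, b^\sharp\, \hat\varepsilon\, b^\sharp \quad\text{or}\quad \sum K X\, \hat\varepsilon,
\]
where $\hat\varepsilon$ is a quadratic fermionic operator $a^*a$ carrying a normalization $\tilde n^{-2}\sim M L^{-3}\rho^{-2/3}$.

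\textbf{Step 2: position space form of the errors.} Using \eqref{eq:positionspace_trafo} and \eqref{eq:S_position}, each error term becomes an integral of the form
\[
\rho^{-\frac43}\int \d x\,\d y\,\d w\; \varphi_K(x-w)\,\varphi_X(w-y)\; a^\sharp a^\sharp\,(a^* a)\, a^\sharp a^\sharp ,
\]
where the middle $a^*a$ comes from contracting one $u$- or $v$-leg at the point $w$. Equivalently, $\hat\varepsilon$ in position space is $-\int \d w\, a^*(v^{\bar\alpha}_{\cdot}\ldots)a(\ldots)$. The expectation in $\xi_\lambda$ is then split with Cauchy--Schwarz into two norms, each bounded by Lemma~\ref{lem:aastar_conversion_multi}, Corollary~\ref{cor:bbstar_conversion}, and Lemma~\ref{lem:HS_estimate}. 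This uses the regularity hypothesis \eqref{eq:regularityestimate_2} for $X$ and Proposition~\ref{prop:regularity} for $K$, with decay $\cW(z-z')$ obtained from Lemma~\ref{lem:Peetre} and the $\chi_R$ splitting trick \eqref{eq:varphisplit}. The $\fR(KIX+X^TIK)$ norm in the hypothesis handles the one term where the product is not split.

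\textbf{Power counting.} Each Hilbert--Schmidt norm gives $\rho^{1/3}$ times powers of $\fl^{-1}$. The contracted leg gives $\|\hat u^\alpha\|_\infty$-type bounds, which cost no $\rho^{1/3}$. Together with the prefactor $\rho^{-2/3}\cdot\rho^{-2/3}$ and the $\rho^{2/3}$ from the position-space normalizations, this leaves $\rho^{-1/3}$ overall. The $\alpha$ sums are handled by Lemma~\ref{lem:HS_uv_removal}, possibly costing $M^{1/2}$.

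\textbf{Step 3: closing with the Grönwall bound.} The final integrals have the form $\int \d z\,\d z'\, \cW(z-z')\,\|(\cN_z+1)\xi_\lambda\|\,\|(\cN_{z'}+1)\xi_\lambda\|$. By translation invariance of $\xi_\lambda$ and the second bound of Proposition~\ref{prop:gronwall}, this is at most $C L^3 e^{C\fl^{-17}M^{1/2}}$. Accumulating the $\fl$ powers gives $\fl^{-23}$.

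\textbf{Main obstacle.} I expect the hardest part to be the bookkeeping in Step 2. The error $\hat\varepsilon$ contains $a^*a$ with both legs in the same patch, and it sits between two pair operators, so the quadratic $a^*a$ must not be bounded by the full $\cN$, which would destroy extensivity. Instead it has to be attached to a localized kernel so that Lemma~\ref{lem:aastar_conversion_multi} applies with a $\cN_z$ bound. I would first normal order the $a^*a$ through the neighbouring $b^\sharp$. The extra contraction terms this creates are quadratic, and they are bounded just as in Lemma~\ref{lem:Q2_bound}.
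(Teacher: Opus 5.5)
Your plan follows the paper's proof essentially step for step. The paper does the algebra in position space with $[b_{\alpha,x},b^*_{\alpha',x'}]=\delta_{\alpha,\alpha'}\big((v^{\bar\alpha})^2(u^\alpha)^2+\cE_\alpha\big)$, splits $\varphi_X$ and $\varphi_K$ by convolution with $\chi_{2R}$, bounds one Cauchy--Schwarz factor by Corollary~\ref{cor:bbstar_conversion} and the other by Lemma~\ref{lem:aastar_conversion_multi}, treats the quadratic term $\cE_\cE(Y_X)$ via the $\fR(KIX+X^TIK)$ hypothesis and the first Grönwall bound, and closes with the $\cN_z^2$ Grönwall bound and translation invariance. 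The one deviation, normal ordering $\hat\varepsilon$ through the neighbouring $b^\sharp$, is unnecessary and only complicates matters: the paper writes $\langle \xi, b_{\alpha',x'} a^*(v^{\bar\alpha}_x) a(v^{\bar\alpha}_{y'}) b_{\beta,y}\xi\rangle = \langle a^*(v^{\bar\alpha}_{y'}) a(v^{\bar\alpha}_x) b^*_{\alpha',x'}\xi, b_{\beta,y}\xi\rangle$ and applies the third bound of Lemma~\ref{lem:aastar_conversion_multi} directly to the non-normal-ordered $a^*a\,a^*a^*$ monomial, with the decay in $|z-z'|$ coming from the fast decay of $\chi_{2R}$ carried across the contracted $(u^\alpha)^2$ leg via $\Vert \cW_z^{-a/2}(u^\alpha)^2\cW_z^{a/2}\Vert_{\op}\le C\fl^{-4a}$ with $a=3$, which is the main source of the $\fl^{-23}$.
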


\begin{proof}
From \eqref{eq:T}, we recall $S = S_- - S_-^*$. By definition, $ Q_1(X) = Q_1(X)^* $, so $ [S,Q_1(X)] = [S_-,Q_1(X)] + \mathrm{h.c.} $. We then compute
\begin{equation} \label{eq:Commutator_S-_Q1}
\begin{aligned}
	[S_-,Q_1(X)]
	&= \frac{1}{2 \rho^{4/3}} \sum_{\alpha, \beta, \alpha', \beta' = 1}^M \int \d x \d y \d x' \d y' \; \varphi_{(X+X^T)}(x-y)_{\alpha, \beta} \varphi_K(y'-x')_{\alpha',\bar\beta'}\\
	&\quad \times \Big( b_{\alpha',x'} [b_{\beta',y'}, b_{\alpha,x}^*] b_{\beta,y}
		+ [b_{\alpha',x'}, b_{\alpha,x}^*] b_{\beta',y'} b_{\beta,y} \Big) \;.
\end{aligned}
\end{equation}
Recalling $ b_{\alpha,x} = a(v_x^{\bar\alpha}) a(u_x^\alpha) $, we find, as in Lemma~\ref{lem:approxCCR}:
\begin{equation}\label{eq:bbxx}
	[b_{\alpha,x}, b^*_{\alpha',x'}]
	= \delta_{\alpha,\alpha'} \big( (v^{\bar\alpha})^2(x-x') (u^\alpha)^2(x-x') + \cE_\alpha(x,x') \big) \;,
\end{equation}
with the error term 
\begin{equation} \label{eq:cEalpha}
	\cE_\alpha(x,y)
	:= - (u^\alpha)^2(x-y) a^*(v_y^{\bar\alpha}) a(v_x^{\bar\alpha})
		- (v^{\bar\alpha})^2(x-y) a^*(u_y^\alpha) a(u_x^\alpha) \;,
\end{equation}
where $ (u^\alpha)^2(x-y) $ is understood as the kernel of the operator $ (u^\alpha)^2 $.\\
Inserting the leading term on the r.h.s. of (\ref{eq:bbxx}) into the first term on the r.h.s. of \eqref{eq:Commutator_S-_Q1}, we find the contribution 
\begin{equation}
\begin{aligned}
	&\frac{1}{2 \rho^{4/3}} \sum_{\alpha',\beta = 1}^M \int \d y \d x' \;  
		A_{\alpha',\beta}(x',y)
		b_{\alpha',x'} b_{\beta,y} \;, \qquad \textnormal{with}\\
	A_{\alpha',\beta}(x',y)	
	&:= \sum_{\alpha = 1}^M \int \d y' \d x \; \varphi_K(x'-y')_{\bar\alpha',\alpha} (v^{\bar\alpha})^2(y'-x) (u^\alpha)^2(y'-x) \varphi_{(X+X^T)}(x-y)_{\alpha, \beta} \;.
\end{aligned}
\end{equation}
In momentum space, $ A_{\alpha',\beta} $ has Fourier multiplier (compare~\eqref{eq:varphi})
\begin{equation}
\begin{aligned}
	\hat{A}_{\alpha',\beta}(k)
	&= \sum_{\alpha = 1}^M \hat\varphi_K(k)_{\bar\alpha',\alpha} \big((\hat{v}^{\bar\alpha})^2 *(\hat{u}^\alpha)^2\big)(k) \hat\varphi_{(X+X^T)}(k)_{ \alpha, \beta} \\
	&= \sum_{\alpha = 1}^M L^6 \rho^{\frac 43} 
		\frac{K(k)_{\bar\alpha',\alpha}}{\tilde{n}_{\bar\alpha',k} \tilde{n}_{\alpha,k}}
		\frac{1}{L^3} \sum_{p \in \Lambda_L} \hat{v}^{\bar\alpha}(k-p)^2 \hat{u}^\alpha(p)^2
		\frac{(X+X^T)(k)_{\alpha,\beta}}{\tilde{n}_{\alpha,k} \tilde{n}_{\beta,k}} \;.
\end{aligned}
\end{equation}
With $ \hat{v}^{\bar\alpha}(k-p) = \hat{v}^\alpha(p-k) $, and using the definitions \eqref{eq:n} for $ n_{\alpha,k} $ and \eqref{eq:cEalphahat} for $ I(k) $, we write
\begin{equation}
	\sum_{p \in \Lambda_L} \hat{v}^{\bar\alpha}(k-p)^2 \hat{u}^\alpha(p)^2
	= n_{\alpha,k}^2
	= I(k)_{\alpha,\alpha} \tilde{n}_{\alpha,k}^2 \;,
\end{equation}
and therefore
\begin{equation}
	\hat{A}_{\alpha',\beta}(k)
	= L^3 \rho^{\frac 43} 
		\frac{(K I (X+X^T))(k)_{\bar\alpha',\beta}}{\tilde{n}_{\bar\alpha',k} \tilde{n}_{\beta,k}}
	= \rho^{\frac 23} \hat\varphi_{K I (X+X^T)}(k)_{\bar\alpha',\beta}\;.
\end{equation} 
Proceeding analogously to handle the second term on the r.h.s. of \eqref{eq:Commutator_S-_Q1} and adding the contribution of $S_-^*$, we obtain 
\[  [S, Q_1(X)] = Q_2 \big( X I K + \fR(K I X) \big) + \cE_{Q_1}(X) \]
with $\cE_{Q_1}(X) := \cE_{b \cE b}(X+X^T) + \cE_{\cE bb}(X+X^T) + \mathrm{h.c.}$ and 
\begin{equation} \label{eq:cEbbcE1}
\begin{aligned}
	\cE_{b \cE b}(X) &:= \frac{1}{2 \rho^{4/3}} \sum_{\alpha, \beta, \alpha' = 1}^M \int \d x \d y \d x' \d y' \; \varphi_X(x-y)_{\alpha, \beta} \varphi_K(y'-x')_{\alpha',\bar\alpha} b_{\alpha',x'} \cE_\alpha(y',x) b_{\beta,y} \;, \\
	\cE_{\cE bb}(X) &:= \frac{1}{2 \rho^{4/3}} \sum_{\alpha, \beta, \beta' = 1}^M \int \d x \d y \d x' \d y' \; \varphi_X(x-y)_{\alpha, \beta} \varphi_K(y'-x')_{\alpha,\bar\beta'} \cE_\alpha(x',x) b_{\beta',y'} b_{\beta,y} \;, \\
\end{aligned}
\end{equation} 
Similar direct computations lead to  
\[ [S, Q_2(X)] = Q_1 \big(X I K + \fR(K I X) \big)
		+ 2 \sum_{k \in \Lambda_L} \tr\big( I(k) X(k) I(k) K(k) \big) + \cE_{Q_2}(X) \]
where 
\[ \cE_{Q_2}(X) := \cE_{b \cE b^*}(X) + \cE_{\cE bb^*}(X) + \cE_{b^* b \cE}(X) + \cE_{b^* \cE b}(X) + \cE_\cE(Y_X) + \mathrm{h.c.} \;, \]
with $Y_X = \fR (KIX+X^TIK)/2$ and 
\begin{equation} \label{eq:cEbbcE2}
\begin{aligned}
	\cE_{b \cE b^*}(X) &:= \frac{1}{2 \rho^{4/3}} \sum_{\alpha, \beta, \alpha' = 1}^M \int \d x \d y \d x' \d y' \; \varphi_X(x-y)_{\alpha, \bar\beta} \varphi_K(y'-x')_{\alpha',\bar\alpha} b_{\alpha',x'} \cE_\alpha(y',x) b^*_{\beta,y} \;, \\
	\cE_{\cE b b^*}(X) &:= \frac{1}{2 \rho^{4/3}} \sum_{\alpha, \beta, \beta' = 1}^M \int \d x \d y \d x' \d y' \; \varphi_X(x-y)_{\alpha, \bar\beta} \varphi_K(y'-x')_{\alpha,\bar\beta'} \cE_\alpha(x',x) b_{\beta',y'} b^*_{\beta,y} \;, \\
	\cE_{b^*b \cE}(X) &:= \frac{1}{2 \rho^{4/3}} \sum_{\alpha, \beta, \alpha' = 1}^M \int \d x \d y \d x' \d y' \; \varphi_X(x-y)_{\alpha, \bar\beta} \varphi_K(y'-x')_{\alpha',\bar\beta} b^*_{\alpha,x} b_{\alpha',x'} \cE_\beta(y',y) \;, \\
	\cE_{b^* \cE b}(X) &:= \frac{1}{2 \rho^{4/3}} \sum_{\alpha, \beta, \beta' = 1}^M \int \d x \d y \d x' \d y' \; \varphi_X(x-y)_{\alpha, \bar\beta} \varphi_K(y'-x')_{\beta,\bar\beta'} b^*_{\alpha,x} \cE_\beta(x',y) b_{\beta',y'} \;, \\
	\cE_\cE(Y) &:= \frac{1}{\rho^{2/3}} \sum_{\alpha = 1}^M \int \d x \d y \; \varphi_Y(x-y)_{\alpha, \alpha} \cE_\alpha(y,x)  \;,
\end{aligned}
\end{equation}
To conclude the proof of the lemma, we have to show that all error terms in (\ref{eq:cEbbcE1}), (\ref{eq:cEbbcE2}) satisfy the bound (\ref{eq:error-bos}). 

We start with the term $ \cE_{b \cE b}(X)$ in (\ref{eq:cEbbcE1}). Inserting \eqref{eq:cEalpha} leads to the natural split
\begin{equation} \label{eq:bEb-uv}
\begin{aligned}
	\cE_{b \cE b}(X) &= \cE_{b \cE b}^{(u)}(X) + \cE_{b \cE b}^{(v)}(X) \;, \qquad \textnormal{with} \\
	\cE_{b \cE b}^{(u)}(X)
	&:= - \frac{1}{2 \rho^{4/3}} \sum_{\alpha, \beta, \alpha' = 1}^M \int \d x \d y \d x' \d y' \; \varphi_K(y'-x')_{\alpha',\bar\alpha} (u^\alpha)^2(y'-x) \varphi_X(x-y)_{\alpha, \beta}  \times  \\
	&\quad \times b_{\alpha',x'} a^*(v^{\bar\alpha}_x) a(v^{\bar\alpha}_{y'})  b_{\beta,y} \;, \\
	\cE_{b \cE b}^{(v)}(X)
	&:= - \frac{1}{2 \rho^{4/3}} \sum_{\alpha, \beta, \alpha' = 1}^M \int \d x \d y \d x' \d y' \; \varphi_K(y'-x')_{\alpha',\bar\alpha} (v^{\bar\alpha})^2(y'-x) \varphi_X(x-y)_{\alpha, \beta} \times  \\
	&\quad \times b_{\alpha',x'} a^*(u^\alpha_x) a(u^\alpha_{y'})  b_{\beta,y} \;.
\end{aligned}
\end{equation}
As in~\eqref{eq:varphisplit}, we use a radial cutoff function $ \hat\chi_{2R}(k) := \hat\chi_R(k/2) $ with $ \hat\chi_R $ introduced above~\eqref{eq:DB} and split
\begin{equation} \label{eq:varphi_A_split}
	\varphi_A(x-y)_{\alpha, \beta}
	= \int \d z \;
		\chi_{2R}(x-z)
		\varphi_A(z-y)_{\alpha, \beta} \;,
\end{equation}
first with $ A = X $ and later with $ A = K $. Note that this split is possible since $ K(k) $ and $ X(k) $ are supported in $ |k| \le 2R $. Then, by the Cauchy--Schwarz inequality, and using that $ \chi_{2R} $ is real and symmetric, that $ \overline{\varphi_K(y'-x')_{\alpha',\bar\alpha}} = \varphi_K(x'-y')_{\alpha',\bar\alpha} $ and $ \overline{(u^\alpha)^2(y'-x)} = (u^{\bar\alpha})^2(y'-x) $,
\begin{equation} \label{eq:bEb_split}
\begin{aligned}
	&| \langle \xi_\lambda, \cE_{b \cE b}^{(u)}(X) \xi_\lambda \rangle |
	\\ &\le \frac{C}{\rho^{4/3}} \int \d z
		\sum_{\alpha = 1}^M
		\norm{\sum_{\beta = 1}^M \int \d y \varphi_X(z-y)_{\alpha,\beta} b_{\beta, y} \xi_\lambda} \times \\
	& \quad \times \norm{\sum_{\alpha' = 1}^M \int \d x \d x' \d y'
		\varphi_K(x'-y')_{\alpha',\bar\alpha}
		(u^{\bar\alpha})^2(y'-x)
		\chi_{2R}(x-z)
		a^*(v^{\bar\alpha}_{y'}) a(v^{\bar\alpha}_x) b_{\alpha',x'}^* \xi_\lambda } \, .
\end{aligned}
\end{equation}
The first norm is readily bounded by Corollary~\ref{cor:bbstar_conversion} and Lemma~\ref{lem:HS_estimate} with $ a = 0 $ and $ b = 1 $, using the regularity estimate~\eqref{eq:regularityestimate_2}
\begin{equation} \label{eq:bEb_firstnorm_final}
\begin{aligned}
	\norm{\sum_{\beta = 1}^M \int \d y \varphi_X(z-y)_{\alpha,\beta} b_{\beta, y} \xi_\lambda}
	&\le C \fl^{-2} \rho^{\frac 13} \max_{1 \le \beta \le M} 
		\norm{(1-\Delta_k)^2 \hat\varphi_{X,\alpha,\beta}}_1
		\Vert \cN_z^{\frac 12} \xi_\lambda \Vert \\
	&\le C D_X \fl^{-2} \rho^{\frac 13}
		\norm{(\cN_z+1)^{\frac 12} \xi_\lambda} \;,
\end{aligned}
\end{equation}
where we used that $ \varphi_X(z-y)_{\alpha,\beta} = \varphi_{\fR(X)}(y-z)_{\bar\alpha,\bar\beta} $, where the reflection leaves the $ \ell^1 $-norm invariant.
For the second norm, we would like to apply~\eqref{eq:aastar_conversion_multi}. However, $ \varphi_K(y'-x')_{\alpha',\bar\alpha} $ prevents us from doing so, as it both contains an index of the first ($ y '$) and second fermionic operator pair ($ x' $). We therefore split it as in~\eqref{eq:varphi_A_split} and obtain
\begin{equation}
\begin{aligned} \label{eq:AB_split}
	&\norm{\sum_{\alpha' = 1}^M \int \d x \d x' \d y'
		\varphi_K(x'-y')_{\alpha',\bar\alpha}
		(u^{\bar\alpha})^2(y'-x)
		\chi_{2R}(x-z)
		a^*(v^{\bar\alpha}_{y'}) a(v^{\bar\alpha}_x) b_{\alpha',x'}^* \xi_\lambda } \\
	&\le \int \d z'
		 \Bigg\Vert \sum_{\alpha' = 1}^M \int \d x \d x' \d y'
		\chi_{2R}(y'-z')
		(u^{\bar\alpha})^2(y'-x)
		\chi_{2R}(x-z)
		a^*(v^{\bar\alpha}_{y'}) a(v^{\bar\alpha}_x) \times \\
	&\quad \times 
		\varphi_K(x'-z')_{\alpha',\bar\alpha} b_{\alpha',x'}^* \xi_\lambda \Bigg\Vert \\
	&= \int \d z' \left\Vert \int \d x \d x' \d y \d y' A_\alpha(x,x') B_\alpha(y,y') a^*_x a_{x'} a^*_y a^*_{y'} \xi_\lambda \right\Vert \;,
\end{aligned}
\end{equation}
where we introduced
\begin{equation} \label{eq:AB_definition}
	A_\alpha := v^{\bar\alpha} \chi_{2R,z'} (u^{\bar\alpha})^2 \chi_{2R,z} v^{\bar\alpha} \;, \qquad
	B_\alpha := \sum_{\alpha' = 1}^M u^{\alpha'} \varphi_{K,\alpha',\bar\alpha,z'} v^{\alpha'} \;,
\end{equation}
with $ \chi_{2R,z}(x) = \chi_{2R}(x-z) $. Applying Lemma~\ref{lem:aastar_conversion_multi} we obtain
\begin{equation} \label{eq:bEb_secondnorm_final}
\begin{aligned}
	&\Bigg\Vert \sum_{\alpha' = 1}^M \int \d x \d x' \d y'
		\chi_{2R}(y'-z')
		(u^{\bar\alpha})^2(y'-x)
		\chi_{2R}(x-z)
		a^*(v^{\bar\alpha}_{y'}) a(v^{\bar\alpha}_x)
		\varphi_K(x'-z')_{\alpha',\bar\alpha} b_{\alpha',x'}^* \xi_\lambda \Bigg\Vert \\
	&\le C \norm{ A_\alpha \cW_{z'}^{-\frac 12} }_{\HS}
		\norm{ B_\alpha \cW_{z'}^{-\frac 12} }_{\HS}
		\Vert (\cN_{z'} + 1) \xi_\lambda \Vert \;,
\end{aligned}
\end{equation}
where we estimate the second Hilbert--Schmidt norm via Lemma~\ref{lem:HS_estimate} and Proposition~\ref{prop:regularity}:
\begin{equation}
	\norm{ B_\alpha \cW_{z'}^{-\frac 12} }_{\HS}
	\le C \fl^{-2} \rho^{\frac 13}
		\max_{1 \le \alpha' \le M} \big\Vert (1-\Delta_k)^2 \hat\varphi_{K,\alpha',\bar\alpha} \big\Vert_1 \\
	\le C \rho^{\frac 13} \fl^{-6} |\log \fl| \;.
\end{equation}
So with~\eqref{eq:bEb_firstnorm_final} and~\eqref{eq:bEb_secondnorm_final}, the bound~\eqref{eq:bEb_split} for $ \cE_{b \cE b}^{(u)}(X) $ becomes
\begin{equation} \label{eq:I_bEb_u_intermediate_estimate}
\begin{aligned}
	& | \langle \xi_\lambda, \cE_{b \cE b}^{(u)}(X) \xi_\lambda \rangle | \\
	&\le C D_X \fl^{-8} |\log \fl| \rho^{-\frac 23}
		\int \d z' \d z
		\Vert (\cN_z+1)^{\frac 12} \xi_\lambda \Vert 
		\sum_{\alpha = 1}^M \norm{ A_\alpha \cW_{z'}^{-\frac 12} }_{\HS}
		\Vert (\cN_{z'} + 1) \xi_\lambda \Vert \;.
\end{aligned}
\end{equation}
To estimate $\| A_\alpha \cW_{z'}^{-1/2}\|_\HS$, we first apply Lemma \ref{lem:HS_uv_removal}. From (\ref{eq:AB_definition}), reversing $ \alpha \mapsto \bar\alpha $ for convenience, we find for any $ \alpha \in \{1,\ldots,M\} $
\[ \| A_{\bar\alpha}  \cW_{z'}^{-1/2} \|_\text{HS} \leq C \fl^{-2} \| v^\alpha \chi_{2R,z'} (u^\alpha)^2 \chi_{2R,z} \cW_{z'}^{-1/2} \|_\HS \;. \]
To extract decay in $(z-z')$, we notice that, for arbitrary $a,b \in \mathbb{Z}$, $\| \cW_{z'}^{-b/2} \chi_{2R,z'} \cW_z^{a/2} \|_\infty \leq C_{ab} (1 + |z-z'|)^{-2a}$, since $\chi_{2R}$ exhibits arbitrarily fast polynomial decay. Hence, we can control  
\[  \begin{split}  \| &A_{\bar\alpha}  \cW_{z'}^{-1/2} \|_\text{HS} \\ &\leq C \fl^{-2} \| v^\alpha  \cW_{z'}^{1/2} \|_\HS \| \cW_{z'}^{-1/2} \chi_{2R,z'} \cW_z^{a/2} \|_\text{op} \| \cW_z^{-a/2} (u^\alpha)^2 \cW^{a/2}_z \|_\text{op} \| \cW_z^{-a/2} \chi_{2R,z} \cW_{z'}^{-1/2} \|_\text{op} \\ &\leq C \rho^{1/3}\fl^{-2} M^{-1/2} (1+ |z-z'|)^{-2(a-1)}  \| \cW_z^{-a/2} (u^\alpha)^2 \cW^{a/2}_z \|_\text{op} \;, \end{split}  \]
where we bounded $\| v^\alpha  \cW_{z'}^{1/2} \|_\HS \leq \| \hat{v}^\alpha \|_2 \| \cW_{z'}^{1/2} \|_2 \leq C \rho^{1/3} / \sqrt{M}$. Writing 
\[ \begin{split} \cW_z^{-a/2}  (u^\alpha)^2 \cW_z^{a/2} = \; &(u^{\alpha})^2 +u^\alpha [ \cW_z^{-a/2}, u^\alpha] \cW_z^{a/2} + [\cW_z^{-a/2}, u^\alpha ] \cW_z^{a/2} u^\alpha \\ &+ [\cW_z^{-a/2}, u^\alpha] \cW_z^{a/2} [\cW_z^{-a/2}, u^\alpha ] \cW_z^{a/2} \;, \end{split} \]
and using Lemma \ref{lem:ualpha_derivative_bound}, we obtain
\[ \| W_z^{-a/2}  (u^\alpha)^2 W_z^{a/2}  \|_\text{op} \leq C \fl^{-4a} \;. \]
Hence
\[  \| A_{\bar\alpha}  \cW_{z'}^{-1/2} \|_\text{HS}  \leq C \fl^{-4a-2} \rho^{\frac 13} M^{-\frac 12} (1+|z-z'|)^{-2(a-1)} \;. \]
Inserting in \eqref{eq:I_bEb_u_intermediate_estimate} and applying Prop. \ref{prop:gronwall}, we arrive at 
\[  | \langle \xi_\lambda, \cE_{b \cE b}^{(u)}(X) \xi_\lambda \rangle | \leq C D_X \fl^{-10-4a} |\log \fl | M^{\frac 12} \rho^{-\frac 13} e^{C \fl^{-17} M^{\frac 12} }  \int \d z \d z' \frac{1}{(1+|z-z'|)^{2(a-1)}} \;. \]
Choosing $a=3$, we conclude that 
\[  | \langle \xi_\lambda, \cE_{b \cE b}^{(u)}(X) \xi_\lambda \rangle |  \leq C D_X \fl^{-23}  M^{\frac 12} e^{C \fl^{-17} M^{\frac 12}} \rho^{-\frac 13}  L^3 \;. \]

The estimate for $ \cE_{b \cE b}^{(v)}(X) $ in (\ref{eq:bEb-uv}) can be shown analogously, swapping the roles of $ u^\alpha $ and $ v^{\bar\alpha} $.
The bound for $ | \langle \xi_\lambda, \cE_{\cE bb}(X) \xi_\lambda \rangle | $ in (\ref{eq:cEbbcE1}) works similarly, where in~\eqref{eq:AB_split}, we have to replace $ a_x^* a_{x'} a_y^* a_{y'}^* $ with $ a_x^* a_{x'}^* a_y^* a_{y'} $ and~\eqref{eq:AB_definition} becomes
\[
	A_\alpha
	:= \sum_{\beta'=1}^M u^{\beta'} \overline{\varphi_{K,\alpha,\bar\beta',z'}} v^{\beta'} \;, \qquad 
	B_\alpha
	:= v^{\bar\alpha} \chi_{2R,z'} (u^{\bar\alpha})^2 \chi_{2R,z} v^{\bar\alpha} \;.
\]
The bounds for $ \cE_{b \cE b^*}(X) $ and $ \cE_{\cE b b^*}(X) $ in (\ref{eq:cEbbcE2}) are analogous to those for $ \cE_{b \cE b}(X) $ and $ \cE_{\cE b b}(X) $, replacing $ b_{\beta,y} $ by $ b_{\beta,y}^* $ and $ \varphi_{X,\alpha,\beta} $ by $ \varphi_{X,\alpha,\bar\beta} $. Also the bound for $ \cE_{b^* b \cE}(X) $ works similarly; as in~\eqref{eq:bEb_split}, we split 
\begin{equation}
\begin{aligned}
	& | \langle \xi_\lambda, \cE_{b^* b \cE}^{(u)}(X) \xi_\lambda \rangle | \\
	&\le \frac{C}{\rho^{4/3}} \int \d z
		\sum_{\beta = 1}^M
		\norm{\sum_{\alpha = 1}^M \int \d x \; \overline{\varphi_X(x-z)_{\alpha,\bar\beta}} b_{\alpha, x} \xi_\lambda} \\
	& \quad \times \norm{\sum_{\alpha' = 1}^M \int \d y \d x' \d y' \;
		\varphi_K(y'-x')_{\alpha',\bar\beta}
		(u^\beta)^2(y'-y)
		\chi_{2R}(y-z)
		b_{\alpha',x'} a^*(v^{\bar\beta}_y) a(v^{\bar\beta}_{y'}) \xi_\lambda } \;.
\end{aligned}
\end{equation}
The first norm is bounded as in~\eqref{eq:bEb_firstnorm_final} and below, and the second as in~\eqref{eq:AB_split} with $ a_x^* a_{x'} a_y^* a_{y'}^* $ replaced by $ a_x a_{x'} a_y^* a_{y'} $ and $ A_\alpha, B_\alpha $ in~\eqref{eq:AB_definition} replaced by
\[
	A_\beta
	:= \sum_{\alpha'=1}^M v^{\alpha'} \overline{\varphi_{K,\alpha',\bar\beta,z'}} u^{\alpha'} \;, \qquad
	B_\beta
	:= v^{\bar\beta} \chi_{2R,z} (u^{\bar\beta})^2 \chi_{2R,z'} v^{\bar\beta} \;.
\]
The bound for $ \cE_{b^* \cE b}(X) $ is again analogous to the one for $ \cE_{b^* b \cE}(X) $. 

Finally, to estimate $ \cE_\cE(Y) $ in (\ref{eq:cEbbcE2}), we write $\cE_\cE(Y) = \cE_\cE^{(u)}(Y) + \cE_\cE^{(v)}(Y)$, with 
\begin{equation}
\begin{aligned}
\cE_\cE^{(u)}(Y) &:= -\frac{1}{\rho^{2/3}} \d \Gamma \Big( \sum_{\alpha = 1}^M v^{\bar\alpha} \big( \hat\varphi_{Y,\alpha,\alpha} * (\hat{u}^{\bar\alpha})^2 \big) v^{\bar\alpha} \Big) \;, \\
	\cE_\cE^{(v)}(Y)
	&:= -\frac{1}{\rho^{2/3}} \d \Gamma \Big( \sum_{\alpha = 1}^M u^\alpha \big(\hat\varphi_{Y,\alpha,\alpha} * (\hat{v}^\alpha)^2 \big) u^\alpha \Big) \;,
\end{aligned}
\end{equation}
where $ \hat\varphi_{Y,\alpha,\alpha} * (\hat{u}^{\bar\alpha})^2 $ is understood as multiplication operator in momentum space. With $ \langle \xi, \d \Gamma(A) \xi \rangle \le \norm{A}_{\op} \langle \xi, \cN \xi \rangle $, and substituting $ \alpha \mapsto \bar\alpha $, we conclude
\begin{equation}
	|\langle \xi_\lambda, \cE_\cE^{(u)}(Y) \xi_\lambda \rangle|
	\le C \rho^{-\frac 23}
		|\langle \xi_\lambda, \cN \xi_\lambda \rangle|
		\norm{\sum_{\alpha = 1}^M v^\alpha \big( \hat\varphi_{Y,\bar\alpha,\bar\alpha} * (\hat{u}^\alpha)^2 \big) v^\alpha}_{\op} \;.
\end{equation} 
With $\sum_\alpha (v^\alpha)^2 \leq 1$, we obtain, for any one-particle $\eta \in L^2 (\Lambda_L)$,  
\[  \begin{split} \sum_{\alpha=1}^M \langle \eta, v^\alpha \big( \hat\varphi_{Y,\bar\alpha,\bar\alpha} * (\hat{u}^\alpha)^2 \big) v^\alpha \eta \rangle &\leq \sum_{\alpha=1}^M \| v^\alpha \eta \|^2  \|  \hat\varphi_{Y,\bar\alpha,\bar\alpha} * (\hat{u}^\alpha)^2 \|_\infty \\ &\leq \max_{1 \le \alpha \le M} \norm{\hat\varphi_{Y,\bar\alpha,\bar\alpha}}_1 \norm{\hat{u}^\alpha}_{\infty}^2 \| \eta \|^2 \leq\max_{1 \le \alpha \le M} \norm{\hat\varphi_{Y,\bar\alpha,\bar\alpha}}_1  \| \eta \|^2 \end{split} \;. \]
From the assumption (\ref{eq:regularityestimate_2}), we find $\| \sum_{\alpha=1}^M  v^\alpha \big( \hat\varphi_{Y_X, \bar\alpha,\bar\alpha} * (\hat{u}^\alpha)^2 \big) v^\alpha \|_\text{op} \leq D_X$ and thus, with Prop. \ref{prop:gronwall}, 
\begin{equation}
	|\langle \xi_\lambda, \cE_\cE^{(u)}(Y_X) \xi_\lambda \rangle|
	\le C D_X e^{C \fl^{-7} } \rho^{-\frac 23} L^3 \;.
\end{equation}
Bounding $ \cE_\cE^{(v)}(Y_X) $ analogously, we conclude (\ref{eq:error-bos}). 
\end{proof}

\subsection{Proof of Proposition~\ref{prop:evaluate_Q1Q2}}

\begin{proof}
Recall $ \xi_\lambda = e^{-\lambda S} \Omega $. We iterate the Duhamel formula as in~\eqref{eq:Duhamel_1} with $ \cN $ replaced by $ Q_1(X) $. Using $ \langle \Omega, Q_1(X) \Omega \rangle = 0 $ and Lemma~\ref{lem:bosonizationerror}, recalling the symmetry $ X = X^T = \fR(X) $, and the same for $ I $ and $ K $, the first expansion reads
\begin{equation} \label{eq:Duhamel_1bis}
\begin{aligned}
	\langle \xi_\lambda, Q_1(X) \xi_\lambda \rangle
	&= \int_0^\lambda \d \lambda_1 \langle \Omega, e^{\lambda_1 S} [S,Q_1(X)] e^{-\lambda_1 S} \Omega \rangle \\
	&= \int_0^\lambda \d \lambda_1 \langle \xi_{\lambda_1}, \cE_{Q_1}(X) \xi_{\lambda_1} \rangle 
		+ \int_0^\lambda \d \lambda_1 \langle \xi_{\lambda_1}, Q_2(KIX + XIK) \xi_{\lambda_1} \rangle \;.
\end{aligned}
\end{equation}
We retain the bosonization error $ \cE_{Q_1}(X) $ and only expand the $ Q_2 $-contribution. Recall the notation $ \tilde{X}^{(n)} $~\eqref{eq:Xtilde} with $ \tilde{X}^{(0)} = X $ and $ \tilde{X}^{(n+1)} = K I \tilde{X}^{(n)} + \tilde{X}^{(n)} I K $. 
Then, introducing the integral over the $ n $-dimensional rescaled simplex $ \Delta_\lambda^n $
\begin{equation}
	\int_{\Delta_\lambda^n} \d^n \ulambda
	:= \int_0^\lambda \d \lambda_1 \int_0^{\lambda_1} \d \lambda_2 \ldots \int_0^{\lambda_{n-1}} \d \lambda_n \;, \qquad
	\ulambda := (\lambda_1, \ldots, \lambda_n) \;,
\end{equation}
and using cyclicity of the trace and Lemma~\ref{lem:bosonizationerror}, the second order Duhamel expansion gives 
\begin{equation} \label{eq:Duhamel_2}
\begin{aligned}
	\langle \xi_\lambda, Q_1(X) \xi_\lambda \rangle
	&= \frac{\lambda^2}{2} \sum_{k \in \Lambda_L} \tr \big( I(k) \tilde{X}^{(2)}(k) \big)
		+ \int_{\Delta_\lambda^1} \d^1 \ulambda \langle \xi_{\lambda_1}, \cE_{Q_1}(\tilde{X}^{(0)}) \xi_{\lambda_1} \rangle \\
	&\quad + \int_{\Delta_\lambda^2} \d^2 \ulambda \langle \xi_{\lambda_2}, \cE_{Q_2}(\tilde{X}^{(1)}) \xi_{\lambda_2} \rangle
		+ \int_{\Delta_\lambda^2} \d^2 \ulambda \langle \xi_{\lambda_2}, Q_1( \tilde{X}^{(2)} ) \xi_{\lambda_2} \rangle \;.
\end{aligned}
\end{equation}
In further iterations, we retain the error terms $ \cE_{Q_1}, \cE_{Q_2} $ and only expand $ Q_1, Q_2 $. After an odd number $ n_* $ of iterations, the result is then
\begin{equation} \label{eq:Duhamel_2n}
\begin{aligned}
	\langle \xi_\lambda, Q_1(X) \xi_\lambda \rangle
	&= \sum_{\substack{n=2 \\ n : \textnormal{even}}}^{n_*} \frac{\lambda^n}{n!} \sum_{k \in \Lambda_L} \tr \big( I(k) \tilde{X}^{(n)}(k) \big)
		+ \sum_{n=1}^{n_*} \int_{\Delta_\lambda^n} \d^n \ulambda \langle \xi_{\lambda_n}, \cE_{Q_{\sigma(n)}}(\tilde{X}^{(n-1)}) \xi_{\lambda_n} \rangle \\
	& \quad + \int_{\Delta_\lambda^{n_*}} \d^{n_*} \ulambda \langle \xi_{\lambda_{n_*}}, Q_2( \tilde{X}^{(n_*)} ) \xi_{\lambda_{n_*}} \rangle \;,
\end{aligned}
\end{equation}
where we used $ \int_{\Delta_\lambda^n} 1 \; \d^n \ulambda = \frac{\lambda^n}{n!} $ and where
\begin{equation}
	\sigma(n) := \begin{cases}
		1 \quad &\textnormal{for } n : \textnormal{odd} \;, \\
		2 \quad &\textnormal{for } n : \textnormal{even} \;.
	\end{cases}
\end{equation}
The trace term, by cyclicity of the trace and $ \sum_{n'=0}^n {n \choose n'} = 2^n $, agrees with the trace term in~\eqref{eq:evaluate_Q1Q2}. So it remains to bound the two error terms with $ \cE_{Q_{\sigma(n)}} $ and $ Q_2( \tilde{X}^{(n_*)} ) $.

\medskip

From \eqref{eq:regularityestimate_Q1Q2}, we conclude that (\ref{eq:regularityestimate_2}) holds true for $X = \tilde{X}^{(n-1)}$, with $D_X$ replaced by $C^n D_X $. Hence, Lemma \ref{lem:bosonizationerror} implies that 
\begin{equation} \label{eq:bound_EQ_N}
\begin{aligned}
	&\sum_{n=1}^{n_*} \int_{\Delta_\lambda^n} \d^n \ulambda |\langle \xi_{\lambda_n}, \cE_{Q_{\sigma(n)}}(\tilde{X}^{(n-1)}) \xi_{\lambda_n} \rangle|
	\le \sum_{n=1}^{\infty} \frac{\lambda^n}{n!} 
		\sup_{\lambda \in [0,1]}
		|\langle \xi_\lambda, \cE_{Q_{\sigma(n)}}(\tilde{X}^{(n-1)}) \xi_\lambda \rangle| \\
	&\le C D_X \sum_{n=1}^{\infty} \frac{C^n}{n!}
		\big( \fl^{-23} M^{\frac 12} e^{C \fl^{-17} M^{\frac 12}} \rho^{-\frac 13}
		+ e^{C \fl^{-7}} \rho^{-\frac 23} \big) L^3
	\le C_\varepsilon D_X \rho^{-\frac 13 + \varepsilon} L^3 \;,
\end{aligned}
\end{equation}
uniformly in $ n_* \in \NNN $, where we used that $ \fl^{-23} M^{\frac 12} \ll M \sim \log(\rho^\eta) $, so given any $ \varepsilon > 0 $, for $ \eta > 0 $ small enough we have $ \fl^{-23} M^{\frac 12} e^{C \fl^{-17} M^{\frac 12}} \ll \rho^{\varepsilon} $.

\medskip

Finally, the $ Q_2 $-term is controlled by Lemma~\ref{lem:Q2_bound}: With~\eqref{eq:Q1Q2position} and the triangle inequality, we obtain
\begin{equation} \label{eq:Q2_bound_tail}
	|\langle \xi_{\lambda_{n_*}}, Q_2( \tilde{X}^{(n_*)} ) \xi_{\lambda_{n_*}} \rangle|
	\le \bigg\vert \frac{2}{\rho^{\frac 23}} \sum_{\alpha, \beta = 1}^M \int \d x \d y \varphi_{\tilde{X}^{(n_*)}}(y-x)_{\alpha, \bar\beta}
		\langle \xi_{\lambda_{n_*}}, b_{\alpha,x} b_{\beta,y} \xi_{\lambda_{n_*}} \rangle \bigg\vert \;,
\end{equation}
which is twice the left-hand side of~\eqref{eq:Q2_bound} with $ \varphi_{X,\alpha,\beta} = \varphi_{\tilde{X}^{(n_*)},\alpha,\bar\beta} $ and $ \xi = \xi_{\lambda_{n_*}} $ plugged in. By~\eqref{eq:regularityestimate_Q1Q2}, $ \tilde{X}^{(n_*)} $ satisfies the regularity estimate~\eqref{eq:regularityestimate_1}, where $ D_X $ of~\eqref{eq:regularityestimate_1} gets replaced by $ C^{n_*} D_X $ of~\eqref{eq:regularityestimate_Q1Q2}. So Lemma~\ref{lem:Q2_bound} applies, and we obtain
\begin{equation}
	\int_{\Delta_\lambda^{n_*}} \d^{n_*} \ulambda |\langle \xi_{\lambda_{n_*}}, Q_2( \tilde{X}^{(n_*)} ) \xi_{\lambda_{n_*}} \rangle|
	\le \frac{C^{n_*}}{n_*!} D_X \fl^{-2}
		\sup_{\lambda \in [0,1]} \langle \xi_\lambda, (\cN + L^3) \xi_\lambda \rangle \;.
\end{equation}
Bounding the number operator by Proposition~\ref{prop:gronwall}, we conclude
\begin{equation} \label{eq:bound_Q_N}
	\int_{\Delta_\lambda^{n_*}} \d^{n_*} \ulambda |\langle \xi_{\lambda_{n_*}}, Q_2( \tilde{X}^{(n_*)} ) \xi_{\lambda_{n_*}} \rangle| \\
	\le \frac{C^{n_*}}{n_*!} D_X \fl^{-2} e^{C \fl^{-7}} L^3 \;.
\end{equation}
Since the bound~\eqref{eq:bound_EQ_N} is uniform in $ n_* \in \NNN $, we can choose $ n_* $ as large as we like, rendering the r.\,h.\,s.\ of~\eqref{eq:bound_Q_N} as small as we wish. This concludes the evaluation of $ \langle \xi_\lambda, Q_1(X) \xi_\lambda \rangle $ in~\eqref{eq:evaluate_Q1Q2}.

The evaluation of $ \langle \xi_\lambda, Q_2(X) \xi_\lambda \rangle $ is analogous: After an even number of iterations, we arrive at an expansion like~\eqref{eq:Duhamel_2n}, with the only difference that $ \cE_{Q_1} $ and $ \cE_{Q_2} $ are swapped, and the trace term is summed over odd $ n \ge 1 $.
\end{proof}

\section{Reduction to the Effective Hamiltonian}
\label{sec:reduction_Heff}

With our improved bound on $ \cN $ from Proposition~\ref{prop:improved_excitation}, we can proceed with estimating the errors of the approximation
\[
	R^* \cH_N R - E_{\FB}
	\simeq \QQQ_{\B}^{\cR} + \DDD_{\B} \;,
\]
compare~\eqref{eq:RHNR} and~\eqref{eq:Bosonizedoperator} in the trial state $ \xi = e^{-S} \Omega $, compare~\eqref{eq:T}. For this, the following observation will be useful: Since $ S $ only excites pairs in the patches $ B_\alpha $, we can ignore annihilation of particles outside patches:
\begin{equation} \label{eq:ignore_outside_patches}
	[\hat{a}_p,S] = 0 \qquad \Rightarrow \qquad
	\hat{a}_p e^{-S} \Omega
	= e^{-S} \hat{a}_p \Omega
	= 0
	\qquad \textnormal{if } p \notin B_\alpha \; \forall \alpha \in \{1, \ldots, M\} \;.
\end{equation} 
We can therefore restrict our attention to excitations in (unregularized) patches, which we describe by the functions
\begin{equation} \label{eq:u0v0}
\begin{aligned}
	\hat{u}_{(0)}^\alpha(p)
	&:= \chi(p \in \BF^c \cap B_\alpha) \;, \qquad
	&\hat{v}_{(0)}^\alpha(h)
	&:= \chi(h \in \BF \cap B_\alpha) \;, \\
	\hat{u}_{(0)}(p)
	&:= \sum_{\alpha = 1}^M \hat{u}_{(0)}^\alpha(p) \;, \qquad 
	&\hat{v}_{(0)}(h)
	&:= \sum_{\alpha = 1}^M \hat{v}_{(0)}^\alpha(h) \;.
\end{aligned}
\end{equation}
Let us also denote the sums of the regularized functions $ \hat{u}^\alpha $, $ \hat{v}^\alpha $~\eqref{eq:uhat_vhat} over all patches by
\begin{equation} \label{eq:uv_allpatches}
	\hat{u}(p)
	:= \sum_{\alpha = 1}^M \hat{u}^\alpha(p) \;, \qquad 
	\hat{v}(h)
	:= \sum_{\alpha = 1}^M \hat{v}^\alpha(h) \;,
\end{equation}
and the regularization errors by 
\begin{equation} \label{eq:uv_edge}
	\hat{u}_{(\edge)}(p)
	:= \hat{u}_{(0)}(p) - \hat{u}(p) \;, \qquad 
	\hat{v}_{(\edge)}(h)
	:= \hat{v}_{(0)}(h) - \hat{v}(h) \;.
\end{equation}
Note that $ \hat{u}_{(\edge)} $ and $ \hat{v}_{(\edge)} $ are supported close to the boundary of $ \bigcup_\alpha (B_\alpha \cap \overline{B_{k_{\F}}(0)}^c) $ and $ \bigcup_\alpha (B_\alpha \cap \overline{B_{k_{\F}}(0)}) $ in a region of thickness $ \fl $, hence the name ``edge''. Also, $ \hat{u} $ is smooth, while $ \hat{u}_{(0)} $ and $ \hat{u}_{(\edge)} $ are not. Furthermore, for any function $ \hat{f} $ introduced in~\eqref{eq:u0v0},~\eqref{eq:uv_allpatches},~\eqref{eq:uv_edge}, we denote again by $ f $ its inverse Fourier transform as in~\eqref{eq:uxvx}, and by $ f $ the integral operator with kernel $ f(x-y) = f_y(x) $, which is a momentum space multiplication by $ \hat{f} $.\\

\subsection{Bounding Non-Bosonizable Terms}

We start with bounding the non-bosonizable terms in~\eqref{eq:RHNR}. Recall that they have to be compared to a leading term of order $ \hbar L^3 $.

\begin{lemma}[Non-bosonizable terms] \label{lem:nonbosonizable}
Within the trial state $ \xi = T \Omega $, compare~\eqref{eq:T}, the terms $ \cE_1 $, $ \cE_2 $, and $ \XXX $ in~\eqref{eq:RHNR} are bounded as
\begin{equation} \label{eq:nonbosonizable}
	\langle \xi, \cE_2 \xi \rangle
	= 0 \;, \qquad
	|\langle \xi, \XXX \xi \rangle|
	\le C \rho^{-\frac 23} \hbar L^3 \;, \qquad
	|\langle \xi, \cE_1 \xi \rangle|
	\le C \fl^{\frac 12} \hbar L^3 \;.
\end{equation}
\end{lemma}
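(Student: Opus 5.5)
We treat the three terms separately, in order of increasing difficulty.

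\emph{The term $\cE_2$.} We would use a parity argument. Each summand of $S$ in \eqref{eq:S} is a product of four creation or four annihilation operators. Hence $S$ maps each $n$-particle sector of $\cF$ to sectors with $n\pm 4$ particles, and $\xi = e^{-S}\Omega$ is supported on sectors with $\cN \equiv 0 \bmod 4$. Every summand of $\cE_2$ has the form $\fD^*B$, $\fD^*B^*$ or an adjoint. Since $\fD(k)^*$ conserves $\cN$ and $B^{(*)}$ changes it by $\mp 2$, $\cE_2$ maps the $\cN\equiv 0$ sectors into the $\cN \equiv 2 \bmod 4$ sectors. Therefore $\langle \xi,\cE_2\xi\rangle=0$.

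\emph{The term $\XXX$.} It is minus a sum of second quantized multiplication operators. We bound it using
\[
0\le \sum_{p\in \BF^c\cap(\BF+k)}\hat a_p^*\hat a_p + \sum_{h\in\BF\cap(\BF^c-k)} \hat a^*_{-h}\hat a_{-h}\le \cN ,
\]
together with $L^{-3}\sum_{k}\hat V(k)\le C$. This gives $|\langle\xi,\XXX\xi\rangle|\le C\rho^{-1}\langle \xi,\cN\xi\rangle\le C\rho^{-1}L^3$ by Proposition~\ref{prop:improved_excitation}, and $\rho^{-1}=\rho^{-2/3}\hbar$.

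\emph{The term $\cE_1$.} We would pass to position space. By \eqref{eq:ignore_outside_patches}, inside $\fD(k)\xi$ we may restrict the momenta to the patches, replacing the indicator of $\BF^c$ by $\hat u_{(0)}$ and that of $\BF$ by $\hat v_{(0)}$. Then
\[
\cE_1 = \frac{1}{2\rho}\int \d x\,\d y\; V(x-y)\,\varrho(x)\,\varrho(y),
\]
where $\varrho(x)$ is the signed excitation density built from $a^*(u_{(0),x})a(u_{(0),x})$ and the reflected hole analogue. Next we split $\hat u_{(0)}=\hat u+\hat u_{(\edge)}$ and $\hat v_{(0)}=\hat v+\hat v_{(\edge)}$.

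For the smooth parts $u,v$ the kernels are localized. Since $\hat V$ is smooth and compactly supported, $|V(x-y)|\le C\,\cW(x-y)\le C\int \d z\,\cW_z(x)\cW_z(y)$ (a convolution version of Lemma~\ref{lem:Peetre}). After normal ordering, the quartic part is then controlled, through Lemmas~\ref{lem:aastar_conversion_multi}, \ref{lem:ualpha_derivative_bound} and \ref{lem:HS_uv_removal}, by
\[
C\rho^{-1}\fl^{-C}\int \d z\,\langle\xi,(\cN_z+1)^2\xi\rangle .
\]
By translation invariance and the second bound in \eqref{eq:preliminary_bounds}, this is at most $C\rho^{-1}\fl^{-C}e^{C\fl^{-17}M^{1/2}}L^3\le C_\varepsilon\rho^{-1+\varepsilon}L^3$. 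The quadratic normal-ordering remainder is bounded by $C\rho^{-1}\langle\cN\rangle\le C\rho^{-1}L^3$. Both are far below $\fl^{1/2}\hbar L^3$.

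The main obstacle is the set of terms containing at least one edge function, because $\hat u_{(\edge)}$ is not smooth and its position-space kernel is not localized. Here we would not rely on locality but on smallness of the edge occupation. We would use $\hat V\ge0$, so that $\cE_1\ge 0$, and Cauchy--Schwarz, $\fD^*\fD\le 2\fD_{\rm sm}^*\fD_{\rm sm}+2\fD_{\edge}^*\fD_\edge$. The edge part, together with the mixed terms after a further Cauchy--Schwarz, would be reduced to the expectation $\langle\xi,\d\Gamma(u_{(\edge)}^2+v_{(\edge)}^2)\xi\rangle$ multiplied by the localized smooth quantities already controlled.

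The key estimate to prove is
\[
\langle\xi,\d\Gamma(u_{(\edge)}^2)\xi\rangle\le C\fl L^3 .
\]
We would prove it by a Duhamel expansion exactly as in Proposition~\ref{prop:improved_excitation}. The commutator $[S,\d\Gamma(u_{(\edge)}^2)]$ is a pair operator whose coefficients carry the weight $\hat u^\alpha\hat u_{(\edge)}$. This weight is supported on a shell of relative thickness $\fl$ in each patch, so the leading trace terms gain a factor $\fl$ relative to the bound used for $\cN$. Taking square roots in the mixed terms then yields the claimed $C\fl^{1/2}\hbar L^3$. Making this edge expansion rigorous is where we expect the real work: the non-smooth edge functions must be kept out of the Hilbert--Schmidt estimates, which require smooth symbols.
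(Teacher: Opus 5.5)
Your arguments for $\cE_2$ (parity of $\cN$ modulo $4$) and for $\XXX$ are correct and coincide with the paper's.

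\textbf{The gap: the edge terms in $\cE_1$.} Your treatment of every term containing an edge function rests on the estimate $\langle\xi,\d\Gamma(u_{(\edge)}^2)\xi\rangle\le C\fl L^3$. You do not prove it, and it does not follow ``exactly as in Proposition~\ref{prop:improved_excitation}''. The commutator $[S,\d\Gamma(u_{(\edge)}^2)]$ produces pair operators whose particle profile is $\hat u^\alpha(1-\hat u^\alpha)^2$ rather than $\hat u^\alpha$. These are not of the form $Q_1(X)$, $Q_2(X)$, so the approximate CCR, Lemma~\ref{lem:bosonizationerror} and Proposition~\ref{prop:evaluate_Q1Q2} would all have to be redone for a new family of operators.

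The missing observation is that no edge-occupation bound is needed, and the edge terms are in fact the easy ones. Since $\hat u_{(\edge)}$ is supported in a momentum shell of volume at most $C\rho^{2/3}\fl$, you have $\|a(u_{(\edge),x})\|_{\op}\le\|\hat u_{(\edge)}\|_2\le C\rho^{1/3}\fl^{1/2}$, compared with $\|a(u_x)\|_{\op}\le C\rho^{1/3}$.

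Normal order first. This is what legitimately restricts all four momenta to the patches: in $\fD(k)\xi$ only the annihilated momentum is restricted, so your splitting $\fD=\fD_{\mathrm{sm}}+\fD_{\edge}$ would need an extra outside-patch piece. After normal ordering, the paper bounds, e.g., the term with one smooth and one edge operator at $x$ by
\[
\rho^{-1}\int\d x\,\d y\,|V(x-y)|\,\|u_x\|_2\,\|u_{(\edge),x}\|_2\,\|a(u_{(0),y})\xi\|^2\le C\fl^{\frac12}\rho^{-\frac13}\|V\|_1\langle\xi,\cN\xi\rangle\le C\fl^{\frac12}\hbar L^3 ,
\]
using only Proposition~\ref{prop:improved_excitation}. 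In your own reduction you put the operator norm on the $u_{(0)}$-operator and the expectation on the edge operator. Reversing the roles closes the argument without positivity of $\cE_1$, without localization, and without any Duhamel expansion.

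\textbf{A second issue: the smooth--smooth quartic term.} This term needs a different tool from the one you cite. Through Lemma~\ref{lem:aastar_conversion_multi}, the bound is expressed via Hilbert--Schmidt norms of kernels such as $u\cW_zu\,\cW_z^{-1/2}$, which are of order $\|\hat u\|_2\sim\rho^{1/3}$. Two of them cost $\rho^{2/3}$, and with the prefactor $\rho^{-1}$ you are left with $C\fl^{-C}e^{C\fl^{-17}M^{1/2}}\hbar L^3$, which is not small. In $Q_2$-type terms this loss is absorbed by the prefactor $\rho^{-2/3}$; in $\cE_1$ it is not.

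What is needed is an operator inequality. By the commutator bound, $\|\cW_y^{-1/2}uV_yu\,\cW_y^{-1/2}\|_{\op}\le C\fl^{-4}$, i.e.\ $\pm\d\Gamma(uV_yu)\le C\fl^{-4}\cN_y$. The paper inserts this into $\rho^{-1}\int\d y\,\langle a(u_{(0),y})\xi,\d\Gamma(uV_yu)\,a(u_{(0),y})\xi\rangle$ and pulls $\cN_y$ through. It then applies Cauchy--Schwarz, with $\|\cN_y\xi\|$ controlled by Proposition~\ref{prop:gronwall} and $\|a^*(u_{(0),y})a(u_{(0),y})\xi\|$ controlled by Proposition~\ref{prop:improved_excitation}. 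This gives $C\fl^{-4}e^{C\fl^{-17}M^{1/2}}\rho^{-2/3}L^3$.

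Your two-sided localization can be rescued in the same spirit: $u\cW_zu\le C\fl^{-4}\cW_z$, together with positivity of the two-body density matrix, yields your claimed $C\rho^{-1}\fl^{-8}\int\d z\,\langle\xi,\cN_z^2\xi\rangle$. It cannot be done via Hilbert--Schmidt norms.
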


\begin{proof}
First, note that $ S $ according to \eqref{eq:T} generates and annihilates two pairs of particle--hole excitations, so $ T $ contains an even number of such excitations. Since $ \cE_2 $ changes this number by 1, we immediately conclude $ \langle \xi, \cE_2 \xi \rangle = 0 $.

\medskip

To bound $ \XXX $, note that on the last line of~\eqref{eq:RHNR} we have $ \sum_{p \in \BF^c \cap (\BF + k) \cap \Lambda_L} \hat{a}_p^* \hat{a}_p \le \sum_{p \in \Lambda_L} \hat{a}_p^* \hat{a}_p = \cN $, and the same is true for the sum over $ h $. Thus, with Proposition~\ref{prop:improved_excitation},
\[
	|\langle \xi, \XXX \xi \rangle|
	\le \frac{C}{\rho L^3} \sum_{k \in \Lambda_L} \hat{V}(k)
		|\langle \xi, \cN \xi \rangle|
	\le C \rho^{-1} \Vert \hat{V} \Vert_\infty L^3 \;,
\]
which concludes the bound on $ \XXX $ with $ \hbar = \rho^{-1/3} $.

\medskip

It remains to bound $ \cE_1 $. Plugging~\eqref{eq:fDk} into~\eqref{eq:RHNR} and applying Cauchy-Schwarz, we obtain 
\begin{equation} \label{eq:cE1_fourcontributions}
\begin{aligned}
	|\langle \xi, \cE_1 \xi \rangle|
	&\le \frac{1}{\rho L^3} \sum_{k \in \Lambda_L} \hat{V}(k) \sum_{p, p' \in \BF^c \cap (\BF^c + k)} \langle \xi, \hat{a}_{p'}^* \hat{a}_{p'-k} \hat{a}_{p-k}^* \hat{a}_p \xi \rangle \\
	&\quad + \frac{1}{\rho L^3} \sum_{k \in \Lambda_L} \hat{V}(k) \sum_{h, h' \in \BF \cap (\BF - k)} \langle \xi, \hat{a}_{-h'}^* \hat{a}_{-h'-k} \hat{a}_{-h-k}^* \hat{a}_{-h} \xi \rangle \\
\end{aligned}
\end{equation}
We show how to bound the first contribution, the other works analogously. First, note that in $ \xi = e^{-S} \Omega $, $ S $ only excites pairs within patches, see~\eqref{eq:ignore_outside_patches}, so for $ p \in \BF^c $,
\begin{equation} \label{eq:ignore_outside_patches_2}
	\hat{a}_p \xi
	= \sum_{\alpha = 1}^M \chi(p \in B_\alpha) \hat{a}_p \xi
	= \hat{u}_{(0)}(p) \hat{a}_p \xi \;,
\end{equation}
compare definition~\eqref{eq:u0v0} of $ \hat{u}_{(0)} $. This immediately allows restricting $ p',p $ to patches. To also restrict $ p'-k $ and $ p-k $, we apply the CAR and then convert in position space: 
\begin{equation} \label{eq:cE1_firstcontribution}
\begin{aligned}
	&\frac{1}{\rho L^3} \Bigg\vert \sum_{k \in \Lambda_L} \hat{V}(k) \sum_{p, p' \in \BF^c \cap (\BF^c + k)} \langle \xi, \hat{a}_{p'}^* \hat{a}_{p'-k} \hat{a}_{p-k}^* \hat{a}_p \xi \rangle \Bigg\vert \\
	&\le \frac{1}{\rho L^3} \Bigg\vert \sum_{k \in \Lambda_L} \hat{V}(k) \sum_{p, p' \in \BF^c} \langle \xi, \hat{a}_{p'}^* \hat{a}_{p-k}^* \hat{a}_{p'-k} \hat{a}_p \xi \rangle \Bigg\vert
		+ \frac{1}{\rho L^3} \Bigg\vert \sum_{k \in \Lambda_L} \hat{V}(k) \sum_{p \in \BF^c} \langle \xi, \hat{a}_p^* \hat{a}_p \xi  \rangle \Bigg\vert \\
	&\le \frac{1}{\rho L^3} \Bigg\vert \sum_{k \in \Lambda_L} \hat{V}(k) \sum_{p, p' \in \Lambda_L}
		\hat{u}_{(0)}(p') \hat{u}_{(0)}(p'-k) \hat{u}_{(0)}(p) \hat{u}_{(0)}(p-k)
		\langle \xi, \hat{a}_{p'}^* \hat{a}_{p-k}^* \hat{a}_{p'-k} \hat{a}_p \xi \rangle \Bigg\vert \\
	&\quad + \rho^{-1} \Vert \hat{V} \Vert_1 \langle \xi, \cN \xi \rangle \\
	&\le \rho^{-1} \Big\vert \int \d x \d y \; V(x-y) \langle \xi, a^*(u_{(0),x}) a^*(u_{(0),y}) a(u_{(0),x}) a(u_{(0),y}) \xi \rangle \Big\vert 
	+ C \rho^{-\frac 23} \hbar L^3\;,
\end{aligned}
\end{equation}
where in the last step we wrote the first contribution in position space and bounded $ \langle \xi, \cN \xi \rangle $ by Proposition~\eqref{eq:improved_excitation}. To bound the first contribution, we split $ u_{(0),x} = u_{(\edge),x} + u_x  $, compare~\eqref{eq:uv_edge}, resulting in four terms:
\begin{equation} \label{eq:cE1_firstcontribution_split}
\begin{aligned}
	&\rho^{-1} \Big\vert \int \d x \d y \; V(x-y) \langle \xi, a^*(u_{(0),x}) a^*(u_{(0),y}) a(u_{(0),x}) a(u_{(0),y}) \xi \rangle \Big\vert 
	\le \I + \II + \III + \IV \;, \\
	\I &:= \rho^{-1} \Big\vert \int \d x \d y \; V(x-y) \langle \xi, a^*(u_{(\edge),x}) a^*(u_{(0),y}) a(u_{(\edge),x}) a(u_{(0),y}) \xi \rangle \Big\vert \;, \\
	\II &:= \rho^{-1} \Big\vert \int \d x \d y \; V(x-y) \langle \xi, a^*(u_x) a^*(u_{(0),y}) a(u_{(\edge),x}) a(u_{(0),y}) \xi \rangle \Big\vert \;, \\
	\III &:= \rho^{-1} \Big\vert \int \d x \d y \; V(x-y) \langle \xi, a^*(u_{(\edge),x}) a^*(u_{(0),y}) a(u_x) a(u_{(0),y}) \xi \rangle \Big\vert \;, \\
	\IV &:= \rho^{-1} \Big\vert \int \d x \d y \; V(x-y) \langle \xi, a^*(u_x) a^*(u_{(0),y}) a(u_x) a(u_{(0),y}) \xi \rangle\Big\vert \;.
\end{aligned}
\end{equation}
The terms $ \I $, $ \II $, and $ \III $ are bounded the same way, using that $ \hat{u}_{(\edge)} $ is supported in a region of thickness $ \fl $ around the patch boundaries, so $ \Vert \hat{u}_{(\edge)} \Vert_2 \le C \rho^{\frac 13} \fl^{\frac 12} $, which is stronger than $ \Vert \hat{u} \Vert_2 \le C \rho^{\frac 13} $, compare~\eqref{eq:uv_edge}. E.g., for $ \II $, recalling Lemma~\ref{lem:FT_bounds},
\begin{equation} \label{eq:cE1_edgeterm_final}
\begin{aligned}
	\II &\le \rho^{-1} \int \d x \d y \; |V(x-y)|
		\Vert a(u_x) a(u_{(0),y}) \xi \Vert
		\Vert a(u_{(\edge),x}) a(u_{(0),y}) \xi \Vert \\
	&\le \rho^{-1} \int \d x \d y \; |V(x-y)|
		\Vert u_x \Vert_2
		\Vert u_{(\edge),x} \Vert_2
		\Vert a(u_{(0),y}) \xi \Vert^2 \\
	&\le C \fl^{\frac 12} \rho^{-\frac 13} \Vert V \Vert_1 \langle \xi, \cN \xi \rangle 
	\le C \fl^{\frac 12} \hbar L^3 \;,
\end{aligned}
\end{equation}
where we used
\[
	\int \d y \langle \xi, a^*(u_{(0),y}) a(u_{(0),y}) \xi \rangle
	= \sum_{p \in \Lambda_L} |\hat{u}_{(0)}(p)|^2  \langle \xi, \hat{a}_p^* \hat{a}_p \xi \rangle
	\le \langle \xi, \cN \xi \rangle \;.
\]
The terms $ \I $ and $ \III $ are bounded analogously.
As for $\IV$, we write 
\begin{equation}\label{eq:IV1} \IV = \rho^{-1} \Big| \int \d y \, \langle a (u_{(0),y}) \xi, \d \Gamma (uV_y u) a (u_{(0),y}) \xi \rangle \Big| \;. \end{equation}
We observe that (the following bound would not be correct if we replaced $u$ by $u_{(0)}$, because $\hat{u}_{(0)}$ is not continuous; this is why we first had to handle the $u_\text{(edge)}$ separately):
\begin{equation}
\begin{aligned}
	\Vert \cW_y^{-\frac 12} u V_y u \cW_y^{-\frac 12} \Vert_{\op}
	&\le \Vert u \cW_y^{-\frac 12} V_y \cW_y^{-\frac 12} u \Vert_{\op}
		+ \Vert [u, \cW_y^{-\frac 12}] V_y \cW_y^{-\frac 12} u \Vert_{\op} \\
	&\quad + \Vert u \cW_y^{-\frac 12} V_y [\cW_y^{-\frac 12}, u] \Vert_{\op}
		+ \Vert [u, \cW_y^{-\frac 12}] V_y [\cW_y^{-\frac 12}, u] \Vert_{\op} \\
	&\le \big( \Vert [u, \cW_y^{-\frac 12}] \cW_y^{\frac 12} \Vert_{\op}
		+ \Vert \hat{u} \Vert_{\infty} \big)^2
		\Vert \cW_y^{-\frac 12} V_y \cW_y^{-\frac 12} \Vert_{\infty} \;.
\end{aligned}
\end{equation}
From $ \Vert \hat{u} \Vert_{\infty} = 1 $, $ \Vert \cW_y^{-\frac 12} V_y \cW_y^{-\frac 12} \Vert_{\infty} \le C $ and since, proceeding as in 
Lemma~\ref{lem:ualpha_derivative_bound},
\[ \Vert [\cW_y^{-\frac 12}, u] \cW_y^{\frac 12} \Vert_{\op} \le C \fl^{-2}, \] we find 
\[
	\Vert \cW_y^{-\frac 12} u V_y u \cW_y^{-\frac 12} \Vert_{\op}
	\le C \fl^{-4} \;,
\]
and therefore $\pm \d \Gamma (uV_y u) \leq C \fl^{-4} \mathcal{N}_y$, where we recall that $\mathcal{N}_y =  \d \Gamma (\cW_y)$. Inserting in (\ref{eq:IV1}), we obtain 
\[ \begin{split} \IV &\leq C \fl^{-4} \rho^{-1} \int \d y \, \langle \xi, a^* (u_{(0),y}) \cN_y a (u_{(0),y}) \xi \rangle \\ &= C \fl^{-4} \rho^{-1} \int \d y \, \langle \cN_y \xi, a^* (u_{(0),y})  a (u_{(0),y}) \xi \rangle
 - C \fl^{-4} \rho^{-1} \int \d y \, \langle a (\cW_y u_{(0),y}) \xi , a (u_{(0),y}) \xi \rangle \\  &\leq C \fl^{-4} \rho^{-1} \int \d y \, \| \cN_y \xi \| \| a^* (u_{(0),y})  a (u_{(0),y}) \xi \| 
 + C \fl^{-4} \rho^{-1} \| \widehat{\cW u_{(0)}} \|_\infty \| \hat{u}_{(0)} \|_\infty \langle \xi, \cN \xi \rangle \;,
 \end{split} \] 
where in the second term, the operators $ a(\cW_y u_{(0),y}) $ and $ a(u_{(0),y}) $ were converted into $ \d \Gamma $-operators and estimated against $ \cN $ as in~\eqref{eq:trNzreduction_2}.
With Prop. \ref{prop:gronwall} and Prop. \ref{prop:improved_excitation} and estimating $\| \cW u_{(0)} \|_1 \leq \| \cW \|_2 \| u_{(0)} \|_2 \leq C \rho^{1/3}$, we find 
\[ \IV \leq C \fl^{-4} e^{C \fl^{-17} M^{\frac 12}} \rho^{-1} L^{\frac 32} \left( \int \d y \, \| a^* (u_{(0),y})  a (u_{(0),y}) \xi \|^2 \right)^{1/2} + C \fl^{-4} \rho^{-\frac 23} L^3  \;. \]
With $ \Vert a^*(u_{(0),y}) \Vert_{\op} \le \Vert u_{(0),y} \Vert_2 = \Vert \hat{u}_{(0)} \Vert_2 \le C \rho^{\frac 13} $ and Proposition~\ref{prop:improved_excitation}, we can also bound
\begin{equation} \label{eq:cE1_secondbracket}
	\int \d y \Vert a^*(u_{(0),y}) a(u_{(0),y}) \xi \Vert^2
	\le C \rho^{\frac 23} \int \d y \Vert a(u_{(0),y}) \xi \Vert^2
	\le C \rho^{\frac 23} \langle \xi, \cN \xi \rangle
	\le C \rho^{\frac 23} L^3 \;.
\end{equation}
We conclude that 
\[ \IV \leq C \fl^{-4} e^{C \fl^{-17} M^{\frac 12}} \rho^{-\frac 23} L^3 \;.  \]

\end{proof}

\subsection{Restriction to Patches}

Next, we bound the difference $ \QQQ_{\B} - \QQQ_{\B}^{\cR} $. For this, it will be useful to introduce the unregularized and regularized bosonized creation operators acting on all patches:
\begin{equation} \label{eq:b0b}
\begin{aligned}
	b^*_{(0)}(k) 
	&:= \sum_{p \in \Lambda_L} \hat{u}_{(0)}(p) \hat{v}_{(0)}(p-k) \hat{a}_p^* \hat{a}_{-p+k}^* \;, \\
	b^*(k) 
	&:= \sum_{p \in \Lambda_L} \hat{u}(p) \hat{v}(p-k) \hat{a}_p^* \hat{a}_{-p+k}^* 
	= \sum_{\alpha = 1}^M \tilde{n}_{\alpha,k} b^*_\alpha(k)\;,
\end{aligned}
\end{equation}
where we recall that $ \hat{u}_{(0)} $, $ \hat{v}_{(0)} $ were defined in~\eqref{eq:u0v0} and $ \hat{u} $, $ \hat{v} $ in~\eqref{eq:uv_allpatches}. We will further use the following bound.

\begin{lemma}[Bounds on Bosonized Interacting Hamiltonian] \label{lem:bbstar_bound}
For $ \xi \in \cF $, we have
\begin{equation} \label{eq:bbstar_bound}
	\frac{1}{L^3} \sum_{k \in \Lambda_L} \hat{V}(k) \Vert b(k) \xi \Vert^2
	\le C \rho^{\frac 23} \langle \xi, \cN \xi \rangle \;, \quad
	\frac{1}{L^3} \sum_{k \in \Lambda_L} \hat{V}(k) \Vert b^*(k) \xi \Vert^2
	\le C \rho^{\frac 23} \langle \xi, (\cN + L^3) \xi \rangle \;,
\end{equation}
and the same is true if $ b^\sharp(k) $ is replaced by $ b_{(0)}^\sharp(k) $.
\end{lemma}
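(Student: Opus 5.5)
The plan is to go to position space and use the support and $L^\infty$ bounds on $\hat u,\hat v$. Write
\[
b(k)=\sum_{p}\hat u(p)\hat v(p-k)\hat a_{-p+k}\hat a_p ,
\]
and express it as a Fourier transform:
\[
b(k)=\int \d x\, e^{-ik\cdot x}\, a(\tilde v_x)a(u_x),
\]
with $\tilde v(y)=\overline{v(y)}$ suitably reflected, so that $\hat{\tilde v}(h)=\hat v(-h)$. This is just the patch-summed version of the transformation law for $b_{\alpha,x}$. Then
\[
\frac1{L^3}\sum_k \hat V(k)\, b^*(k)b(k)=\int \d x\,\d y\; V(x-y)\, b_x^* b_y ,
\]
where $b_x:=a(\tilde v_x)a(u_x)$, using the Fourier convention \eqref{eq:Fourier_convention}.

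For the first bound I would use Cauchy--Schwarz in the form $|\langle b_x\xi,b_y\xi\rangle|\le \tfrac12(\|b_x\xi\|^2+\|b_y\xi\|^2)$. This gives
\[
\frac1{L^3}\sum_k\hat V(k)\|b(k)\xi\|^2\le \|V\|_1\int\d x\,\|b_x\xi\|^2 .
\]
Next, $\|a(\tilde v_x)\|_{\op}\le\|\hat v\|_2\le C\rho^{1/3}$, since $\hat v$ is supported in a shell of volume $C\rho^{2/3}$ and bounded by $1$. Hence
\[
\int\d x\,\|b_x\xi\|^2\le C\rho^{2/3}\int\d x\,\|a(u_x)\xi\|^2=C\rho^{2/3}\langle\xi,\d\Gamma(u^2)\xi\rangle\le C\rho^{2/3}\langle\xi,\cN\xi\rangle ,
\]
as in \eqref{eq:trNzreduction_2}. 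Here $\|V\|_1\le C$ because $\hat V\in C_0^\infty$ gives $V_\infty$ Schwartz, so its periodization has $L^1(\TTT_L)$-norm bounded uniformly in $L$.

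For the second bound I would normal order. By the CAR,
\[
b(k)b^*(k)=b^*(k)b(k)+\sum_p\hat u(p)^2\hat v(p-k)^2-\sum_p(\ldots)\hat a^*\hat a-\ldots .
\]
The negative quadratic terms can be dropped: they are $-\d\Gamma(\cdot)$ of nonnegative multiplication operators, which is where the right sign comes from. Equivalently, in position space, $b_xb_y^*$ produces $b_y^*b_x$ plus a constant $u^2(x-y)v^2(y-x)$-type kernel plus quadratic terms of nonpositive sign. The constant term gives
\[
\frac1{L^3}\sum_k\hat V(k)\sum_p\hat u(p)^2\hat v(p-k)^2\le \|\hat V\|_\infty\,\frac1{L^3}\sum_{|k|\le R}|(\text{shell})|\le C\rho^{2/3}L^3 ,
\]
since for fixed $k$ the number of $p$ is at most the shell volume $\sim \rho^{2/3}L^3$. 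The $b^*b$ term is controlled by the first bound. The quadratic terms, if not simply dropped, are bounded by $C\rho^{-1/3}\cdot\rho\,\cN$-type estimates. More carefully, $\sum_k\hat V(k)\hat v(p-k)^2/L^3\le \|\hat V\|_1$, which gives $C\langle\xi,\cN\xi\rangle\le C\rho^{2/3}\langle\xi,\cN\xi\rangle$.

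For $b_{(0)}$ the identical argument works, since it used only $0\le \hat u_{(0)},\hat v_{(0)}\le1$ and $\|\hat v_{(0)}\|_2\le C\rho^{1/3}$; no smoothness was needed. I expect the main care point to be the sign and bookkeeping in the normal ordering for $b^*$. I would handle it by bounding each quadratic term absolutely through $\d\Gamma(A)\le\|A\|_{\op}\cN$, rather than relying on signs.
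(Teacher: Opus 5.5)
Your proposal is correct and follows essentially the same route as the paper. The first bound goes through the position-space form $\int V(x-y)\langle b_x\xi,b_y\xi\rangle$, using Cauchy--Schwarz, $\Vert a(v_x)\Vert_{\op}\le C\rho^{1/3}$ and $\int \d x\,\Vert a(u_x)\xi\Vert^2\le\langle\xi,\cN\xi\rangle$. The second goes through $\Vert b^*(k)\xi\Vert^2=\Vert b(k)\xi\Vert^2+\langle\xi,[b(k),b^*(k)]\xi\rangle$ with $[b(k),b^*(k)]\le\sum_p\hat u(p)^2\le C\rho^{2/3}L^3$, where the negative number-operator terms are dropped using $\hat V\ge0$. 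Your optional absolute bound on those quadratic terms via $L^{-3}\sum_k\hat V(k)\le C$ is a harmless variant. The commutator is indeed diagonal, because $\hat u$ and the reflected $\hat v$ have disjoint supports, so no cross terms arise.
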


\begin{proof}
Writing in position space, and using the Cauchy--Schwarz inequality and the bound $ \Vert a(u_x) \Vert_{\op} \le \Vert u_x \Vert_2 = \Vert \hat{u} \Vert_2 \le C \rho^{\frac 13} $, we get
\begin{equation}
\begin{aligned}
	&\frac{1}{L^3} \sum_{k \in \Lambda_L} \hat{V}(k) \Vert b(k) \xi \Vert^2
	= \int \d x \d y \; V(x-y) \langle \xi, a^*(u_x) a^*(v_x) a(v_y) a(u_y) \xi \rangle \\
	&\le \int \d x \d y \; |V(x-y)| \Vert a(v_x) a(u_x) \xi \Vert \Vert a(v_y) a(u_y) \xi \Vert \\
	&\le C \rho^{\frac 23} \int \d x \d y \; |V(x-y)| (\Vert a(u_x) \xi \Vert^2 + \Vert a(u_y) \xi \Vert^2)
	\le C \rho^{\frac 23} \Vert V \Vert_1 \langle \xi, \cN \xi \rangle \;,
\end{aligned}
\end{equation}
which, with $ \Vert V \Vert_1 \le C $ gives the first desired bound. For the second bound, we note that
\[
	\Vert b^*(k) \xi \Vert^2
	= \Vert b(k) \xi \Vert^2
		+ \langle \xi, [b(k),b^*(k)] \xi \rangle \;.
\]
The sum over $ \Vert b(k) \xi \Vert^2 $ was already bounded above. Using the CAR, the commutator amounts to
\[
	[b(k),b^*(k)]
	= \sum_{p \in \Lambda_L} \hat{u}(p)^2 \hat{v}(p-k)^2 (- \hat{a}^*_{-p+k} \hat{a}_{-p+k} - \hat{a}^*_p \hat{a}_p + 1)
	\le \sum_{p \in \Lambda_L} \hat{u}(p)^2
	\le C \rho^{\frac 23} L^3 \;.
\]
We conclude
\begin{equation}
	\frac{1}{L^3} \sum_{k \in \Lambda_L} \hat{V}(k) \Vert b^*(k) \xi \Vert^2
	\le C \rho^{\frac 23} \langle \xi, \cN \xi \rangle
		+ \frac{C}{L^3} \sum_{k \in \Lambda_L} \hat{V}(k) \rho^{\frac 23} L^3 \Vert \xi \Vert^2
	\le C \rho^{\frac 23} \langle \xi, (\cN + L^3) \xi \rangle \;.
\end{equation}
The proof with $ b^\sharp(k) $ replaced by $ b_{(0)}^\sharp(k) $ is analogous.
\end{proof}

The bound on the patch restriction error is the following.

\begin{lemma}[Reduction to Pair Operators on Patches] \label{lem:QBQBR_bound}
Within the trial state $ \xi = T \Omega $, compare~\eqref{eq:T}, the difference between $ \QQQ_{\B} $~\eqref{eq:RHNR} and $ \QQQ_{\B}^{\cR} $~\eqref{eq:QQQBcR} is bounded as
\begin{equation} \label{eq:QBQBR_bound}
	|\langle \xi, (\QQQ_{\B} - \QQQ_{\B}^{\cR}) \xi \rangle |
	\le C \fl^{\frac 12} \hbar L^3 \;.
\end{equation}
\end{lemma}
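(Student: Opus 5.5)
The plan is to compare $\QQQ_{\B}$ and $\QQQ_{\B}^{\cR}$ only through the annihilation operators $B(\pm k)$ acting on $\xi$, and to reduce them in two steps: first to the unregularized patch operators $b_{(0)}$ of~\eqref{eq:b0b}, then to the regularized operators $b$. Every term of $\QQQ_{\B}$ can be written in this form. Indeed,
\[
	\langle \xi, B^*(k)B(k)\xi\rangle = \Vert B(k)\xi\Vert^2 \;, \qquad \langle \xi, B^*(k)B^*(-k)\xi\rangle = \langle B(k)\xi, B^*(-k)\xi\rangle = \overline{\langle \xi, B(-k)B(k)\xi\rangle} \;,
\]
and the last expression involves only annihilators acting on $\xi$ once $B(-k)$ is moved to the left. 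The same rewriting applies to the $b$-terms of $\QQQ_{\B}^{\cR}$.

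\textbf{Step 1: $B(k)\xi = b_{(0)}(k)\xi$ for $|k|\le R$.} By~\eqref{eq:ignore_outside_patches}, $\hat a_p\xi=0$ unless $p$ lies in some patch. Applied to both operators in $\hat a_{-p+k}\hat a_p$, this forces $p\in\BF^c\cap B_\alpha$ and $-p+k\in B_{\beta}$ for some $\alpha,\beta$, that is $h=p-k\in \BF\cap B_{\bar\beta}$. Since $|p-h|=|k|\le R$ and neighbouring patches are separated by corridors of width larger than $R$, necessarily $\bar\beta=\alpha$. Hence only pairs inside a single patch survive, which is exactly what $b_{(0)}(k)$ sums over. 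Because $\hat V(k)=0$ for $|k|>R$, we may therefore replace $B$ by $b_{(0)}$ in every term of $\QQQ_{\B}$ evaluated in $\xi$.

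\textbf{Step 2: from $b_{(0)}$ to $b$.} The edge operator $b_{(0)}-b$ has kernel
\[
	\hat u_{(0)}(p)\hat v_{(0)}(p-k)-\hat u(p)\hat v(p-k)=\hat u_{(\edge)}(p)\hat v_{(0)}(p-k)+\hat u(p)\hat v_{(\edge)}(p-k) \;.
\]
Each summand factorizes, so in position space it is an operator of the form $a(g_x)a(f_x)$ in which one of the two functions is an edge function. We then bound all differences, for instance
\[
	\frac{1}{\rho L^3}\sum_k \hat V(k)\big|\langle (b_{(0)}-b)(k)\xi, b_{(0)}(\pm k)\xi\rangle\big| \;,
\]
by Cauchy--Schwarz in $k$. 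One factor is handled by Lemma~\ref{lem:bbstar_bound}. For the other we repeat the proof of Lemma~\ref{lem:bbstar_bound} with the edge factor: pulling out $\Vert a(u_{(\edge),x})\Vert_{\op}\le\Vert\hat u_{(\edge)}\Vert_2\le C\rho^{1/3}\fl^{1/2}$ (and the same for $v_{(\edge)}$) gives
\[
	\frac{1}{L^3}\sum_k\hat V(k)\Vert (b_{(0)}-b)(k)\xi\Vert^2\le C\fl\,\rho^{2/3}\langle\xi,\cN\xi\rangle \;.
\]
When the edge factor is $v_{(\edge)}$, we apply the operator norm to it and the number-operator bound to the other factor, as is done there. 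Combining both factors with Proposition~\ref{prop:improved_excitation} yields a total of
\[
	C\rho^{-1}\fl^{1/2}\rho^{2/3}L^3=C\fl^{1/2}\hbar L^3 \;.
\]
The $b^*$-versions, if ever needed, are covered by the second bound of Lemma~\ref{lem:bbstar_bound}, which costs at most an extra $L^3$.

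\textbf{Step 3: identification with $\QQQ_{\B}^{\cR}$.} After Step 2 we use $b(k)=\sum_\alpha\tilde n_{\alpha,k}b_\alpha(k)$ and $b_\alpha=\hat\chi_{>,\alpha}b_\alpha$ from~\eqref{eq:bstar_chi_insertion}. With these, the resulting quadratic expression coincides term by term with~\eqref{eq:QQQBcR}, so no further error arises.

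The step I expect to need the most care is Step 1, namely the bookkeeping of the momentum reflection $h\mapsto -h$ together with $B_{\bar\alpha}=-B_\alpha$. Both operators $\hat a_p$ and $\hat a_{-p+k}$ must be controlled by~\eqref{eq:ignore_outside_patches} before the corridor argument can exclude cross-patch pairs. A secondary point is checking that $\Vert\hat v_{(\edge)}\Vert_2\le C\rho^{1/3}\fl^{1/2}$ holds just like the bound for $\hat u_{(\edge)}$ quoted in Lemma~\ref{lem:nonbosonizable}.
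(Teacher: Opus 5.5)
Your proposal is correct and follows the paper's proof. It restricts $B^\sharp(k)$ to $b^\sharp_{(0)}(k)$ via \eqref{eq:ignore_outside_patches}, rewrites $\QQQ_{\B}^{\cR}$ through $b(k)=\sum_\alpha \tilde n_{\alpha,k} b_\alpha(k)$ and $b_\alpha=\hat\chi_{>,\alpha}b_\alpha$, and applies Cauchy--Schwarz in $k$, pairing the edge bound $C\fl\rho^{2/3}\langle\xi,\cN\xi\rangle$ against Lemma~\ref{lem:bbstar_bound} and Proposition~\ref{prop:improved_excitation}; your two-term edge split instead of the paper's $r_1+r_2+r_3$ makes no difference.

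The one thing to tighten is that the $b^*$-bounds are not optional ("if ever needed"). For the pairing terms, after commuting the pair operators, the partner of $(b_{(0)}-b)(\pm k)\xi$ is $b^*_{(0)}(\mp k)\xi$ or $b^*(\mp k)\xi$. This is exactly why the paper's second Cauchy--Schwarz factor contains $\Vert b^*(k)\xi\Vert^2$ and $\Vert b^*_{(0)}(k)\xi\Vert^2$.
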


\begin{proof}
Recall the definition~\eqref{eq:RHNR} of $ \QQQ_{\B} $ with $ B^*(k) $ given in~\eqref{eq:bstar_full}. Using~\eqref{eq:ignore_outside_patches}, we may restrict all momenta to patches, that is, replacing $ B^\sharp(k) $ by $ b^\sharp_{(0)}(k) $, compare~\eqref{eq:bstar_full} and~\eqref{eq:b0b}:
\begin{equation}
	\langle \xi, \QQQ_{\B} \xi \rangle
	= \frac{1}{2 \rho L^3} \sum_{k \in \Lambda_L} \hat{V}(k)
	\langle \xi, \Big( 2 b^*_{(0)}(k) b_{(0)}(k)
		+ b^*_{(0)}(k) b^*_{(0)}(-k)
		+ b_{(0)}(-k) b_{(0)}(k)\Big) \xi \rangle \;.
\end{equation}
By comparison, with $ b^\sharp_\alpha(k) = \hat\chi_{>,\alpha}(k) b^\sharp_\alpha(k) $ according to~\eqref{eq:bstar_chi_insertion} and $ \sum_{\alpha=1}^M \tilde{n}_{\alpha,k} b_\alpha^\sharp(k) = b^\sharp(k) $ according to~\eqref{eq:b0b}, we may write $ \QQQ_{\B}^{\cR} $ from~\eqref{eq:QQQBcR} as
\begin{equation}
	\QQQ_{\B}^{\cR}
	= \frac{1}{2 \rho L^3} \sum_{k \in \Lambda_L} \hat{V}(k)
		\Big( 2 b^*(k) b(k)
		+ b^*(k) b^*(-k)
		+ b(-k) b(k)\Big) \;.
\end{equation}
So, introducing
\begin{equation} \label{eq:r}
	r^*(k) 
	:= b^*_{(0)}(k) - b^*(k)  \;,
\end{equation}
we obtain
\begin{equation}
\begin{aligned}
	\langle \xi, (\QQQ_{\B} - \QQQ_{\B}^{\cR}) \xi \rangle
	& = \frac{1}{2 \rho L^3} \sum_{k \in \Lambda_L} \hat{V}(k)
		\langle \xi, \Big( 2 r^*(k) b(k) + 2 b_{(0)}^*(k) r(k)\\
		&\quad + r^*(k) b^*(-k) + b_{(0)}^*(k) r^*(-k)
		+ r(-k) b(k) + b_{(0)}(-k) r(k) \Big) \xi \rangle \;.
\end{aligned}
\end{equation}
With the Cauchy--Schwarz inequality, and replacing $ -k \mapsto k $, we then get
\begin{equation} \label{eq:QB_estimate_CS}
\begin{aligned}
	&|\langle \xi, (\QQQ_{\B} - \QQQ_{\B}^{\cR}) \xi \rangle| \\
	&\le \frac{C}{\rho}
		\Bigg( \frac{1}{L^3} \sum_{k \in \Lambda_L}
		\hat{V}(k) \Vert r(k) \xi \Vert^2 \Bigg)^{\frac 12} \\
	&\quad \times \Bigg( \frac{1}{L^3} \sum_{k \in \Lambda_L} \hat{V}(k)
		\big( \Vert b(k) \xi \Vert^2
		+ \Vert b^*(k) \xi \Vert^2
		+ \Vert b_{(0)}(k) \xi \Vert^2
		+ \Vert b_{(0)}^*(k) \xi \Vert^2
		+ \Vert r(k) \xi \Vert^2 \big) \Bigg)^{\frac 12} \;.
\end{aligned}
\end{equation}
To bound the first bracket, we observe that with~\eqref{eq:b0b} and~\eqref{eq:uv_edge}, we can write $ r(k) =r_1(k) + r_2(k) + r_3(k) $, where
\begin{equation}
\begin{aligned}
	 r_1(k)
	 &:= \sum_{p \in \Lambda_L} \hat{u}_{(\edge)}(p) \hat{v}(p-k) \hat{a}_{-p+k} \hat{a}_p \;, \\
	 r_2(k)
	 &:= \sum_{p \in \Lambda_L} \hat{u}(p) \hat{v}_{(\edge)}(p-k) \hat{a}_{-p+k} \hat{a}_p \;, \\
	 r_3(k)
	 &:= \sum_{p \in \Lambda_L} \hat{u}_{(\edge)}(p) \hat{v}_{(\edge)}(p-k) \hat{a}_{-p+k} \hat{a}_p \;.
\end{aligned}
\end{equation}
Correspondingly, $ \Vert r(k) \xi \Vert^2 \le 3 \big( \Vert r_1(k) \xi \Vert^2 + \Vert r_2(k) \xi \Vert^2 + \Vert r_3(k) \xi \Vert^2 \big) $. We then bound in position space, as in the proof of Lemma~\ref{lem:bbstar_bound} and using $ \Vert u_{(\edge),x} \Vert_2^2 = \Vert \hat{u}_{(\edge)} \Vert_2^2 \le C \fl \rho^{\frac 23} $:
\begin{equation}
\begin{aligned}
	&\frac{1}{L^3} \sum_{k \in \Lambda_L}
		\hat{V}(k) \Vert r_1(k) \xi \Vert^2 
	= \int \d x \d y \; V(x-y) \langle a(v_x) a(u_{(\edge),x}) \xi, a(v_y) a(u_{(\edge),y}) \xi \rangle \\
	&\le 	\int \d x \d y |V(x-y)|
		\Vert u_{(\edge),x} \Vert_2 \Vert u_{(\edge),y} \Vert_2 \big( \Vert a(v_x) \xi \Vert^2 + \Vert a(v_y) \xi \Vert^2 \big) 
	\le C \fl \rho^{\frac 23} \langle \xi, \cN \xi \rangle \;,
\end{aligned}
\end{equation}
where we used $ \Vert V \Vert_1 \le C $. The bounds with $ r_2(k) $  and $ r_3(k) $ are analogous, so the first bracket in~\eqref{eq:QB_estimate_CS} is bounded as
\[
	\frac{1}{L^3} \sum_{k \in \Lambda_L}
		\hat{V}(k) \Vert r(k) \xi \Vert^2
	\le C \fl \rho^{\frac 23} \langle \xi, \cN \xi \rangle \;.
\]
Within the second bracket, the sums over $ \Vert b(k) \xi \Vert^2 $, $ \Vert b^*(k) \xi \Vert^2 $, $ \Vert b_{(0)}(k) \xi \Vert^2 $, and $ \Vert b_{(0)}^*(k) \xi \Vert^2 $ are controlled by Lemma~\ref{lem:bbstar_bound}, resulting in
\begin{equation}
	|\langle \xi, (\QQQ_{\B} - \QQQ_{\B}^{\cR}) \xi \rangle|
	\le C \rho^{-1} \fl^{\frac 12} \rho^{\frac 23} \langle \xi, (\cN + L^3) \xi \rangle
	= C \fl^{\frac 12} \hbar \langle \xi, (\cN + L^3) \xi \rangle \;.
\end{equation}
Bounding $ \langle \xi, \cN \xi \rangle \le C L^3 $ by Proposition~\ref{prop:improved_excitation} then concludes the proof.
\end{proof}

\subsection{Linearization of the Kinetic Energy}

In this section, we bound the difference $ \HHH_0 - \DDD_B $, compare~\eqref{eq:RHNR} and~\eqref{eq:DB}. This operator difference is generally large, but small when taking commutators with bosonized operators.\\

\begin{lemma}[Bosonization of Kinetic Energy] \label{lem:H0DB_bound}
Within the trial state $ \xi = T \Omega $, see~\eqref{eq:T}, we have
\begin{equation} \label{eq:H0DB_bound}
	|\langle \xi, (\HHH_0 - \DDD_{\B}) \xi \rangle |
	\le C \fl^{\frac 12} \hbar L^3 \;.
\end{equation}
\end{lemma}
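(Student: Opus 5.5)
We will not compare $\HHH_0$ and $\DDD_{\B}$ as operators, since their difference is large. Instead we compare them through Duhamel expansions in the trial state. Because $\langle \Omega, \HHH_0 \Omega\rangle = \langle \Omega,\DDD_{\B}\Omega\rangle = 0$, we write
\[
	\langle \xi, (\HHH_0 - \DDD_{\B}) \xi \rangle
	= \int_0^1 \d \lambda \, \langle \xi_\lambda, [S, \HHH_0 - \DDD_{\B}] \xi_\lambda \rangle \;.
\]
The goal is to show that the commutator $[S,\HHH_0 - \DDD_{\B}]$ is of the form $Q_2(Y)$ plus small errors. Here $Y$ is a matrix function with $\Vert \hat\varphi_{Y} \Vert$ small, of order $\fl^{1/2}$ relative to $\hbar$.

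\textbf{Step 1: compute $[\HHH_0, b_\alpha(k)]$ exactly.} From \eqref{eq:bstar} and $e(p)+e(p-k) = \hbar^2(|p|^2 - |p-k|^2) = \hbar^2(2p\cdot k - |k|^2)$ for $p\in\BF^c$, $p-k\in\BF$, we get
\[
	[\HHH_0, b^*_\alpha(k)] = \frac{\hbar^2}{\tilde n_{\alpha,k}} \sum_{p} (2p\cdot k - |k|^2)\, \hat u^\alpha(p)\hat v^\alpha(p-k)\, \hat a_p^* \hat a_{-p+k}^* \;.
\]
On the support we have $p = \omega_\alpha + \cO(\rho^{1/3}M^{-1/2})$. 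Hence
\[
	2p\cdot k - |k|^2 = 2k_\F\, k\cdot\hat\omega_\alpha + \delta_\alpha(p,k), \qquad |\delta_\alpha| \le C\rho^{1/3} M^{-1/2}\ \text{(times } |k| \le 2R\text{)} \;.
\]

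\textbf{Step 2: compare with $\DDD_{\B}$.} By the approximate CCR (Lemma~\ref{lem:approxCCR}), $[\DDD_{\B}, b^*_\alpha(k)] = 2\hbar^2 k_\F \tilde\lambda_{\alpha,k} I(k)_{\alpha\alpha} b^*_\alpha(k)$ plus a term involving $\hat\varepsilon_\alpha$. The commutator $[S,\HHH_0-\DDD_{\B}]$ then splits into three pieces:
\begin{itemize}
\item[(a)] A $Q_2$-type term with symbol $\hbar^2 k_\F\big(K(k)\,(k\cdot\hat\omega - \tilde\lambda I)\big)$. The factor $k\cdot\hat\omega_\alpha - \tilde\lambda_{\alpha,k}I(k)_{\alpha\alpha}$ vanishes unless $|k\cdot\hat\omega_\alpha| \lesssim \fl$, that is, unless the patch is near the equator.
\item[(b)] A term with the curvature correction $\delta_\alpha$, which is of size $\hbar\rho^{1/3}M^{-1/2}\hbar$ relative to $\hbar^2 k_\F$.
\item[(c)] Terms quartic in fermions from $\hat\varepsilon_\alpha$. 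These are controlled exactly as the errors $\cE_{b\cE b}$ etc.\ in Lemma~\ref{lem:bosonizationerror}, using Prop.~\ref{prop:gronwall} and the localized bounds. They give $\hbar\, C\fl^{-C}M^{1/2}e^{C\fl^{-17}M^{1/2}}\rho^{-1/3}L^3 \ll \fl^{1/2}\hbar L^3$.
\end{itemize}

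\textbf{Step 3: estimate pieces (a) and (b).} For (b) we do not use regularity. We write the term in momentum space and use Cauchy--Schwarz as in Lemma~\ref{lem:bbstar_bound} with $|K(k)_{\alpha\beta}| \le C M^{-1}$. This gives a bound $C\hbar M^{-1/2}\fl^{-C}\langle\xi_\lambda,(\cN+L^3)\xi_\lambda\rangle$. Proposition~\ref{prop:improved_excitation} then yields $C\hbar M^{-1/2}\fl^{-C}L^3 \le C\fl^{1/2}\hbar L^3$.

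For (a) we apply Proposition~\ref{prop:evaluate_Q1Q2} to $X = \hbar^{-1}$ times the symbol in (a). Its leading trace term is
\[
	\sum_k \tr\big(I X (IK)^n\big) \;.
\]
Because of the cutoff, only the $\cO(\fl M)$ patches in the equatorial band $|k\cdot\hat\omega_\alpha| \le 2\fl$ contribute. Using $|K_{\alpha\beta}| \le CM^{-1}$ together with the trace estimates (Proposition~\ref{prop:trace_estimates}), the trace is bounded by $C^n\fl\, L^3$. Summing over $n$ gives $C\fl\hbar L^3$, and the error term of Proposition~\ref{prop:evaluate_Q1Q2} is $C_\varepsilon D_X\rho^{-1/3+\varepsilon}\hbar L^3$.

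The main obstacle is verifying the regularity hypothesis \eqref{eq:regularityestimate_Q1Q2} for the iterates $\tilde X^{(n)}$ built from this new $X$. The plan is to reuse Proposition~\ref{prop:regularity}: $k\cdot\hat\omega_\alpha$, $\tilde\lambda_{\alpha,k}$ and $I(k)$ have derivatives bounded by powers of $\fl^{-1}$, so $D_X \le C\fl^{-C}$, which is sufficient. If the band estimate only gives $\fl^{1/2}$ instead of $\fl$, for instance through the Hilbert--Schmidt route via Cauchy--Schwarz on the transition band, this still matches the claimed bound.
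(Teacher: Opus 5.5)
Your three pieces coincide with the paper's decomposition. Piece (c) is $\hbar^2 k_{\F}\,\cE_{Q_1}(h)$ from Lemma~\ref{lem:bosonizationerror}, (b) is the paper's $\I_k+\I_{kp}$, and (a) is $\I_Y$.

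\textbf{The gap is in (a).} The mechanism you give there is wrong; in fact it is reversed.
\begin{itemize}
\item $\hat u^\alpha(p)\hat v^\alpha(p-k)\neq 0$ forces $|p|-|p-k|\ge 2\fl$, hence $k\cdot\hat\omega_\alpha\ge\fl$ for large $\rho$; the paper's proof uses exactly this. So $b_\alpha(k)$ and $I(k)_{\alpha\alpha}$ both vanish identically when $k\cdot\hat\omega_\alpha<\fl$, and the part of the band with $k\cdot\hat\omega_\alpha<\fl$ contributes nothing.
\item Where $b_\alpha(k)\neq 0$, one has $\tilde\lambda_{\alpha,k}=k\cdot\hat\omega_\alpha$. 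Your factor then equals $\tilde\lambda_{\alpha,k}\bigl(1-I(k)_{\alpha\alpha}\bigr)$, and this is not zero. The reason is that $n_{\alpha,k}^2\neq\tilde n_{\alpha,k}^2$, because of the $\fl$-thick transition layers of $\hat u^\alpha,\hat v^\alpha$ and the patch geometry.
\item So essentially all of (a) comes from the bulk of the Fermi surface, exactly where you assert it vanishes. Your argument, as written, never estimates it.
\end{itemize}
The missing input is Lemma~\ref{lem:approxID}: $|I(k)_{\alpha\alpha}-1|\le C\fl\,\tilde\lambda_{\alpha,k}^{-1}$ for $k\cdot\hat\omega_\alpha\ge\fl$. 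This gives $|\tilde\lambda_{\alpha,k}(1-I(k)_{\alpha\alpha})|\le C\fl$ uniformly. Rewrite $Q_2$ of your symbol as $Q_2\bigl(K(1-I)h+h(1-I)K\bigr)$. This supplies the symmetry required by Proposition~\ref{prop:evaluate_Q1Q2}, and Proposition~\ref{prop:regularity} supplies the regularity it needs. Then \eqref{eq:trace_estimates_X} yields $C\fl|\log\fl|\hbar L^3$. Alternatively, $|K(k)_{\alpha\beta}|\le CM^{-1}$, $|I(k)_{\alpha\alpha}|\le C$ and the uniform $C\fl$ bound give $C^n\fl L^3$ per trace. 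Your final order is right, but not for the reason you give.

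\textbf{Piece (b) takes a genuinely different route.} The paper bounds $\I_k$ by Lemma~\ref{lem:Q2_bound}, using the regularity of $|k|^2K$. It bounds $\I_{kp}$ by rerunning that proof with $u^\alpha_x$ replaced by $u^\alpha_{(\err),j,x}$ on the side eliminated through $\sum_\alpha(u^\alpha)^2\le 1$; this gains $M^{-1/2}$.

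A momentum-space Cauchy--Schwarz using only pointwise bounds on $K$ can also work. Plancherel in $k$ gives $\sum_\beta\sum_{k\in\Lambda_L}\Vert b_\beta(k)\xi\Vert^2\le C\fl^{-1}\langle\xi,\cN\xi\rangle$, with no $L^3$ loss. However, you must keep the $\alpha$-sum inside the creation-side norm, so that the commutator term is weighted by $\sum_{\alpha,\beta}|K(k)_{\alpha\beta}|^2\le C$.

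If instead you use $|K(k)_{\alpha\beta}|\le CM^{-1}$ patch by patch, the argument fails. Let $c_\alpha(k)$ denote the $\delta_\alpha$-weighted pair operator. The c-number part of $\sum_\alpha\Vert c_\alpha^*(k)\xi\Vert^2$ is then of order $\rho^{2/3}$ per $k$, instead of $\rho^{2/3}M^{-1}$. That destroys the $M^{-1/2}$ gain and leaves only $\hbar\fl^{-1/2}L^3$.

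Note also that Lemma~\ref{lem:bbstar_bound} rests on $\Vert V\Vert_1<\infty$ for a scalar weight. It does not transfer verbatim to the matrix $K(k)$.
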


\begin{proof}
Using the fundamental theorem of calculus and $ \xi_\lambda := e^{-\lambda S} \Omega $,
\begin{equation}
\begin{aligned}
	&|\langle \xi, (\HHH_0 - \DDD_{\B}) \xi \rangle|
	= |\langle \Omega, e^S (\HHH_0 - \DDD_{\B}) e^{-S} \Omega \rangle| \\
	&\le |\langle \Omega, (\HHH_0 - \DDD_{\B}) \Omega \rangle|
		+ \int_0^1 \d \lambda |\partial_\lambda \langle \Omega, e^{\lambda S} (\HHH_0 - \DDD_{\B}) e^{-\lambda S} \Omega \rangle| \\
	&= \int_0^1 \d \lambda |\langle \Omega, e^{\lambda S} [S, \HHH_0 - \DDD_{\B}] e^{-\lambda S} \Omega \rangle|
	\le \sup_{\lambda \in [0,1]} |\langle \xi_\lambda, [S,\HHH_0 - \DDD_{\B}] \xi_\lambda \rangle| \;,
\end{aligned}
\end{equation}
where we used that $ \hat{a}_p \Omega = 0 $ implies $ (\HHH_0 - \DDD_{\B}) \Omega = 0 $. We will compare both $ [S,\HHH_0] $ and $ [S,\DDD_{\B}] $ to their bosonized approximation: First, recalling the definition of $ h(k) $~\eqref{eq:hP} and $ Q_1(A) $~\eqref{eq:Q1Q2}, we write $ \DDD_{\B} $~\eqref{eq:DB} as
\begin{equation}
	\DDD_{\B} = \hbar^2 k_{\F} Q_1(h) \;.
\end{equation}
Then, by means of Lemma~\ref{lem:bosonizationerror}, with $ Y := K I h + h I K $:
\begin{equation}
	[S,\DDD_{\B}]
	= \hbar^2 k_{\F} Q_2(Y)
		+ \hbar^2 k_{\F} \cE_{Q_1}(h) \;.
\end{equation}
Here, $ \hbar^2 k_{\F} Q_2(Y) $ is the bosonized approximation. The bosonization error $ \cE_{Q_1}(h) $ can be bounded by Lemma~\ref{lem:bosonizationerror},
using that $ X = h $ satisfies the regularity estimate~\eqref{eq:regularityestimate_2} with $ D_X = C M $ by means of Proposition~\ref{prop:regularity}:
\begin{equation}
\begin{aligned}
	|\langle \xi_\lambda, ([S,\DDD_{\B}] - \hbar^2 k_{\F} Q_2(Y)) \xi_\lambda \rangle|
	&\le \hbar^2 k_{\F} |\langle \xi_\lambda, \cE_{Q_1}(h) \xi_\lambda \rangle|
	\\ & \le C \hbar M^{\frac 32} \fl^{-23} e^{C \fl^{-17} M^{\frac 12}} \rho^{-\frac 13} L^3
	\le C_\varepsilon \rho^{-\frac 13 + \varepsilon} \hbar L^3 \;,
\end{aligned}
\end{equation}
where we used in the last line that $ \fl^{-23} M^{\frac 12} \ll M \sim \log(\rho^\eta) $ with $ \eta $ arbitrarily small. The other commutator can be computed with~\eqref{eq:S} as
\begin{equation}
\begin{aligned}
	[S,\HHH_0]
	&= \sum_{k \in \Lambda_L} \sum_{\alpha, \beta = 1}^M
		\frac{K(k)_{\alpha, \bar\beta}}{\tilde{n}_{\alpha, k} \tilde{n}_{\bar\beta, k}}
		\sum_{p, q \in \Lambda_L} \frac {\hbar^2}{2} (|p|^2 - |p-k|^2 + |q|^2 - |q+k|^2) \\
	&\quad \times \hat{u}^\alpha(p) \hat{v}^\alpha(p-k) \hat{u}^\beta(q) \hat{v}^\beta(q+k) 
		\hat{a}_{-p+k} \hat{a}_p \hat{a}_{-q-k} \hat{a}_q + \mathrm{h.c.}
\end{aligned}
\end{equation}
By comparison, the bosonized approximation with $ I $ replaced by $ 1 $ would be
\begin{equation}
\begin{aligned}
	\hbar^2 k_{\F} Q_2(Kh+hK)
	&= \sum_{k \in \Lambda_L} \sum_{\alpha, \beta = 1}^M
		\frac{K(k)_{\alpha, \bar\beta}}{\tilde{n}_{\alpha, k} \tilde{n}_{\bar\beta, k}}
		\sum_{p, q \in \Lambda_L} \hbar^2 k_{\F} (\tilde\lambda_{\alpha,k} + \tilde\lambda_{\bar\beta,k}) \\
	&\quad \times \hat{u}^\alpha(p) \hat{v}^\alpha(p-k) \hat{u}^\beta(q) \hat{v}^\beta(q+k) 
		\hat{a}_{-p+k} \hat{a}_p \hat{a}_{-q-k} \hat{a}_q + \mathrm{h.c.}
\end{aligned}
\end{equation}
Recalling $ \hat{u}^\alpha, \hat{v}^\alpha $~\eqref{eq:uhat_vhat}, we only obtain nonzero contributions for $ |p| - |p-k| \ge 2 \fl $. Since $ |p| - |p-k| = k \cdot \hat{p} + \cO(\rho^{-\frac 13}) = k \cdot \hat\omega_\alpha + \cO(M^{-\frac 12}) $, we get $ k \cdot \hat\omega_\alpha \ge \fl $ for large enough $ \rho $, so $ \tilde\lambda_{\alpha, k} = k \cdot \hat\omega_\alpha $, and likewise $ \tilde\lambda_{\bar\beta, k} = - k \cdot \hat\omega_\beta $.
We then write the difference of the two kinetic energy functions as follows:
\begin{equation}
	\frac{|p|^2 - |p-k|^2}{2} - k \cdot (k_{\F} \hat\omega_\alpha)
	= k \cdot p - \frac{|k|^2}{2} - k \cdot (k_{\F} \hat\omega_\alpha)
	= k \cdot (p - k_{\F} \hat\omega_\alpha)
		- \frac{|k|^2}{2} \;.
\end{equation}
Then, denoting $ K_2(k) := |k|^2 K(k) $, and for $ j \in \{1,2,3\} $, $ K_{1,j}(k) := k_j K(k) $ and $ \hat{u}^\alpha_{(\err),j}(p) := \hat{u}^\alpha(p) \hbar (p_j - k_{\F} \hat\omega_{\alpha,j}) $, and writing in position space as in Section~\ref{sec:positionspace}:
\begin{equation}
\begin{aligned}
	&|\langle \xi_\lambda, ([S,\HHH_0] - \hbar^2 k_{\F} Q_2(Y)) \xi_\lambda \rangle|
	\le 2 \I_k + 2 \I_{kp} + \I_Y \;, \\
	\I_k
	&:= \hbar^2 \Bigg\vert \frac{1}{\rho^{\frac 23}} \sum_{\alpha,\beta = 1}^M \int \d x \d y \; \varphi_{K_2}(y-x)_{\alpha,\bar\beta} \langle \xi_\lambda, a(v_x^{\bar\alpha}) a(u_x^\alpha) a(v_y^{\bar\beta}) a(u_y^\beta) \xi_\lambda \rangle \Bigg\vert \;, \\
	\I_{kp}
	&:= \hbar \sum_{j=1}^3 \Bigg\vert \frac{1}{\rho^{\frac 23}} \sum_{\alpha,\beta = 1}^M \int \d x \d y \; \varphi_{K_{1,j}}(y-x)_{\alpha,\bar\beta} \langle \xi_\lambda, a(v_x^{\bar\alpha}) a(u_{(\err),j,x}^\alpha) a(v_y^{\bar\beta}) a(u_y^\beta) \xi_\lambda \rangle + \mathrm{c.c.} \Bigg\vert \;, \\
		\I_Y
	&:= \hbar^2 k_{\F} |\langle \xi_\lambda, Q_2(Kh+hK-Y) \xi_\lambda \rangle | \;.
\end{aligned}
\end{equation}
Here, $ K_2(k) $ and $ K_{1,j}(k) $ satisfy the regularity estimate~\eqref{eq:regularityestimate_1} with $ D_X = C \fl^{-4} |\log\fl| $ by Proposition~\ref{prop:regularity}. Thus, we can bound $ \I_k $ by Lemma~\ref{lem:Q2_bound}, using $ \langle \xi_\lambda, \cN \xi_\lambda \rangle \le C L^3 $ by means of Proposition~\ref{prop:improved_excitation}:
\begin{equation}
	\I_k
	\le C \hbar^2 \fl^{-6} |\log\fl| L^3
	\le C_\varepsilon \rho^{-\frac 13 + \varepsilon} \hbar L^3 \;.
\end{equation}
The bound of $ \I_{kp} $ is analogous up to the following modification in the proof of Lemma~\ref{lem:Q2_bound}: We replace $ u_x^\alpha $ by $ u_{(\err),j,x}^\alpha $, which is supported in the same patch but satisfies the improved bound $ \Vert \hat{u}_{(\err),j}^\alpha \Vert_\infty \le C M^{-\frac 12} $. This is since $ (p - k_{\F} \hat\omega_\alpha) $ is the distance to the middle point of patch $ B_\alpha $, which is bounded by the patch diameter $ \le C \rho^{\frac 13} M^{-\frac 12} $. Exchanging the role of the variables $ x $ and $ y $ in the proof of Lemma~\ref{lem:Q2_bound}, we may eliminate $ u_x^\alpha $ through the bound $ 0 \le \sum_{\alpha = 1}^M (u^\alpha)^2 \le 1 $. The replacement of $ u_x^\alpha $ by $ u_{(\err),j,x}^\alpha $ improves this bound to $ 0 \le \sum_{\alpha = 1}^M (u_{(\err),j}^\alpha)^2 \le C M^{-1} $, yielding a final bound
\begin{equation}
	\I_{kp}
	\le C \hbar \fl^{-6} |\log\fl| M^{-\frac 12} L^3
	\le C \fl^{\frac 12} \hbar L^3 \;.
\end{equation}
Finally, to bound $ \I_Y $, we write
\begin{equation}
	\I_Y
	= \hbar^2 k_{\F} |\langle \xi_\lambda, Q_2(K(1-I)h+h(1-I)K) \xi_\lambda \rangle |
\end{equation}
and apply Proposition~\ref{prop:evaluate_Q1Q2} with $ X = K(1-I)h+h(1-I)K $, where Proposition~\ref{prop:regularity} implies the regularity estimate~\eqref{eq:regularityestimate_Q1Q2} with $ D_X = C \fl^{-4} |\log \fl| $:
\begin{equation}
\begin{aligned}
	\I_Y
	&\le \hbar^2 k_\F \bigg\lvert \sum_{\substack{n=1\\n:\text{odd}}}^\infty \frac{(2 \lambda)^n}{n!} \sum_{k \in \Lambda_L} \tr \Big( I(k) K(k) (1-I)(k) h(k)
		(I(k) K(k))^n \Big) \bigg\rvert \\
	&\quad + \hbar^2 k_\F \bigg\lvert \sum_{\substack{n=1\\n:\text{odd}}}^\infty \frac{(2 \lambda)^n}{n!} \sum_{k \in \Lambda_L} \tr \Big( I(k) h(k) (1-I)(k) K(k)
		(I(k) K(k))^n \Big) \bigg\rvert
	+ C_\varepsilon D_X \rho^{-\frac 13 + \varepsilon} \hbar L^3 \\
	&\le C \fl |\log \fl| \hbar L^3 \;,
\end{aligned}
\end{equation}
where in the last step we estimated the trace by Proposition~\ref{prop:trace_estimates}.
\end{proof}

\section{Proof of Theorem~\ref{thm:main}}
\label{sec:proofmain}

\begin{proof}

Recall the trial state~\eqref{eq:trialstate}
\begin{equation*}
	\psi := R T \Omega \;,
\end{equation*}
with $ \Omega $ being the Fock vacuum, $ R $ the particle--hole transformation \eqref{eq:R} and $ T $ the pseudo-bosonic Bogoliubov transformation \eqref{eq:T}. By the variational principle, recalling~\eqref{eq:RHNR}, with $ \xi = T \Omega $, we get
\begin{equation}
\begin{aligned}
	E_{\GS}
	& \le \langle \xi, R^* H_N R \xi \rangle \\
	&\le E_{\FB}
		+ \langle \xi, (\DDD_{\B} + \QQQ_{\B}^{\cR}) \xi \rangle
		+ |\langle \xi, (\QQQ_{\B} - \QQQ_{\B}^{\cR}) \xi \rangle| \\
		&\quad
		+ |\langle \xi, (\HHH_0 - \DDD_{\B}) \xi \rangle|
		+ |\langle \xi, (\cE_1 + \cE_2 + \XXX) \xi \rangle| \;,
\end{aligned}
\end{equation}
with $ \QQQ_{\B}^{\cR} $ and $ \DDD_{\B} $ given in~\eqref{eq:QQQBcR} and~\eqref{eq:DB}, respectively. Estimating $ (\cE_1 + \cE_2 + \XXX) $ by Lemma~\ref{lem:nonbosonizable}, $ (\QQQ_{\B} - \QQQ_{\B}^{\cR}) $ by Lemma~\ref{lem:QBQBR_bound} and $ (\HHH_0 - \DDD_{\B}) $ by Lemma~\ref{lem:H0DB_bound}, we get
\begin{equation}
	E_{\GS}
	\le  E_{\FB}
		+ \langle \xi, (\DDD_{\B} + \QQQ_{\B}^{\cR}) \xi \rangle
		+ C \fl^{\frac 12} \hbar L^3 \;.
\end{equation}
Next, we note that in the notation~\eqref{eq:Q1Q2}, the effective Hamiltonian $ \DDD_{\B} + \QQQ_{\B}^{\cR} $ from~\eqref{eq:Bosonizedoperator} can be written as
\begin{equation}
	\DDD_{\B} + \QQQ_{\B}^{\cR}
	= \hbar^2 k_{\F} (Q_1(h+P) + Q_2(P)) \;.
\end{equation}
By definition, $ X \in \{h,P\} $ satisfies $ X(k)_{\alpha,\beta} = X(k)_{\beta,\alpha} = X(-k)_{\bar\alpha,\bar\beta} $ and is supported for $ |k| \le 2 R $, and by Proposition~\ref{prop:regularity}, $ X \in \{h,P\} $ satisfies the regularity estimate~\eqref{eq:regularityestimate_Q1Q2} with $ D_X = M \fl^{-4} |\log \fl| $. So Proposition~\ref{prop:evaluate_Q1Q2} applies, and with $ I(0) = 0 $ according to~\eqref{eq:cEalphahat}, we obtain
\begin{equation}
\begin{aligned}
	\langle \xi, (\DDD_{\B} + \QQQ_{\B}^{\cR}) \xi \rangle
	&\le \hbar^2 k_{\F} \sum_{\substack{n=2 \\ n : \textnormal{even}}}^\infty \frac{2^n}{n!} \sum_{k \in \Lambda_L \setminus \{0\}} \tr \Big( I(k) (h+P)(k) \big( I(k) K(k) \big)^n \Big) \\
	&\quad + \hbar^2 k_{\F} \sum_{\substack{n=1 \\ n : \textnormal{odd}}}^\infty \frac{2^n}{n!} \sum_{k \in \Lambda_L \setminus \{0\}} \tr \Big( I(k) P(k) \big( I(k) K(k) \big)^n \Big)
		+ C_\varepsilon \rho^{-\frac 13 + \varepsilon} \hbar L^3
\end{aligned}
\end{equation}
Next, we replace the approximate identities $ I(k) $ by true identities using Proposition~\ref{prop:trace_estimates}:
\begin{equation}
\begin{aligned}
	&\langle \xi, (\DDD_{\B} + \QQQ_{\B}^{\cR}) \xi \rangle \\
	&\le \hbar^2 k_{\F} \sum_{\substack{n=2 \\ n : \textnormal{even}}}^\infty \frac{2^n}{n!} \sum_{k \in \Lambda_L \setminus \{0\}} \tr \big( (h+P)(k) K(k)^n \big) \\
	& \quad + \hbar^2 k_{\F} \sum_{\substack{n=1 \\ n : \textnormal{odd}}}^\infty \frac{2^n}{n!} \sum_{k \in \Lambda_L \setminus \{0\}} \tr \big( P(k)K(k)^n \big)
		+ C \fl |\log \fl| \hbar L^3 \\
	&= \hbar^2 k_{\F} \sum_{k \in \Lambda_L \setminus \{0\}} \tr \big( (h+P)(k) (\cosh(2K(k)) - 1)
		+ P(k) \sinh(2K(k)) \big)
		+ C \fl |\log \fl| \hbar L^3 \;.
\end{aligned}
\end{equation}
To further evaluate the trace, we introduce the matrix-valued function $ E: \Lambda_L \to \RRR^{M \times M} $, $ E := e^K (h+2P) e^K $, where with~\cite[Proposition~3.5]{CHN21}, using $ A = h+P $ and $ B = P $, we conclude
\begin{equation} \label{eq:E}
	E = e^K (h+2P) e^K = e^{-K} h e^{-K} \;,
\end{equation}
and find, with cyclicity of the trace and omitting the $ (k) $ for convenience:
\begin{equation*}
\begin{aligned}
	&\tr \big( (h+P) (\cosh(2K) - 1) + P \sinh(2K) \big)
	= \frac 12 \tr \big( h (e^{2K} + e^{-2K}) + 2P e^{2K} \big)
		- \tr (h+P) \\
	&= \frac 12 \tr \big( e^K (h+2P) e^K + e^{-K} h e^{-K} \big)
		- \tr (h+P)
	= \tr (E - h - P) \;.
\end{aligned}
\end{equation*}
So up to this point, we have proven
\begin{equation} \label{eq:EGS_evaluation_intermediate}
	E_{\GS}
	\le E_{\FB} + \hbar^2 k_{\F} \sum_{k \in \Lambda_L \setminus \{0\}} \tr \big( E(k) - h(k) - P(k) \big)
	+ C \fl^{\frac 12} \hbar L^3 \;.
\end{equation}
To finish the proof, it remains to extract the RPA energy from the trace. First, by the same algebraic manipulations as in~\cite[Prop.~7.6]{CHN21}, we obtain for $ k \in \supp(\hat V) $
\begin{equation}
	\tr \big( E(k) - h(k) - P(k) \big)
	= \frac{1}{\pi} \int_0^\infty \log \Bigg( 1 + 2 \sum_{\alpha=1}^M \frac{\tilde\lambda_{\alpha,k}} {\tilde\lambda_{\alpha,k}^2 + \mu^2} \tilde{v}_{\alpha,k}^2 \Bigg) \d \mu
		- \sum_{\alpha=1}^M \tilde{v}_{\alpha,k}^2 \;,
\end{equation}
and with the identity
\[
	1 = \frac{2}{\pi} \int_0^\infty \frac{\tilde\lambda_{\alpha,k}} {\tilde\lambda_{\alpha,k}^2 + \mu^2} \d \mu \;,
\]
with $ F(x) = \log(1+x) - x $, and after a change of variables $ \mu \mapsto |k| \mu $, we obtain
\begin{equation} \label{eq:trEhP_integralformula}
	\tr \big( E(k) - h(k) - P(k) \big)
	= \frac{|k|}{\pi} \int_0^\infty F(Q_k(\mu)) \d \mu \;, \quad
	Q_k(\mu)
	:= 2 \sum_{\alpha=1}^M \frac{|k|^{-2} \tilde\lambda_{\alpha,k}} {|k|^{-2} \tilde\lambda_{\alpha,k}^2 + \mu^2} \tilde{v}_{\alpha,k}^2 \;,
\end{equation}
which already agrees with the expression in~\eqref{eq:Ecorr} up to $ Q^{(0)}_k(\mu) $ being replaced by $ Q_k(\mu) $. To extract $ Q^{(0)}_k(\mu) $, we first plug in~\eqref{eq:hP} and~\eqref{eq:ntilde}, yielding
\begin{equation} \label{eq:Qmu_bis}
	Q_k(\mu)
	= \frac{4 \pi}{M} \sum_{\alpha=1}^M \frac{|k|^{-2} \tilde\lambda_{\alpha,k}^2} {|k|^{-2} \tilde\lambda_{\alpha,k}^2 + \mu^2} \frac{\hat V(k) k_{\F}}{(2 \pi)^3 \rho^{\frac 13}} \hat\chi_{>,\alpha}(k)^2 \;.
\end{equation}
The function $ \hat\chi_{>,\alpha} $ restricts the sum to patches $ \alpha $ inside $ \cI_k := \{\alpha \in \{1,\ldots,M\} : k \cdot \hat\omega_\alpha \ge -2 \fl \} $. We split this summation region into $ \cI_{k,\ge} := \{\alpha \in \{1,\ldots,M\} : k \cdot \hat\omega_\alpha \ge \fl \} $ and $ \cI_{k,<} := \{\alpha \in \{1,\ldots,M\} : -2 \fl \le k \cdot \hat\omega_\alpha < \fl \} $. Defining the angle $ \theta_{\alpha,k} \in [0,\pi] $ such that $ \cos \theta_{\alpha,k} = k \cdot \hat\omega_\alpha |k|^{-1} $, we notice that the set $ \cI_{k,<} $ comprises patches with associated angles $ \theta_{\alpha,k} \in [\pi/2 - C \fl |k|^{-1}, \pi/2 + C \fl |k|^{-1}] $ for $ C > 0 $ large enough. Since each patch covers an angle of $ M^{-\frac 12} \ll \fl $ on the sphere, we conclude
\[
	|\cI_{k,<}| \le C M |k|^{-1} \fl \;.
\]
On the other hand, for $ \alpha \in \cI_{k,\ge} $, we have $ \tilde\lambda_{\alpha,k} |k|^{-1} = \cos \theta_{\alpha,k} $ and $ \hat\chi_{>,\alpha}(k) = 1 $, so
\begin{equation}
\begin{aligned}
	&\bigg\lvert Q_k(\mu)
		- \frac{4 \pi}{M}
		\sum_{\alpha \in \cI_{k,\ge}} \frac{\cos^2 \theta_{\alpha,k}} {\cos^2 \theta_{\alpha,k} + \mu^2}
		\frac{\hat V(k) k_{\F}}{(2 \pi)^3 \rho^{\frac 13}} \bigg\rvert \\
	&\le \frac{4 \pi}{M} \sum_{\alpha \in \cI_{k,<}} \frac{|k|^{-2} \tilde\lambda_{\alpha,k}^2} {|k|^{-2} \tilde\lambda_{\alpha,k}^2 + \mu^2} \frac{\hat V(k) k_{\F}}{(2 \pi)^3 \rho^{\frac 13}} \hat\chi_{>,\alpha}(k)^2
	\le C M^{-1} |\cI_{k,<}| 
	\le C |k|^{-1} \fl \;.
\end{aligned}
\end{equation}
The above sum over $ \alpha \in \cI_{k,\ge} $ is evaluated by a Riemann sum argument as in~\cite[(5.15)--(5.18)]{BNPSS20}, using $ \cos \theta_{\alpha,k} \ge |k|^{-1} \fl $ with the result
\begin{equation}
\begin{aligned}
	\bigg\lvert \frac{4 \pi}{M}
		\sum_{\alpha \in \cI_{k,\ge}} \frac{\cos^2 \theta_{\alpha,k}} {\cos^2 \theta_{\alpha,k} + \mu^2}
		- 2 \pi \big( 1 - \mu \arctan(\mu^{-1}) \big) \bigg\rvert
	\le C \big( |k| \fl^{-1} M^{-\frac 12}
		+ |k|^{-1} \fl \big)
	\le  C |k|^{-1} \fl \;.
\end{aligned}
\end{equation}
Comparing with $ Q^{(0)}_k(\mu) $ from~\eqref{eq:Ecorr}, we obtain
\begin{equation}
\begin{aligned}
	\big\vert Q_k(\mu) - Q^{(0)}_k(\mu) \big\vert
	&\le C |k|^{-1} \fl \;.
\end{aligned}
\end{equation}
For $ x \ge 0 $ we have $ F'(x) \in [-1,0] $, so $ F $ is Lipschitz continuous with Lipschitz constant $1$, and we obtain
\begin{equation} \label{eq:FQ_diff_bound}
\begin{aligned}
	\big\vert F(Q_k(\mu)) - F(Q^{(0)}_k(\mu)) \big\vert
	&\le C |k|^{-1} \fl \;.
\end{aligned}
\end{equation}
This is a strong bound to be used for small $ \mu $.
For large $ \mu $, we need decay with respect to $ \mu $, which we obtain by estimating ~\eqref{eq:Qmu_bis} with $ \tilde\lambda_{\alpha,k} \le C $:
\begin{equation}
	|F(Q_k(\mu))|
	\le |Q_k(\mu)|
	\le C M^{-1} \sum_{\alpha \in \cI_k} \frac{|k|^{-2} \tilde\lambda_{\alpha,k}^2} {\mu^2}
	\le C |k|^{-2} \mu^{-2} \;.
\end{equation}
Furthermore, by~\eqref{eq:Ecorr} with Taylor expansion of $ \arctan(x) $ around $ x = 0 $,
\begin{equation}
	|F(Q^{(0)}_k(\mu))|
	\le |Q^{(0)}_k(\mu)|
	\le C \mu^{-2} 
	\le C |k|^{-2} \mu^{-2} \;.
\end{equation}
Therefore, splitting between ``small'' and ``large'' $ \mu $ at some scale $ \mu_* > 0 $,
\begin{equation}
\begin{aligned}
	&\int_0^\infty \big\vert F(Q_k(\mu)) - F(Q^{(0)}_k(\mu)) \big\vert \d \mu
	\le C \int_0^{\mu_*} |k|^{-1} \fl \; \d \mu
		+ \int_{\mu_*}^\infty |k|^{-2} \mu^{-2} \d \mu \\
	&\le C |k|^{-1} \fl \mu_* + C |k|^{-2} \mu_*^{-1}
	\le C |k|^{-\frac 32} \fl^{\frac 12} \;,
\end{aligned}
\end{equation}
where we optimized $ \mu_* = |k|^{-\frac 12} \fl^{-\frac 12} $. Comparing~\eqref{eq:Ecorr} and~\eqref{eq:trEhP_integralformula}, where $ \frac{(2 \pi)^3}{L^3} \sum_k $ is replaced by $ \int \d k $ through a Riemann sum argument, up to an error of relative size $ o_L(1) $, we observe
\begin{equation}
\begin{aligned}
	&\Bigg\vert \hbar^2 k_{\F} \sum_{k \in \Lambda_L \setminus \{0\}} \tr \big( E(k) - h(k) - P(k) \big)
		- e_{\RPA} L^3 \Bigg\vert \\
	&\le C \hbar \sum_{k \in \Lambda_L \setminus \{0\}} |k| 
		\int_0^\infty \big\vert F(Q_k(\mu)) - F(Q^{(0)}_k(\mu)) \big\vert \d \mu 
		+ o_L(1) \\
	&\le C \fl^{\frac 12} \hbar
		\sum_{k \in \Lambda_L \setminus \{0\} \cap \supp(\hat{V})} |k|^{-\frac 12} 
		+ o_L(1)
	\le C \fl^{\frac 12} \hbar L^3 \;,
\end{aligned}
\end{equation}
where we used that both $ F(Q_k(\mu)) $ and $ F(Q^{(0)}_k(\mu)) $ vanish if $ \hat{V}(k) = 0 $.
Plugging this into~\eqref{eq:EGS_evaluation_intermediate}, we obtain $ E_{\GS} \le E_{\FB} + e_{\RPA} L^3 + C \fl^{\frac 12} \hbar L^3 $, and dividing by $ L^3 $ concludes the proof.
\end{proof}

\appendix

\section{Regularity Estimates for Matrix-Valued Functions}
\label{sec:regularity_K}

The goal of this section is to derive estimates of derivatives on momentum space quantities, ensuring their decay in position space.
Recall that $ -\Delta_k $ is the discrete Laplacian.

\begin{proposition}[Regularity estimates]
\label{prop:regularity}
Let $ V $ satisfy Assumption \ref{as:V}. Recall the definition~\eqref{eq:varphi} of $ \hat\varphi_{X,\alpha,\beta} $ for $ X: \Lambda_L \to \RRR^{M \times M} $. Then, with $ X \in \{K, h, P\} $, see~\eqref{eq:K} and~\eqref{eq:hP}, given $ m \in \NNN_0 $ even, there exist constants $ C_{V,m} > 0 $, $ C_m > 0 $ such that for all $ \alpha,\beta \in \{1,\ldots,M\} $,
\begin{equation} \label{eq:regularity_KhP_final}
\begin{aligned}
	\Vert (1 - \Delta_k)^{\frac m2} \hat\varphi_{K,\alpha,\beta} \Vert_1
	&\le C_{V,m} \fl^{-m} |\log \fl| \;, \\
	\Vert (1 - \Delta_k)^{\frac m2} \hat\varphi_{h,\alpha,\beta} \Vert_1
	&\le C_m M \;, \\
	\Vert (1 - \Delta_k)^{\frac m2} \hat\varphi_{P,\alpha,\beta} \Vert_1
	&\le C_{V,m} \fl^{-m} \;.
\end{aligned}
\end{equation}
Furthermore, if $ K_j(k) := p_j(k) K(k) $ with $ p_j $ being a polynomial of degree $ j $, then there exists $ C_{V,m,p} > 0 $ such that
\begin{equation} \label{eq:regularity_Kj_final}
	\Vert (1 - \Delta_k)^{\frac m2} \hat\varphi_{K_j,\alpha,\beta} \Vert_1
	\le C_{V,m,p} \fl^{-m} |\log \fl| \qquad \forall \alpha,\beta \in \{1,\ldots,M\}\;.
\end{equation}
Moreover, for $ \tilde{X}^{(n)}$ defined as in~\eqref{eq:Xtilde} for $ n \in \NNN_0 $, we have
\begin{equation} \label{eq:regularity_KhP_tilde_final}
\begin{aligned}
		\Vert (1 - \Delta_k)^{\frac m2} \hat\varphi_{\tilde{X}^{(n)},\alpha,\beta} \Vert_1
	&\le (C_{V,m})^{n+1} \fl^{-m} |\log \fl| \;, \qquad
	&&X \in \{K, P, h(1-I)K, K(1-I)h \} \;, \\
	\Vert (1 - \Delta_k)^{\frac m2} \hat\varphi_{\tilde{h}^{(n)},\alpha,\beta} \Vert_1
	&\le (C_{V,m})^{n+1} \fl^{-m} |\log \fl| \qquad
	&&\mathrm{if} \; n \ge 1 \;, \\
	\Vert (1 - \Delta_k)^{\frac m2} \hat\varphi_{\tilde{h}^{(0)},\alpha,\beta} \Vert_1
	&\le C_m M\;.
\end{aligned}
\end{equation}
\end{proposition}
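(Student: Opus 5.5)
The strategy is to reduce everything to pointwise bounds on discrete derivatives of the entries of $\hat\varphi_X(k)_{\alpha,\beta} = L^3\rho^{2/3} X(k)_{\alpha,\beta}/(\tilde n_{\alpha,k}\tilde n_{\beta,k})$ on the compact set $\supp \hat\chi_R \subset B_{2R}(0)$. Since this support has volume $O(1)$, we have $\frac{1}{L^3}\sum_{k}$ of a function bounded by $B$ on it is at most $CB$. Moreover $(1-\Delta_k)^{m/2}$ expands into finitely many products of at most $m$ discrete derivatives. So it suffices to show that each discrete derivative of order $j\le m$ of $\hat\varphi_X(k)_{\alpha,\beta}$ is bounded by $C\fl^{-j}$, times $|\log\fl|$ for $K$. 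Discrete derivatives are differences of values at neighbouring lattice points, hence averages of continuum derivatives along segments of length $2\pi/L$. We therefore work with the smooth extensions in $k\in\RRR^3$ and bound continuum derivatives, using the discrete product rule \eqref{eq:productrule} only to organize the bookkeeping.

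\textbf{Building blocks.} By \eqref{eq:ntilde}, $L^3\rho^{2/3}/(\tilde n_{\alpha,k}\tilde n_{\beta,k}) = C M (\tilde\lambda_{\alpha,k}\tilde\lambda_{\beta,k})^{-1/2}$. The function $\tilde\lambda_{\alpha,k} = \fl f_>(\fl^{-1}k\cdot\hat\omega_\alpha)$ is smooth, bounded below by $\fl/2$, and its $j$-th derivative is $O(\fl^{1-j})$. Hence the $j$-th derivative of $\tilde\lambda^{-1/2}$ is $O(\fl^{-1/2-j})$. The derivatives of $\hat\chi_{>,\alpha}$ are $O(\fl^{-j})$, and those of $\hat\chi_R$ and $\hat V$ are $O(1)$.

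For $h$ this gives $\hat\varphi_h = CM\hat\chi_R\,\delta_{\alpha\beta}$, so the bound is $C_mM$.

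For $P$, note $\tilde v_{\alpha,k}^2 = C\hat V(k)M^{-1}\tilde\lambda_{\alpha,k}\hat\chi_{>,\alpha}^2$. Therefore $\hat\varphi_P = C\hat V\,\hat\chi_{>,\alpha}\hat\chi_{>,\beta}$, which is smooth with derivatives $O(\fl^{-j})$.

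\textbf{The main obstacle is $K$.} We need entrywise derivative bounds for $M\,\tilde\lambda_\alpha^{-1/2}K_{\alpha\beta}\tilde\lambda_\beta^{-1/2}$ that are uniform in $M$, despite $K$ being a matrix logarithm of size $M$. The plan is to use the explicit integral representation of \cite{CHN21,BNPSS20}. Since $h^{1/2}Ph^{1/2} = |w\rangle\langle w|$ with $w_\alpha = \tilde\lambda_\alpha^{1/2}\tilde v_\alpha$, write
\[
e^{-2K} = h^{-1/2}(h^2+2|w\rangle\langle w|)^{1/2}h^{-1/2}.
\]
Represent the square root via $A^{1/2} = \frac{2}{\pi}\int_0^\infty (1-\mu^2(A+\mu^2)^{-1})\,d\mu$, the logarithm via $\log B = \int_0^\infty[(1+t)^{-1}-(B+t)^{-1}]\,dt$, and invert $(h^2+\mu^2+2|w\rangle\langle w|)$ by Sherman--Morrison. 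This expresses everything through diagonal matrices and rank-one terms whose scalar denominators $1+2\langle w,(h^2+\mu^2)^{-1}w\rangle \ge 1$.

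With this, $K_{\alpha\beta}$ becomes an explicit integral of smooth functions of $\tilde\lambda_\alpha$, $\tilde\lambda_\beta$, $\tilde v_\alpha\tilde v_\beta$ and the scalar $\sum_\gamma \tilde v_\gamma^2 g(\tilde\lambda_\gamma,\mu)$. Since $\tilde v_\gamma^2 = O(M^{-1})$, this sum is $O(1)$ together with its derivatives, with only $\fl^{-j}$ losses. The prefactor $\tilde v_\alpha\tilde v_\beta$ cancels the $M$ in $\hat\varphi$. Each derivative costs $\fl^{-1}$ through $\tilde\lambda$ or $\hat\chi_>$. The $|\log\fl|$ arises from integrating $\mu$ or $t$ over the range between the spectral gap $\fl$ and $O(1)$, where integrands behave like $(\tilde\lambda+\mu)^{-1}$. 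I expect verifying this integrability, with the correct power of $\fl$ for every derivative order, to be the delicate part.

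\textbf{Remaining cases.} Multiplying by a polynomial $p_j(k)$ is harmless on the compact support, which gives \eqref{eq:regularity_Kj_final}.

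For $\tilde X^{(n)}$, note $I = \mathrm{diag}(n_\alpha^2/\tilde n_\alpha^2)$ is diagonal. The $\hat\varphi$ of a product $XIY$ equals $\sum_\gamma \hat\varphi_X(\cdot)_{\alpha\gamma}\,(M^{-1}I_{\gamma\gamma})\,\hat\varphi_Y(\cdot)_{\gamma\beta}$ up to constants, because $\tilde n_\gamma^2/(L^3\rho^{2/3}) = CM^{-1}\tilde\lambda_\gamma$. Derivative bounds for $I_{\gamma\gamma}$ follow from Lemma~\ref{lem:approxID} via a Riemann sum comparison of $n_{\gamma,k}^2$. The sum over $\gamma$ with weight $M^{-1}$ is then $O(1)$ per entry. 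Applying the Leibniz rule to a product of $n+1$ factors generates at most $(n+1)^m$ terms, each bounded with constant $C^{n+1}$. The $\binom{n}{n'}$ sum gives $2^n$, so the total is $(C_{V,m})^{n+1}\fl^{-m}|\log\fl|$.

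For $h(1-I)K$, the factor $(1-I)$ is $O(1)$ and is only supported in the transition region. For $\tilde h^{(n)}$ with $n\ge1$, each factor $hIK$ loses the $M$ of $\hat\varphi_h$ thanks to the $M^{-1}$ weight, which leaves the claimed bounds.
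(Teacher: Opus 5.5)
\textbf{First gap: the sup-norm reduction fails for $K$, and the $|\log\fl|$ comes from somewhere else.} Your treatment of $h$ and $P$, and of $K_j$ once $K$ is controlled, is fine. For $K$, however, you reduce to proving $\sup_k|\partial_k^{\um}\hat\varphi_K(k)_{\alpha,\beta}|\le C\fl^{-m}|\log\fl|$, and this is false already for $m=0$. The integrand in \eqref{eq:Bintegral} is nonnegative, and its denominator is at most $1+C$. Using $\int_0^\infty\mu^2(a^2+\mu^2)^{-2}\,\mathrm{d}\mu=\pi/(4a)$, you get $B(k)_{\alpha,\alpha}\ge c\,\tilde v_{\alpha,k}^2/\tilde\lambda_{\alpha,k}$, and all entries of $B$ are nonnegative. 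Hence $-K_{\alpha,\alpha}=\frac12\sum_n (B^n)_{\alpha,\alpha}/n\ge \frac12 B_{\alpha,\alpha}$, and so $|\hat\varphi_K(k)_{\alpha,\alpha}|\ge c\,\hat V(k)\hat\chi_{>,\alpha}(k)^2/\tilde\lambda_{\alpha,k}$. This is of order $\fl^{-1}$ wherever $\hat V(k)>0$ and $|k\cdot\hat\omega_\alpha|\le\fl$.

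The $\mu$-integrals also do not produce a logarithm. One has $\int_0^\infty\mu^2(a^2+\mu^2)^{-1}(b^2+\mu^2)^{-1}\,\mathrm{d}\mu=\pi/(2(a+b))$, which gives the pointwise singularity $(\tilde\lambda_{\alpha,k}+\tilde\lambda_{\beta,k})^{-1}$. An integrand behaving like $(\tilde\lambda+\mu)^{-1}$ would not even be integrable.

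What you need instead is the weighted bound $|\partial_k^{\um}\hat\varphi_K(k)_{\alpha,\beta}|\le C\fl^{-m}\tilde\lambda_{\alpha,k}^{-1/2}\tilde\lambda_{\beta,k}^{-1/2}$. This requires the relative estimates $|\partial_k^{\um}\tilde\lambda_{\alpha,k}^a|\le C\fl^{-m}\tilde\lambda_{\alpha,k}^a$, not your absolute bound $O(\fl^{-1/2-j})$. The $|\log\fl|$ then arises only in the $k$-sum of the $\ell^1$-norm. The set where $|k\cdot\hat\omega_\alpha|<2\fl$ or $|k\cdot\hat\omega_\beta|<2\fl$ has volume $O(\fl)$ and the integrand there is at most $C\fl^{-1}$. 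On the rest, you integrate $s^{-1}$ over $s=\min\{k\cdot\hat\omega_\alpha,k\cdot\hat\omega_\beta\}\in[2\fl,R]$.

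The weighted form is also what makes your $\tilde X^{(n)}$ iteration close. Each inner index $\gamma$ carries $\tilde\lambda_{\gamma,k}^{-1/2}\cdot M^{-1}\tilde\lambda_{\gamma,k}\cdot\tilde\lambda_{\gamma,k}^{-1/2}=M^{-1}$, so every $\gamma$-sum is $O(1)$ and a single $|\log\fl|$ survives. With sup bounds you would get $C^n(\sup|\hat\varphi_K|)^{n+1}\sim C^n\fl^{-n-1}$ rather than $(C_{V,m})^{n+1}\fl^{-m}|\log\fl|$.

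\textbf{Second gap: the matrix logarithm.} Sherman--Morrison gives explicit entries of $(h^2+2|w\rangle\langle w|)^{\pm1/2}$, hence of $e^{\pm2K}$, but not of $K$ itself. The matrix $e^{-2K}-1=h^{-1/2}\big((h^2+2|w\rangle\langle w|)^{1/2}-h\big)h^{-1/2}$ is a $\mu$-integral of rank-one matrices with $\mu$-dependent vectors, so it is not of rank one. Consequently $(e^{-2K}+t)^{-1}$ in your resolvent formula for $\log$ cannot be computed by Sherman--Morrison. Your claim that $K_{\alpha,\beta}$ is an explicit scalar integral in $\tilde\lambda_\alpha,\tilde\lambda_\beta,\tilde v_\alpha\tilde v_\beta$ and $\sum_\gamma\tilde v_\gamma^2 g(\tilde\lambda_\gamma,\mu)$ is therefore unjustified. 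A matrix-level argument, uniform in $M$, is missing.

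The paper proceeds as follows:
\begin{itemize}
\item It uses the explicit formula only for $B=1-e^{2K}$.
\item It proves $\|B(k)\|\le 1-(1+C_V)^{-1}$ uniformly. This is why one expands in $B$ and not in $e^{-2K}-1$, whose norm can exceed $1$.
\item It writes $K=-\sum_{n\ge1}B^n/(2n)$.
\item It bounds entries of derivatives of $B^n$ by Cauchy--Schwarz, using $\sum_{\alpha,\beta}|\partial_k^{\um}B(k)_{\alpha,\beta}|^2\le C\fl^{-2m}$ and $\sum_\gamma|B(k)_{\alpha,\gamma}|^2\le CM^{-1}$. The factor $\|B\|^{n-r}$ absorbs the $\binom nr$ ways of distributing the derivatives.
\end{itemize}

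Two minor points. Derivative bounds on $n_{\alpha,k}^2=\tilde n_{\alpha,k}^2 I(k)_{\alpha,\alpha}$ come from differentiating $\hat v^\alpha(p-k)^2$ directly, not from Lemma~\ref{lem:approxID}, which only controls values. Also, $1-I$ is not supported only in the transition region; for this proposition you only need $|\partial_k^{\um}(\tilde n_{\alpha,k}^2-n_{\alpha,k}^2)|\le C\fl^{-m}L^3\rho^{2/3}M^{-1}\tilde\lambda_{\alpha,k}$.
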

Proposition~\ref{prop:regularity} will be proven in Appendix~\ref{subsec:proof_regularitybounds}.

We will also need bounds on certain trace terms, as well as on the error from replacing the approximate identity $ I(k) $ introduced in~\eqref{eq:cEalphahat} by the identity matrix.

\begin{proposition}[Trace estimates]
\label{prop:trace_estimates}
Let $ V $ satisfy Assumption~\ref{as:V} and recall $ K $ from~\eqref{eq:K}. Then, there exists $ C_V > 0 $ such that
\begin{equation} \label{eq:trace_estimates}
	\sum_{k \in \Lambda_L} \Big\vert \tr \Big( \big( I(k) K(k) \big)^{n+1} \Big) \Big\vert
	\le (C_V)^n L^3 \qquad \forall n \in \NNN \;.
\end{equation}
Moreover, for $ X \in \{h, P\}$ (see \eqref{eq:hP}), we have the estimate
\begin{equation} \label{eq:trace_replacements}
\begin{aligned}
	\sum_{k \in \Lambda_L} \Big\vert \tr \Big( I(k) X(k) \big( I(k) K(k) \big)^n \Big)
		- \tr \big( X(k) K(k)^n \big)
	\Big\vert
	&\le C_V C^n \fl |\log \fl| L^3 \quad 
	&&\forall n \in \NNN \;.
\end{aligned}
\end{equation}
and for $ X \in \{h(1-I)K, K(1-I)h\} $, we have
\begin{equation} \label{eq:trace_estimates_X}
\begin{aligned}
	\sum_{k \in \Lambda_L} \Big\vert \tr \Big( I(k) X(k) \big( I(k) K(k) \big)^n \Big)
	\Big\vert
	&\le C_V C^n \fl |\log \fl| L^3 \quad 
	&&\forall n \in \NNN \;.
\end{aligned}
\end{equation}
\end{proposition}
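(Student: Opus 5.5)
The plan is to reduce all three estimates to pointwise-in-$k$ bounds on matrix norms, and then sum over $k\in\Lambda_L\cap B_{2R}(0)$, which contains $\cO(L^3)$ lattice points. This is justified because $K(k)=0$ for $|k|>R$ by Lemma~\ref{lem:regularity_K}.

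For \eqref{eq:trace_estimates} we use $0\le I(k)\le 1$: $I$ is diagonal with entries $n_{\alpha,k}^2/\tilde n_{\alpha,k}^2$, and these are bounded by a constant by Lemma~\ref{lem:approxID}. Hence
\[
|\tr((IK)^{n+1})|\le \|I\|_{\op}^{n}\|K\|_{\op}^{n}\,\|IK\|_{\tr}\le C^n\|K(k)\|_{\op}^{n}\|K(k)\|_{\HS}\sqrt{\operatorname{rank}}
\]
or, more economically, $|\tr((IK)^{n+1})|\le C^{n}\|K\|_{\op}^{n-1}\|K\|_{\HS}^2$. The needed input is therefore $\|K(k)\|_{\op}\le C_V$ and $\|K(k)\|_{\HS}\le C_V$ uniformly in $k,M,\rho$. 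Both follow from the explicit formula \eqref{eq:K}, as in \cite{BNPSS20,CHN21}. Since $P(k)=|\tilde v_k\rangle\langle\tilde v_k|$ has rank one with $\|\tilde v_k\|^2\le C$ (by \eqref{eq:hP} and \eqref{eq:ntilde}, $\tilde v_{\alpha,k}^2\sim M^{-1}\hat V(k)\tilde\lambda_{\alpha,k}$), the operator-monotonicity and integral representation arguments of \cite[Sect.~7]{CHN21} give $\|K\|_{\HS}\le C\|h^{-1/2}Ph^{-1/2}\|$-type bounds. The only new issue is the regularized gap $\tilde\lambda_{\alpha,k}\ge\fl/2$. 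The dangerous quantity is $\sum_\alpha\tilde v_{\alpha,k}^2/\tilde\lambda_{\alpha,k}$, and it stays bounded because $\tilde v_{\alpha,k}^2\propto\tilde\lambda_{\alpha,k}$ makes the ratio $M^{-1}\hat V(k)$ summable. I expect this to give $\|K(k)\|_{\HS}\le C_V$ with no loss in $\fl$.

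For \eqref{eq:trace_replacements} and \eqref{eq:trace_estimates_X} the plan is a telescoping argument: write $I=1-(1-I)$ and expand. This produces at most $(n+1)2^{n}$ terms, each containing at least one factor $(1-I(k))$. Because $(1-I)$ is diagonal, each such term is bounded by
\[
C^n\,\|X\|_{\op}\|K\|_{\op}^{n-1}\,\|(1-I)^{1/2}K\|_{\HS}\,\|K\|_{\HS}
\]
(for $X=P$), or by an analogous expression with $h$. Here $\|h\|_{\op}\le C$ because $\tilde\lambda\le C|k|+\fl$, and \eqref{eq:trace_estimates_X} is just the special case where the factor $(1-I)$ is already present. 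Everything then reduces to the key smallness estimate
\[
\sum_{\alpha}(1-I(k))_{\alpha\alpha}\,\sum_\beta |K(k)_{\alpha\beta}|^2\le C\fl|\log\fl| .
\]

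To prove this estimate, split the patches as in Section~\ref{sec:proofmain}. For $\alpha\in\cI_{k,\ge}$ (with $k\cdot\hat\omega_\alpha\ge\fl$), Lemma~\ref{lem:approxID} gives $|1-I_{\alpha\alpha}|\le C M^{-1/2}\fl^{-1}$ or a similar bound, which is negligible since $M^{-1/2}\ll\fl^{2}$. The remaining patches are the transition patches $\alpha\in\cI_{k,<}$, with $|k\cdot\hat\omega_\alpha|\lesssim\fl$; there are at most $CM\fl|k|^{-1}$ of them. For these we use the entrywise bound $|K_{\alpha\beta}|\le C M^{-1}\,\hat V(k)\,\frac{\tilde\lambda_\alpha^{1/2}\tilde\lambda_\beta^{1/2}}{\tilde\lambda_\alpha+\tilde\lambda_\beta}$ from \cite{BNPSS20}, adapted to $\tilde\lambda$. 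Summing $\tilde\lambda_\alpha\tilde\lambda_\beta/(\tilde\lambda_\alpha+\tilde\lambda_\beta)^2$ over $\beta$ with $\tilde\lambda_\alpha\sim\fl$ produces at most a logarithm, $M^{-1}\sum_\beta\le C|\log\fl|$. Multiplying by $|\cI_{k,<}|/M\le C\fl|k|^{-1}$ gives $C\fl|\log\fl||k|^{-1}$, and $|k|^{-1}$ is summable over $B_{2R}$ with sum $\le CL^3$.

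The main obstacle is establishing this entrywise bound on $K$ uniformly in the regularized regime. I would derive it from the integral representation of the logarithm in \eqref{eq:K} together with $h\ge\fl/2$.
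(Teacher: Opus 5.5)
Your proof of \eqref{eq:trace_estimates} is fine. It differs only mildly from the paper's. You use $\Vert K(k)\Vert_{\op},\Vert K(k)\Vert_{\HS}\le C_V$; the HS bound does follow from $\Vert e^{-2K}-1\Vert_{\HS}\le C\sum_\alpha\tilde v_{\alpha,k}^2/\tilde\lambda_{\alpha,k}\le C_V$ together with $\log(1+x)\le x$. The paper instead inserts the entrywise bound $|K(k)_{\alpha\beta}|\le C_VM^{-1}$ (Lemma~\ref{lem:regularity_K} with $m=0$, plus \eqref{eq:ntilde}) into the cyclic sum. Note that only $|I(k)_{\alpha\alpha}|\le C$ holds, not $I\le 1$, but your $C^n$ absorbs this.

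The genuine gap is in your smallness estimate for \eqref{eq:trace_replacements} and \eqref{eq:trace_estimates_X}. You discard the patches with $k\cdot\hat\omega_\alpha\ge\fl$ using a bound $|1-I(k)_{\alpha\alpha}|\le CM^{-1/2}\fl^{-1}$. Lemma~\ref{lem:approxID} does not give this, and it is false. The lemma gives $|1-I(k)_{\alpha\alpha}|\le C\fl\,\tilde\lambda_{\alpha,k}^{-1}$, and this is sharp in order. The reason is that $\hat u^\alpha$ and $\hat v^\alpha$ vanish in layers of width $\fl$ on either side of the Fermi sphere, while the pair shell has width $k\cdot\hat\omega_\alpha$. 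Consequently $I(k)_{\alpha\alpha}=0$ for $\fl\le k\cdot\hat\omega_\alpha\le\frac32\fl$, and $1-I(k)_{\alpha\alpha}$ is genuinely of order $\fl/(k\cdot\hat\omega_\alpha)$ beyond that.

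These bulk patches are exactly where the logarithm comes from. Using $\sum_\beta|K(k)_{\alpha\beta}|^2\le C_VM^{-1}$, at fixed $k$ they contribute $CM^{-1}\sum_{\alpha\in\cI_{k,\ge}}\fl\,(k\cdot\hat\omega_\alpha)^{-1}\approx C\fl|k|^{-1}\log(|k|/\fl)$. Summed over $k\in B_{2R}(0)$, this gives $C\fl|\log\fl|L^3$. The paper does the same sum in the other order: for fixed $\alpha$, the slab $-2\fl<k\cdot\hat\omega_\alpha<\fl$ gives $C\fl L^3$, and the region $k\cdot\hat\omega_\alpha\ge\fl$ gives $C\fl L^3\int_{\fl}^{R}s^{-1}\,\d s$.

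Conversely, the logarithm you extract from the transition patches is spurious, since $\tilde\lambda_\alpha\tilde\lambda_\beta/(\tilde\lambda_\alpha+\tilde\lambda_\beta)^2\le\frac14$. Those patches only give $C\min\{1,\fl|k|^{-1}\}$. The refined entrywise bound on $K$ that you call the main obstacle is therefore not needed; $|K(k)_{\alpha\beta}|\le C_VM^{-1}$ suffices everywhere.

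Independently, your H\"older step loses a square root. The bound $C^n\Vert X\Vert_{\op}\Vert K\Vert_{\op}^{n-1}\Vert(1-I)^{1/2}K\Vert_{\HS}\Vert K\Vert_{\HS}$ contains the small factor only once. So even the correct pointwise estimate $\Vert\,|1-I|^{1/2}K\Vert_{\HS}^2\le C\fl|k|^{-1}(1+\log(|k|/\fl))$ would yield only $C^n(\fl|\log\fl|)^{1/2}L^3$. Note also that $1-I$ need not be nonnegative.

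To reach the stated power, you need $|1-I|^{1/2}$ on both neighbours of each diagonal factor, i.e.\ the split $\Vert K|1-I|^{1/2}\Vert_{\HS}\,\Vert\,|1-I|^{1/2}K\Vert_{\HS}$. Diagonal factors can be moved through $h$ and $I$ to achieve this. The case $n=1$, $X=h$ needs separate care, since only one $K$ is present and $\Vert h\Vert_{\HS}\sim M^{1/2}$.

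The cleaner route is the paper's. Expand the traces entrywise and use $\tilde n_{\alpha,k}^2\sim L^3\rho^{\frac 23}M^{-1}\tilde\lambda_{\alpha,k}$ and $|\hat\varphi_K(k)_{\alpha\beta}|\le C_V\tilde\lambda_{\alpha,k}^{-1/2}\tilde\lambda_{\beta,k}^{-1/2}$, so that all $\tilde\lambda$'s cancel around the cycle. This reduces everything to $C_VC^n\sum_\alpha M^{-1}\sum_{k:\,|k|\le R,\,k\cdot\hat\omega_\alpha>-2\fl}|I(k)_{\alpha\alpha}-1|$, which you then sum over $k$ at fixed $\alpha$ as above.
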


We will first derive pointwise bounds for derivatives of $ K(k)$, $h(k)$, and $P(k) $, then proceed to $ \tilde{K}^{(n)}(k)$, $\tilde{h}^{(n)}(k)$, and $\tilde{P}^{(n)}(k) $, and finally sum over the momenta $ k \in \Lambda_L $. Proposition~\ref{prop:trace_estimates} will then be proven in Appendix~\ref{subsec:proof_trace_estimates}.

\subsection{Deriving Pointwise Bounds for $K$}
\label{subsec:K_Pointwise}

The proof strategy is as follows: We will first derive regularity estimates for the matrix-valued function
\begin{equation} \label{eq:B}
	B: \Lambda_L \to \RRR^{M \times M} \;, \qquad
	B(k) := (1-e^{2K(k)}) \;,
\end{equation}
since it has a particularly convenient integral representation: Following the same steps as around \cite[(7.27)]{CHN22}, we obtain
\begin{equation}
\label{eq:Bintegral}
	B(k)_{\alpha, \beta}
	= \frac{4}{\pi} \int_0^\infty
		\frac{\mu^2 \tilde\lambda_{\alpha,k}^{-1}
		(\tilde\lambda_{\alpha,k}^{-2} + \mu^2)^{-1}
		\tilde\lambda_{\beta,k}^{-1}
		(\tilde\lambda_{\beta,k}^{-2} + \mu^2)^{-1}
		\tilde{v}_{\alpha,k} \tilde{v}_{\beta,k}
		}
		{1 + 2 \mu^2 \sum_{\gamma = 1}^M \tilde\lambda_{\gamma,k}^{-1} (\tilde\lambda_{\gamma,k}^{-2} + \mu^2)^{-1} \tilde{v}_{\gamma,k}^2 }
		\d \mu \;,
\end{equation}
with $ \tilde\lambda_{\alpha,k} $ and $ \tilde{v}_{\alpha,k} $ defined in \eqref{eq:lambdatilde} and \eqref{eq:hP}. This will allow us to derive regularity estimates for $ K $ and quantities depending on $ K $.\\
To establish those regularity estimates, recall~\eqref{eq:Delta_k} that $ -\Delta_k = -\sum_j \bar\partial_{k_j} \partial_{k_j} $ with $ \bar\partial_{k_j}, \partial_{k_j} $ being the left/right discrete derivative, satisfying $ [\bar\partial_{k_j},\partial_{k_{j'}}] = 0 $. To track higher discrete derivatives, we use multi-indices $ \um = (m_1,m_2,m_3) \in \NNN_0^3 $, $ |\um| := m_1 + m_2 + m_3 $ and the notation
\begin{equation}
	\partial_k^{\um} \hat{f}(k)
	:= \partial_{k_1}^{m_1} \partial_{k_2}^{m_2} \partial_{k_3}^{m_3} \hat{f}(k) \;, \qquad
	\bar\partial_k^{\um} \hat{f}(k)
	:= \bar\partial_{k_1}^{m_1} \bar\partial_{k_2}^{m_2} \bar\partial_{k_3}^{m_3} \hat{f}(k) \;.
\end{equation}
We will also make use of repeated product and chain rules, which, for discrete derivatives, produce discretization errors $ O_L(L^{-1}) $.

\begin{lemma} [Repeated discrete product and chain rules] \label{lem:product_chain_rule}
Let $ \hat{f}, \hat{g}: \Lambda_L \to \CCC $. Then, for any multi-index $ \um \in \NNN_0^3 $ with $ |\um| = m $, there exists some $ C_m > 0 $ such that
\begin{equation} \label{eq:product_rule_repeated}
\begin{aligned}
	|\partial_k^{\um} (\hat{f} \hat{g})(k)|
	&\le C_m \sum_{\substack{\um_1, \um_2 \in \NNN_0^3 \\ m \le |\um_1 + \um_2| \le 2 m}}
		L^{m-|\um_1 + \um_2|} |\partial_k^{\um_1} \hat{f}(k)| |\partial_k^{\um_2} \hat{g}(k)| \;.
\end{aligned}
\end{equation}
Furthermore, if $ h \in C^\infty(\RRR) $, and $ \hat{f} $ extends to a smooth function $ \hat{f}: \RRR^3 \to \RRR $, then\footnote{Observe that $ k + k'_1 $ does not need to be on the lattice $ \Lambda_L $. We may nevertheless take discrete derivatives $ \partial_k^{\um_1} \hat{f}(k + k'_1) $ as defined in~\eqref{eq:partial_k}.\\ Also, we denote concatenation of functions by $ (h \circ \hat{f})(k) = h(\hat{f}(k)) $}
\begin{equation} \label{eq:chain_rule_repeated}
\begin{aligned}
	|\partial_k^{\um} (h \circ \hat{f})(k)|
	&\le C_m \sum_{\ell=0}^m \sum_{\substack{\um_1, \ldots, \um_\ell \in \NNN_0^3 \\ m \le |\um_1 + \ldots + \um_\ell| \le (\ell+1) m}} 
		L^{m - |\um_1 + \ldots + \um_\ell|} \\
		& \times \sup_{\substack{ k'_1, \ldots, k'_{\ell+1} \in \RRR^3 \\ |k'_1|, \ldots, |k'_{\ell+1}| \le 2 \pi \ell L^{-1}}}
		|\partial_k^{\um_1} \hat{f}(k + k'_1)| \cdots |\partial_k^{\um_\ell} \hat{f}(k + k'_\ell)|
		 |h^{(\ell)}(\hat{f}(k + k'_{\ell+1}))| \;,
\end{aligned}
\end{equation}
where $ h^{(\ell)} $ denotes the $ \ell $-th derivative of $ h $.
All bounds remain true if any number of right discrete derivatives $ \partial_{k_j} $ is replaced by left discrete derivatives $ \bar\partial_{k_j} $.
\end{lemma}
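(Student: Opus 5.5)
We prove both estimates by induction on $m = |\um|$, using only the one-step product rule~\eqref{eq:productrule} and its iterates. Write $\delta := 2\pi L^{-1}$ for the lattice spacing and $\tau_j$ for the shift $k \mapsto k + \delta e_j$.

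\emph{Product rule.} For $m=1$, \eqref{eq:productrule} gives
\[
\partial_{k_j}(\hat f \hat g)(k) = (\partial_{k_j}\hat f)(k)\,\hat g(k+\delta e_j) + \hat f(k)\,(\partial_{k_j}\hat g)(k).
\]
The shifted value is expanded back to the point $k$ via
\[
\hat g(k+\delta e_j) = \hat g(k) + \delta\, \partial_{k_j}\hat g(k).
\]
The second term has $|\um_1+\um_2| = 2 = m+1$ and carries the factor $\delta \sim L^{-1} = L^{m-|\um_1+\um_2|}$. This matches the claimed form, where each extra derivative beyond the $m$ "genuine" ones costs one power of $L^{-1}$.

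For the induction step, apply $\partial_{k_j}$ to a term $L^{m-|\um_1+\um_2|}\,\partial^{\um_1}\hat f\,\partial^{\um_2}\hat g$ from the level-$m$ expansion, as an exact identity before taking absolute values. The one-step rule produces one more derivative, plus at most one further derivative with a factor $\delta$. Hence the new total order lies in $[m+1,\, |\um_1+\um_2|+2] \subset [m+1,\, 2m+2]$, and the power of $L$ is correctly adjusted. The combinatorial constant depends only on $m$.

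For mixed left derivatives $\bar\partial$, use the analogous identity $\hat g(k-\delta e_j) = \hat g(k) - \delta\,\bar\partial_{k_j}\hat g(k)$ and proceed identically. In a mixed product, a $\bar\partial$ acting on an expression that already contains $\partial$'s is handled with the same commuting rule, since $[\partial_{k_i},\bar\partial_{k_j}]=0$.

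\emph{Chain rule.} Here we cannot expand exactly, since $h$ is nonlinear. Instead, represent the first difference by the fundamental theorem of calculus:
\[
\partial_{k_j}(h\circ \hat f)(k) = \partial_{k_j}\hat f(k)\int_0^1 h'\big(\hat f(k) + t\,\delta\,\partial_{k_j}\hat f(k)\big)\,\d t.
\]
The argument of $h'$ is a convex combination of $\hat f(k)$ and $\hat f(k+\delta e_j)$. Because $\hat f$ extends smoothly to $\RRR^3$, it can be written as $\hat f(k+k')$ for some real $|k'|\le \delta$, at the price of replacing $h'(\hat f(k+k'))$ by a supremum over such $k'$. To iterate while keeping exact expressions, it is cleaner to keep the integral over $t$, and more generally over simplices, of
\[
h^{(\ell)}\big(\text{convex combinations of } \hat f \text{ at lattice-shifted points}\big).
\]
Each further discrete derivative then falls on one of three things:
\begin{enumerate}
\item[(a)] an existing factor $\partial^{\um_i}\hat f$, handled by the product rule above, which shifts arguments by at most $\delta$;
\item[(b)] the $h^{(\ell)}$ factor, which raises $\ell$ by one and produces a new factor $\partial\hat f$ evaluated at shifted points; shifted evaluations are re-expanded with extra factors of $\delta$ and derivatives when needed to match the claimed powers of $L$;
\item[(c)] the convex-combination argument itself.
\end{enumerate}
Tracking these cases, after $m$ steps the number of $h$-derivatives satisfies $\ell\le m$. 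All evaluation points lie within distance $2\pi\ell L^{-1}$ of $k$, which is where the supremum over $|k'_i|\le 2\pi \ell L^{-1}$ comes from. Total derivative orders range in $[m,(\ell+1)m]$, with one factor $L^{-1}$ per excess derivative.

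The main obstacle is this bookkeeping in the chain rule. Specifically, one has to verify that derivatives of the convex-combination argument in case (c) can again be written as products of discrete derivatives of $\hat f$ at shifted points, each carrying the right power of $\delta$, so that the upper bound $(\ell+1)m$ on the total order and the radius of the shifts are preserved. I would handle this by treating the integrand as a function of the $\ell+1$ (or fewer) shifted values of $\hat f$ and applying the product rule to that composite. Since only upper bounds with $m$-dependent constants are required, I would not try to optimise the ranges, only show they remain within the stated ones. Every step is symmetric in $\partial$ versus $\bar\partial$, which gives the final remark of the lemma.
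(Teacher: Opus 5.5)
Your proposal is correct and follows essentially the paper's proof: the three-term product rule $\partial_{k_j}(\hat f\hat g)=(\partial_{k_j}\hat f)\hat g+\hat f\,\partial_{k_j}\hat g+2\pi L^{-1}(\partial_{k_j}\hat f)(\partial_{k_j}\hat g)$ iterated as an exact identity, and for the chain rule the integral representation with nested convex interpolants of shifted values of $\hat f$, kept exact through the iteration and only at the end bounded by suprema over shifts using continuity of the real-valued extension of $\hat f$. The step you flag as the main obstacle (your case (c)) is immediate and is really part of (b). Applying the integral formula to $h^{(\ell)}$ of an interpolant produces the factor $\partial_{k_j}$ of that interpolant. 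Since the interpolation weights do not depend on $k$, this factor is the same convex combination of shifted discrete derivatives of $\hat f$, and the new argument of $h^{(\ell+1)}$ is again such a convex combination. This is exactly how the paper closes the iteration with $f^{[t_1,\ldots,t_n]}_{j_1,\ldots,j_n}$, so no separate product-rule treatment of the integrand is needed.
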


\begin{proof}
Recall the product rule~\eqref{eq:productrule}, which can be written as
\begin{equation} \label{eq:product_rule_bis}
	\partial_{k_j}(\hat{f} \hat{g})
	= (\partial_{k_j} \hat{f}) \hat{g}
		+ \hat{f} (\partial_{k_j} \hat{g})
		+ 2 \pi L^{-1} (\partial_{k_j} \hat{f}) (\partial_{k_j} \hat{g}) \;,
\end{equation}
i.\,e., we obtain the usual product rule for smooth functions, plus an error term that carries an excess derivative and a factor of $ L^{-1} $. Iterating this formula $ m $ times will render $ 3^m $ contributions with at most $ 2m $ derivatives, where each excess derivative comes with a factor of $ L^{-1} $. This establishes~\eqref{eq:product_rule_repeated}.\\

To prove~\eqref{eq:chain_rule_repeated}, first observe that we can write
\begin{equation}
\begin{aligned}
	\partial_{k_j}(h \circ \hat{f})(k)
	&= \frac{h(\hat{f}(k + 2 \pi L^{-1} e_j)) - h(\hat{f}(k))}{\hat{f}(k + 2 \pi L^{-1} e_j) - \hat{f}(k)}
		\frac{\hat{f}(k + 2 \pi L^{-1} e_j) - \hat{f}(k)}{2 \pi L^{-1}} \\
	&= (\partial_{k_j} \hat{f})(k) \int_0^1 h'(f^{[t]}_j(k)) \d t
\end{aligned}
\end{equation}
with interpolating function
\begin{equation}
	f^{[t]}_j(k) := t \hat{f}(k + 2 \pi L^{-1} e_j) + (1-t) \hat{f}(k) \;.
\end{equation}
For higher derivatives, we iterate this interpolation expansion: In the case of 2 derivatives, using the product rule~\eqref{eq:product_rule_bis}, we get
\begin{equation} \label{eq:expansion_2_derivatives}
\begin{aligned}
	\partial_{k_{j_2}} \partial_{k_{j_1}} (h \circ \hat{f})(k)
	&= (\partial_{k_{j_2}} \partial_{k_{j_1}} \hat{f})(k) \int_0^1 h'(f^{[t_1]}_{j_1}(k)) \d t_1
		+ (\partial_{k_{j_1}} \hat{f})(k) \int_0^1 (\partial_{k_{j_2}} h'(f^{[t_1]}_{j_1}(k))) \d t_1 \\
	&\quad + 2 \pi L^{-1} (\partial_{k_{j_2}} \partial_{k_{j_1}} \hat{f})(k) \int_0^1 (\partial_{k_{j_2}} h'(f^{[t_1]}_{j_1}(k))) \d t_1 \;,
\end{aligned}
\end{equation}
where we note that the third term on the right-hand side of~\eqref{eq:expansion_2_derivatives} carries again an excess derivative, together with a factor of $ L^{-1} $. We expand $ (\partial_{k_{j_2}} h'(f^{[t_1]}_{j_1}(k))) $ by iteratively defining
\begin{equation} \label{eq:interpolated_f}
	f^{[t_1,\ldots,t_{n+1}]}_{j_1,\ldots,j_{n+1}}(k)
	:= t_{n+1} f^{[t_1,\ldots,t_n]}_{j_1,\ldots,j_n}(k + 2 \pi L^{-1} e_{j_{n+1}}) + (1-t_{n+1}) f^{[t_1,\ldots,t_n]}_{j_1,\ldots,j_n}(k) \;,
\end{equation}
and expanding
\begin{equation}
	\partial_{k_{j_{n+1}}} h^{(n)}(f^{[t_1,\ldots,t_n]}_{j_1,\ldots,j_n})
	= \big( \partial_{k_{j_{n+1}}} f^{[t_1,\ldots,t_n]}_{j_1,\ldots,j_n}(k) \big)
		\int_0^1 h^{(n+1)}(f^{[t_1,\ldots,t_{n+1}]}_{j_1,\ldots,j_{n+1}}) \d t_{n+1} \;,
\end{equation}
where $ h^{(n)} $ denotes the $ n $-th derivative of $ h $. Note that each $ f^{[t_1,\ldots,t_n]}_{j_1,\ldots,j_n}(k) $ is a convex combination of $ \{\hat{f}(k + k') : |k'| \le 2 \pi L^{-1} n \} $.\\
Iterating the product rule and expansion, we find that $ \partial_k^{\um}(h \circ \hat{f})(k) $ is a finite sum of terms of the form
\begin{equation}
	\int_0^1 \d t_1 \ldots \int_0^1 \d t_\ell \;
		\big(\partial_k^{\um_1} f^{\sharp_1}(k) \big) 
		\ldots
		\big(\partial_k^{\um_\ell} f^{\sharp_\ell}(k) \big)
		\big(h^{(\ell)} (f^{\sharp_{\ell+1}}(k)) \big)
		L^{m - |\um_1 + \ldots + \um_\ell|} \;,
\end{equation}
where $ 0 \le \ell \le m $, and where $ f^{\sharp_1}, \ldots, f^{\sharp_{\ell+1}} $ are any of the interpolating functions $ f^{[t_1,\ldots,t_n]}_{j_1,\ldots,j_n} $ with $ n \le \ell $. Furthermore, note that the product rule~\eqref{eq:product_rule_bis} produced $ 0 \le |\um_1 + \ldots + \um_\ell| - m \le \ell m $ excess derivatives, where each came with a factor of $ L^{-1} $. Since the interpolating functions $ f^{\sharp}(k) = f^{[t_1,\ldots,t_n]}_{j_1,\ldots,j_n}(k) $ are convex combinations of $ \{ \hat{f}(k+k') : |k'| \le 2 \pi n L^{-1} \} $, where we recall $ n \le \ell $, we conclude with smoothness of $ h $:
\begin{equation}
	|\partial_k^{\um_j} f^{\sharp_j}(k)|
	\le \sup_{|k'| \le 2 \pi \ell L^{-1}} |\partial_k^{\um_j} \hat{f}(k + k')| \;, \qquad
	|h^{(\ell)} (f^{\sharp}(k))|
	\le \sup_{|k'| \le 2 \pi \ell L^{-1}} |h^{(\ell)}(\hat{f}(k + k'))| \;.
\end{equation}
Summing over all $ \ell $ and all choices of multi-indices $ \um_1, \ldots, \um_\ell $, we then obtain~\eqref{eq:chain_rule_repeated}.

The proofs after replacing right- by left-derivatives $ \bar\partial_{k_j} $ are analogous.
\end{proof}

To control discrete derivatives of $ B(k) $ as in~\eqref{eq:Bintegral} and $ I(k) $ as in~\eqref{eq:cEalphahat}, we start by controlling discrete derivatives of $ \tilde\lambda_{\alpha,k} $, $ \tilde{n}_{\alpha,k}^2 $, and $ n_{\alpha,k}^2 $. 

\begin{lemma}[Pointwise bounds on $ \tilde\lambda_{\alpha,k} $, $ \tilde{n}_{\alpha,k}^2 $, and $ n_{\alpha,k}^2 $]
\label{lem:regularity_lambda}
Let $ |k| \le 2 R $ and recall the definitions of $ \tilde\lambda_{\alpha,k} $~\eqref{eq:lambdatilde}, $ \tilde{n}_{\alpha,k}^2 $~\eqref{eq:ntilde} and $ n_{\alpha,k} $~\eqref{eq:n}.
Given $ m \in \NNN_0 $, any multi-index $ \um \in \NNN_0^3 $ with $ |\um| = m $, and any $ a \in \RRR $, there exist $ C_m, C_{m,a} > 0 $ such that for $ L $ large enough,
\begin{equation} \label{eq:regularity_lambda_n}
\begin{aligned}
	|\partial_k^{\um} \tilde\lambda_{\alpha,k}^a|
	&\le C_{m,a} \fl^{-m} \tilde\lambda_{\alpha,k}^a \qquad
	&&\forall k \in \Lambda_L, \; \alpha \in \{1, \ldots, M\} \;, \\
	|\partial_k^{\um} \tilde{n}_{\alpha,k}^a|
	&\le C_{m,a} \fl^{-m} \tilde{n}_{\alpha,k}^a \qquad
	&&\forall k \in \Lambda_L, \; \alpha \in \{1, \ldots, M\} \;, \\
	|\partial_k^{\um} n_{\alpha,k}^2 |
	&\le C_m \fl^{-m} L^3 \rho^{\frac 23} M^{-1} \tilde\lambda_{\alpha,k} \qquad
	&&\forall k \in \Lambda_L, \; \alpha \in \{1, \ldots, M\} \;,
\end{aligned}
\end{equation}
and the same bounds are true if any number of right-derivatives $ \partial_{k_j} $ is replaced by left-derivatives $ \bar\partial_{k_j} $. Moreover, given $ a \in \RRR $ and $ C_m > 0 $ there exist $ c_{m,a}, C_{m,a} > 0 $ such that
\begin{equation} \label{eq:lambda_diff_control}
	c_{m,a} 	\tilde\lambda_{\alpha,k}^a
	\le \tilde\lambda_{\alpha,k+k'}^a
	\le C_{m,a} \tilde\lambda_{\alpha,k}^a \qquad \forall |k'| \le C_m L^{-1} \;.
\end{equation}
\end{lemma}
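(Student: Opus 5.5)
The plan is to treat the three quantities in order of increasing complexity. Everything reduces to the smooth function $\tilde\lambda_{\alpha,k} = \fl f_>(\fl^{-1} k\cdot\hat\omega_\alpha)$, combined with the discrete chain and product rules of Lemma~\ref{lem:product_chain_rule}.

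\emph{Step 1: comparability~\eqref{eq:lambda_diff_control}.} Since $f_>$ is non-decreasing, bounded below by $1/2$, and Lipschitz, we have
\[
|\tilde\lambda_{\alpha,k+k'}-\tilde\lambda_{\alpha,k}| \le \Vert f_>'\Vert_\infty |k'| \le C L^{-1}.
\]
Together with $\tilde\lambda_{\alpha,k}\ge \fl/2$, this gives
\[
\tilde\lambda_{\alpha,k+k'}/\tilde\lambda_{\alpha,k}\in[1-CL^{-1}\fl^{-1},\,1+CL^{-1}\fl^{-1}],
\]
which lies in $[1/2,2]$ for $L$ large, since $\fl$ does not depend on $L$. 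Raising to the power $a$ gives the claim.

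\emph{Step 2: the bound on $\tilde\lambda^a$.} Write $\tilde\lambda_{\alpha,k}^a = \fl^a g_a(\fl^{-1}k\cdot\hat\omega_\alpha)$ with $g_a := f_>^a$, which is smooth because $f_>\ge 1/2$. Continuous derivatives of $g_a$ satisfy $|g_a^{(\ell)}(x)|\le C_{\ell,a}\, g_a(x)$. For $x>1$ this holds because $g_a=x^a$ and $|(x^a)^{(\ell)}|\le C x^{a-\ell}\le C x^a$; on $x\le 1$ it holds by compactness together with the lower bound. Now apply the chain rule~\eqref{eq:chain_rule_repeated} with inner function $\hat f(k)=\fl^{-1}k\cdot\hat\omega_\alpha$, which is linear. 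Its first discrete derivatives equal $\fl^{-1}\hat\omega_{\alpha,j}$ exactly, and all higher discrete derivatives vanish. Hence only $\ell=m$ survives with all $|\um_i|=1$, so no $L$-excess terms appear. This leaves
\[
C\fl^{a-m}\sup_{|k'|\le C L^{-1}} |g_a^{(m)}(\hat f(k+k'))| \le C\fl^{-m}\tilde\lambda_{\alpha,k+k'}^a,
\]
and Step 1 lets us replace $k+k'$ by $k$. Because $\tilde n^2_{\alpha,k}$ is a $k$-independent constant times $\tilde\lambda_{\alpha,k}$, we get $\tilde n^a_{\alpha,k} = c^{a/2}\tilde\lambda^{a/2}_{\alpha,k}$ and the second bound follows immediately. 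Left derivatives are handled identically.

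\emph{Step 3: the bound on $n^2_{\alpha,k}$.} This is the main obstacle. We write
\[
n^2_{\alpha,k}=\sum_p \hat u^\alpha(p)^2\hat v^\alpha(p-k)^2,
\]
so the discrete derivatives in $k$ fall only on $\hat v^\alpha(p-k)^2$. The iterated discrete product rule gives, by~\eqref{eq:smooth}, $|\partial_k^{\um}\hat v^\alpha(\cdot-k)^2|\le C_m\fl^{-m}$, with the support enlarged by $O(L^{-1})$. It follows that
\[
|\partial_k^{\um} n^2_{\alpha,k}|\le C_m \fl^{-m}\,\#\{p\in\Lambda_L: p\in\supp\hat u^\alpha,\ \dist(p-k,\supp\hat v^\alpha)\le CL^{-1}\}.
\]
Counting lattice points gives $L^3(2\pi)^{-3}$ times the volume of $\{p\in B_\alpha\setminus B_{k_\F}: p-k\in B_\alpha\cap B_{k_\F}\}$, up to $o_L$. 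This region is a shell of area $\le C\rho^{2/3}M^{-1}$ and thickness $\max(0,|p|-|p-k|)\le k\cdot\hat\omega_\alpha + C|k|M^{-1/2}$. We must compare this thickness with $\tilde\lambda_{\alpha,k}$ in each regime of $k\cdot\hat\omega_\alpha$:
\begin{itemize}
\item If $k\cdot\hat\omega_\alpha\ge\fl$, then $\tilde\lambda_{\alpha,k}=k\cdot\hat\omega_\alpha$ and the thickness is $\le C\tilde\lambda_{\alpha,k}$, since $M^{-1/2}\ll\fl$.
\item If $|k\cdot\hat\omega_\alpha|<\fl$, the thickness is $\le C\fl\le C\tilde\lambda_{\alpha,k}$.
\item If $k\cdot\hat\omega_\alpha<-\fl$, the set is empty by Remark~\ref{rem:no_pairs}, which remains true after the $O(L^{-1})$ enlargement.
\end{itemize}
In all cases we obtain $\le C_m\fl^{-m}L^3\rho^{2/3}M^{-1}\tilde\lambda_{\alpha,k}$. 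The delicate point is the curvature correction $|p|-|p-k| = k\cdot\hat p + O(\rho^{-1/3})$ with $\hat p$ varying by $O(M^{-1/2})$ across the patch; both errors are absorbed by $\fl$.
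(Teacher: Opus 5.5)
Your proof is correct and follows the paper's outline. The shift control~\eqref{eq:lambda_diff_control} comes from the Lipschitz bound together with $\tilde\lambda_{\alpha,k}\ge\fl/2$. The $\tilde n$ bound follows from $\tilde n_{\alpha,k}^2\propto\tilde\lambda_{\alpha,k}$. For $n_{\alpha,k}^2$ you use the same lattice-point count as the paper: a thin region of area $\lesssim\rho^{2/3}M^{-1}$ and radial thickness $\lesssim \max(0,k\cdot\hat\omega_\alpha)+|k|M^{-1/2}$, where each derivative costs $\fl^{-1}$ and enlarges supports only by $O(L^{-1})$.

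The one genuine difference is the bound on $\tilde\lambda_{\alpha,k}^a$. The paper first proves the case $a=1$ by interpolation and then composes with $x\mapsto x^a$. Because its inner function $\tilde\lambda_{\alpha,k}$ is not linear, it must control excess-derivative terms $L^{m-|\um_1+\dots+\um_\ell|}$ at shifted points and absorb them using $L^{-1}\ll\fl$. You instead compose the fixed smooth function $f_>^a$, which satisfies $|(f_>^a)^{(m)}|\le C_{m,a} f_>^a$, with the linear map $k\mapsto\fl^{-1}k\cdot\hat\omega_\alpha$. Then only the term $\ell=m$ with all $|\um_i|=1$ survives in the discrete chain rule, so no excess terms arise.

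Your route is shorter and makes the $\fl$-independence of $C_{m,a}$ transparent. The paper's route is the generic one it needs anyway later, for compositions such as $(\tilde\lambda_{\alpha,k}^{-2}+\mu^2)^{-1}$ with constants uniform in $\mu$.
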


\begin{proof}
We will repeatedly make use of the following interpolation argument: If $ \hat{f}: \Lambda_L \to \CCC $ extends to a smooth function, equally denoted $ \hat{f} $, then
\begin{equation} \label{eq:interpolation_kprime}
	|\hat{f}(k + k') - \hat{f}(k)|
	\le |k'| \sup_{|k''| \le |k'|} |\nabla \hat{f}(k+k'')| \;.
\end{equation}
Repeatedly applying this interpolation argument allows bounding discrete derivatives $ \partial_k^{\um} $ by smooth derivatives $ \partial^{\um} $ as
\begin{equation} \label{eq:interpolation}
	|\partial_k^{\um} \hat{f}(k)|
	\le C_m \sup_{|k'| \le 2 \pi m L^{-1}} | \partial^{\um} \hat{f}(k+k')| \;.
\end{equation}
Recall~\eqref{eq:lambdatilde} that $ \tilde\lambda_{\alpha,k} = \fl f_>( \fl^{-1} k \cdot \hat{\omega}_\alpha) $ with a smooth, non-decreasing function $ f_>(x) $ as in~\eqref{eq:f}. So each smooth derivative produces a factor of $ \fl^{-1} $ and for $ (\alpha,k) $ such that $ x = \fl^{-1} k \cdot \hat\omega_\alpha \in [0,1] $ we obtain
\begin{equation}
	|\partial^{\um} \tilde\lambda_{\alpha,k}|
	\le C_m \fl^{-m+1} \;.
\end{equation}
For $ x > 1 $, we simply have $|\nabla \tilde\lambda_{\alpha,k}| \le 1 $, and higher derivatives even vanish, since $ k \mapsto \tilde\lambda_{\alpha,k} $ is linear.
Finally, for $ x < 0 $, $ f_>(x) $ is constant, so all derivatives vanish. Thus, smooth derivatives are uniformly bounded by $ |\partial^{\um} \tilde\lambda_{\alpha,k}| \le C_m \fl^{-m+1} $, and with~\eqref{eq:interpolation} and $ \tilde\lambda_{\alpha,k} \ge \fl/2 $ we conclude
\begin{equation}
	|\partial_k^{\um} \tilde\lambda_{\alpha,k}|
	\le C_m \sup_{|k'| \le 2 \pi m L^{-1}} | \partial^{\um} \tilde\lambda_{\alpha,k+k'}|
	\le C_m \fl^{-m+1}
	\le C_m \fl^{-m} \tilde\lambda_{\alpha,k}
\end{equation}
for $ m \ge 1 $, where the analogous estimate for $ m = 0 $ is trivial. This establishes the first bound in~\eqref{eq:regularity_lambda_n} for $ a = 1 $.\\
To extend the result to $ a \neq 1 $, we first prove the shift control bound~\eqref{eq:lambda_diff_control}: Since we assumed $ |k'| \le C_m L^{-1} $, from~\eqref{eq:interpolation_kprime} we get
\begin{equation} \label{eq:lambdatilde_shift_bound}
	\tilde\lambda_{\alpha,k+k'}
	\le \tilde\lambda_{\alpha,k} + C |k'|
	\le \tilde\lambda_{\alpha,k} (1 + C_m L^{-1} \fl^{-1})
	\le C_m \tilde\lambda_{\alpha,k} \;.
\end{equation}
Analogously, $ \tilde\lambda_{\alpha,k+k'}^{-1} \le \tilde\lambda_{\alpha,k}^{-1} (1 - C_m L^{-1} \fl^{-1})^{-1} \le C_m \tilde\lambda_{\alpha,k}^{-1} $, and taking the $ a $-th power readily implies~\eqref{eq:lambda_diff_control}.
Let us return to the bound on $ |\partial_k^{\um} \tilde\lambda_{\alpha,k}^a| $. Recall that $ k \mapsto \tilde\lambda_{\alpha,k} $ extends to a smooth function on $ \RRR^3 $, as initially defined in~\eqref{eq:lambdatilde}. We may therefore apply the chain rule bound from Lemma~\ref{lem:product_chain_rule} with $ h \in C^\infty $ such that $ h(x) = x^a $ for $ x \in [\fl/2,2R] $:
\begin{equation} \label{eq:derivatives_lambda_a_expansion}
\begin{aligned}
	|\partial_k^{\um} \tilde\lambda_{\alpha,k}^a|
	&\le C_{m,a} \sum_{\ell=0}^m \sum_{\substack{\um_1, \ldots, \um_\ell \in \NNN_0^3 \\ m \le |\um_1 + \ldots + \um_\ell| \le (\ell+1) m}} 
		L^{m - |\um_1 + \ldots + \um_\ell|} \\
		& \times \sup_{\substack{ k'_1, \ldots, k'_{\ell+1} \in \RRR^3 \\ |k'_1|, \ldots, |k'_{\ell+1}| \le 2 \pi \ell L^{-1}}}
		|\partial_k^{\um_1} \tilde\lambda_{\alpha,k+k'_1}| \cdots |\partial_k^{\um_\ell} \tilde\lambda_{\alpha,k+k'_\ell}|
		 |\tilde\lambda_{\alpha,k+k'_{\ell+1}}^{a-\ell}| \;.
\end{aligned}
\end{equation}
We now use the interpolation~\eqref{eq:interpolation_kprime}, as well as the shift bound~\eqref{eq:lambda_diff_control},
\begin{equation}
\begin{aligned}
	&|\partial_k^{\um} \tilde\lambda_{\alpha,k+k'}|
	\le |\partial_k^{\um} \tilde\lambda_{\alpha,k}| + C_m L^{-1} \sup_{|k''| \le C_m L^{-1}} | \nabla \partial_k^{\um} \tilde\lambda_{\alpha,k+k''} | \\
	&\le C_m \fl^{-m} \tilde\lambda_{\alpha,k}
		+ C_m L^{-1} \fl^{-m-1} \sup_{|k''| \le C_m L^{-1}} \tilde\lambda_{\alpha,k+k''}
	\le C_m \fl^{-m} \tilde\lambda_{\alpha,k} \;.
\end{aligned}
\end{equation}
Moreover,~\eqref{eq:lambda_diff_control} yields $ |\tilde\lambda_{\alpha,k+k_{\ell+1}'}^{a-\ell}| \le C_{m,a} \tilde\lambda_{\alpha,k}^{a-\ell} $ so plugging all bounds into~\eqref{eq:derivatives_lambda_a_expansion} and using $ L^{-1} \ll \fl $, we get
\begin{equation}
	|\partial_k^{\um} \tilde\lambda_{\alpha,k}^a|
	\le C_{m,a} \sum_{\ell=0}^m \sum_{\substack{\um_1, \ldots, \um_\ell \in \NNN_0^3 \\ m \le |\um_1 + \ldots + \um_\ell| \le (\ell+1) m}} 
		L^{m - |\um_1 + \ldots + \um_\ell|} \fl^{-|\um_1 + \ldots + \um_\ell| } \tilde\lambda_{\alpha,k}^{\ell} \tilde\lambda_{\alpha,k}^{a-\ell}
	\le C_{m,a} \fl^{-m} \tilde\lambda_{\alpha,k}^a \;,
\end{equation}
which is the first bound claimed in~\eqref{eq:regularity_lambda_n}.\\
The second bound then immediately follows, since $ \tilde{n}_{\alpha,k}^2 $ is proportional to $ \tilde\lambda_{\alpha,k} $, see~\eqref{eq:ntilde}.\\
For the third bound, we recall the definition of $ n_{\alpha,k} $~\eqref{eq:n}. For $ m = 0 $, we have
\begin{equation}
	|n_{\alpha,k}^2|
	= \sum_{p \in \Lambda_L} \hat{u}^\alpha(p)^2 \hat{v}^\alpha(p-k)^2
	\le C |B_\alpha \cap \BF^c \cap (\BF + k)| \;.
\end{equation}
Since the patches $ B_\alpha $ are regular with a surface area $ \sim \rho^{\frac 23} M^{-1} $, and the surface normal vector varies by an amount of $ M^{-\frac 12} \ll \fl $ over the patch, the number of points in the above intersection is bounded as
\begin{equation} \label{eq:Vol_bound}
\begin{aligned}
	|B_\alpha \cap \BF^c \cap (\BF + k)|
	&\le C L^3 \mathrm{Vol}(B_\alpha \cap B_{k_{\F}}(0)^c \cap (B_{k_{\F}}(0) + k)) + \cO_L(L^2) \\
	&\le C \rho^{\frac 23} M^{-1} |k| (\max\{\hat{k} \cdot \hat\omega_\alpha + C M^{-\frac 12}, 0 \} ) L^3 \\
	&\le C \rho^{\frac 23} M^{-1} \tilde\lambda_{\alpha,k} L^3 \;,
\end{aligned}
\end{equation}
where we used that~\eqref{eq:lambdatilde} $ \tilde\lambda_{\alpha,k} \ge k \cdot \hat\omega_\alpha $ and that $ |k| M^{-\frac 12} \le C \fl/2 \le C \tilde\lambda_{\alpha,k} $. This gives the desired bound for $ m = 0 $. For $ m \ge 1 $, we observe that $ |\partial^{\um}_k \hat{v}^\alpha(p-k)^2| \le C_m \fl^{-m} $, while $ m $ discrete derivatives enlarge the support of the function by at most $ 2 \pi m L^{-1} $, so
\begin{equation}
	|\partial^{\um}_k n_{\alpha,k}^2|
	\le \sum_{p \in \Lambda_L} |\hat{u}^\alpha(p)^2| |\partial^{\um}_k \hat{v}^\alpha(p-k)^2|
	\le C_m \fl^{-m} |B_{\alpha,k}^{(+m)}| \;.
\end{equation}
where $ B_{\alpha,k}^{(+m)} := \{ p \in \Lambda_L : \mathrm{dist}(p, B_\alpha \cap \BF^c \cap (\BF + k)) \le 2 \pi m L^{-1} \} $. Evidently,
\begin{equation}
	|B_{\alpha,k}^{(+m)}|
	\le |B_\alpha \cap \BF^c \cap (\BF + k)| + \cO_L(L^2) 
	\le C \rho^{\frac 23} M^{-1} \tilde\lambda_{\alpha,k} L^3 \;,
\end{equation}
which readily yields the third bound in~\eqref{eq:regularity_lambda_n}.\\
The proof after replacing right-derivatives $ \partial_{k_j} $ by left-derivatives $ \bar\partial_{k_j} $ is completely analogous.
\end{proof}

\begin{lemma}[Pointwise bound on $ B $]
\label{lem:regularity_B}
Recall the definition of $ B(k) $~\eqref{eq:B} and the notation $ \hat\varphi_B(k)_{\alpha,\beta} $~\eqref{eq:varphi}. Given $ m \in \NNN_0 $ and any multi-index $ \um \in \NNN_0^3 $ with $ |\um| = m $, there exists $ C_{V,m} > 0 $ such that
\begin{equation} \label{eq:regularity_B}
	| \partial_k^{\um} \hat\varphi_B(k)_{\alpha,\beta} |
	\le C_{V,m} \fl^{-m} (\tilde\lambda_{\alpha,k} + \tilde\lambda_{\beta,k})^{-1} \qquad
	\forall k \in \Lambda_L , \; \alpha, \beta \in \{1, \ldots, M\} \;,
\end{equation}
and the same bound is true if any number of right-derivatives $ \partial_{k_j} $ is replaced by left-derivatives $ \bar\partial_{k_j} $.
Moreover, $ B(k)_{\alpha,\beta} = \hat\varphi_B(k)_{\alpha,\beta} = 0 $ if $ |k| > R $ or $ k \notin \supp(\hat\chi_{>,\alpha}) \cap \supp(\hat\chi_{>,\beta}) $. 
\end{lemma}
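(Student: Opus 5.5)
We estimate $\hat\varphi_B(k)_{\alpha,\beta}=L^3\rho^{2/3}B(k)_{\alpha,\beta}/(\tilde n_{\alpha,k}\tilde n_{\beta,k})$ directly from the integral representation~\eqref{eq:Bintegral}. The first step is to insert the definition~\eqref{eq:hP} of $\tilde v_{\alpha,k}$. This gives $\tilde v_{\alpha,k}\tilde v_{\beta,k}/(\tilde n_{\alpha,k}\tilde n_{\beta,k}) = \hat V(k)\hat\chi_{>,\alpha}(k)\hat\chi_{>,\beta}(k)/(2k_\F\hbar\rho^{2/3}L^3)$. In $\hat\varphi_B$ the $\tilde n$'s therefore cancel completely, and $\hbar k_\F\sim 1$. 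We obtain
\[
\hat\varphi_B(k)_{\alpha,\beta} = c\,\hat V(k)\hat\chi_{>,\alpha}(k)\hat\chi_{>,\beta}(k)\int_0^\infty \frac{\mu^2\, g_\alpha(k,\mu)\, g_\beta(k,\mu)}{1+2\mu^2 D(k,\mu)}\,\d\mu ,
\]
where $c$ is bounded uniformly in $\rho$, $g_\alpha := \tilde\lambda_{\alpha,k}^{-1}(\tilde\lambda_{\alpha,k}^{-2}+\mu^2)^{-1} = \tilde\lambda_{\alpha,k}/(1+\mu^2\tilde\lambda_{\alpha,k}^2)$, and $D:=\sum_\gamma g_\gamma \tilde v_{\gamma,k}^2\ge 0$. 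The support statement follows at once from the prefactor $\hat V(k)\hat\chi_{>,\alpha}\hat\chi_{>,\beta}$. For $m=0$, we drop the denominator, since it is at least $1$. Substituting $s=\mu\tilde\lambda_\alpha$ and bounding $\int_0^\infty \mu^2\tilde\lambda_\alpha\tilde\lambda_\beta/((1+\mu^2\tilde\lambda_\alpha^2)(1+\mu^2\tilde\lambda_\beta^2))\,\d\mu$ gives $\pi/(\tilde\lambda_\alpha+\tilde\lambda_\beta)$. This is the explicit elementary integral, which yields exactly $(\tilde\lambda_\alpha+\tilde\lambda_\beta)^{-1}$ up to constants.

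For $m\ge1$ we distribute the discrete derivatives by the repeated product rule \eqref{eq:product_rule_repeated}; the $L^{-1}$-suppressed excess derivatives are harmless since $L^{-1}\ll\fl$. Derivatives falling on $\hat V$ and $\hat\chi_{>,\alpha}$ cost at most $\fl^{-1}$ each. For $g_\alpha$, we use Lemma~\ref{lem:regularity_lambda} together with the chain rule \eqref{eq:chain_rule_repeated} applied to $x\mapsto x/(1+\mu^2x^2)$. Each derivative of $\tilde\lambda$ gives $\fl^{-1}\tilde\lambda$, and $|x^j\partial_x^j(x/(1+\mu^2x^2))|\le C_j\,x/(1+\mu^2x^2)$. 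Hence $|\partial^{\um}_k g_\alpha|\le C_m\fl^{-m}g_\alpha$ uniformly in $\mu$, with shifts handled by \eqref{eq:lambda_diff_control}. The same relative bound then holds for $\tilde v^2_{\gamma,k}$, whose $\tilde n^2\propto\tilde\lambda$ factor is controlled by the same lemma, and thus for every summand of $D$. Because all summands are nonnegative, it follows that $|\partial^{\um}_k D|\le C_m\fl^{-m}D$.

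The main obstacle is the denominator. We apply the chain rule to $h(x)=(1+x)^{-1}$ with $x=2\mu^2D\ge0$. Since $|x^\ell h^{(\ell)}(x)|\le C_\ell h(x)$, every term is bounded by $C_m\fl^{-m}(1+2\mu^2D)^{-1}\le C_m\fl^{-m}$, uniformly in $\mu$ and independently of the size of $D$ (which can be of order $M$). The evaluations at shifted momenta $k+k'$ are handled via \eqref{eq:lambda_diff_control} and interpolation~\eqref{eq:interpolation_kprime}.

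Collecting these estimates, the integrand of $\partial^{\um}_k\hat\varphi_B$ is bounded by $C_{V,m}\fl^{-m}\mu^2g_\alpha g_\beta$. We then integrate as in the case $m=0$. Commuting the discrete derivatives through the $\mu$-integral is justified by dominated convergence, using the integrable majorant just obtained. This yields \eqref{eq:regularity_B}, and left derivatives are handled identically.
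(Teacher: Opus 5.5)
Your skeleton is the paper's: insert \eqref{eq:hP} so that the $\tilde n$'s cancel, read off the support from the prefactor $\hat V\hat\chi_{>,\alpha}\hat\chi_{>,\beta}$, distribute the derivatives, and finish with the explicit $\mu$-integral. You bound the $\tilde\lambda$-dependent factors relatively via Lemma~\ref{lem:regularity_lambda} and the prefactor absolutely. Your relative bound $|\partial^{\um}_k g_\alpha|\le C_m\fl^{-m}g_\alpha$, obtained from $|x^j\partial_x^j(x/(1+\mu^2x^2))|\le C_j\,x/(1+\mu^2x^2)$, is correct. It is a neat substitute for the paper's separate handling of $\tilde\lambda^{-1}$ and $(\tilde\lambda^{-2}+\mu^2)^{-1}$.

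\textbf{The gap is in the denominator step.} You claim $|\partial^{\um}_k\tilde v_{\gamma,k}^2|\le C_m\fl^{-m}\tilde v_{\gamma,k}^2$ and hence $|\partial^{\um}_kD|\le C_m\fl^{-m}D$. This is false. The weight $\tilde v_{\gamma,k}^2$ contains the factors $\hat V(k)\hat\chi_{>,\gamma}(k)^2$, and these admit no relative derivative bound near the boundary of their supports. Concretely, at a lattice point $k$ with $\hat V(k)=0$ but $\hat V(k+2\pi L^{-1}e_j)>0$, you have $D(k)=0$ while $\partial_{k_j}D(k)\neq0$. The comparability of $D$ at shifted momenta, which you need in order to use $|x^\ell h^{(\ell)}(x)|\le C_\ell h(x)$ inside the discrete chain rule, fails for the same reason. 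So the chain-rule argument for $(1+2\mu^2D)^{-1}$ is not justified as written.

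\textbf{How to repair it.} The missing observation is that the denominator is harmless. Your remark that $D$ ``can be of order $M$'' is mistaken. By \eqref{eq:hP} and \eqref{eq:ntilde},
\[
\tilde v_{\gamma,k}^2 = c_\rho\,\hat V(k)\,\hat\chi_{>,\gamma}(k)^2\,\tilde\lambda_{\gamma,k}/M, \qquad c_\rho \text{ bounded}.
\]
Bound derivatives of $\hat V\hat\chi_{>,\gamma}^2$ absolutely by $C_{V,m}\fl^{-m}$, and derivatives of $g_\gamma\tilde\lambda_{\gamma,k}$ relatively. Since $\mu^2g_\gamma\tilde\lambda_{\gamma,k}=\mu^2\tilde\lambda_{\gamma,k}^2/(1+\mu^2\tilde\lambda_{\gamma,k}^2)\le1$, each of the $M$ summands of $2\mu^2\partial^{\um}_kD$ is at most $C_{V,m}\fl^{-m}M^{-1}$. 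Hence
\[
2\mu^2|\partial^{\um}_kD|\le C_{V,m}\fl^{-m}
\]
uniformly in $\mu$, $M$ and $k$, and in particular $2\mu^2D\le C_V$. Then apply the chain rule \eqref{eq:chain_rule_repeated} with $h(x)=(1+x)^{-1}$, using only $|h^{(\ell)}(x)|\le\ell!$ for $x\ge0$. This gives $|\partial^{\um}_k(1+2\mu^2D)^{-1}|\le C_{V,m}\fl^{-m}$, and no comparability of $D$ at shifted points is needed.

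This is exactly the paper's route, where $\mu^2|\partial^{\um}_k\tilde Q_k(\mu)|$ with $\tilde Q_k(\mu)$ from \eqref{eq:Qkmu} is bounded absolutely. With this replacement, the rest of your argument goes through.
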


\begin{proof}[Proof of Lemma \ref{lem:regularity_B}]
We only show the proof for all derivatives being right-derivatives, since the proof including left-derivatives is again analogous.\\
Recall the integral representation~\eqref{eq:Bintegral} of $ B(k) $, where the integrand is positive for all $ \mu \ge 0 $. From~\eqref{eq:hP} $ \tilde{v}_{\alpha,k}^2 \lesssim \hat\chi_{>,\alpha}(k)^2 \hat{V}(k) \tilde\lambda_{\alpha,k} M^{-1} $, we readily see that this integrand vanishes if $ k \notin \supp(\hat\chi_{>,\alpha}) \cap \supp(\hat\chi_{>,\beta}) $ or if $ |k| > R $, in which case $ \hat{V}(k) = 0 $.
Applying the product rule from Lemma~\ref{lem:product_chain_rule}, we then obtain
\begin{equation} \label{eq:Bderivativebound}
\begin{aligned}
	&| \partial_k^{\um} \hat\varphi_B(k)_{\alpha,\beta} |
	= \left\vert \partial_k^{\um} \frac{B(k)_{\alpha,\beta}}{\tilde{n}_{\alpha,k} \tilde{n}_{\beta,k}} L^3 \rho^{\frac 23} \right\vert \\
	&\le C_m \sum_{\substack{\um_1, \ldots, \um_8 \in \NNN_0^3 \\ m \le |\um_1 + \ldots + \um_8| \le 8 m}} L^{m-|\um_1 + \ldots + \um_8|} \int_0^\infty \mu^2
		|\partial_k^{\um_1} \tilde\lambda_{\alpha,k}^{-1}|
		|\partial_k^{\um_2} (\tilde\lambda_{\alpha,k}^{-2} + \mu^2)^{-1}| 
		|\partial_k^{\um_3} \tilde\lambda_{\beta,k}^{-1}| \\
	& \qquad \times  
		|\partial_k^{\um_4} (\tilde\lambda_{\beta,k}^{-2} + \mu^2)^{-1}|
		|\partial_k^{\um_5} \hat{V}(k) |
		|\partial_k^{\um_6} \hat\chi_{>,\alpha}(k)|
		|\partial_k^{\um_7} \hat\chi_{>,\beta}(k)|
		|\partial_k^{\um_8} (1 + \mu^2 \tilde Q_k(\mu))^{-1}|
		\d \mu \;,
\end{aligned}
\end{equation}
where we abbreviated
\begin{equation} \label{eq:Qkmu}
	\tilde Q_k(\mu)
	= 2 \sum_{\gamma = 1}^M \tilde\lambda_{\gamma,k}^{-1} (\tilde\lambda_{\gamma,k}^{-2} + \mu^2)^{-1} \tilde{v}_{\gamma,k}^2 \;.
\end{equation}
Lemma~\ref{lem:regularity_lambda} yields $ |\partial_k^{\um_1} \tilde\lambda_{\alpha,k}^{-1}| \le C_m \fl^{-|\um_1|} \tilde\lambda_{\alpha,k}^{-1} $. To bound $ |\partial_k^{\um_2} (\tilde\lambda_{\alpha,k}^{-2} + \mu^2)^{-1}| $, observe that $ k \mapsto \tilde\lambda_{\alpha,k}^{-2} $ extends to a smooth function on $ \RRR^3 $, so by the chain rule in Lemma~\ref{lem:product_chain_rule} with $ h(x) = (x + \mu^2)^{-1} $,
\begin{equation}
\begin{aligned}
	&|\partial_k^{\um_2} (\tilde\lambda_{\alpha,k}^{-2} + \mu^2)^{-1}| \\
	&\le C_m \sum_{\ell=0}^{|\um_2|} \sum_{\substack{\um_{2,1}, \ldots, \um_{2,\ell} \in \NNN_0^3 \\ |\um_2| \le |\um_{2,1} + \ldots + \um_{2,\ell}| \le (\ell+1) |\um_2|}} 
		L^{|\um_2| - |\um_{2,1} + \ldots + \um_{2,\ell}|} \\ & \times \sup_{\substack{ k'_1, \ldots, k'_{\ell+1} \in \RRR^3 \\ |k'_1|, \ldots, |k'_{\ell+1}| \le 2 \pi \ell L^{-1}}}
		|\partial_k^{\um_{2,1}} \tilde\lambda_{\alpha,k + k'_1}^{-2}| \cdots |\partial_k^{\um_{2,\ell}} \tilde\lambda_{\alpha,k + k'_\ell}^{-2}|
		 |(\tilde\lambda_{\alpha,k + k'_{\ell+1}}^{-2} + \mu^2)^{-\ell-1}| \;.
\end{aligned}
\end{equation}
With Lemma~\ref{lem:regularity_lambda} and~\eqref{eq:lambda_diff_control} we estimate
\begin{equation}
	|\partial_k^{\um_{2,j}} \tilde\lambda_{\alpha,k + k'_j}^{-2}|
	\le C_m \fl^{-|\um_{2,j}|} \tilde\lambda_{\alpha,k + k'_j}^{-2}
	\le C_m \fl^{-|\um_{2,j}|} \tilde\lambda_{\alpha,k}^{-2} \;.
\end{equation}
Moreover, with~\eqref{eq:lambda_diff_control},
\begin{equation}
	|(\tilde\lambda_{\alpha,k + k'_{\ell+1}}^{-2} + \mu^2)^{-\ell-1}|
	\le |((C_m \tilde\lambda_{\alpha,k})^{-2} + \mu^2)^{-\ell-1}|
	\le C_m |(\tilde\lambda_{\alpha,k}^{-2} + \mu^2)^{-\ell-1}| \;.
\end{equation}
Thus,
\begin{equation} \label{eq:control_partial_um2}
\begin{aligned}
	&|\partial_k^{\um_2} (\tilde\lambda_{\alpha,k}^{-2} + \mu^2)^{-1}| \\
	&\le C_m \sum_{\ell=0}^{|\um_2|} \sum_{\substack{\um_{2,1}, \ldots, \um_{2,\ell} \in \NNN_0^3 \\ |\um_2| \le |\um_{2,1} + \ldots + \um_{2,\ell}| \le (\ell+1) |\um_2|}} 
		L^{|\um_2| - |\um_{2,1} + \ldots + \um_{2,\ell}|} \fl^{- |\um_{2,1} + \ldots + \um_{2,\ell}|} \frac{(\tilde\lambda_{\alpha,k}^{-2})^\ell}{(\tilde\lambda_{\alpha,k}^{-2} + \mu^2)^{\ell+1}} \\
	&\le C_m \fl^{-|\um_2|} (\tilde\lambda_{\alpha,k}^{-2} + \mu^2)^{-1} \;,
\end{aligned}
\end{equation}
where we used $ L^{-1} \ll \fl $, so terms with $ |\um_{2,1} + \ldots + \um_{2,\ell}| > |\um_2| $ are subleading.
The terms with $ \partial_k^{\um_3} $ and $ \partial_k^{\um_4} $ in~\eqref{eq:Bderivativebound} are treated analogously. Moreover, $ |\partial_k^{\um_5} \hat{V}(k)| \le C_V $ since $ \hat{V} \in C_c^\infty $. Similarly, with~\eqref{eq:chi>}, we get $ |\partial_k^{\um_6} \hat\chi_{>,\alpha}(k)| \le C \fl^{-|\um_6|} $ and $ |\partial_k^{\um_7} \hat\chi_{>,\beta}(k)| \le C \fl^{-|\um_7|} $.
To control the $ \partial_k^{\um_8} $-term, we use definition~\eqref{eq:hP} of $ \tilde{v}_{\gamma,k}^2 $ and positivity of all contributions to the sum in~\eqref{eq:Qkmu}, and apply the product rule in Lemma~\ref{lem:product_chain_rule}:
\begin{equation} \label{eq:control_Qkmu}
\begin{aligned}
	|\partial_k^{\um_8} \tilde Q_k(\mu)| 
	&\le C_m \sum_{\substack{\um_{8,1}, \ldots \um_{8,5} \in \NNN_0^3 \\ |\um_8| \le |\um_{8,1} + \ldots + \um_{8,5}| \le 5 |\um_8|}}
		L^{|\um_8| - |\um_{8,1} + \ldots + \um_{8,5}|}
		\sum_{\gamma = 1}^M
		|\partial_k^{\um_{8,1}} \tilde\lambda_{\gamma,k}^{-1}| \\
	&\quad \times |\partial_k^{\um_{8,2}} (\tilde\lambda_{\gamma,k}^{-2} + \mu^2)^{-1} |
		|\partial_k^{\um_{8,3}} \hat{V}(k)|
		|\partial_k^{\um_{8,4}} (\hat\chi_{>,\gamma}(k))^2|
		\left\vert \partial_k^{\um_{8,5}} \tilde{n}_{\gamma,k}^2 \rho^{-\frac 23} L^{-3} \right\vert \\
	&\le C_{V,m} \sum_{\substack{\um_{8,1}, \ldots \um_{8,5} \in \NNN_0^3 \\ |\um_8| \le |\um_{8,1} + \ldots + \um_{8,5}| \le 5 |\um_8|}}
		L^{|\um_8| - |\um_{8,1} + \ldots + \um_{8,5}|} \\
	&\quad \times \sum_{\gamma = 1}^M
		\fl^{-|\um_{8,1}|} \tilde\lambda_{\gamma,k}^{-1}
		\fl^{-|\um_{8,2}|} (\tilde\lambda_{\gamma,k}^{-2} + \mu^2)^{-1}
		\fl^{-|\um_{8,4}|}
		\fl^{-|\um_{8,5}|} M^{-1} \tilde\lambda_{\gamma,k} \\
	&\le C_{V,m} \mu^{-2} \fl^{-|\um_8|}
		\sum_{\gamma = 1}^M M^{-1}
	\le C_{V,m} \mu^{-2} \fl^{-|\um_8|}  \;,
\end{aligned}
\end{equation}
Here, we controlled the $ \partial_k^{\um_{8,5}} $-contribution by Lemma \ref{lem:regularity_lambda} and $ \tilde{n}_{\gamma,k}^2 \le C \tilde\lambda_{\gamma,k} \rho^{\frac 23} L^3 M^{-1} $, see~\eqref{eq:ntilde}, and all other contributions as above, and then used $ L^{-1} \ll \fl $ to conclude that all contributions with $ |\um_{8,1} + \ldots + \um_{8,5}| > |\um_8| $ are subleading. Next, we apply the chain rule in Lemma~\ref{lem:product_chain_rule} with $ h(x) = (1 + \mu^2 x)^{-1} $, where we observe that $ k \mapsto \tilde Q_k(\mu) $ naturally extends to a smooth function on $ \RRR^3 $, since $ \hat{V}(k) $ and $ k \mapsto \tilde\lambda_{\alpha,k}^{-1} $ extend to smooth functions:
\begin{equation} \label{eq:control_partial_um6}
\begin{aligned}
	&|\partial_k^{\um_8} (1 + \mu^2 \tilde Q_k(\mu))^{-1}| \\
	&\le C_m \sum_{\ell=0}^{|\um_8|} \sum_{\substack{\um_{8,1}, \ldots, \um_{8,\ell} \in \NNN_0^3 \\ |\um_8| \le |\um_{8,1} + \ldots + \um_{8,\ell}| \le (\ell+1) |\um_8|}} 
		L^{|\um_8| - |\um_{8,1} + \ldots + \um_{8,\ell}|} \\
	&\quad \times \sup_{\substack{ k'_1, \ldots, k'_{\ell+1} \in \RRR^3 \\ |k'_1|, \ldots, |k'_{\ell+1}| \le 2 \pi \ell L^{-1}}}
		|\partial_k^{\um_{8,1}} \tilde Q_{k + k'_1}(\mu)|
		\ldots 
		|\partial_k^{\um_{8,\ell}} \tilde Q_{k + k'_\ell}(\mu)|
		\mu^{2 \ell} |(1 + \mu^2 \tilde Q_{k + k'_{\ell+1}}(\mu))^{-\ell-1}| \;.
\end{aligned}
\end{equation}
We again remove shifts $ |k'| \le C_m L^{-1} $ by interpolating
\begin{equation}
\begin{aligned}
	&\mu^2 |\partial_k^{\um_{8,j}} \tilde Q_{k + k'}(\mu)|
	\le \mu^2 |\partial_k^{\um_{8,j}} \tilde Q_k(\mu)|
		+ C_m L^{-1} \sup_{|k''| \le |k'|} \mu^2 |\nabla \partial_k^{\um_{8,j}} \tilde Q_{k+k''}(\mu)| \\
	&\le C_{V,m} \fl^{-|\um_{8,j}|} + C_{V,m} \fl^{-|\um_{8,j}|-1} L^{-1}
	\le C_{V,m} \fl^{-|\um_{8,j}|} \;,
\end{aligned}
\end{equation}
as well as
\begin{equation}
\begin{aligned}
	&|(1 + \mu^2 \tilde Q_{k+k'}(\mu))^{-1}|
	\le \Big(1 + \mu^2 \tilde Q_{k}(\mu) - C_m L^{-1} \sup_{|k''| \le |k'|} \mu^2 |\nabla \tilde Q_{k+k''}(\mu)| \Big)^{-1} \\
	&\le \Big(1 - C_{V,m} L^{-1} \fl^{-1} + \mu^2 \tilde Q_{k}(\mu) \Big)^{-1}
	\le C_{V,m} (1 + \mu^2 \tilde Q_{k}(\mu))^{-1} \;.
\end{aligned}
\end{equation}
Hence,
\begin{equation}
\begin{aligned}
	&|\partial_k^{\um_8} (1 + \mu^2 \tilde Q_k(\mu))^{-1}| \\
	&\le C_{V,m} \sum_{\ell=0}^{|\um_8|} \sum_{\substack{\um_{8,1}, \ldots, \um_{8,\ell} \in \NNN_0^3 \\ |\um_8| \le |\um_{8,1} + \ldots + \um_{8,\ell}| \le (\ell+1) |\um_8|}} 
		L^{|\um_8| - |\um_{8,1} + \ldots + \um_{8,\ell}|} 
		\fl^{- |\um_{8,1} + \ldots + \um_{8,\ell}|}
		(1 + \mu^2 \tilde Q_k(\mu))^{-1} \\
	&\le C_{V,m} \fl^{-|\um_8|} \;.
\end{aligned}
\end{equation}
Plugging all derivative bounds into \eqref{eq:Bderivativebound}, and using that terms with $ |\um_1 + \ldots + \um_8| > m $ derivatives are subleading due to $ L^{-1} \ll \fl $, we finally get
\begin{equation}
\begin{aligned}
	| \partial_k^{\um} \hat\varphi_B(k)_{\alpha,\beta} |
	&\le C_{V,m} \fl^{-m} \int_0^\infty \mu^2
		\tilde\lambda_{\alpha,k}^{-1} (\tilde\lambda_{\alpha,k}^{-2} + \mu^2)^{-1}
		\tilde\lambda_{\beta,k}^{-1} (\tilde\lambda_{\beta,k}^{-2} + \mu^2)^{-1}
		\d \mu \\
	&\le C_{V,m} \fl^{-m}
		(\tilde\lambda_{\alpha,k} + \tilde\lambda_{\beta,k})^{-1} \;,
\end{aligned}
\end{equation}
where the $ \mu $-integral was evaluated explicitly.
\end{proof}

\begin{lemma}[Pointwise bounds on $ K $ and $ K_j $]
\label{lem:regularity_K}
Recall the definition \eqref{eq:K} of $ K(k) $ and the notation $ K_j(k) = K(k) p_j(k) $, where $ p_j $ is any polynomial of degree $ j $, as well as $ \hat\varphi_K(k)_{\alpha,\beta} $ from~\eqref{eq:varphi}. Then, given $ m,j \in \NNN_0 $ and any multi-index $ \um \in \NNN_0^3 $ with $ |\um| = m $, there exists some $ C_{V,m}, C_{V,m,p} > 0 $ such that
\begin{equation} \label{eq:regularity_K}
\begin{aligned}
	| \partial_k^{\um} \hat\varphi_K(k)_{\alpha,\beta} |
	&\le C_{V,m} \fl^{-m} \tilde\lambda_{\alpha,k}^{-\frac 12} \tilde\lambda_{\beta,k}^{-\frac 12} \qquad
	\forall k \in \Lambda_L , \; \alpha, \beta \in \{1, \ldots, M\} \;, \\
	| \partial_k^{\um} \hat\varphi_{K_j}(k)_{\alpha,\beta} |
	&\le C_{V,m,p} \fl^{-m} \tilde\lambda_{\alpha,k}^{-\frac 12} \tilde\lambda_{\beta,k}^{-\frac 12} \qquad
	\forall k \in \Lambda_L , \; \alpha, \beta \in \{1, \ldots, M\} \;,
\end{aligned}
\end{equation}
and the same bounds are true if any number of right-derivatives $ \partial_{k_j} $ is replaced by left-derivatives $ \bar\partial_{k_j} $.
Moreover,
\begin{equation}
K(k)_{\alpha,\beta} = \hat\varphi_K(k)_{\alpha,\beta} = 0 \qquad \text{if } |k| > R \text{ or } k \notin \supp(\hat\chi_{>,\alpha}) \cap \supp(\hat\chi_{>,\beta}) \;.
\end{equation}
\end{lemma}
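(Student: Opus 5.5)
Our plan is to deduce the bounds on $K$ from those on $B = 1 - e^{2K}$ in Lemma~\ref{lem:regularity_B}, using a functional-calculus relation between $K$ and $B$ that behaves well under discrete differentiation. Since $e^{2K} = 1 - B$, we have $K = \tfrac12 \log(1-B)$. This is a convergent power series provided $\Vert B(k) \Vert_{\op} < 1$ uniformly. We expect this: $K(k)$ is real symmetric, and (by the construction in \cite{BNPSS20,CHN21}) $K(k) \le 0$ with $\Vert K(k)\Vert_{\op} \le C_V$. Hence $0 \le B(k) \le 1 - e^{-2C_V} < 1$. The rescaled objects $\hat\varphi_B$ and $\hat\varphi_K$ are obtained by conjugating with the diagonal matrix $D(k) := \mathrm{diag}(\tilde n_{\alpha,k}^{-1})$, up to the factor $L^3\rho^{2/3}$. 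Rather than working with this power series, we plan to use an integral representation:
\begin{equation*}
	K = \tfrac12 \log(1-B) = -\tfrac12 \int_0^1 B\,(1 - tB)^{-1}\, \d t \;,
\end{equation*}
which is a resolvent-type formula with $\Vert (1-tB)^{-1}\Vert_{\op} \le e^{2C_V}$.

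The key steps, in order, are as follows. First, the vanishing statement. By Lemma~\ref{lem:regularity_B}, row and column $\alpha$ of $B(k)$ vanish when $k \notin \supp \hat\chi_{>,\alpha}$ or $|k|>R$. The same then holds for $B(1-tB)^{-1} = \sum_n t^n B^{n+1}$, and therefore for $K$.

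Second, rescaled entrywise bounds. We introduce the weighted matrix $\tilde B := \Lambda^{1/2} \hat\varphi_B \Lambda^{1/2}$ with $\Lambda = \mathrm{diag}(\tilde\lambda_{\alpha,k})$. By Lemma~\ref{lem:regularity_B}, $|\partial^{\um}_k \tilde B_{\alpha\beta}| \le C\fl^{-m} \tilde\lambda_\alpha^{1/2}\tilde\lambda_\beta^{1/2}(\tilde\lambda_\alpha+\tilde\lambda_\beta)^{-1}$, up to derivatives landing on $\Lambda^{1/2}$, which are controlled by Lemma~\ref{lem:regularity_lambda}. The goal is $|\partial^{\um}_k(\Lambda^{1/2}\hat\varphi_K\Lambda^{1/2})_{\alpha\beta}| \le C\fl^{-m}$. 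Note that $\hat\varphi_X = L^3\rho^{2/3} D X D$ with $D_{\alpha}^2 \propto M/(\rho^{2/3}L^3\tilde\lambda_\alpha)$. Consequently $\Lambda^{1/2}\hat\varphi_X\Lambda^{1/2} = c\,M\,X$, and the desired statement is precisely the entrywise bound $|\partial^{\um}_k K_{\alpha\beta}| \le C\fl^{-m}M^{-1}$. Likewise, Lemma~\ref{lem:regularity_B} gives $|\partial^{\um}_k B_{\alpha\beta}| \le C\fl^{-m}M^{-1}$, using $\sqrt{\tilde\lambda_\alpha\tilde\lambda_\beta}\le(\tilde\lambda_\alpha+\tilde\lambda_\beta)/2$.

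Third, propagation through products. Matrices with entries $O(M^{-1})$ form an algebra up to constants: if $|A_{\alpha\beta}|,|A'_{\alpha\beta}| \le a M^{-1}$, then $|(AA')_{\alpha\beta}| \le a^2 M^{-1}$, since the sum over $M$ indices cancels one factor of $M^{-1}$. With the max-entry norm scaled by $M$, $\Vert A\Vert_{*} := M\max|A_{\alpha\beta}|$ is submultiplicative. We therefore expand $K = \tfrac12\log(1-B)$ as the power series $-\tfrac12\sum_n B^n/n$. Applying the product rule from Lemma~\ref{lem:product_chain_rule} to $\partial^{\um}_k B^n$ distributes the $m$ derivatives among at most $m$ factors, in at most $n^m$ ways, each costing $C\fl^{-|\um_i|}$; the $L^{-1}$ excess terms are subleading since $L^{-1}\ll\fl$.

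The main obstacle is convergence. Submultiplicativity only gives $\Vert B^n\Vert_* \le \Vert B\Vert_*^n$, and $\Vert B\Vert_*$ need not be less than $1$. To fix this, we use operator norm decay for the undifferentiated factors: we write an entry as $e_\alpha^T B^{j_0}(\partial B)B^{j_1}\cdots e_\beta$. The outer factors adjacent to a differentiated or boundary factor are estimated in the $*$-norm, and the long undifferentiated strings by $\Vert B^j\Vert_{\op}\le q^j$ with $q := 1-e^{-2C_V}<1$. This works via $|(A B^j A')_{\alpha\beta}| \le \Vert A^T e_\alpha\Vert_2 q^j \Vert A' e_\beta\Vert_2$ together with $\Vert A e_\beta\Vert_2 \le \sqrt M \max|A| \le M^{-1/2}\Vert A\Vert_*$. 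The result is $|(\partial^{\um}_k B^n)_{\alpha\beta}| \le C_m n^{m+2} q^{n-m-2}\fl^{-m}M^{-1}$, which is summable over $n$ with the $1/n$ weight.

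Finally, for $K_j = p_j K$, the product rule applies once more, using that derivatives of $p_j$ are bounded on $|k|\le R$. Left derivatives are handled identically.
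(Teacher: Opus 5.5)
Your route is essentially the paper's. You expand $K=\tfrac12\log(1-B)=-\sum_{n\ge1}B^n/(2n)$ and turn Lemma~\ref{lem:regularity_B} into entrywise bounds $|\partial_k^{\um}B(k)_{\alpha,\beta}|\le C\fl^{-m}M^{-1}$. You then bound each term of the differentiated series using three ingredients: $\ell^2$-norms of the first row and last column, gaining $M^{-1/2}$ each; operator norms $\Vert B\Vert^{s}$ on undifferentiated strings; and an $O(\fl^{-|\um_i|})$ norm on differentiated factors. The paper uses Hilbert--Schmidt norms where you use $\Vert\cdot\Vert_*$; both dominate the operator norm and play the same role. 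The vanishing statement and the extension to $K_j$ are handled exactly as in the paper. However, two steps that the paper works out inside this proof are only asserted in yours.

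First, the uniform bound $\Vert B(k)\Vert\le q<1$, i.e.\ $(1-\Vert B(k)\Vert)^{-1}\le C_V$ uniformly in $k,\rho,L,M$, is what makes the series converge with admissible constants. You take it from \cite{BNPSS20,CHN21}, but here $h,P$ contain the regularized $\tilde\lambda_{\alpha,k}$ and the cutoffs $\hat\chi_{>,\alpha}$, $\hat\chi_R$, so it has to be checked. The paper does this directly:
\begin{itemize}
\item operator monotonicity of the square root gives $(h^2+2h^{1/2}Ph^{1/2})^{1/2}\ge h$, hence $K\le0$ and $0\le B<1$;
\item for the rank-one $P=|\tilde v_k\rangle\langle\tilde v_k|$, Sherman--Morrison together with $A^{1/2}=\frac2\pi\int_0^\infty A(A+t^2)^{-1}\,\d t$ yields $\Vert e^{-2K(k)}\Vert\le 1+C\sum_\alpha\tilde v_{\alpha,k}^2\tilde\lambda_{\alpha,k}^{-1}\le 1+C_V$, using $\tilde v_{\alpha,k}^2\le C_V\tilde\lambda_{\alpha,k}M^{-1}$.
\end{itemize}

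Second, your count ``at most $m$ factors hit, in at most $n^m$ ways'' is true only for the leading terms. For discrete derivatives one has $\partial_{k_j}(B_1\cdots B_n)=\sum_{\emptyset\neq S\subseteq\{1,\ldots,n\}}(2\pi L^{-1})^{|S|-1}\,\tilde B_1\cdots\tilde B_n$, with $\tilde B_i=\partial_{k_j}B_i$ for $i\in S$ and $\tilde B_i=B_i$ otherwise. So up to $n$ factors can be hit, and the number of excess terms grows exponentially in $n$. Saying they are ``subleading since $L^{-1}\ll\fl$'' is fine for each fixed $n$, but it is not automatically uniform over the infinite series. The paper closes this in three steps:
\begin{itemize}
\item it groups terms by the number $r$ of hit factors;
\item it resums $\sum_{n\ge r}{n\choose r}\Vert B\Vert^{n-r}=(1-\Vert B\Vert)^{-r-1}\le C_V^{r+1}$;
\item it uses that every hit factor beyond the $m$-th costs an extra $L^{-1}\fl^{-1}$, so the sum over $r$ converges once $L$ is large depending on $\rho$.
\end{itemize}
In your notation, the bound on $(\partial_k^{\um}B^n)_{\alpha,\beta}$ should carry $q+C_mL^{-1}\fl^{-1}$ in place of $q$, which is still summable for large $L$. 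With these two additions your argument is complete.
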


\begin{proof}
We only show the proof for right-derivatives, the case with left-derivatives is analogous. Recall~\eqref{eq:B} and expand in a Taylor series
\begin{equation}
	K(k)_{\alpha, \beta}
	= \frac 12 \log (1-B(k))_{\alpha,\beta}
	= - \sum_{n=1}^\infty \frac{1}{2n} (B(k)^n)_{\alpha, \beta} \;.
\end{equation}
By operator monotonicity of the square root, as well as~\eqref{eq:K} and~\eqref{eq:B}
\begin{equation}
\begin{aligned}
	(h^2 + 2 h^{\frac 12} P h^{\frac 12})^{\frac 12}
	&\ge h
	\quad \Rightarrow \quad
	h^{-\frac 12} (h^2 + 2 h^{\frac 12} P h^{\frac 12})^{\frac 12} h^{-\frac 12}
	\ge 1 \\
	\quad \Rightarrow \quad
	-K(k)
	&\ge 0
	\quad \Rightarrow \quad
	0 \le B(k) < 1 \;,
\end{aligned}
\end{equation}
so the log-series converges absolutely. Next, we show that $ \frac{1}{1 - \Vert B(k) \Vert} $ is uniformly bounded in $ (\rho,L) $: By spectral calculus and~\eqref{eq:B},~\eqref{eq:K}
\begin{equation}
	\frac{1}{1 - \Vert B(k) \Vert}
	= (\min \sigma(1-B(k)) )^{-1}
	= \max \sigma(e^{-2 K(k)})
	= \Vert h^{-\frac 12} (h^2 + 2 h^{\frac 12} P h^{\frac 12})^{\frac 12} h^{-\frac 12} \Vert \;.
\end{equation}
Now, recall~\eqref{eq:hP} that $ P $ is a rank-1 matrix $ P(k) = |\tilde{v}_k \rangle \langle \tilde{v}_k | $ with $ \tilde{v}_k = (\tilde{v}_{\alpha,k})_{\alpha=1}^M $. Thus, using the Sherman--Morrison formula, the integral identity $ A^{\frac 12} = \frac{2}{\pi} \int_0^\infty \frac{A \; \d t}{A + t^2} $, and omitting indices $ k $ for convenience,
\begin{equation}
\begin{aligned}
	&\Vert h^{-\frac 12} (h^2 + 2 h^{\frac 12} P h^{\frac 12})^{\frac 12} h^{-\frac 12} \Vert
	= \Bigg\Vert 1 + \frac{4}{\pi} \int_0^\infty \frac{t^2 (h^2+t^2)^{-1} |\tilde{v} \rangle \langle \tilde{v}| (h^2+t^2)^{-1}}{1 + 2 \langle \tilde{v}, (h^2+t^2)^{-1} h \tilde{v} \rangle} \d t \Bigg\Vert \\
	&\le 1 + C \int_0^\infty t^2 \Vert (h^2+t^2)^{-1} \tilde{v} \Vert^2 \d t
	= 1 + C \int_0^\infty t^2 \sum_{\alpha=1}^M (\tilde\lambda_{\alpha,k}^2 + t^2)^{-2} \tilde{v}_{\alpha,k}^2 \; \d t \\
	&\le 1 + C_V \sum_{\alpha=1}^M M^{-1} \int_0^\infty t^2  (\tilde\lambda_{\alpha,k}^2 + t^2)^{-2} \tilde\lambda_{\alpha,k} \; \d t
	\le 1 + C_V \;.
\end{aligned}
\end{equation}
where we used $ \tilde{v}_{\alpha,k}^2 \le C \hat{V}(k) \tilde{n}_{\alpha,k}^2 \rho^{-\frac 23} L^{-3} \le C_V \tilde\lambda_{\alpha,k} M^{-1} $ by~\eqref{eq:hP} and~\eqref{eq:ntilde}, as well as the identity $ \int_0^\infty t^2 (t^2 + \lambda^2)^{-2} \; \d t = \frac{\pi}{4 \lambda} $. Thus, $ \frac{1}{1-\Vert B(k) \Vert} $ is uniformly bounded in $ (\rho,L) $.\\
Moreover, for $ |k| > R $ or $ k \notin \supp(\hat\chi_{>,\alpha}) \cap \supp(\hat\chi_{>,\beta}) $, Lemma~\ref{lem:regularity_B} yields $ B(k)_{\alpha,\beta} = 0 $, so also $ K(k)_{\alpha,\beta} = 0 $. Hence, also setting $ K(k)_{\alpha,\beta} = 0 $ for $ k \notin \supp(\hat\chi_R) $~\eqref{eq:DB} does not affect smoothness. Next, we write
\begin{equation}
	\hat\varphi_K(k)_{\alpha,\beta}
	= - \sum_{n=1}^\infty \frac{1}{2n}
	\frac{L^{\frac 32} \rho^{\frac 13}}{\tilde{n}_{\alpha,k}}	
		B^n(k)_{\alpha,\beta}
		\frac{L^{\frac 32} \rho^{\frac 13}}{\tilde{n}_{\beta,k}} \;,
\end{equation}
so for each $ n $ we have $ n + 2 $ factors: $ \frac{L^{\frac 32} \rho^{\frac 13}}{\tilde{n}_{\alpha,k}} $, $ \frac{L^{\frac 32} \rho^{\frac 13}}{\tilde{n}_{\beta,k}} $, and $ n $ factors of $ B $. Evaluating $ \partial_k^{\um} \hat\varphi_K(k)_{\alpha,\beta} $ by the product rule from Lemma~\ref{lem:product_chain_rule}, each $ n $ will produce contributions with at least $ m $ and at most $ (n+2)m $ derivatives, where each excess derivative comes with a factor of $ L^{-1} $. We partition those contributions according to the number $ 0 \le r \le n $ of factors $ B(k) $ that are hit by at least one derivative, as well as the numbers $ s_1, \ldots, s_{r+1} $ of non-hit factors $ B(k) $ between those ones hit. For the $ r+2 $ factors hit by derivatives, let $ \cM_r $ denote the set of multi-indices $ (\um_0,\ldots,\um_{r+1}) $ tracking the derivatives on each factor, and observe that $ |\cM_r| \le 2^{(r+2)m} $, since when applying the product rule from Lemma~\ref{lem:product_chain_rule}, we can choose for each of the $ (r+2) $ terms and each of the $ m $ derivatives, if the derivative is applied to the term or not. Then,
\begin{equation}
	| \partial_k^{\um} \hat\varphi_K(k)_{\alpha,\beta} | 
	\le \sum_{n=1}^\infty \sum_{r=0}^n \I_{n,r} \;,
\end{equation}
where for $ r \ge 2 $,
\begin{equation}
\begin{aligned}
	\I_{n,r} &:= C
		\sum_{(\um_0,\ldots,\um_{r+1}) \in \cM_r}
		\sum_{\substack{s_1, \ldots, s_{r+1} \in \NNN_0 \\ s_1 + \ldots + s_{r+1} = n-r}}
		L^{m-|\um_0 +\ldots + \um_{r+1}|} 		 
		\Bigg\vert \partial_k^{\um_0} \frac{L^{\frac 32} \rho^{\frac 13}}{\tilde{n}_{\alpha,k}} \Bigg\vert \\
	&\quad \times \Bigg\vert
		\sum_{\alpha_1,\ldots,\alpha_r = 1}^M (B^{s_1} \partial_k^{\um_1} B)(k)_{\alpha,\alpha_1}
		\prod_{j=2}^r
		(B^{s_j} \partial_k^{\um_j} B)(k)_{\alpha_{j-1},\alpha_j}
		\; B^{s_{r+1}}(k)_{\alpha_r,\beta}
		\Bigg\vert
		\Bigg\vert \partial_k^{\um_{r+1}} \frac{L^{\frac 32} \rho^{\frac 13}}{\tilde{n}_{\beta,k}} \Bigg\vert \\
	&\le C_m \sum_{(\um_0,\ldots,\um_{r+1}) \in \cM_r}
		\sum_{\substack{s_1, \ldots, s_{r+1} \in \NNN_0 \\ s_1 + \ldots + s_{r+1} = n-r}} \!\!
		L^{m-|\um_0 +\ldots + \um_{r+1}|} 
		\Bigg(\sum_{\alpha_1=1}^M
		|(B^{s_1} \partial_k^{\um_1} B)(k)_{\alpha,\alpha_1}|^2 \Bigg)^{\frac 12}\\
	&\quad \times
		\prod_{j=2}^{r-1}
		\Bigg(\sum_{\alpha_{j-1}, \alpha_j=1}^M |(B^{s_j} \partial_k^{\um_j} B)(k)_{\alpha_{j-1},\alpha_j}|^2 \Bigg)^{\frac 12}
		\Bigg(\sum_{\alpha_{r-1}=1}^M |(B^{s_r} (\partial_k^{\um_r} B) B^{s_{r+1}})(k)_{\alpha_{r-1},\beta}|^2 \Bigg)^{\frac 12} \\
	&\quad \times
		M \fl^{-|\um_0 + \um_{r+1}|} \tilde\lambda_{\alpha,k}^{-\frac 12} \tilde\lambda_{\beta,k}^{-\frac 12} \\
\end{aligned}
\end{equation}
where we controlled $ |\partial_k^{\um} \tilde{n}_{\alpha,k}^{-1} L^{\frac 32} \rho^{\frac 13}| $ by Lemma~\ref{lem:regularity_lambda} and $ \frac{L^{\frac 32} \rho^{\frac 13}}{\tilde{n}_{\alpha,k}}  \le C M^{\frac 12} \tilde\lambda_{\alpha,k}^{-\frac 12} $ by~\eqref{eq:ntilde}, and used the Cauchy--Schwarz inequality. We now bound, using Lemmas~\ref{lem:regularity_lambda} and~\ref{lem:regularity_B}:
\begin{equation} \label{eq:Bsj_estimate}
\begin{aligned}
	&\sum_{\alpha, \beta=1}^M |(B^{s_j} \partial_k^{\um_j} B)(k)_{\alpha,\beta}|^2
	\le \Vert B \Vert^{2 s_j} \sum_{\alpha, \beta=1}^M |(\partial_k^{\um_j} B)(k)_{\alpha,\beta}|^2 \\
	&= \Vert B \Vert^{2 s_j} \sum_{\alpha, \beta=1}^M
		\Big\vert \partial_k^{\um_j} \frac{\tilde{n}_{\alpha,k} \tilde{n}_{\beta,k}}{L^3 \rho^{\frac 23}} \hat\varphi_B(k)_{\alpha,\beta} \Big\vert^2
	\le C_{V,m} \fl^{-2|\um_j|} \Vert B \Vert^{2 s_j} \sum_{\alpha, \beta=1}^M
		\Big\vert \frac{\tilde{n}_{\alpha,k} \tilde{n}_{\beta,k}}{L^3 \rho^{\frac 23}} \frac{1}{\tilde\lambda_{\alpha,k} + \tilde\lambda_{\beta,k}} \Big\vert^2 \\
	&\le C_{V,m} \fl^{-2|\um_j|} \Vert B \Vert^{2 s_j} \sum_{\alpha, \beta=1}^M M^{-2}
	\le C_{V,m} \fl^{-2|\um_j|} \Vert B \Vert^{2 s_j} \;.
\end{aligned}
\end{equation}
Next, we consider the term with $ \partial_k^{\um_r} B $. If $ s_{r+1} = 0 $, proceeding as above yields
\begin{equation}
	\sum_{\alpha_{r-1}=1}^M |(B^{s_r} \partial_k^{\um_r} B)(k)_{\alpha_{r-1},\beta}|^2
	\le C_{V,m} \fl^{-2|\um_r|} \Vert B \Vert^{2 s_r} M^{-1} \;.
\end{equation}
If $ s_{r+1} \ge 1 $, we split by Cauchy--Schwarz:
\begin{equation} \label{eq:Bsr_estimate}
\begin{aligned}
	&\Bigg(\sum_{\alpha_{r-1}=1}^M |(B^{s_r} (\partial_k^{\um_r} B) B^{s_{r+1}})(k)_{\alpha_{r-1},\beta}|^2 \Bigg)^{\frac 12} \\
	&\le \Bigg(\sum_{\alpha_{r-1}, \alpha_r=1}^M |(B^{s_r} \partial_k^{\um_r} B)(k)_{\alpha_{r-1},\alpha_r}|^2 \Bigg)^{\frac 12}
		\Bigg(\sum_{\alpha_r=1}^M |B^{s_{r+1}}(k)_{\alpha_r,\beta}|^2 \Bigg)^{\frac 12} \\
	&\le C_{V,m} \fl^{-|\um_r|} \Vert B \Vert^{s_r} 
		\Bigg(\sum_{\alpha_r=1}^M |(B^{s_{r+1}-1} \partial_k^0 B)(k)_{\alpha_r,\beta}|^2 \Bigg)^{\frac 12} \\
	&\le C_{V,m} \fl^{-|\um_r|} \Vert B \Vert^{s_r + s_{r+1} - 1} M^{-\frac 12} \;. 
\end{aligned}
\end{equation}
The term with $ \partial_k^{\um_1} B $ is then controlled as the term with $ \partial_k^{\um_r} B $ for $ s_r = 0 $. Overall, using also $ \Vert B \Vert \le 1 $, we obtain
\begin{equation} \label{eq:Inr_estimate}
\begin{aligned}
	\I_{n,r}
	&\le C_{V,m}^{r+1}
		\!\! \sum_{(\um_0,\ldots,\um_{r+1}) \in \cM_r} \!\!
		L^{m-|\um_0 +\ldots + \um_{r+1}|}
		\frac{\fl^{-|\um_0 +\ldots + \um_{r+1}|}}{\tilde\lambda_{\alpha,k}^{\frac 12} \tilde\lambda_{\beta,k}^{\frac 12}} 
		\!\! \sum_{\substack{s_1, \ldots, s_{r+1} \in \NNN_0 \\ s_1 + \ldots + s_{r+1} = n-r}} \!\!
		\min\{1,\Vert B \Vert^{n-r-2}\} \\
	&\le C_{V,m}^{r+1}
		\sum_{(\um_0,\ldots,\um_{r+1}) \in \cM_r}
		L^{m-|\um_0 +\ldots + \um_{r+1}|}
		\frac{\fl^{-|\um_0 +\ldots + \um_{r+1}|}}{\tilde\lambda_{\alpha,k}^{\frac 12} \tilde\lambda_{\beta,k}^{\frac 12}}
		{n \choose r}
		\min\{1,\Vert B \Vert^{n-r-2}\} \;,
\end{aligned}
\end{equation}
where we used that there are $ {n \choose r} $ ways to select the $ r $ out of $ n $ factors $ B(k) $ which are hit by derivatives. The cases $ r \in \{0,1\} $ have to be treated separately: For $ r = 1 $:
\begin{equation}
\begin{aligned}
	\I_{n,1} &:=
	 C_m \sum_{(\um_0,\um_1,\um_2) \in \cM_1}
		\sum_{s_1=0}^{n-1} 
		L^{m-|\um_0 + \um_1 + \um_2|} 
		|(B^{s_1} (\partial_k^{\um_1} B) B^{n-1-s_1})(k)_{\alpha,\beta}| \\
	&\quad \times 
		M \fl^{-|\um_0 + \um_2|} \tilde\lambda_{\alpha,k}^{-\frac 12} \tilde\lambda_{\beta,k}^{-\frac 12} \;, \\
\end{aligned}
\end{equation}
where the term with $ \partial_k^{\um_1} B $ is estimated as above: For $ 1 \le s_1 \le n-2 $, we split
\begin{equation} \label{eq:BBB_bound}
\begin{aligned}
	&|(B^{s_1} (\partial_k^{\um_1} B) B^{n-1-s_1})(k)_{\alpha,\beta}| \\
	&\le \Bigg( \sum_{\alpha_1=1}^M |B^{s_1}(k)_{\alpha,\alpha_1}|^2 \Bigg)^{\frac 12}
		\Bigg( \sum_{\alpha_1,\alpha_2=1}^M |(\partial_k^{\um_1} B)(k)_{\alpha_1,\alpha_2}|^2 \Bigg)^{\frac 12}
		\Bigg( \sum_{\alpha_2=1}^M |B^{n-1-s_1}(k)_{\alpha_2,\beta}|^2 \Bigg)^{\frac 12} \\
	&\le C_{V,m} \fl^{-|\um_1|} \Vert B \Vert^{n-3} M^{-1} \;,
\end{aligned}
\end{equation}
where the first and third bracket was bounded as in~\eqref{eq:Bsr_estimate} and the second bracket as in~\eqref{eq:Bsj_estimate} with $ s_j = 0 $. The final estimate for $ \I_{n,1} $ then agrees with~\eqref{eq:Inr_estimate}. For $ r = 0 $,
\begin{equation}
\begin{aligned}
	\I_{n,0} &:=
	 C_m \sum_{(\um_0,\um_1) \in \cM_0}
		L^{m-|\um_0 + \um_1|} 
		|B^n(k)_{\alpha,\beta}| 
		M \fl^{-|\um_0 + \um_1|} \tilde\lambda_{\alpha,k}^{-\frac 12} \tilde\lambda_{\beta,k}^{-\frac 12} \;, \\
\end{aligned}
\end{equation}
where $ |B^n(k)_{\alpha,\beta}| $ is bounded as in~\eqref{eq:BBB_bound} with $ \um_1 = 0 $, rendering a final bound for $ \I_{n,0} $ that agrees with~\eqref{eq:Inr_estimate}.
Summing over all $ n $ and $ r $, we then obtain
\begin{equation}
\begin{aligned}
	&| \partial_k^{\um} \hat\varphi_K(k)_{\alpha,\beta} | \\
	&\le \sum_{r=0}^\infty
		C_{V,m}^{r+1}
		\sum_{(\um_0,\ldots,\um_{r+1}) \in \cM_r}
		L^{m-|\um_0 +\ldots + \um_{r+1}|}
		\frac{\fl^{-|\um_0 +\ldots + \um_{r+1}|}}{\tilde\lambda_{\alpha,k}^{\frac 12} \tilde\lambda_{\beta,k}^{\frac 12}}
		\sum_{n=r}^\infty {n \choose r}
		\min\{1,\Vert B \Vert^{n-r-2}\} \;.
\end{aligned}
\end{equation}
Now $ \sum_{n=r}^\infty {n \choose r} \Vert B \Vert^{n-r} = (1 - \Vert B \Vert)^{-r-1} \le C_V^{r+1} $, so
\begin{equation}
	| \partial_k^{\um} \hat\varphi_K(k)_{\alpha,\beta} | 
	\le \sum_{r=0}^\infty
		C_{V,m}^{r+1}
		\sum_{(\um_0,\ldots,\um_{r+1}) \in \cM_r}
		L^{m-|\um_0 +\ldots + \um_{r+1}|}
		\fl^{-|\um_0 +\ldots + \um_{r+1}|}
		\tilde\lambda_{\alpha,k}^{-\frac 12} \tilde\lambda_{\beta,k}^{-\frac 12} \;.
\end{equation}
Next, observe that the number of excess derivatives is bounded as $ s := |\um_0 +\ldots + \um_{r+1}|-m \ge r-m $ and $ s \ge 0 $. Then, recalling $ |\cM_r| \le 2^{(r+2)m} $, we get
\begin{equation}
\begin{aligned}
	&| \partial_k^{\um} \hat\varphi_K(k)_{\alpha,\beta} | \\
	&\le \tilde\lambda_{\alpha,k}^{-\frac 12} \tilde\lambda_{\beta,k}^{-\frac 12}
	\Bigg( \sum_{r=0}^m
		C_{V,m}^{r+1}
		2^{(r+2)m}
		\fl^{-m}
	+ \sum_{r=m+1}^\infty
		C_{V,m}^{r+1}
		\sum_{s=r-m}^{(r+1)m}
		2^{(r+2)m}
		L^{-s} \fl^{-m-s} \Bigg) \\
	&\le \tilde\lambda_{\alpha,k}^{-\frac 12} \tilde\lambda_{\beta,k}^{-\frac 12}
		\Bigg( C_{V,m} \fl^{-m} 
		+ \sum_{r=m+1}^\infty
		C_{V,m}^{r+1}
		(r+1)m \; 
		2^{(r+2)m}
		L^{-r+m} \fl^{-r} \Bigg) \\
	&\le C_{V,m} \fl^{-m} \tilde\lambda_{\alpha,k}^{-\frac 12} \tilde\lambda_{\beta,k}^{-\frac 12} \;,
\end{aligned}
\end{equation}
where we used that the sum in $ r $ converges to an arbitrarily small constant for $ L $ large enough.\\
The bound for $ \hat\varphi_{K_j} $ is analogous, where the product rule additionally produces derivatives $ |\partial^{\um_{r+2}}_k p_j(k)| $, $ |\um_{r+2}| \le m $, where $ \partial^{\um_{r+2}}_k p_j(k) $ is again a polynomial in $ k $. Since $ K(k)_{\alpha,\beta} $ is supported for $ |k| \le R $, this polynomial is uniformly bounded as $ |\partial^{\um_{r+2}}_k p_j(k)| \le C_{m,p} $.
\end{proof}

\subsection{Deriving Pointwise Bounds for $I$, $h$, and $P$}
\label{subsec:hPI_Pointwise}

\begin{lemma}[Pointwise bounds on $ I $, $ h $ and $ P $]
\label{lem:regularity_IhP}
Let $ |k| \le 2 R $ and recall the definition \eqref{eq:cEalphahat} of $ I(k) $, as well as $ h(k)$ and  $P(k) $ from~\eqref{eq:hP} and $ \hat\varphi_X(k)_{\alpha,\beta} $ from~\eqref{eq:varphi}.
Then, given $ m \in \NNN_0 $ and any multi-index $ \um \in \NNN_0^3 $ with $ |\um| = m $, there exist $ C_m, C_{V,m} > 0 $ such that for all $ k \in \Lambda_L $ and $ \alpha, \beta \in \{1, \ldots, M\} $,
\begin{equation} \label{eq:regularity_IhP}
\begin{aligned}
	|\partial^{\um}_k (\tilde{n}_{\alpha,k}^2 I(k)_{\alpha,\alpha}) L^{-3} \rho^{-\frac 23}|
	&\le C_m \fl^{-m} M^{-1} \;, \\
	| \partial_k^{\um} \hat\varphi_h(k)_{\alpha,\beta} |
	&\le C_m M \delta_{\alpha,\beta} \;, \\
	| \partial_k^{\um} \hat\varphi_P(k)_{\alpha,\beta} |
	&\le C_{V,m} \fl^{-m} \;,
\end{aligned}
\end{equation}
and the same bounds are true if any number of right-derivatives $ \partial_{k_j} $ is replaced by left-derivatives $ \bar\partial_{k_j} $.
\end{lemma}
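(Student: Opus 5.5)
The three bounds in \eqref{eq:regularity_IhP} follow directly from the explicit formulas, combined with Lemma~\ref{lem:regularity_lambda} and the discrete product rule of Lemma~\ref{lem:product_chain_rule}. The discretization terms from the product rule carry excess derivatives with factors $L^{-1}$, and these are subleading because $L^{-1} \ll \fl$. This is the same mechanism as in the proofs of Lemmas~\ref{lem:regularity_B} and~\ref{lem:regularity_K}.

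\emph{First bound.} By \eqref{eq:cEalphahat}, $\tilde{n}_{\alpha,k}^2 I(k)_{\alpha,\alpha} = n_{\alpha,k}^2$. So we only need $|\partial_k^{\um} n_{\alpha,k}^2| L^{-3}\rho^{-\frac 23} \le C_m \fl^{-m} M^{-1}$. This is the third estimate of Lemma~\ref{lem:regularity_lambda}, since $\tilde\lambda_{\alpha,k} \le C$ for $|k| \le 2R$; indeed $f_>(x) = x$ for large $x$, so $\tilde\lambda_{\alpha,k} \le \max\{\fl, |k|\} \le C$.

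\emph{Second bound.} By \eqref{eq:varphi} and \eqref{eq:ntilde},
\[
\hat\varphi_h(k)_{\alpha,\beta} = \delta_{\alpha,\beta} L^3\rho^{\frac 23} \tilde\lambda_{\alpha,k}\hat\chi_R(k)\tilde{n}_{\alpha,k}^{-2} = \delta_{\alpha,\beta}\, \frac{(2\pi)^3 \rho^{\frac 23} M}{4\pi k_\F^2}\, \hat\chi_R(k).
\]
The factor $\tilde\lambda_{\alpha,k}$ cancels exactly, and $\rho^{\frac 23}/k_\F^2$ is bounded. Hence only derivatives of $\hat\chi_R$ appear. These are bounded by $C_m$ independently of $\fl$ via the interpolation bound \eqref{eq:interpolation}, which gives $C_m M\delta_{\alpha,\beta}$.

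\emph{Third bound.} We have
\[
\hat\varphi_P(k)_{\alpha,\beta} = L^3\rho^{\frac 23}\, \frac{\tilde v_{\alpha,k}}{\tilde n_{\alpha,k}}\, \frac{\tilde v_{\beta,k}}{\tilde n_{\beta,k}}.
\]
From \eqref{eq:hP}, $\tilde v_{\alpha,k}/\tilde n_{\alpha,k} = \big(\hat V(k)/(2k_\F\hbar \rho^{\frac23}L^3)\big)^{1/2}\hat\chi_{>,\alpha}(k)$, using $\hat\chi_{>,\alpha} \ge 0$. Thus
\[
\hat\varphi_P(k)_{\alpha,\beta} = \frac{\hat V(k)}{2k_\F\hbar}\,\hat\chi_{>,\alpha}(k)\hat\chi_{>,\beta}(k),
\]
and the prefactor $1/(2k_\F\hbar)$ is bounded since $k_\F\hbar = (6\pi^2)^{1/3} + o_L(1)$. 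Note that no square root of $\hat V$ is needed. Applying the product rule \eqref{eq:product_rule_repeated} to the three factors, and the interpolation bound \eqref{eq:interpolation} to each, gives $|\partial^{\um_1}\hat V| \le C_V$ and $|\partial^{\um_j}\hat\chi_{>,\gamma}| \le C\fl^{-|\um_j|}$, the latter because $\hat\chi_{>,\gamma}(k) = g_>(\fl^{-1}k\cdot\hat\omega_\gamma)$. Summing the contributions yields $C_{V,m}\fl^{-m}$. Left derivatives are handled identically.

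The only mildly delicate points are the bookkeeping of the excess-derivative terms, which is handled exactly as in Lemma~\ref{lem:regularity_B}, and using the cancellation in the second bound so that it stays free of any $\fl^{-m}$ loss.
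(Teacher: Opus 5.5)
Your proposal is correct and follows essentially the same route as the paper. The first bound comes from the identity $\tilde n_{\alpha,k}^2 I(k)_{\alpha,\alpha} = n_{\alpha,k}^2$ together with the third estimate of Lemma~\ref{lem:regularity_lambda} and $\tilde\lambda_{\alpha,k}\le C$, while the other two come from the exact cancellations $\hat\varphi_h(k)_{\alpha,\beta} \propto M\delta_{\alpha,\beta}\hat\chi_R(k)$ and $\hat\varphi_P(k)_{\alpha,\beta} \propto \hat V(k)\hat\chi_{>,\alpha}(k)\hat\chi_{>,\beta}(k)$; you are only somewhat more explicit than the paper about the discrete product-rule bookkeeping.
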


\begin{proof}
With~\eqref{eq:cEalphahat} we see that $ \tilde{n}_{\alpha,k}^2 I(k)_{\alpha,\alpha} = n_{\alpha,k}^2 $, whose discrete derivatives are bounded in Lemma~\ref{lem:regularity_lambda}. The first claim~\eqref{eq:regularity_IhP} then readily follows by $ \tilde\lambda_{\alpha,k} \le C $.

\medskip

To prove the bound on the diagonal matrix $ h(k) $, observe that by~\eqref{eq:hP} and~\eqref{eq:ntilde},
\begin{equation}
	| \partial_k^{\um} \hat\varphi_h(k)_{\alpha,\beta} |
	= \left\vert \partial^{\um}_k \frac{h(k)_{\alpha,\beta}}{\tilde{n}_{\alpha,k} \tilde{n}_{\beta,k}} L^3 \rho^{\frac 23} \right\vert
	= \delta_{\alpha,\beta} \left\vert \frac{(2\pi)^3 M}{4 \pi k_{\F}^2} \rho^{\frac 23} \partial^{\um}_k \hat\chi_R(k) \right\vert
	\le C_m M \delta_{\alpha,\beta} \;,
\end{equation}
where we used that $ \hat\chi_R \in C_c^\infty $, introduced above~\eqref{eq:DB}, is density-independent and supported for $ |k| \le 2 R $. Likewise, for $ P(k) $~\eqref{eq:hP},
\begin{equation}
	| \partial_k^{\um} \hat\varphi_P(k)_{\alpha,\beta} |
	= \left\vert \partial^{\um}_k \frac{P(k)_{\alpha,\beta}}{\tilde{n}_{\alpha,k} \tilde{n}_{\beta,k}} L^3 \rho^{\frac 23} \right\vert
	\le C \left\vert \partial^{\um}_k \big( \hat{V}(k) \hat\chi_{>,\alpha}(k) \hat\chi_{>,\beta}(k) \big) \right\vert
	\le C_{V,m} \fl^{-m} \;.
\end{equation}
The proof with left-derivatives is analogous.
\end{proof}

In the proof of Proposition~\ref{prop:trace_estimates}, we will also need that $ I(k) $ approximates the identity matrix.

\begin{lemma} [Approximate identity matrix] \label{lem:approxID}
Let $ |k| \le 2 R $ and recall the definition of the diagonal matrix $ I(k) $~\eqref{eq:cEalphahat}. Then, we have the bounds
\begin{equation} \label{eq:approxID}
\begin{aligned}
	|I(k)_{\alpha,\alpha} - 1|
	&\le C \fl \tilde\lambda_{\alpha,k}^{-1} \qquad
	&&\textnormal{if } k \cdot \hat\omega_\alpha \ge \fl \;, \\
	|I(k)_{\alpha,\alpha}|
	&\le C \qquad
	&&\textnormal{for any } |k| \le 2R \;.
\end{aligned}
\end{equation}
\end{lemma}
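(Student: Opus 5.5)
The plan is to compare $n_{\alpha,k}^2$ with $\tilde n_{\alpha,k}^2$ by a lattice-point count. This works because $I(k)_{\alpha,\alpha} = n_{\alpha,k}^2/\tilde n_{\alpha,k}^2$ with
\[
\tilde n_{\alpha,k}^2 = \frac{L^3}{(2\pi)^3}\frac{4\pi k_\F^2}{M}\tilde\lambda_{\alpha,k}.
\]
The second bound is immediate from earlier work. The case $m=0$ of the third estimate in Lemma~\ref{lem:regularity_lambda} (see \eqref{eq:Vol_bound}) gives $n_{\alpha,k}^2 \le C\rho^{\frac23}M^{-1}\tilde\lambda_{\alpha,k}L^3$. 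Dividing by $\tilde n_{\alpha,k}^2 \ge c\,\rho^{\frac23}M^{-1}\tilde\lambda_{\alpha,k}L^3$, and using $k_\F \sim \rho^{1/3}$, yields $|I(k)_{\alpha,\alpha}|\le C$ for all $|k|\le 2R$.

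For the first bound, assume $k\cdot\hat\omega_\alpha\ge\fl$. Then $\tilde\lambda_{\alpha,k} = k\cdot\hat\omega_\alpha$ by \eqref{eq:f}. First I would replace the lattice sum by an integral:
\[
n_{\alpha,k}^2 = \frac{L^3}{(2\pi)^3}\int_{\RRR^3}\hat u^\alpha(p)^2\hat v^\alpha(p-k)^2\,\d p + o_L(L^3).
\]
This is a Riemann sum for a smooth, compactly supported integrand, whose error vanishes as $L\to\infty$ at fixed $\rho$. It therefore suffices to show
\[
\Big|\int \hat u^\alpha(p)^2\hat v^\alpha(p-k)^2\,\d p - \frac{4\pi k_\F^2}{M}\,k\cdot\hat\omega_\alpha\Big| \le C\,\frac{k_\F^2}{M}\,\fl .
\]

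The main term is the volume of $A := B_\alpha\cap B_{k_\F}(0)^c\cap(B_{k_\F}(0)+k)$, the unsmoothed support. Over the patch the normal $\hat p$ deviates from $\hat\omega_\alpha$ by at most $CM^{-1/2}$. Hence, over the surface element $\d\sigma$ of $\partial B_{k_\F}$ within the patch, the radial thickness of the crescent is
\[
k\cdot\hat p + O(|k|^2/k_\F) = k\cdot\hat\omega_\alpha + O(M^{-1/2}) + O(\rho^{-1/3}).
\]
Integrating this thickness over the patch area $4\pi k_\F^2/M\,(1+o(1))$ gives
\[
\mathrm{Vol}(A) = \frac{4\pi k_\F^2}{M}\,\big(k\cdot\hat\omega_\alpha + O(M^{-1/2})\big),
\]
and $M^{-1/2}\ll\fl$. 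I would take the patch-area equality to be exact up to relative errors that can be absorbed, as is implicit in the construction of \cite{BNPSS20}.

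Next I would account for the smoothing. The integrand differs from $\chi_A$ only where $p$ lies within $2\fl$ of $\partial(B_{k_\F}^c\cap B_\alpha)$, or $p-k$ lies within $2\fl$ of $\partial(B_{k_\F}\cap B_\alpha)$. Intersected with the crescent, these regions consist of two layers of thickness $O(\fl)$ near the spheres $|p|=k_\F$ and $|p-k|=k_\F$, each of area $\sim k_\F^2/M$. There are also lateral layers near the patch side boundaries, of area $\sim k_\F M^{-1/2}\cdot|k|$ and width $\fl$, which are of lower order since $k_\F M^{-1/2}\ll k_\F^2/M$. So the smoothing error is $O(\fl\,k_\F^2/M)$.

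I expect this boundary-layer accounting to be the step needing the most care. The difficulty is that the crescent has thickness $k\cdot\hat\omega_\alpha$, which may itself be only of order $\fl$; in that regime the relative error is $O(1)$, which is consistent with the claimed bound $C\fl\tilde\lambda_{\alpha,k}^{-1}$. Combining the two steps gives
\[
n_{\alpha,k}^2 = \tilde n_{\alpha,k}^2 + O\!\big(L^3 \rho^{\frac23}M^{-1}\fl\big),
\]
and dividing by $\tilde n_{\alpha,k}^2$ yields $|I(k)_{\alpha,\alpha}-1|\le C\fl\tilde\lambda_{\alpha,k}^{-1}$, with the $o_L(1)$ term absorbed for $L$ large.
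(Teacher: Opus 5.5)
Your proposal is correct and follows essentially the paper's argument: you pass from the Riemann sum to the integral, pay an $O(\fl\, k_\F^2/M)$ error for the $\fl$-smoothing layers, and compute the crescent volume as patch area times $k\cdot\hat\omega_\alpha$ up to $O(M^{-1/2})$ from the variation of the normal. Two small differences from the paper are harmless: the paper uses the sheared prism $\{q+tk : q\in P_\alpha,\ t\in[0,1]\}$, whose volume is exactly $\int_{P_\alpha} k\cdot\hat\omega_q\,\d\sigma(q)$, where you integrate the radial thickness; and for the uniform bound you cite the $m=0$ case of Lemma~\ref{lem:regularity_lambda} instead of redoing the volume estimate, which is equally valid. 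One minor slip is that the unsmoothed support also requires $p-k\in B_\alpha$, so your set $A$ overcounts by a lateral region of width up to $|k|$ rather than $\fl$; its volume $O(|k|^2 k_\F M^{-1/2})$ is still negligible compared with $\fl\, k_\F^2/M$, so the conclusion stands.
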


\begin{proof}
Recall~\eqref{eq:cEalphahat} that $ I(k)_{\alpha,\alpha} = n_{\alpha,k}^2 \tilde{n}_{\alpha,k}^{-2} $, so we need to compare $ \tilde{n}_{\alpha,k}^2 $ to $ n_{\alpha,k}^2 $. For $ k \cdot \hat\omega_\alpha \ge \fl $, from~\eqref{eq:ntilde} and~\eqref{eq:lambdatilde}, we see
\begin{equation} \label{eq:ntilde_bis}
	\tilde{n}_{\alpha,k}^2
	= \frac{L^3}{(2 \pi)^3} \frac{4 \pi k_{\F}^2}{M} \tilde\lambda_{\alpha,k}
	= \frac{L^3}{(2 \pi)^3} \frac{4 \pi k_{\F}^2}{M} (k \cdot \hat\omega_\alpha) \;,
\end{equation}
whereas $ n_{\alpha,k}^2 $ according to~\eqref{eq:n} is the regularized number of pairs in a patch:
\[
	n_{\alpha,k}^2
	= \sum_{p \in \Lambda_L} \hat{u}^\alpha(p)^2 \hat{v}^\alpha(p-k)^2  \;.
\]
First, we approximate the sum by a Riemann integral, using that $ \hat{u}^{\alpha} , \hat{v}^{\alpha} \le 1 $, and that the surface area of the integration domain $ S_{\alpha,k} := B_\alpha \cap (B_\alpha + k) \cap \overline{B_{k_{\F}}(0)}^c \cap (\overline{B_{k_{\F}}(0)} + k) $ is bounded by $ \le C \rho^{\frac 23} M^{-1} $:
\begin{equation}
	\Big\vert n_{\alpha,k}^2 - \frac{L^3}{(2 \pi)^3} \int_{S_{\alpha,k}} \d p \; \hat{u}^\alpha(p)^2 \hat{v}^\alpha(p-k)^2 \Big\vert
	\le C \rho^{\frac 23} M^{-1} L^2 \;.
\end{equation}
Next, $ \hat{u}^\alpha $ and $ \hat{v}^\alpha $ only differ from 1 in a layer of thickness $ \fl $ above the boundary of the integration domain, so with $ L^{-1} \ll \fl $
\begin{equation} \label{eq:n2_Vol_bound}
	\Big\vert n_{\alpha,k}^2 - \frac{L^3}{(2 \pi)^3} \mathrm{Vol}(S_{\alpha,k}) \Big\vert
	\le C \rho^{\frac 23} M^{-1} \fl L^3 \;.
\end{equation}
The volume of $ S_{\alpha,k} $ can be estimated by a similar argument as in~\cite[Sect.~6]{BNPSS20}: Let $ P_\alpha := B_\alpha \cap \partial B_{k_{\F}}(0) $ be the patch surface on the Fermi sphere, and observe that by the patch geometry, for any $ p \in S_{\alpha,k} $, the line between $ p $ and $ p-k $ has to intersect $ P_\alpha $. Therefore, $ S_{\alpha,k} \subseteq S_{\alpha,k}^+ := \{ q + t k : q \in P_\alpha, t \in [0,1] \} $. Moreover, if $ \hat\omega_p $ is the surface normal vector at $ p $, and $ \d \sigma $ the surface element, then for $ k \cdot \hat\omega_\alpha \ge \fl $,
\begin{equation} \label{eq:Splus_bound}
	\mathrm{Vol}(S_{\alpha,k}^+)
	= \int_{P_\alpha} k \cdot \hat\omega_p \; \d \sigma(p)
	= \mathrm{Area}(P_\alpha) \; (k \cdot \hat\omega_\alpha + \cO(M^{-\frac 12}))
	= \frac{4 \pi k_{\F}^2}{M} k \cdot \hat\omega_\alpha  (1 + \cO(\fl)) \;.
\end{equation}
Here, we used that the surface normal vector $ \hat\omega_p $ varies by an amount $ \le C M^{-\frac 12} $ within $ P_\alpha $, that $ M^{-\frac 12} \ll \fl^2 \le \fl (k \cdot \hat\omega_\alpha) $ and that $ \mathrm{Area}(P_\alpha) = \frac{4 \pi k_{\F}^2}{M} (1 - \cO(\rho^{-\frac 13} M^{\frac 12})) $, since the removed corridors are of width $ \sim 1 $ and the edge length of the patch is $ \sim \rho^{\frac 13} M^{-\frac 12} $.\\
The set $ S_{\alpha,k}^+ $ then agrees with $ S_{\alpha,k} $ up to additional contributions at the corridors, which are of size
\begin{equation} \label{eq:Vol_S_alphak_final}
	|\mathrm{Vol}(S_{\alpha,k}^+) - \mathrm{Vol}(S_{\alpha,k})| \le C \rho^{\frac 13} M^{-\frac 12}
	\quad \Rightarrow \quad
	\mathrm{Vol}(S_{\alpha,k})
	= \frac{4 \pi k_{\F}^2}{M} k \cdot \hat\omega_\alpha (1 + \cO(\fl)) \;.
\end{equation}
Plugging~\eqref{eq:Vol_S_alphak_final} into~\eqref{eq:n2_Vol_bound} and inserting the identity~\eqref{eq:ntilde_bis} for $ \tilde{n}_{\alpha,k}^2 $, we conclude for $ k \cdot \hat\omega_\alpha \ge \fl $:
\begin{equation}
	|n_{\alpha,k}^2 - \tilde{n}_{\alpha,k}^2|
	\le C \rho^{\frac 23} M^{-1} \fl L^3 \;.
\end{equation}
The first claim in~\eqref{eq:approxID} then follows from $ |I(k)_{\alpha,\alpha} - 1| = |n_{\alpha,k}^2 - \tilde{n}_{\alpha,k}^2| |\tilde{n}_{\alpha,k}^{-2}| $, by using~\eqref{eq:ntilde_bis}.\\
For the second claim, observe that the definition of $ f_>(x) $~\eqref{eq:f} implies $ \tilde\lambda_{\alpha,k} \ge \frac 12 |k \cdot \hat\omega_\alpha| $, whenever $ k \cdot \hat\omega_\alpha \ge - \fl $, which is needed for $ S_{\alpha,k} $ to be non-empty. Then,
\begin{equation}
	\mathrm{Vol}(S_{\alpha,k})
	\le \mathrm{Vol}(S_{\alpha,k}^+)
	\le \int_{P_\alpha} |k \cdot \hat\omega_p| \; \d \sigma(p)
	\le \int_{P_\alpha} \big(|k \cdot \hat\omega_\alpha| + \cO(M^{-\frac 12}) \big) \d \sigma(p)
	\le C \rho^{\frac 23} M^{-1} \tilde\lambda_{\alpha,k} \;.
\end{equation}
Also, the bound~\eqref{eq:n2_Vol_bound} is still valid here, so with $ \fl/2 \le \tilde\lambda_{\alpha,k} $,
\begin{equation}
\begin{aligned}
	|I(k)_{\alpha,\alpha}|
	&= \frac{n_{\alpha,k}^2}{\tilde{n}_{\alpha,k}^2}
	\le \frac{C L^3 \mathrm{Vol}(S_{\alpha,k}) + C \rho^{\frac 23} M^{-1} \fl L^3}{\rho^{\frac 23} M^{-1} L^3 \tilde\lambda_{\alpha,k}} 
	\le C\;.
\end{aligned}
\end{equation}
\end{proof}

\subsection{Pointwise Bounds for $\tilde{K}^{(n)}$, $\tilde{h}^{(n)}$, and $\tilde{P}^{(n)}$}
\label{subsec:X_Pointwise}

We proceed to pointwise bounds for the other matrix-valued functions appearing in Proposition~\ref{prop:regularity}.

\begin{lemma}[Pointwise bounds on $\tilde{K}^{(n)}$, $\tilde{h}^{(n)}$, and $\tilde{P}^{(n)}$] \label{lem:regularity_Xtilde}
Recall $ K $ from~\eqref{eq:K}, and $ h$ and $P $ from~\eqref{eq:hP}, as well as  $ \hat\varphi_X(k)_{\alpha,\beta} $ from~\eqref{eq:varphi} and $ \tilde{X}^{(n)} $ from~\eqref{eq:Xtilde} for $ n \in \NNN_0 $. Then, given $ m \in \NNN_0 $, any multi-index $ \um \in \NNN_0^3 $ with $ |\um| = m $, there exists $ C_{V,m} > 0 $ such that for all $ k \in \Lambda_L $, $ n \in \NNN_0 $, and $ \alpha, \beta \in \{1, \ldots, M\} $
\begin{equation} \label{eq:regularity_Xtilde}
\begin{aligned}
	| \partial_k^{\um} \hat\varphi_{\tilde{X}^{(n)}}(k)_{\alpha,\beta} |
	&\le (C_{V,m})^{n+1} \fl^{-m} \tilde\lambda_{\alpha,k}^{-\frac 12} \tilde\lambda_{\beta,k}^{-\frac 12} \;, \qquad \qquad
		&&X \in \{K, P, h(1-I)K, K(1-I)h \} \;, \\
		| \partial_k^{\um} \hat\varphi_{\tilde{h}^{(n)}}(k)_{\alpha,\beta} |
	&\le (C_{V,m})^{n+1} \fl^{-m} \tilde\lambda_{\alpha,k}^{-\frac 12} \tilde\lambda_{\beta,k}^{-\frac 12} \;, \qquad \qquad
		&&\mathrm{if } \; n \ge 1 \;, \\
		| \partial_k^{\um} \hat\varphi_{\tilde{h}^{(0)}}(k)_{\alpha,\beta} |
	&\le C_m M \delta_{\alpha,\beta} \;.
\end{aligned}
\end{equation}
and the same bounds are true if any number of right-derivatives $ \partial_k $ is replaced by left-derivatives $ \bar\partial_k $. Moreover, $ \tilde{K}^{(n)}(k)_{\alpha,\beta} = 0 $ if $ |k| > R $ or $ k \notin \supp(\hat\chi_{>,\alpha}) \cap \supp(\hat\chi_{>,\beta}) $. 
\end{lemma}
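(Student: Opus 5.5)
Since $\tilde{X}^{(n)} = \sum_{n'=0}^n \binom{n}{n'} (KI)^{n'} X (IK)^{n-n'}$ and $\sum_{n'}\binom{n}{n'}=2^n$, it suffices to bound a single product $(KI)^{n'} X (IK)^{n-n'}$ by $C_{V,m}^{n+1}\fl^{-m}\tilde\lambda_{\alpha,k}^{-1/2}\tilde\lambda_{\beta,k}^{-1/2}$; the factor $2^n$ is absorbed into the constant. The idea is to rewrite each matrix product in terms of the normalized functions $\hat\varphi$. Let $N(k):=\mathrm{diag}(\tilde n_{\alpha,k})\,L^{-3/2}\rho^{-1/3}$, so that $Y = N\hat\varphi_Y N$ for every matrix $Y$. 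For two matrices $Y,Z$ this gives
\[
\hat\varphi_{YIZ}(k)_{\alpha,\beta} = \sum_{\gamma=1}^M \hat\varphi_Y(k)_{\alpha,\gamma}\, J(k)_\gamma\, \hat\varphi_Z(k)_{\gamma,\beta}, \qquad J(k)_\gamma := \tilde n_{\gamma,k}^2 I(k)_{\gamma\gamma} L^{-3}\rho^{-2/3} = n_{\gamma,k}^2 L^{-3}\rho^{-2/3}.
\]
By Lemma~\ref{lem:regularity_IhP}, every derivative of $J_\gamma$ satisfies $|\partial_k^{\um}J_\gamma|\le C_m\fl^{-m}M^{-1}$. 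Sharper, Lemma~\ref{lem:regularity_lambda} gives the bound $C_m\fl^{-m}M^{-1}\tilde\lambda_{\gamma,k}$. This sharper form is essential: the weight $\tilde\lambda_\gamma$ exactly cancels the factor $\tilde\lambda_\gamma^{-1/2}\tilde\lambda_\gamma^{-1/2}$ that comes from the two neighbouring $\hat\varphi$ factors.

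For the main step, take $X\in\{K,P\}$ first. The product is a chain of $n+1$ factors $\hat\varphi_K$ or $\hat\varphi_P$ linked by $n$ weights $J$. Naively, distributing derivatives through the product rule (Lemma~\ref{lem:product_chain_rule}) and summing over the $n$ internal indices would give $M^n$ terms, each of size $M^{-n}$. For each internal index $\gamma$, the pointwise bounds of Lemmas~\ref{lem:regularity_K} and~\ref{lem:regularity_IhP} give
\[
|\partial\hat\varphi_{\cdot,\gamma}|\,|\partial J_\gamma|\,|\partial\hat\varphi_{\gamma,\cdot}| \lesssim \fl^{-\cdot}\,\tilde\lambda_{\cdot}^{-1/2}\,\tilde\lambda_\gamma^{-1/2}\cdot M^{-1}\tilde\lambda_\gamma\cdot\tilde\lambda_\gamma^{-1/2}\,\tilde\lambda_\cdot^{-1/2}.
\]
The $\gamma$-dependence cancels completely, so $\sum_\gamma$ contributes $M\cdot M^{-1}=1$. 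The end factors leave exactly $\tilde\lambda_\alpha^{-1/2}\tilde\lambda_\beta^{-1/2}$. For $\hat\varphi_P$, where the bound is $C\fl^{-m}$ without $\tilde\lambda^{-1/2}$ factors, I use $\tilde\lambda\le C$ to write $1\le C\tilde\lambda^{-1/2}\tilde\lambda^{-1/2}$.

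Counting constants: a product of $2n+1$ factors with $m$ derivatives has at most $(2n+1)^m$ ways of distributing them. This is only polynomial in $n$, so it can be absorbed into $C_{V,m}^{n+1}$. The excess discrete derivatives carry factors $L^{-1}$ and are handled as in the proof of Lemma~\ref{lem:regularity_K}, using $L^{-1}\ll\fl$. The support statement for $\tilde K^{(n)}$ follows because every term starts and ends with a factor $K$ (for $n\ge1$, or is $K$ itself), and Lemma~\ref{lem:regularity_K} gives the support properties of $K_{\alpha,\cdot}$ and $K_{\cdot,\beta}$.

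The remaining cases reduce to the previous one. For $\tilde h^{(0)}=h$ the bound is Lemma~\ref{lem:regularity_IhP}. For $\tilde h^{(n)}$ with $n\ge1$, every term contains at least one $K$ adjacent to $h$ through a link $I$. Because $\hat\varphi_h$ is diagonal, $\hat\varphi_h(k)_{\gamma\gamma}\,J_\gamma = (2\pi)^3(4\pi k_\F^2)^{-1}\rho^{2/3}M\,\hat\chi_R(k)\,J_\gamma$. This is of order $\tilde\lambda_\gamma$, with $\fl^{-m}$ loss under derivatives, and with no $M$, since $J_\gamma\sim M^{-1}\tilde\lambda_\gamma$. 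So $hI$ acts like an effective weight: the combination $KIhIK$ is summed over $\gamma$ exactly as before, with the diagonal collapsing one index sum. For $X=h(1-I)K$, write it as $hK - hIK$; the term $hK$ is handled similarly, using the diagonal entry $\tilde\lambda_\gamma\hat\chi_R$ of $h$ directly. The factor $(1-I)$ is bounded by Lemma~\ref{lem:approxID}, together with the derivative bound on $J$ and $\tilde n^2$ (Lemma~\ref{lem:regularity_lambda}).

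I expect the main obstacle to be the bookkeeping of the $h$-containing cases. There the index-sum cancellation must be checked against the single collapsed diagonal index, and one must confirm that the factor $M$ in $\hat\varphi_h$ is always compensated by an adjacent $J\sim M^{-1}$. In $hK$ without an intervening $I$, the compensation instead comes from $\tilde n_\gamma^2\sim M^{-1}$ in the conversion $\hat\varphi_{hK}=\hat\varphi_h\,\mathrm{diag}(\tilde n^2 L^{-3}\rho^{-2/3})\,\hat\varphi_K$.
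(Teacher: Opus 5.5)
Your proposal is correct and follows essentially the same route as the paper. You expand $\tilde X^{(n)}$ into the products $(KI)^{n'}X(IK)^{n-n'}$ and write each as a chain of $\hat\varphi$'s linked by the weights $\tilde n_{\gamma,k}^2 I(k)_{\gamma,\gamma}L^{-3}\rho^{-\frac 23}$. The sharp bound $C_m\fl^{-m}M^{-1}\tilde\lambda_{\gamma,k}$ from Lemma~\ref{lem:regularity_lambda} then cancels the factors $\tilde\lambda_{\gamma,k}^{-\frac12}\tilde\lambda_{\gamma,k}^{-\frac12}$ of the neighbouring entries, and the diagonal factor $M\delta_{\alpha,\beta}$ of $\hat\varphi_h$ collapses one index sum.

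Two small remarks:
\begin{itemize}
\item For $X=h(1-I)K$, the paper only uses that the weight $(\tilde n_{\gamma,k}^2-n_{\gamma,k}^2)L^{-3}\rho^{-\frac 23}$ obeys the same derivative bound. This is equivalent to your split $hK-hIK$, so Lemma~\ref{lem:approxID} is not actually needed.
\item Once the excess-derivative terms are included, the discrete product rule yields up to $2^{m(2n+1)}$ contributions rather than $(2n+1)^m$. This is exponential rather than polynomial in $n$, but it is still absorbed into $(C_{V,m})^{n+1}$.
\end{itemize}
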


\begin{proof}
First, note that by definition, $ h(k), P(k), K(k) $ and therefore also $ \tilde{h}^{(n)}(k) $, $ \tilde{P}^{(n)}(k) $, $ \tilde{K}^{(n)}(k) $ vanish if $ |k| > 2R $, so the previous lemmas of this section apply. Also, if $ |k| > R $ or $ k \notin \supp(\hat\chi_{>,\alpha}) \cap \supp(\hat\chi_{>,\beta}) $ then by Lemma~\ref{lem:regularity_K}, $ K(k)_{\alpha,\beta} = 0 $, so $ \tilde{K}^{(n)}(k)_{\alpha,\beta} = 0 $ immediately follows.

Let us set $ \tilde{X}^{(n,n')} := (KI)^{n'} X (IK)^{n-n'} $, so $ \tilde{X}^{(n)} = \sum_{n' = 0}^n {n \choose n'} \tilde{X}^{(n,n')} $. Then, to bound $ | \partial_k^{\um} \hat\varphi_{\tilde{X}^{(n)}}(k)_{\alpha,\beta} | $, it suffices to bound $ | \partial_k^{\um} \hat\varphi_{\tilde{X}^{(n,n')}}(k)_{\alpha,\beta} | $. For $ X = K $, we proceed using the product rule from Lemma~\ref{lem:product_chain_rule} and Lemmas~\ref{lem:regularity_lambda} and~\ref{lem:regularity_K}. Here, hitting the product of $ (2n+1) $ terms with $ m $ derivatives will produce $ \le 2^{m(2n+1)} $ contributions:
\begin{equation}
\begin{aligned}
	&| \partial_k^{\um} \hat\varphi_{\tilde{K}^{(n,n')}}(k)_{\alpha,\beta} |
	= \left\vert \partial^{\um}_k \frac{\big((KI)^{n'} K(IK)^{n-n'}\big)(k)_{\alpha,\beta}}{\tilde{n}_{\alpha,k} \tilde{n}_{\beta,k}} L^3 \rho^{\frac 23} \right\vert \\
	&\le \Bigg\vert \partial^{\um}_k \sum_{\alpha_1, \ldots, \alpha_n =1}^M
		\hat\varphi_K(k)_{\alpha,\alpha_1}
		\frac{\tilde{n}_{\alpha_1,k}^2 I(k)_{\alpha_1,\alpha_1}}{L^3 \rho^{\frac 23}}
		\hat\varphi_K(k)_{\alpha_1,\alpha_2}
		\ldots
		\frac{\tilde{n}_{\alpha_n,k}^2 I(k)_{\alpha_n,\alpha_n}}{L^3 \rho^{\frac 23}}
		\hat\varphi_K(k)_{\alpha_n,\beta} \Bigg\vert \\
	&\le C 2^{m(2n+1)} \fl^{-m} \sum_{\alpha_1, \ldots, \alpha_n =1}^M
		\frac{C_{V,m}}{\tilde\lambda_{\alpha,k}^{\frac 12} \tilde\lambda_{\alpha_1,k}^{\frac 12}}
		\frac{\tilde\lambda_{\alpha_1,k}}{M}
		\frac{C_{V,m}}{\tilde\lambda_{\alpha_1,k}^{\frac 12} \tilde\lambda_{\alpha_2,k}^{\frac 12}}
		\ldots
		\frac{\tilde\lambda_{\alpha_n,k}}{M}
		\frac{C_{V,m}}{\tilde\lambda_{\alpha_n,k}^{\frac 12} \tilde\lambda_{\beta,k}^{\frac 12}} \\
	&\le C (C_{V,m})^n \fl^{-m} \tilde\lambda_{\alpha,k}^{-\frac 12} \tilde\lambda_{\beta,k}^{-\frac 12} \;.
\end{aligned}
\end{equation}
The bound for $ X = h $, $ n \ge 1 $ is similar, using Lemma~\ref{lem:regularity_IhP}:
\begin{equation}
\begin{aligned}
	&| \partial_k^{\um} \hat\varphi_{\tilde{h}^{(n,n')}}(k)_{\alpha,\beta} | \\
	&\le \Bigg\vert \partial^{\um}_k \sum_{\alpha_1, \ldots, \alpha_n =1}^M
		\hat\varphi_K(k)_{\alpha,\alpha_1}
		\frac{\tilde{n}_{\alpha_1,k}^2 I(k)_{\alpha_1,\alpha_1}}{L^3 \rho^{\frac 23}}
		\ldots
		\hat\varphi_h(k)_{\alpha_{n'},\alpha_{n'+1}}
		\ldots
		\frac{\tilde{n}_{\alpha_n,k}^2 I(k)_{\alpha_n,\alpha_n}}{L^3 \rho^{\frac 23}}
		\hat\varphi_K(k)_{\alpha_n,\beta} \Bigg\vert \\
	&\le C 2^{m(2n+1)} \fl^{-m} \sum_{\alpha_1, \ldots, \alpha_n =1}^M
		\frac{C_{V,m}}{\tilde\lambda_{\alpha,k}^{\frac 12} \tilde\lambda_{\alpha_1,k}^{\frac 12}}
		\frac{\tilde\lambda_{\alpha_1,k}}{M}
		\ldots
		C_m M \delta_{\alpha_{n'},\alpha_{n'+1}}
		\ldots
		\frac{\tilde\lambda_{\alpha_n,k}}{M}
		\frac{C_{V,m}}{\tilde\lambda_{\alpha_n,k}^{\frac 12} \tilde\lambda_{\beta,k}^{\frac 12}} \\
	&\le C (C_{V,m})^n \fl^{-m} \tilde\lambda_{\alpha,k}^{-\frac 12} \tilde\lambda_{\beta,k}^{-\frac 12} \;.
\end{aligned}
\end{equation}
The case $ n = 0 $ is covered by Lemma~\ref{lem:regularity_IhP} since $ \tilde{h}^{(0)} = h $.
The bound for $ X = P $ is analogous, where Lemma~\ref{lem:regularity_IhP} requires replacing $ C_m \delta_{\alpha_{n'},\alpha_{n'+1}} M $ by $ C_{V,m} $ in the third line.\\
Also, for $ X = h (1-I) K $ and $ X = K (1-I) h $, the bounds are analogous, using that the first bound in Lemma~\ref{lem:regularity_IhP} remains valid if $ I $ is replaced by $ (1-I) $: We have $ |\partial_k^{\um} (\tilde{n}_{\alpha,k}^2 (1-I)(k)_{\alpha,\alpha})| = |\partial_k^{\um} (\tilde{n}_{\alpha,k}^2  - n_{\alpha,k}^2)| \le |\partial_k^{\um} \tilde{n}_{\alpha,k}^2| + |\partial_k^{\um} n_{\alpha,k}^2| $, and with Lemma~\ref{lem:regularity_lambda} and~\eqref{eq:ntilde}, both terms are bounded as $ \le C_m \fl^{-m} L^3 \rho^{\frac 23} M^{-1} \tilde\lambda_{\alpha,k} $.
\end{proof}

\subsection{Proof of Proposition \ref{prop:regularity}}
\label{subsec:proof_regularitybounds}

\begin{proof}[Proof of Proposition \ref{prop:regularity}]
We start with the bound~\eqref{eq:regularity_KhP_final} for $ \hat\varphi_{K,\alpha,\beta} $ from~\eqref{eq:varphi}. Recall from \eqref{eq:ell_norms} that the $ \ell^1 $-norm carries a factor of $ L^{-3} $ in front of $ \sum_k $, and that $ m $ is even, so $ \frac m2 $ is integer. Lemma~\ref{lem:regularity_K} then implies
\begin{equation}
\begin{aligned}
	\Vert (1 - \Delta_k)^{\frac m2} \hat\varphi_{K,\alpha,\beta} \Vert_1
	& = \frac{1}{L^3} \sum_{k \in \Lambda_L} | (1 - \Delta_k)^{\frac m2} \hat\varphi_K(k)_{\alpha,\beta} | \\
	&\le C_{V,m} \fl^{-m} \frac{1}{L^3} \sum_{k \in \Lambda_L}
		\frac{\chi(\mathrm{dist}(k, \supp(\hat V \hat\chi_{>,\alpha} \hat\chi_{>,\beta})) \le 2 \pi m L^{-1} )}{\tilde\lambda_{\alpha,k}^{\frac 12} \tilde\lambda_{\beta,k}^{\frac 12}} 
	\\
	& \le C_{V,m}
	\fl^{-m} \int_{\supp(\hat V \hat\chi_{>,\alpha} \hat\chi_{>,\beta})} \tilde\lambda_{\alpha,k}^{-\frac 12} \tilde\lambda_{\beta,k}^{-\frac 12} \; \d k 
		+ o_L(1) \;,
\end{aligned}
\end{equation}
where we used that each discrete derivative enlarges the support by at most $ 2 \pi L^{-1} $ and that the integrand is uniformly bounded in $ L $, so the error in the Riemann integral is $ o_L(1) $.
To estimate the integral, recall that $ \tilde\lambda_{\alpha,k} = k \cdot \hat\omega_\alpha $ if $ k \cdot \hat\omega_\alpha \ge \fl $, and $ \tilde\lambda_{\alpha,k} \ge \frac 12 \fl $ in general. Also, $ k \in \supp(\hat\chi_{>,\alpha}) $ implies that $ k \cdot \hat\omega_\alpha \ge - 2 \fl $. Recall that $ \supp(\hat V) \subset B_R(0) $, and let us split the integration domain $ D_{\alpha,\beta} := B_R(0) \cap \supp(\hat\chi_{>,\alpha} \hat\chi_{>,\beta}) $ into
\begin{equation}
\begin{aligned}
	S_{\mathrm{small}}
	&:= \{ k \in D_{\alpha,\beta} : |k \cdot \hat\omega_\alpha| < 2 \fl \; \mathrm{or} \; |k \cdot \hat\omega_\beta| < 2 \fl \} \;, \\
	S_{\mathrm{big},\alpha}
	&:= \{ k \in D_{\alpha,\beta} \setminus S_{\mathrm{small}} : k \cdot \hat\omega_\alpha \leq k \cdot \hat\omega_\beta \} \;, \\
	S_{\mathrm{big},\beta}
	&:= \{ k \in D_{\alpha,\beta} \setminus S_{\mathrm{small}} : k \cdot \hat\omega_\alpha > k \cdot \hat\omega_\beta \} \;.
\end{aligned}
\end{equation}
Then,
\begin{equation*}
	\Vert (1 - \Delta_k)^{\frac m2} \hat\varphi_{K,\alpha,\beta} \Vert_1
	\le C_{V,m} \fl^{-m} \left( \int_{S_{\mathrm{small}}} \fl^{-1} \; \d k 
		+ 2 \int_{S_{\mathrm{big},\alpha}} (k \cdot \hat\omega_\alpha)^{-1} \; \d k \right)
		+ o_L(1) \;.
\end{equation*}
The set $ S_{\mathrm{small}} $ is contained in the union of two flat cylinders of radius $ R $ and thickness $ 4 \fl $, so $ \mathrm{Vol}(S_{\mathrm{small}}) \le C \fl $. The integral over $ S_{\mathrm{big},\alpha} $ can be parametrized by $ s := k \cdot \hat\omega_\alpha $, where evidently $ \mathrm{Area}(\{ k \in S_{\mathrm{big},\alpha} : k \cdot \hat\omega_\alpha = s \}) \le C $, as this set is contained in the union of two parallel discs of radius $ R $. Thus,
\begin{equation}
	\Vert (1 - \Delta_k)^{\frac m2} \hat\varphi_{K,\alpha,\beta} \Vert_1
	\le C_{V,m} \fl^{-m} \left( 1
		+ \int_{2 \fl}^R s^{-1} \; \d s \right)
		+ o_L(1)
	\le C_{V,m} \fl^{-m} |\log \fl | \;,
\end{equation}
which is the first bound in~\eqref{eq:regularity_KhP_final}. The other two bounds in~\eqref{eq:regularity_KhP_final} are proven by the same strategy, where the pointwise bound in Lemma~\ref{lem:regularity_K} is replaced by the respective bound in Lemma~\ref{lem:regularity_IhP}. For $ h(k) $, we obtain
\[
	\Vert (1 - \Delta_k)^{\frac m2} \hat\varphi_{h,\alpha,\beta} \Vert_1
	\le C_{V,m} \frac{1}{L^3} \sum_{k \in \Lambda_L : |k| \le 2 R + 2 \pi m L^{-1}}
		M \delta_{\alpha,\beta}
	\le C_{V,m} M \delta_{\alpha,\beta} \;,
\]
and the analogous bound for $ P(k) $ is
\[
	\Vert (1 - \Delta_k)^{\frac m2} \hat\varphi_{P,\alpha,\beta} \Vert_1
	\le C_{V,m} \fl^{-m} \frac{1}{L^3} \sum_{k \in \Lambda_L : |k| \le R + 2 \pi m L^{-1}}
		1
	\le C_{V,m} \fl^{-m} \;.
\]

The bound~\eqref{eq:regularity_Kj_final} on $ K_j(k) $ is analogous to the bound~\eqref{eq:regularity_KhP_final} on $ K(k) $, using the second statement of Lemma~\ref{lem:regularity_K} instead of the first.\\

The bounds~\eqref{eq:regularity_KhP_tilde_final} on $ \tilde{K}^{(n)} $ and $ \tilde{X}^{(n)} $ follow by the same strategy as for $ K $, using the pointwise estimates of Lemma~\ref{lem:regularity_Xtilde}.
\end{proof}

\subsection{Proof of Proposition \ref{prop:trace_estimates}}
\label{subsec:proof_trace_estimates}

\begin{proof}[Proof of Proposition \ref{prop:trace_estimates}]
We use the pointwise estimates from Lemmas~\ref{lem:regularity_K} and~\ref{lem:approxID}, as well as $ \tilde{n}_{\alpha,k}^2 \sim L^3 \rho^{\frac 23} M^{-1} \tilde\lambda_{\alpha,k} $ from~\eqref{eq:ntilde}:
\begin{equation} \label{eq:bound_trace_N}
\begin{aligned}
	&\sum_{k \in \Lambda_L} \Big\vert \tr \Big( \big( I(k) K(k) \big)^{n+1} \Big) \Big\vert \\
	&\le \sum_{k \in \Lambda_L \cap \supp(\hat V)} \sum_{\alpha_1 ,\ldots, \alpha_{n+1} = 1}^M 
		\frac{\tilde{n}_{\alpha_1,k}^2 I(k)_{\alpha_1,\alpha_1}}{L^3 \rho^{\frac 23}}  
		|\hat\varphi_K(k)_{\alpha_1,\alpha_2}|
		\ldots
		\frac{\tilde{n}_{\alpha_{n+1},k}^2 I(k)_{\alpha_{n+1},\alpha_{n+1}}}{L^3 \rho^{\frac 23}} 
		|\hat\varphi_K(k)_{\alpha_{n+1},\alpha_1}|
	\\
	&\le (C_{V,0})^{n+1} \sum_{k \in \Lambda_L \cap \supp(\hat V)} 
		\sum_{\alpha_1 ,\ldots, \alpha_{n+1} = 1}^M 
		\frac{\tilde\lambda_{\alpha_1,k}}{M}
		\frac{1}{\tilde\lambda_{\alpha_1,k}^{\frac 12} \tilde\lambda_{\alpha_2,k}^{\frac 12}} 
		\ldots
		\frac{\tilde\lambda_{\alpha_{n+1},k}}{M}
		\frac{1}{\tilde\lambda_{\alpha_{n+1},k}^{\frac 12} \tilde\lambda_{\alpha_1,k}^{\frac 12}} \\
	&\le (C_{V,0})^{n+1} L^3
	\le (C_V)^n L^3 \;.
\end{aligned}
\end{equation}

To prove~\eqref{eq:trace_replacements}, observe that the cutoff $ \hat\chi_{>,\alpha}(k) $ in~\eqref{eq:hP} forces both $ P(k)_{\alpha,\beta} $ and $ K(k)_{\alpha,\beta} $ (see Lemma~\ref{lem:regularity_K}) to vanish whenever $ k \notin A_\alpha := \{ k \in \Lambda_L \cap \supp(\hat V) : k \cdot \hat\omega_\alpha > -2 \fl \} $. We expand
\begin{equation} \label{eq:trip_expansion}
\begin{aligned}
	&\sum_{k \in \Lambda_L} \Big\vert \tr \Big( I(k) P(k) \big( I(k) K(k) \big)^n \Big)
		- \tr \big( P(k) K(k)^n \big)
	\Big\vert \\
	&\le \sum_{\alpha_1, \ldots, \alpha_{n+1} = 1}^M
		\sum_{j = 1}^{n+1}
		\sum_{k \in A_{\alpha_j}}
		\frac{\tilde{n}_{\alpha_1,k}^2 I(k)_{\alpha_1,\alpha_1}}{L^3 \rho^{\frac 23}}
		|\hat\varphi_P(k)_{\alpha_1,\alpha_2}|
		\frac{\tilde{n}_{\alpha_2,k}^2 I(k)_{\alpha_2,\alpha_2}}{L^3 \rho^{\frac 23}}
		|\hat\varphi_K(k)_{\alpha_2,\alpha_3}| \ldots \\
	&\quad \times \ldots \frac{\tilde{n}_{\alpha_j,k}^2 |I(k)_{\alpha_j,\alpha_j} - 1|}{L^3 \rho^{\frac 23}}
		|\hat\varphi_K(k)_{\alpha_j,\alpha_{j+1}}|
		\frac{\tilde{n}_{\alpha_{j+1},k}^2}{L^3 \rho^{\frac 23}}
		\ldots
		\frac{\tilde{n}_{\alpha_{n+1},k}^2}{L^3 \rho^{\frac 23}}
		|\hat\varphi_K(k)_{\alpha_{n+1},\alpha_1}| \;.
\end{aligned}
\end{equation}
Using again $ \tilde{n}_{\alpha,k}^2 \sim L^3 \rho^{\frac 23} \tilde\lambda_{\alpha,k} M^{-1} $ and bounding $ \hat\varphi_K(k) $, $ \hat\varphi_P(k) $ and $ I(k) $ by Lemmas~\ref{lem:regularity_K}, \ref{lem:regularity_IhP}, and \ref{lem:approxID}:
\begin{equation}
\begin{aligned}
	&\sum_{k \in \Lambda_L} \Big\vert \tr \Big( I(k) P(k) \big( I(k) K(k) \big)^n \Big)
		- \tr \big( P(k) K(k)^n \big)
	\Big\vert \\
	&\le C_V C^n \sum_{\alpha_1, \ldots, \alpha_{n+1} = 1}^M
		\sum_{j = 1}^{n+1}
		\sum_{k \in A_{\alpha_j}}
		\frac{\tilde\lambda_{\alpha_1,k}}{M}
		1
		\frac{\tilde\lambda_{\alpha_2,k}}{M}
		\frac{1}{\tilde\lambda_{\alpha_2,k}^{\frac 12} \tilde\lambda_{\alpha_3,k}^{\frac 12}}
		\ldots \\
	&\quad \times \ldots
		\frac{\tilde\lambda_{\alpha_j,k} |I(k)_{\alpha_j,\alpha_j} - 1|}{M}
		\frac{1}{\tilde\lambda_{\alpha_j,k}^{\frac 12} \tilde\lambda_{\alpha_{j+1},k}^{\frac 12}}
		\frac{\tilde\lambda_{\alpha_{j+1},k}}{M}
		\ldots
		\frac{\tilde\lambda_{\alpha_{n+1},k}}{M}
		\frac{1}{\tilde\lambda_{\alpha_{n+1},k}^{\frac 12} \tilde\lambda_{\alpha_1,k}^{\frac 12}}  \\
	&\le C_V C^n
		\sum_{\alpha = 1}^M \sum_{k \in A_\alpha} |I(k)_{\alpha,\alpha} - 1| M^{-1} \;.
\end{aligned}
\end{equation}
To optimally apply Lemma~\ref{lem:approxID}, we split the summation region $ A_\alpha $ into $ A_{\alpha,<} := \{ k \in A_\alpha : -2 \fl < k \cdot \hat\omega_\alpha < \fl \} $ and $ A_{\alpha,\ge} := \{ k \in A_\alpha : k \cdot \hat\omega_\alpha \ge \fl \} $, so that
\begin{equation}
	\sum_{k \in A_\alpha}
	|I(k)_{\alpha,\alpha} - 1|
	\le \sum_{k \in A_{\alpha,<}} C 
		+ \sum_{k \in  A_{\alpha,\ge}} C \fl |k \cdot \hat\omega_\alpha|^{-1}
	\le C \fl |\log \fl| L^3 \;.
\end{equation}
We conclude
\begin{equation}
	\sum_{k \in \Lambda_L} \Big\vert \tr \Big( I(k) P(k) \big( I(k) K(k) \big)^n \Big)
		- \tr \big( P(k) K(k)^n \big)
	\Big\vert
	\le C_V C^n \fl |\log \fl| L^3 \;,
\end{equation}
as claimed. The proof for $ h(k) $ instead of $ P(k) $ is analogous, where the bound in Lemma~\ref{lem:regularity_IhP} renders an additional factor of $ M \delta_{\alpha_1,\alpha_2} $ in the second line of~\eqref{eq:trip_expansion}, which eliminates one of the sums over the patches for $ n \ge 1 $.

Finally, the estimate~\eqref{eq:trace_estimates_X} follows by the same strategy, noting in particular that $ |\tr\big(IK(1-I)h(IK)^n \big)| = |\tr\big((1-I)h(IK)^{n+1} \big)| = |\tr\big(Ih(1-I)K(IK)^n \big)| $ is bounded as the contribution of $ j = 1 $ from~\eqref{eq:trip_expansion} with $ P $ replaced by $ h $ and $ n $ by $ n+1 $.
\end{proof}

\section*{Declarations}
The authors declare not to have any competing interests. GPT-6 Astra was used for proofreading. No AI tools were used in developing any of the mathematical arguments.

\section*{Acknowledgments}
NB was supported by the European Union through the ERC Starting Grant \textsc{FermiMath}, grant agreement nr.~101040991. SL was supported by the European Union partially through the ERC Starting Grant \textsc{FermiMath}, grant agreement nr.~101040991 and partially through the ERC Advanced Grant MathBEC, grant agreement nr.~101095820.

Views and opinions expressed are however those of the authors only and do not necessarily reflect those of the European Union or the European Research Council Executive Agency. Neither the European Union nor the granting authority can be held responsible for them.

This work has been carried out under the auspices of GNFM of INdAM.

\renewcommand{\bibfont}{\footnotesize}
\printbibliography

\end{document}